\newcommand{\ddd}{\mathrm{d}}
\newcommand{\EE}{\mathrm{e}}
\newcommand{\II}{\mathrm{i}}
\newcommand{\ii}{a}
\newcommand{\oo}{b}
\newcommand{\frg}{\mathfrak{g}}
\DeclareMathOperator{\Span}{span}
\DeclareMathOperator{\Ad}{Ad}
\DeclareMathOperator{\Aut}{Aut}
\DeclareMathOperator{\Id}{Id}
\mathchardef\ordinarycolon\mathcode`\:
\tikzset{
  fermion/.style={draw=black, postaction={decorate},decoration={markings,mark=at position .55 with {\arrow{>}}}},
    bdry/.style={draw,shape=circle,fill=black,minimum size=5pt,inner sep=0pt},
  b1/.style={draw,shape=circle,minimum size=5pt,inner sep=0pt},
  b2/.style={draw,shape=circle,fill=gray!40,minimum size=5pt,inner sep=0pt},
  b3/.style={draw,shape=circle,fill=gray!80,minimum size=5pt,inner sep=0pt},
  bv/.style={draw,shape=circle,fill=black!80,minimum size=3pt,inner sep=0pt},
  iv/.style={draw,shape=circle,fill=gray!80,minimum size=3pt,inner sep=0pt}
  }
\newcommand{\dd}{\partial}
\newcommand{\ra}{\rightarrow}
\newcommand{\mr}{\mathrm}
\newcommand{\KK}{\mathbb{K}}
\newcommand{\CC}{\mathbb{C}}
\newcommand{\RR}{\mathbb{R}}
\newcommand{\g}{\mathfrak{g}}
\newcommand{\FF}{\mathcal{F}}
\newcommand{\BB}{\mathcal{B}}
\newcommand{\YY}{\mathcal{Y}}
\newcommand{\PP}{\mathcal{P}}
\newcommand{\LL}{\mathcal{L}}
\newcommand{\til}{\widetilde}
\newcommand{\ttt}{\mathbb{t}}
\newcommand{\As}{\mathsf{A}}
\newcommand{\AsI}{\mathsf{A}_I}
\newcommand{\ad}{\mathrm{ad}}
\newcommand{\nl}{\mathrm{nl}}
\newcommand{\ol}{\overline}
\renewcommand{\ggg}{\mathcal{G}}
\newcommand{\tr}{\mathrm{tr}}
\newcommand{\bl}{\textcolor{blue}}
\theoremstyle{remark}
\newtheorem{remark}{Remark}[section]
\newtheorem{digression}[remark]{Digression}
\theoremstyle{plain}
\newtheorem{lemma}[remark]{Lemma}
\newtheorem{proposition}[remark]{Proposition}
\newtheorem{theorem}[remark]{Theorem}
\theoremstyle{definition}
\newtheorem{definition}[remark]{Definition}
\newtheorem{example}[remark]{Example}
\newtheorem{assumption}[remark]{Assumption}
\begin{document}
\title[Constraints, Hamilton--Jacobi, quantization
]
{
Constrained Systems, Generalized Hamilton--Jacobi Actions, and Quantization
}

\begin{abstract}
Mechanical systems (i.e., one-dimensional field theories) with constraints are the focus of this paper.
In the classical theory, systems with infinite-dimensional targets are considered as well (this then encompasses also higher-dimensional field theories in the hamiltonian formalism). The properties of the Hamilton--Jacobi (HJ) action are described in details and several examples are explicitly computed (including nonabelian Chern--Simons theory, where the HJ action turns out to be the gauged Wess--Zumino--Witten action). Perturbative quantization, limited in this note  to finite-dimensional targets, is performed in the framework of the Batalin--Vilkovisky (BV) formalism in the bulk and of the Batalin--Fradkin--Vilkovisky (BFV) formalism at the endpoints. As a sanity check of the method, it is proved that the semiclassical contribution of the physical part of the evolution operator is still given by the HJ action. Several examples are computed explicitly. In particular, it is shown that the toy model for nonabelian Chern--Simons theory and the toy model for 7D Chern--Simons theory with nonlinear Hitchin polarization do not have quantum corrections in the physical part (the extension of these results to the actual cases is discussed in 
the companion paper \cite{CS_cyl}). 
Background material for both the classical part (symplectic geometry, generalized generating functions, HJ actions, and the extension of these concepts to infinite-dimensional manifolds) and the quantum part (BV-BFV formalism) is provided.
\end{abstract}

\author{Alberto S. Cattaneo}
\address{Institut f\"ur Mathematik, Universit\"at Z\"urich\\
Winterthurerstrasse 190, CH-8057 Z\"urich, Switzerland}  
\email{cattaneo@math.uzh.ch}

\author{Pavel Mnev}
\address{University of Notre Dame}
\address{St. Petersburg Department of V. A. Steklov Institute of Mathematics of the Russian Academy of Sciences}
\email{pmnev@nd.edu}

\author{Konstantin Wernli}
\address{University of Notre Dame}
\email{kwernli@nd.edu}

\thanks{This research was (partly) supported by the NCCR SwissMAP, funded by the Swiss National Science Foundation. A.S.C. and K.W. acknowledge partial support of SNF Grant No.\ 200020\_192080.  K. W. also acknowledges support from a BMS Dirichlet postdoctoral fellowship and the SNF Postdoc.Mobility grant P2ZHP2\_184083, and would like to thank the Humboldt-Universit\"at Berlin, in particular the group of Dirk Kreimer, and the university of Notre Dame for their hospitality.}

\keywords{
Hamilton--Jacobi, (generalized) generating functions,
Chern--Simons, Wess--Zumino--Witten,
 nonlinear (Hitchin) phase space polarization,  Kodaira--Spencer (BCOV) action,
 Batalin--Vilkovisky, Batalin--Fradkin--Vilkovisky}

\subjclass[2020]{81T70, 
53D22, 
70H20, 
53D55, 
53D50 (Primary); 
81T13, 
81S10, 
70H15, 
57R56, 
81T45 (Secondary). 
}

\dedicatory{In memory of Kirill Mackenzie}

\maketitle

\setcounter{tocdepth}{2}
\tableofcontents

\allowdisplaybreaks

\section{Introduction}
The Hamilton--Jacobi (HJ) action for a nondegenerate system is a generating function for the graph of the hamiltonian flow of the system and is obtained by substituting a solution of the Euler--Lagrange (EL) equations into the action functional. We extend this result to degenerate systems (i.e., systems with constraints). More precisely, we consider one-dimensional field theories with constraints in involution (a.k.a.\ first-class constraints), i.e., theories constrained to live in a coisotropic submanifold of the (possibly infinite-dimensional) target. In the finite-dimensional case, we also study their perturbative quantization.

This paper is mainly based on a discussion of examples. 
The two main general results, on the role of the HJ action, are 
presented in Theorem \ref{t-thm1} on page~\pageref{t-thm1}
at the classical level and in Theorem~\ref{t-thm2} on page~\pageref{t-thm2} at the quantum level. 

Among the many examples, we discuss in particular Chern--Simons theory and its finite-dimensional toy model versions,  get the gauged WZW action as 
the generalized generating function of the evolution relation via
the HJ action, and, in the finite-dimensional version, show that there are no quantum corrections in the physical sector.
We will prove the latter result in the infinite-dimensional case in the companion paper \cite{CS_cyl}.

We start with a recollection of the known results for a regular mechanical system (i.e., a one-dimensional field theory with regular lagrangian). In this case, the EL equations define a hamiltonian flow and therefore its graph, which we call the evolution relation, is a lagrangian submanifold of the product of the symplectic manifolds associated to the two endpoints (in the framework of this paper the symplectic manifold associated to an endpoint is the target symplectic manifold, up to the sign of the symplectic form which depends on the orientation of the endpoint). The HJ action, defined as the evaluation of the classical action on solutions of the EL equation for a given choice of endpoint polarizations\footnote{By polarization we mean, in this paper, an explicit realization of a symplectic manifold as a cotangent bundle. Generating functions, wave functions and integral kernels of operators will always be understood as functions on the base of the cotangent bundle.} {(boundary conditions)}, is then a generating function\footnote{\label{f:genfun}A generating function for a lagrangian submanifold $L$ of a cotangent bundle is a function on the base that yields $L$ as the graph of its differential.} for the evolution relation. At the quantum level, the HJ action appears as the dominant contribution in the semiclassical expansion of Feynman's path integral (simply as the evaluation of the action on a background classical solution).

In a constrained system, the following differences occur:
\begin{enumerate}
\item In the framework of this paper,\footnote{with the exception of the first part of Section~\ref{ss: 2dYM and EM} which portrays a more general situation} the EL equations split into two classes which we call the evolution equations (those that involve a time derivative) and the constraints {(those that do not involve time derivatives)}.  The fields actually split into two classes: maps to the target symplectic manifolds and Lagrange multipliers; the evolution equations arise from variations with respect to the former and the constraints from variations with respect to the latter.
\item Solutions of the EL equations still define a lagrangian submanifold---which we keep calling the evolution relation---of the product of the symplectic manifolds associated to the two endpoints. Yet, this submanifold is no longer a graph.\footnote{Geometrically, the constraints define a coisotropic submanifold $C$, and the evolution equations describe a motion along a leaf ot the characteristic distribution. The evolution relation then consists of pairs of points of $C$ that lie on the same leaf.}
\item The HJ action, defined as the evaluation of the classical action on solutions of the evolution equation for a given choice of endpoint polarizations, is then a generalized generating function\footnote{A generalized generating function is a generating function, see footnote~\ref{f:genfun}, that depends on additional parameters. It defines the lagrangian submanifold of the given cotangent bundle as the intersection of the graph of its differential with respect to the
base variables
and the vanishing locus of the differential with respect to the additional parameters.} for the evolution relation. Note that the constraints are not used in the definition of the HJ action but are recovered from it (as they appear in the definition of the evolution relation).
\item The HJ action depends parametrically on the Lagrange multipliers but is invariant under a (Lie algebra or, more generally, Lie algebroid) gauge transformation thereof that is trivial at the endpoints.
\item At the quantum level, the path integral requires a gauge-fixing procedure one has to keep track of. The formalism we use in this paper is the one due to Batalin and Vilkovisky, in particular in its version coupled to the boundary, referred to as the BV-BFV formalism. The HJ action then appears as the physical (i.e., ghost independent) part of
the dominant contribution of the perturbative expansion of the path integral for certain choices of endpoint polarizations and of residual fields (the latter being, in physical parlance, a choice of infrared components, not to be integrated out, of the fields). The HJ action now arises from the resummation of tree diagrams and not from the evaluation of the action on a background classical solution, which is not at hand in the BV-BFV formalism.
\end{enumerate}
Points (3) and (4) are the content of Theorem~\ref{t-thm1} and point (5) of Theorem~\ref{t-thm2}---see also equations \eqref{e:qHJgeneralLie} and \eqref{e:ZINL}.

One topic we are interested in is the change of (say, final) endpoint polarization. This is realized by a generating function of the corresponding symplectomorphism. At the classical level, the HJ action for the new polarization is simply the composition of the HJ action for the old polarization with the generating function for the change of polarization. At the quantum level, one has to find an appropriate quantization of the latter generating function. There are two, in general irreconcilable, wishes: unitarity and compatibility with the BV-BFV formalism. In this paper we focus on the latter and show that a BV-BFV quantization is possible if the constraints are linear in the momenta of both endpoint polarizations. This is a serious limitation, yet it applies to some very interesting examples.

\subsection{Structure of the paper}
This paper consists of three parts. 

In the first part (Sections~\ref{s:HJnondeg} to~\ref{r:other}), we discuss the HJ action for constrained systems (after briefly reviewing the regular, unconstrained case). We discuss several examples (in particular constraints that are linear or affine in both the positions and the momenta), prove {Theorem~\ref{t-thm1}}, i.e., point (2) of the above list, describe how to incorporate a nontrivial evolution in addition to the constraints, and extend the discussion of generating functions and HJ actions to some bad endpoint conditions.

In the second part (Section~\ref{s:inftarg}), we
compute the HJ action for
several examples with infinite-dimensional target (Chern--Simons theory, $BF$ theory, 2D Yang--Mills theory, abelian Yang--Mills theory in any dimension).

In the third part (Sections~\ref{s:BFVAKSZBBV} to~\ref{s:nonlinpols}), we discuss the BV-BFV quantization. We start with a review of BV and BFV, tailored to the theories we consider in this paper (which are ultimately one-dimensional AKSZ models). Next we discuss the quantization in the case of finite-dimensional targets, both with linear and nonlinear endpoint polarizations, proving in particular Theorem~\ref{t-thm2}, i.e., point (5) of the above list, and show that in the linear and affine examples (the toy models for Chern--Simons theory) no quantum corrections arise in the physical sector: see equations \eqref{e:ZIlin}, \eqref{e:ZIbiaff} and \eqref{e:ZINLlin}.
 
 In Appendix~\ref{a:generatingfunctions} we review several results on symplectic geometry and on (generalized) generating functions, both in the finite- and in the infinite-dimensional case.

In Appendix~\ref{s:mQME} we prove a technical result that ensures that the BV-BFV quantization works in the case at hand.

We defer the BV-BFV quantization with infinite-dimensional target (mainly, Chern--Simons theories) to the companion paper \cite{CS_cyl}.

\subsection{How to read this paper}
The first part is necessary for the whole paper, with the exception of Section~\ref{r:nontrivev} which is relevant only for
Sections~\ref{ss: 2dYM and EM}, \ref{s:QM} and~\ref{s:QRP}.

The second and the third part can be read independently from each other, as they focus on different developments of the first part.

Appendix~\ref{a:generatingfunctions}  contains additional material that completes the discussion in the first part and provides a rigorous foundation for the second part. Appendix~\ref{s:mQME} contains results for the foundation of the third part.



\subsection{Notations}
To simplify the reading of the paper, we collect here some of the main notations and some terminology.

Arguments (e.g., of an action, a HJ action, a partition function) are written in square brackets if they are fields (i.e., if they depend on the variable on the source interval) and in round brackets otherwise.

We will use the subscript HJ to denote a HJ action. We will have several versions thereof, with notations like $S_\text{HJ}$, $S^f_\text{HJ}$ or $\Hat S^f_\text{HJ}$ referring to different incarnations (change of polarization by $f$; dependency on gauge classes of the parameters). This pedantic notation is important to avoid confusion, but the only important point to focus on is that these are all HJ actions in their own way.
In the arguments of HJ actions, or more generaly of generalized generating functions, we will consistenly separate the generalized position variables from the parameters by a semicolon.

In the BFV formalism, the classical master equation (CME) is the equation $\{S,S\}=0$, where $\{\ ,\ \}$ is the given even Poisson bracket and 
$S$ is an odd functional. By $\Omega$ we will denote a coboundary operator that quantizes $S$.

In the BV formalism, the CME is the equation $(S,S)=0$, where $(\ ,\ )$ is the given odd Poisson bracket and 
$S$ is an even functional. By $\Delta$ we will denote the BV Laplacian (on functions or on half-densities, depending on the context). The quantum master equation (QME) is the equation $\Delta\psi=0$, where $\psi$ is an even function or half-density. For $\psi$ of the form $\EE^{\frac\II\hbar S}$, under the condition $\Delta S=0$ (which we usually assume), the QME for $\psi$ is equivalent to the CME for $S$.

In the BV-BFV formalism, where we have both $\Delta$ and $\Omega$, the modified quantum master equation (mQME) is the equation $(\Omega+\hbar^2\Delta)\psi=0$. We often use the abbreviation $\Delta^\Omega:=\Delta+\frac1{\hbar^2}\Omega$.

For a partition function (or a state), we will use the notation $Z$, possibly with some label denoting the case we are considering. If we define it as a half-density, we will use the notation $\Hat Z$.

\subsection{Teaser}
We illustrate one of the results of the paper (summarizing the content of Sections~\ref{s:fLiealgebra}, \ref{s:Iq}, and~\ref{s:nonlinpols}, with simplified, adapted notations for this introduction).

The data of the system are a quadratic Lie algebra $\frg$, a vector space $Z$ and two derivations: $v\colon\frg\to\frg\otimes Z$
and $\bar v\colon\frg\to\frg\otimes Z^*$.
The target symplectic manifold is $T^*(\frg\otimes Z)$ which we identify with
$\frg\otimes Z^*\oplus \frg\otimes Z\ni(\bar q,q)$ (using the nondegenerate pairing $\langle\ ,\rangle$ on $\frg$).\footnote{We write $\bar q$ for the image of the momentum $p$ under the isomorphism $\frg^*\otimes Z^*\to\frg\otimes Z^*$.}
The fields are a path $(\bar q,q)\colon [0,1]\to T^*(\frg\otimes Z)$ and a $\frg$-valued one-form $e$ (i.e., a connection one-form) on $[0,1]$. The action functional, a toy model for nonabelian Chern--Simons theory (see Section~\ref{s:naCS}),\footnote{In this case, we have $\frg=\Omega^0(\Sigma,\frg')$ and $Z=\Omega^{0,1}(\Sigma)$,  where $\frg'$ is a finite-dimensional quadratic Lie algebra, and $\Sigma$ is a closed, oriented surface with a choice of complex structure. The derivations $v,\bar v$ are now the Dolbeault operators $\dd,\bar\dd$.} is
\[
S[\bar q,q,e]=\int_{0}^{1} (\langle\bar q,\ddd_e q\rangle - \langle\bar q,v(e)\rangle -\langle\bar v(e),q\rangle),
\]
where $\ddd_e$ denotes the covariant derivative.

Under the assumption that the exponential map to the simply connected Lie group $G$ integrating $\frg$ is surjective,
we compute the HJ action---for initial value $q_\ii$ of $q$, final value $\bar q^\oo$ of $\bar q$, and a parameter $g=\EE^\xi\in G$---as
\[
\Hat S_\text{HJ}(q_\ii,\bar q^\oo;g)=-\langle \bar q^\oo,{g}^{-1}q_\ii g\rangle
-\langle \bar q^\oo,{g}^{-1}v(g)\rangle
-\langle \bar v(g)g^{-1},q_\ii\rangle
-\text{WZW}(g)
\]
with the Wess--Zumino--Witten term
\[
\text{WZW}(g)=\frac12\langle\bar v(g)g^{-1},v(g)g^{-1}\rangle+
\frac12\int_0^1 \langle\bar v(h)h^{-1},[v(h)h^{-1},\ddd h h^{-1}]\rangle,
\]
where $h=\EE^{t\xi}$.

The BV-BFV quantization then provides an integral kernel which, in adddition to the classical coordinates $(q_\ii,\bar q^\oo)$, also depends on ``ghost variables'' $c_\ii,c_\oo\in\frg[1]$ and is a half-density on $T^*[-1]G\ni (g,g^+)$:
\[
\Hat Z(q_\ii,c_\ii,\bar q^\oo,c_\oo;g,g^+) =
\EE^{\frac\II\hbar
\Hat S_\text{HJ}(q_\ii,\bar q^\oo;g)}
\;\EE^{-\frac\II\hbar \langle g^+,gc_\ii-c_\oo g
\rangle}\, \mu_G,
\]
where $\mu_G$ is the Haar measure. 

This integral kernel satisfies the modified quantum master equation (mQME)
$(\Omega+\hbar^2\Delta)\Hat Z=0$---and is defined up to the image of $\Omega+\hbar^2\Delta$---where $\Delta$ is the canonical BV Laplacian on half-densities on $T^*[-1]G$ and
\[
\begin{split}
 \Omega &= \left\langle\II\hbar\frac\dd{\dd q_\ii},v(c_\ii)\right\rangle +
 \left\langle\bar v(c_\ii),q_\ii\right\rangle
 -\frac{\II\hbar}2\left\langle[c_\ii,c_\ii],\frac\dd{\dd c_\ii}\right\rangle\\
&\phantom{=}- \left\langle\bar q^\oo,v(c_\oo)\right\rangle -
 \left\langle\bar v(c_\oo),\II\hbar\frac\dd{\dd\bar q^\oo}\right\rangle
 -\frac{\II\hbar}2\left\langle[c_\oo,c_\oo],\frac\dd{\dd c_\oo}\right\rangle
\end{split}
\]
is a quantization of the endpoint BFV action describing the coisotropic reduction by the constraints.

In the abelian case, where we use the multiplicative convention for the group $G$,
we write $g=\EE^T$, and the HJ action becomes
\[
\Hat S_\text{HJ}(q_\ii,\bar q^\oo;T)=-\langle \bar q^\oo,q_\ii\rangle
-\langle \bar q^\oo,v(T)\rangle
-\langle \bar v(T),q_\ii\rangle
-\frac12\langle \bar v(T),v(T)\rangle.
\]
In this case
it is also interesting, as a toy model for 7D Chern--Simons theory with endpoint Hitchin polarization (see Section~\ref{s:7d}), to change the final Darboux coordinates by some generating function $f(q,Q)$ with
$\langle\bar q,\ddd q\rangle=\langle\bar Q,\ddd Q\rangle + \ddd f$. The HJ action 
for initial value $q_\ii$ of $q$ and final value $Q_\oo$ of $Q$ is then
\[
\Hat S_\text{HJ}^f(q_\ii,Q_\oo;T)=-f(q_\ii+v(T),Q_\oo)-\langle \bar v(T),q_\ii\rangle
-\frac12\langle \bar v(T),v(T)\rangle.
\]

The BV-BFV quantization turns out to be possible when the transformed $\frg^*$-valued hamiltonian 
$\Tilde H(\bar Q,Q)=v^*(\bar q(\bar Q,Q))+\bar v^*(q(\bar Q,Q))$, with $*$ denoting  transposition,
is linear in $\bar Q$. In this case, the integral kernel, with $(T,T^+)\in T^*[-1]\frg$, is
\[
\Hat Z^f(q_\ii,c_\ii,Q_\oo,c_\oo;T,T^+)=
\EE^{\frac\II\hbar
\Hat S^f_\text{HJ}(q_\ii,Q_\oo;T)}
\EE^{-\frac\II\hbar T^+_\alpha\,(c_\ii^\alpha-c_\oo^\alpha)}\sqrt{\ddd^kT\ddd^k T^+},
\]
with $k=\dim\frg$. This integral kernel also satisfies the modified quantum master equation (mQME), with 
$\Delta$ the canonical BV Laplacian on half-densities on $T^*[-1]\frg$ and
\[
\Omega =  \left\langle\II\hbar\frac\dd{\dd q_\ii},v(c_\ii)\right\rangle +
 \left\langle\bar v(c_\ii),q_\ii\right\rangle
 -\left\langle c_\oo,\Tilde H\left(-\II\hbar\frac\dd{\dd Q_\oo},Q_\oo\right)\right\rangle.
\]

\begin{remark}[Parameters and quantum corrections]
In the examples just presented, there are no quantum corrections in the physical part of the partition function $\Hat Z$. This statement if of course correct only if we compare the classical and quantum theory in the presence of parameters. The point is that usually one can get rid of (some of) the parameters. Let us discuss for simplicity the case when all parameters can be eliminated. In the classical theory, the constraints are recovered by setting to zero the differential of the HJ action as a function of the parameters $g$ or $T$ (parametrized by the generalized position variables).
If the Hessian is nondegenerate, we can (locally) solve these equations and write the parameters as functions of the generalized position variables. Inserting this back into the HJ action yields a new, equivalent generating function, now without parameters. At the quantum level, we can instead integrate out the parameters  $g$ or $T$  in the partition function, after setting their odd momenta $g^+$ or $T^+ $ to zero. In the abelian case with linear polarization (generalized position variables $q_\ii$ and $\bar q_\oo$), this is equivalent to the classical procedure (apart from an irrelevant constant depending on the determinant of the quadratic form $\langle \bar v(T),v(T)\rangle$). In the other cases (abelian case with nonlinear polarization and nonabelian case), the quantum procedure yields nontrivial quantum corrections.
\end{remark}

%
%
%
%




\subsection*{Acknowledgements}
We thank Samson Shatashvili for suggesting the study of 7D abelian Chern--Simons theory in the quantum BV-BFV formalism, now in \cite[Section 6.3]{CS_cyl},
which was the original motivation out of which this paper grew. 
We also thank Francesco Bonechi, Ivan Contreras, {Philippe Mathieu}, Nicolai Reshetikhin, Pavel Safronov, Michele Schiavina,
{Stephan Stolz}, Alan Weinstein, Ping Xu and Donald Youmans for useful discussions. 

\newcommand{\EL}{\text{\sl EL}}

\section{Hamilton--Jacobi for nondegenerate actions}\label{s:HJnondeg}
In this section we collect several known facts on the Hamilton--Jacobi action, which will be used in the rest of the paper (for a general reference, see, e.g., \cite[Sections 47-48]{Arnold}).
We only focus on actions of the form
\[
S[p,q] =\int_{t_\ii}^{t_\oo} (p_i\ddd q^i - H(p,q)\ddd t),
\]
where a sum over repeated indices is understood. Here $(p,q)$ denotes a point (or a path) in the cotangent bundle
$T^*M$ of a fixed manifold $M$, and $H$ is a smooth function on $T^*M$. 
By abuse of notation, we denote by $\ddd q^i$ both the coordinate 1-form on $M$ and
its pullback by the map $q\colon[t_\ii,t_\oo]\to M$ (which then
actually means $\dot q^i\ddd t$). 
We will later be interested in some generalizations, like adding constraints and working with an infinite-dimensional target.

Recall that the Euler--Lagrange (EL) equations for the above action are Hamilton's equations
\[
\dot q^i = \frac{\dd H}{\dd p_i},\quad \dot p_i = -\frac{\dd H}{\dd q^i}.
\]
Next assume that for every $q_\ii,q_\oo\in M$
these equations have a unique solution, which we will denote by $(p^{q_a,q_\oo},q_{q_\ii,q_\oo})$, satisfying
$q_{q_\ii,q_\oo}(t_\ii)=q_\ii$ and $q_{q_\ii,q_\oo}(t_\oo)=q_\oo$. Then one defines the Hamilton--Jacobi (HJ) action (a.k.a.\ Hamilton's principal function)
\[
S_\text{HJ}(q_\ii,q_\oo):=S[p^{q_\ii,q_\oo},q_{q_\ii,q_\oo}].
\]
Jacobi's theorem then, in particular, asserts that
\begin{equation}\label{e:HJone}
p^{q_\ii,q_\oo}_i(t_\oo)=\frac{\dd S_\text{HJ}}{\dd q^i_\oo},\quad p^{q_\ii,q_\oo}_i(t_\ii)=-\frac{\dd S_\text{HJ}}{\dd q^i_\ii}.
\end{equation}
In other words, $S_\text{HJ}$ is a generating function for the graph of the hamiltonian flow of $H$ (from $t_\ii$ to $t_\oo$), see Appendix~\ref{a:generatingfunctions} (in particular, Remark~\ref{r:gfcanrel}).
Let us briefly review the proof because we will need some generalization thereof in the following.
The variation of the action reads
\[
\delta S = \EL + p_i(t_\oo)\delta q^i(t_\oo)-p_i(t_\ii)\delta q^i(t_\ii),
\]
where $\EL$ denotes the bulk term responsible for the EL equations. If we insert a solution of the EL equations, the  $\EL$ term vanishes. In particular, inserting the solution $(p^{q_\ii,q_\oo},q_{q_\ii,q_\oo})$ yields
\[
\delta S_\text{HJ} = p^{q_\ii,q_\oo}_i(t_\oo)\delta q^i_\oo-p^{q_\ii,q_\oo}_i(t_\ii)\delta q_\ii^i,
\]
which implies \eqref{e:HJone}.
\begin{remark}[Time dependency]\label{r:timedep}
Above we have considered the time endpoints $t_\ii$ and $t_\oo$ as fixed, but one can let them vary as well. 
We put the explicit dependence in the notation: 
\[
S[p,q](t_\ii,t_\oo) =\int_{t_\ii}^{t_\oo} (p_i\ddd q^i - H(p,q)\ddd t).
\]
The time-dependent HJ action is then
\begin{equation}\label{e:HJtimedep}
S_\text{HJ}(q_\ii,q_\oo;t_\ii,t_\oo):=S[p^{q_\ii,q_\oo},q_{q_\ii,q_\oo}](t_\ii,t_\oo),
\end{equation}
where the solution is still required to satisfy $q_{q_\ii,q_\oo}(t_\ii)=q_\ii$ and $q_{q_\ii,q_\oo}(t_\oo)=q_\oo$. As a result,
$(p^{q_\ii,q_\oo},q_{q_\ii,q_\oo})$ also depends on $t_\ii$ and $t_\oo$ (even though we do not include this in the notation). In addition to the formulae in \eqref{e:HJone}, we now also have\footnote{Equations \eqref{e:HJone} and \eqref{e:HJtime} are also important because they imply that $S_\text{HJ}$ is a solution of the Hamilton--Jacobi equation.}
\begin{equation}\label{e:HJtime}
\frac{\dd S_\text{HJ}}{\dd t_\oo}=-H(p^{q_\ii,q_\oo}(t_\oo),q_\oo),\quad
\frac{\dd S_\text{HJ}}{\dd t_\ii}=H(p^{q_\ii,q_\oo}(t_\ii),q_\ii).
\end{equation}
To get these formulae, we have to observe that, e.g., a variation $\delta t_\oo$ of $t_\oo$ produces not only a variation in the final endpoint of the integral, which yields $(p^{q_\ii,q_\oo}_i(t_\oo)\dot q_{q_\ii,q_\oo}^i(t_\oo)-H(p^{q_\ii,q_\oo}(t_\oo),q_\oo))\delta t_\oo$,
but also an induced variation $-\dot q_{q_\ii,q_\oo}(t_\oo)\delta t_\oo$ of $q$ at the final endpoint.
\end{remark}

\begin{remark}[Uniqueness]
For simplicity we have assumed that there is a unique solution 
$(p^{q_\ii,q_\oo},q_{q_\ii,q_\oo})$ satisfying
$q_{q_\ii,q_\oo}(t_\ii)=q_\ii$ and $q_{q_\ii,q_\oo}(t_\oo)=q_\oo$. This might not be true. Generically one may exclude focal points $q_\ii,q_\oo$ connected by a continuous family of solutions,\footnote{{For instance, if the target $M$ is a Riemannian manifold with metric $g$ and we take $H(p,q)=\frac{\|p\|_g^2}{2m}$, then the solutions to the EL equations are cotangent lifts of parametrized geodesics. If $M$ is the sphere with round metric and we take antipodal points $q_\ii$ and $q_\oo$, then we have a continuous family of solutions joining them. However, it is enough to perturb one of them, to get a discrete family.}} but typically we may expect a discrete, or finite, family to exist. In this case, the above procedure is actually meant to work in a neighborhood of a given nonfocal pair $(q_\ii,q_\oo)$
for a choice of one solution in the discrete set. When one varies $q_\ii$ or $q_\oo$ (or the time interval), 
it is understood that one follows the variation of the chosen solution. The resulting HJ action is then a local generating function. To get the whole graph of the hamiltonian flow of $H$ one needs the various, possibly infinitely many, HJ actions associated to the different solutions.
\end{remark}

\begin{example}\label{example-nonuniqueS}
A simple example where uniqueness of solutions with given endpoints is not satisfied is the free particle on $S^1$, the unit circle.\footnote{The nonuniqueness in this example is due to the fact that $S^1$ is not simply connected, but it is possible to find examples of nonuniqueness on simply connected configuration spaces as well.} The hamiltonian is simply $H(p,q)=\frac{\,p^2}{2m}$. The EL equations are $\dot p =0$ and $\dot q=\frac pm$. The graph of the hamiltonian flow of $H$ is then
\[
L:=\left\{(p^\ii,q_\ii,p^\oo,q_\oo)\in T^*S^1\times T^*S^1\ |\ p^\oo=p^\ii,\ q_\oo = q_\ii +\frac {p^\ii}m (t_\oo-t_\ii)\bmod2\pi
\right\}.
\]
On the other hand, if we fix $q_\ii$ and $q_\oo$, we have the solutions 
\[
\begin{split}
p(t) &= \frac{m(q_\oo-q_\ii+2k\pi)}{t_\oo-t_\ii},\\
q(t) &= q_\ii + (q_\oo-q_\ii+2k\pi)\frac{t-t_\ii}{t_\oo-t_\ii}\bmod 2\pi,
\end{split}
\]
for each $k\in\mathbb{Z}$. It follows that, for a given $k$, the HJ action reads
\[
S_{\text{HJ},k}(q_\ii,q_\oo) = \frac12 m \frac{(q_\oo-q_\ii+2k\pi)^2}{t_\oo-t_\ii},
\]
which is indeed a generating function for $L$ in a neighborhood of $(q_\ii^0,p_0,q_\oo^0,p_0)$ with
$p_0=\frac{m(q_\oo^0-q_\ii^0+2k\pi)}{t_\oo-t_\ii}$. To generate the whole $L$, one needs all the $S_{\text{HJ},k}$s.
\end{example}

\begin{remark}[Semiclassical approximation]\label{r:semiclassicalapproximation}
In the path integral approach to quantum mechanics (say, on the configuration space $M=\RR^n$), the integral kernel $K(q_\ii,q_\oo)$ for the evolution operator is, formally, given by the path integral \cite{Dirac,Feynman,FH}
\begin{equation}\label{e:Kpathintegral}
K(q_\ii,q_\oo) = \int_{\substack{q(t_\ii)=q_\ii\\q(t_\oo)=q_\oo}}Dp\,Dq\;\EE^{\frac\II\hbar S[p,q]}.
\end{equation}
Assuming for simplicity that there is a unique solution 
$(p^{q_\ii,q_\oo},q_{q_\ii,q_\oo})$ satisfying
$q_{q_\ii,q_\oo}(t_\ii)=q_\ii$ and $q_{q_\ii,q_\oo}(t_\oo)=q_\oo$, one usually makes the affine change of variables (translation)
\begin{equation}\label{e:affpq}
p\mapsto p^{q_\ii,q_\oo} + \Hat p,\quad q\mapsto q_{q_\ii,q_\oo} +\Hat q,
\end{equation}
getting
\[
K(q_\ii,q_\oo) = \EE^{\frac\II\hbar S_\text{HJ}(q_\ii,q_\oo)}
\int_{\Hat q(t_\ii)=\Hat q(t_\oo)=0}D\Hat p\,D\Hat q\;\EE^{\frac\II\hbar \Hat S_{(q_\ii,q_\oo)}[\Hat p,\Hat q]}
\]
with
\[
\Hat S_{(q_\ii,q_\oo)}[\Hat p,\Hat q]=S[p^{q_\ii,q_\oo} + \Hat p,q_{q_\ii,q_\oo} +\Hat q]-S_\text{HJ}(q_\ii,q_\oo).
\]
Since $(p^{q_\ii,q_\oo},q_{q_\ii,q_\oo})$ is a solution, $\Hat S_{(q_\ii,q_\oo)}$ starts with a quadratic term in $(\Hat p,\Hat q)$, so one can define the path integral perturbatively. One then sees that the HJ action yields the semiclassical approximation to $K$. The crucial point here is that the Hessian of $S$ at a solution is nondegenerate (one also says that the action or the lagrangian is regular). This does not happen for degenerate actions (e.g., in gauge theories). In this case one has to use some technique to gauge fix the action, but one also has to take into account that the symplectic space of initial conditions is defined via some symplectic reduction. A cohomological way to resolve these issues is to use the BV formalism \cite{BV} in the bulk and the BFV formalism \cite{BF,FV}
on the boundary. In \cite{CMR15} it was shown how to 
reconcile 
the two formalisms, into what was termed the quantum BV-BFV formalism,
 and it was in particular observed that, in general, extending boundary values to a solution of the EL equations is incompatible with the formalism (we will review this in Sections~\ref{sec:goodsplit} and~\ref{s:IIdiscosplit}).
The above argument to show that the semiclassical contribution to the integral kernel of the evolution operator is the HJ action is therefore no longer valid. 
We will see in Theorem~\ref{t-thm2} 
that the result still holds, yet the HJ action is recovered not by evaluating the classical action on a solution but via resumming trees in the Feynman diagrams expansion.
\end{remark}


\begin{remark}[Changing the endpoint conditions]\label{r:changing}
Fixing $q$ at the endpoints yields the EL equations as critical points for $S$. One may however be interested in choosing different endpoint conditions. In general, they have to correspond to a lagrangian submanifold of $T^*M$, not necessarily the zero section, but one has to adapt the one-form $p_i\ddd q^i$. For simplicity, we now assume $M=\RR^n$. Let $(P_i,Q^i)$ be another set of Darboux coordinates on $T^*\RR^n$ and assume (see also Example~\ref{r:gfchangepol} for a more general viewpoint) that we have a generating function $f(q,Q)$ for the transfomation: namely,
\[
p_i\ddd q^i=P_i\ddd Q^i + \ddd f,
\]
i.e.,
\[
p_i = \frac{\dd f}{\dd q^i},\quad P_i = -\frac{\dd f}{\dd Q^i}.
\]
Assume we want to fix $Q$ at the final endpoint and $q$ at the initial endpoint. We then change the action by a boundary term:
\[
S^f[p,q] :=- f(q(t_\oo),Q(p(t_\oo),q(t_\oo))) + \int_{t_\ii}^{t_\oo} (p_i\ddd q^i - H(p,q)\ddd t).
\]
When we take a variation, we now get
 \[
\delta S^f = \EL + P_i(t_\oo)\delta Q^i(t_\oo)-p_i(t_\ii)\delta q^i(t_\ii),
\]
with the same $\EL$ term as before. Therefore, the solutions of the EL equations are the critical points of $S^f$
with fixed $q$ at the initial endpoint and $Q$ at the final endpoint. We define the corresponding HJ action as
\[
S^f_\text{HJ}(q_\ii,Q_\oo):=S^f[p^{q_\ii,Q_\oo},q_{q_\ii,Q_\oo}],
\]
where $(p^{q_\ii,Q_\oo},q_{q_\ii,Q_\oo})$ is the (assumedly unique) solution ot the EL equations with
$q_{q_\ii,Q_\oo}(t_\ii)=q_\ii$ and $Q(p^{q_\ii,Q_\oo}(t_\oo),q_{q_\ii,Q_\oo}(t_\oo))=Q_\oo$. 
Exactly as in the case discussed at the very beginning of this section, inserting a solution into the variation of $S^f$
kills the $EL$ term, so we get
\[
P_i(p^{q_\ii,Q_\oo},q_{q_\ii,Q_\oo})(t_\oo)=\frac{\dd S^f_\text{HJ}}{\dd Q^i_\oo},\quad p^{q_\ii,Q_\oo}_i(t_\ii)=-\frac{\dd S^f_\text{HJ}}{\dd q^i_\ii}.
\]
In other words, $S^f_\text{HJ}$ is a generating function for the graph of the hamiltonian flow of $H$ (from $t_\ii$ to $t_\oo$) with respect to the new set of coordinates (see again Remark~\ref{r:gfcanrel}). 
The dependency on time can be studied repeating the steps of Remark~\ref{r:timedep} verbatim.
\end{remark}

\begin{example}\label{exa:HJonlyf}
Consider the case when $H=0$, i.e., $S[p,q] =\int_{t_\ii}^{t_\oo} p_i\ddd q^i$. Since the EL equations have constant solutions, we have $S[p^{q_\ii,Q_\oo},q_{q_\ii,Q_\oo}]=0$ and $q(t_\oo)=q_\ii$.
Therefore
\begin{equation}\label{e:HJonlyf}
S^f_\text{HJ}(q_\ii,Q_\oo) = -f(q_\ii,Q_\oo).
\end{equation}
\end{example}

\begin{remark}
For simplicity, 
we only discuss the change of the final endpoint conditions, but of course the same discussion may be repeated for the initial endpoint and for both endpoints at the same time. 
\end{remark}

\begin{example}\label{exa:pfinal}
A very common case is when one wants to fix $q$ at the initial point and $p$ at the final point. This fits in the above construction by choosing $Q=p$ and $P=-q$ and $f(q,Q)=Q_iq^i$. In this case, we simply have
\[
S^f[p,q] :=- p_i(t_\oo)q^i(t_\oo) + \int_{t_\ii}^{t_\oo} (p_i\ddd q^i - H(p,q)\ddd t),
\]
\[
\delta S^f = \EL - q^i(t_\oo)\delta p_i(t_\oo) 
-p_i(t_\ii)\delta q^i(t_\ii),
\]
and
\begin{equation}\label{e:qpinSf}
q_{q_\ii,p^\oo}^i(t_\oo)=-\frac{\dd S^f_\text{HJ}}{\dd p_i^\oo},\quad p^{q_\ii,p^\oo}_i(t_\ii)=-\frac{\dd S^f_\text{HJ}}{\dd q^i_\ii}.
\end{equation}
If we apply this to the Example~\ref{example-nonuniqueS} of the free particle on the circle, we get a unique solution for a given $(q_\ii,p^\oo)$ and the HJ action
\[
S^f_\text{HJ}(q_\ii,p^\oo)=-p^\oo q_\ii-\frac{(p^\oo)^2}{2m}(t_\oo-t_\ii).
\]
If we instead consider the case $H=0$ of Example~\ref{exa:HJonlyf}, we get, as a particular case of \eqref{e:HJonlyf},
\begin{equation}\label{e:HJonlyfpq}
S^f_\text{HJ}(q_\ii,p^\oo)=-p^\oo q_\ii.
\end{equation}
\end{example}

\begin{remark}[Legendre transform]\label{r:Legendretransf}
One can pass from $S^f_\text{HJ}(q_\ii,p^\oo)$ to $S_\text{HJ}(q_\ii,q_\oo)$, and back, by the Legendre transform:
\[
S_\text{HJ}(q_\ii,q_\oo)= \lambda^\text{crit}_iq_\oo^i+S^f_\text{HJ}(q_\ii,\lambda^\text{crit}),
\]
where $\lambda^\text{crit}$ is the critical point (assumed to be unique) of the variable $p^\oo$ for the function $f(p^\oo):=p^\oo_iq_\oo^i+S^f_\text{HJ}(q_\ii,p^\oo)$:
\[
\frac{\dd f}{\dd p^\oo} = 0 \text{\ at\ } p^\oo=\lambda^\text{crit}.
\]
We will return to this in Section~\ref{s:partLegtransf}.
\end{remark}

\begin{remark}\label{r:qexafinal}
In terms of the path-integral computation of the evolution operator, see Remark~\ref{r:semiclassicalapproximation},
changing the endpoint conditions, as discussed in Remark~\ref{r:changing}, corresponds to assigning different Hilbert space representations, corresponding to different choices of polarization, to the endpoints. For example, fixing $p$ at the final endpoint, as in Example~\ref{exa:pfinal}, corresponds to computing the integral kernel $\Hat K(q_\ii,p^\oo)$ for the position representation at the initial endpoint and the momentum representation at the final one. This is the Fourier transform of $K(q_\ii,q_\oo)$ with respect to the second argument. One can see a classical track of this in the
$p_i(t_\oo)q^i(t_\oo)$ term in $S^f$ (and in the Legendre transform of the previous remark).
\end{remark}

\begin{example}\label{exa:qpfinal}
Consider 
\[
\Hat K(q_\ii,p^\oo)= \int_{\substack{q(t_\ii)=q_\ii\\p(t_\oo)=p^\oo}}Dp\,Dq\;\EE^{\frac\II\hbar S^f[p,q]}.
\]
with $S^f[p,q]=-p(t_\oo)q(t_\oo)+S[p,q]$ for the simple action $S[p,q]=\int_{t_\ii}^{t_\oo} p\,\ddd q$. (Nothing changes if we have many $p$ and $q$ variables: just think of a hidden scalar product $p\ddd q=p_i\ddd q^i$.) As the EL equations imply that the solutions are constant, the change of variables \eqref{e:affpq} is now
\[
p\mapsto p^\oo + \Hat p,\quad q\mapsto q_\ii +\Hat q,
\]
with $\Hat p(t_\oo)=0$ and $\Hat q(t_\ii)=0$.
Since $S^f[p,q]=-p^\oo(q_\ii +\Hat q(t_\oo))+\int_{t_\ii}^{t_\oo} (p^\oo + \Hat p)\ddd \Hat q$, we have
\[
\Hat K(q_\ii,p^\oo)= \EE^{-\frac\II\hbar p^\oo q_\ii} 
\int_{\substack{\Hat q(t_\ii)=0\\\Hat p(t_\oo)=0}}D\Hat p\,D\Hat q\;\EE^{\frac\II\hbar \left(-p^\oo \Hat q(t_\oo)
+ \int_{t_\ii}^{t_\oo} p^\oo \ddd \Hat q 
+ \int_{t_\ii}^{t_\oo} \Hat p\ddd \Hat q
\right)}.
\]
This is a Gaussian integral with quadratic term $\int_{t_\ii}^{t_\oo} \Hat p\ddd \Hat q$ and no $\Hat p$-sources, so it simply yields a constant (which we normalize to one), and we get
\begin{equation}\label{e:HatKqp}
\Hat K(q_\ii,p^\oo)= \EE^{-\frac\II\hbar p^\oo q_\ii} ,
\end{equation}
the integral kernel for the Fourier transform, whose exponent is given by the HJ action \eqref{e:HJonlyfpq}.
\end{example}

\begin{example}[Formal quantum change of polarization]\label{exa:qQfinal}
The computation of Example~\ref{exa:qpfinal} may be generalized to the general case discussed in Remark~\ref{r:changing}. Namely, we have
\[
\Hat K^f(q_\ii,Q_\oo)= \int_{\substack{q(t_\ii)=q_\ii\\
Q(p(t_\oo),q(t_\oo))=Q_\oo}}Dp\,Dq\;\EE^{\frac\II\hbar S^f[p,q]},
\]
with $S^f[p,q]=- f(q(t_\oo),Q(p(t_\oo),q(t_\oo))) + S[p,q]$. Again we consider $S[p,q]=\int_{t_\ii}^{t_\oo} p\,\ddd q$.
 The change of variables \eqref{e:affpq} now becomes
 \[
p\mapsto p^0 + \Hat p,\quad q\mapsto q_\ii +\Hat q,
\]
with the constant solution $p^0$ given by 
\[
p^0 = \frac{\dd f}{\dd q}{(q_\ii +\Hat q(t_\oo),Q_\oo)},
\]
and endpoint conditions $\Hat p(t_\oo)=0$ and $\Hat q(t_\ii)=0$.\footnote{Note that the endpoint fluctuations of $p$ and $q$ are constrained by the condition $Q(p(t_\oo),q(t_\oo))=Q_\oo$. We have arranged the change of variables so that, while $p^0$ may fluctuate at the endpoint, $\Hat p$ does not.} In general this change of variables is no longer affine, since $p^0$ depends on $\Hat q$, possibly in a nonlinear way. On the other hand, the Jacobian determinant of the transformation is (formally) $1$ (the Jacobian is lower {triangular} 
with ``$1$''s on the diagonal).
We now have
\[
\Hat K^f(q_\ii,Q_\oo)= 
\int_{\substack{\Hat q(t_\ii)=0\\\Hat p(t_\oo)=0}}D\Hat p\,D\Hat q\;\EE^{\frac\II\hbar \left(-f(q_\ii +\Hat q(t_\oo),Q_\oo)
+ \int_{t_\ii}^{t_\oo} p^0 \ddd \Hat q 
+ \int_{t_\ii}^{t_\oo} \Hat p\ddd \Hat q
\right)},
\]
which is again a Gaussian integral with quadratic term $\int_{t_\ii}^{t_\oo} \Hat p\ddd \Hat q$ and no $\Hat p$-sources, Therefore,
\begin{equation}\label{e:HatKonlyf}
\Hat K^f(q_\ii,Q_\oo)= \EE^{-\frac\II\hbar f(q_\ii,Q_\oo)},
\end{equation}
the exponential of the HJ action \eqref{e:HJonlyf}.
\end{example}

\begin{digression}[Unitarity]\label{d:uni}
The result \eqref{e:HatKonlyf}
of the previous Example~\ref{exa:qQfinal} is correct insofar as the starting point is the quantization of the action $S^f$
as a partition ``function.'' This has to be corrected if we understand the integral kernel as a half-density in the boundary variables, which is more appropriate {for composition}.
The idea is that the functional measure is the product over points of the measure $\ddd p\,\ddd q/(2\pi\hbar)$ (for simplicity we discuss this for the target $T^*\RR$). At the end points we then retain only the square root of the fixed boundary variable, as the other corresponding square root will come from another integral kernel we want to compose with. For example, in passing from the $q$- to the $p$-representation, the integral kernel half-density will be
\[
\KK(q_\ii,p^\oo)= \sqrt{\ddd q_\ii}\;\EE^{-\frac\II\hbar p^\oo q_\ii}\;\sqrt{\frac{\ddd p^\oo}{2\pi\hbar}}
\]
instead of \eqref{e:HatKqp}. Note that $\KK$ is unitary:
\[
\int_{\{p^\oo\}}  \KK(q,p^\oo)\,\overline\KK(p^\oo,q') =
\sqrt{\ddd q}\;\left(\int_{\{p^\oo\}}  \EE^{\frac\II\hbar p^\oo(q'-q)}\,\frac{\ddd p^\oo}{2\pi\hbar}\right)
\;\sqrt{\ddd q'}=
\sqrt{\ddd q}\,\delta(q'-q)\,\sqrt{\ddd q'}.
\]
Unitarity is somehow expected, since reversing the orientation of the interval on which the theory is defined produces
$\overline\KK(p^\oo,q')$, and the composition of the quantum evolution on an interval with the reversed one should be the identity. 
In Example~\ref{exa:qQfinal}, we could follow the same reasoning and expect unitarity (indeed, at the classical level, composition of the generating functions $f(q',Q_\oo)$ and $-f(q,Q_\oo)$, see Remark~\ref{r:compgenfunII}, yields a generating function for the identity); however, in general, this is not the case at the quantum level as we will see in a moment. In this example as well
we split the fields in terms of the initial $q$ and final $p$ variables, but then we expressed the final values in terms of $Q_\oo$. The change of variables we perfomed there entails in particular
\begin{equation}\label{e:ptoQviaf}
p(t_\oo) = \frac{\dd f}{\dd q}{(q_\ii +\Hat q(t_\oo),Q_\oo)}
\end{equation}
and therefore
\[
\ddd p(t_\oo) = \frac{\dd^2 f}{\dd q\,\dd Q}{(q_\ii +\Hat q(t_\oo),Q_\oo)}\,\ddd Q_\oo +
\frac{\dd^2f}{\dd q^2}{(q_\ii +\Hat q(t_\oo),Q_\oo)}\,(\ddd q_\ii +\ddd\Hat q(t_\oo)).
\]
The terms in $\ddd q_\ii$ and $\ddd\Hat q(t_\oo)$ can be neglected, since they already appear in the functional measure. Therefore, we only retain the first term, of which we eventually have to take the square root, getting 
the integral kernel half-density
\[
\begin{split}
\KK^f(q_\ii,Q_\oo) = \sqrt{\ddd q_\ii}\;
\Bigg(\int_{\substack{\Hat q(t_\ii)=0\\\Hat p(t_\oo)=0}}D\Hat p\,D\Hat q\;\EE^{\frac\II\hbar \left(-f(q_\ii +\Hat q(t_\oo),Q_\oo)
+ \int_{t_\ii}^{t_\oo} p^0 \ddd \Hat q 
+ \int_{t_\ii}^{t_\oo} \Hat p\ddd \Hat q
\right)}\\
\sqrt{\frac{\dd^2 f}{\dd q\,\dd Q}{(q_\ii +\Hat q(t_\oo),Q_\oo)}}
\Bigg)
\;\sqrt{\frac{\ddd Q_\oo}{2\pi\hbar}}
,
\end{split}
\]
where we have assumed the nonzero $\frac{\dd^2 f}{\dd q\,\dd Q}$ term to be actually positive.
Again, since this is a Gaussian integral with quadratic term $\int_{t_\ii}^{t_\oo} \Hat p\ddd \Hat q$ and no $\Hat p$-sources, we get
\begin{equation}\label{e:KKfhalf}
\KK^f(q_\ii,Q_\oo) = \sqrt{\ddd q_\ii}\;
\EE^{-\frac\II\hbar f(q_\ii,Q_\oo)}\sqrt{\frac{\dd^2 f}{\dd q\,\dd Q}{(q_\ii,Q_\oo)}}
\;\sqrt{\frac{\ddd Q_\oo}{2\pi\hbar}}
\end{equation}
instead of \eqref{e:HatKonlyf}.
In a neighborhood of the diagonal $q=q'$, $\KK^f$ is almost unitary: i.e.,
\[
\int_{\{Q_\oo\}}  \KK^f(q,Q_\oo)\,\overline\KK^f(Q_\oo,q') =
\sqrt{\ddd q}\,(1+O(\hbar))\delta(q'-q)\,\sqrt{\ddd q'},
\]
where the $O(\hbar)$ term is a differential operator. This may be seen by the 
($q$ and $q'$ dependent)
change of variable
\[
\lambda(Q_\oo) = \frac{f(q',Q_\oo)-f(q,Q_\oo)}{q'-q}
\]
in the $\ddd Q_\oo$ integral.\footnote{In some cases, $\KK^f$ is exactly unitary. This happens, e.g., when $f$ has product form: $f(q,Q)=\alpha(q)\beta(Q)$. (In particular, this happens when we have a linear symplectomorphism, in which case, moreover, the correction term $\frac{\dd^2 f}{\dd q\,\dd Q}$ is constant.) In this case, we have
\[
\int_{\{Q_\oo\}}  \KK^f(q,Q_\oo)\,\overline\KK^f(Q_\oo,q') =
\sqrt{\ddd q}\;
\left(\int_{\{Q_\oo\}} \EE^{\frac\II\hbar \beta(Q_\oo)(\alpha(q')-\alpha(q))} 
\sqrt{\alpha'(q)\alpha'(q')}\beta'(Q)
\frac{\ddd Q_\oo}{2\pi\hbar}
\right)
\;\sqrt{\ddd q'},
\]
where we have assumed the nonzero $\alpha'$ and $\beta'$ terms to be actually positive. By the change of variable
$\mu=\beta(Q_\oo)$, we get
\[
\begin{split}
\int_{\{Q_\oo\}}  \KK^f(q,Q_\oo)\,\overline\KK^f(Q_\oo,q') &= 
\sqrt{\ddd q}\;
\left(\int_{\{\mu\}}
\EE^{\frac\II\hbar \mu(\alpha(q')-\alpha(q))} 
\sqrt{\alpha'(q)\alpha'(q')}
\frac{\ddd\mu}{2\pi\hbar}
\right)
\;\sqrt{\ddd q'}\\ &=
\sqrt{\ddd q}\;
\delta(\alpha(q')-\alpha(q)) 
\sqrt{\alpha'(q)\alpha'(q')}
\;\sqrt{\ddd q'}=
\sqrt{\ddd q}\,\delta(q'-q)\,\sqrt{\ddd q'}.
\end{split}
\]}
Note that \eqref{e:KKfhalf} resembles the starting point of the WKB approximation (see, e.g.,
\cite[Appendix]{Resh}, \cite[Section 15]{Der20}, and references therein). Indeed, one possible way to get corrections to it to make it unitary is to regard, if possible, $f$ as the Hamilton--Jacobi action for some hamiltonian system (on the other hand, the corrections are not uniquely determined, since
the choice of this hamiltonian system is not unique).
\end{digression}

\section{Systems with one constraint}\label{s:oneconstr}
Our first generalization is an action of the form\footnote{If one requires $e\not=0$ everywhere,
this action may also be regarded as the first-order version of one-dimensional gravity, described by the coframe $e$, coupled to matter, described by $q$. Its BV-BFV quantization is studied in \cite{CS17}.}
\[
S[p,q,e] =\int_{t_\ii}^{t_\oo} (p_i\ddd q^i - e H(p,q)),
\]
where the new field $e$, the Lagrange multiplier, is a one-form on the interval $[t_\ii,t_\oo]$. 

We will first consider the case when
$e$ is different from zero everywhere. This restriction will allow us to deal with endpoint conditions on the $q$s.
Next we will consider the general case, which, however, will force us to pick different endpoint conditions.

\subsection{Nondegenerate Lagrange multiplier}\label{s:ngLagm}
 We assume $e$ to be different from zero everywhere.  A variation of the action yields
\[
\delta S = \EL + p_i(t_\oo)\delta q^i(t_\oo)-p_i(t_\ii)\delta q^i(t_\ii)
\]
with the EL equations 
\[
\ddd q^i = e\frac{\dd H}{\dd p_i},\quad \ddd p_i = -e\frac{\dd H}{\dd q^i},\quad H(p,q)=0.
\]
The last equation, the variation with respect to $e$, is imposed at all times. Note that the endpoint symplectic structures are the same as in the nondegenerate case and depend only on $p$ and $q$. This in particular means that no endpoint condition on $e$ has to be imposed.

To solve the EL equations, we first define
\[
\tau(t):=\int_{t_\ii}^t e.
\]
The first two sets of EL equations now read
\[
\frac{\ddd q^i}{\ddd\tau} = \frac{\dd H}{\dd p_i},\quad \frac{\ddd p_i}{\ddd\tau} = -\frac{\dd H}{\dd q^i},
\]
so they are the usual Hamilton equations for the hamiltonian $H$ in the time variable $\tau\in[0,T]$, with
$T:=\int_{t_\ii}^{t_\oo} e$. The last EL equation, $H=0$, is automatically satisfied at any point if we impose it at the initial point (or at the final point). We then have that the endpoint values for $(p,q)$ corresponding to solutions form the set
\begin{multline*}
L=\{(p^\ii,q_\ii,p^\oo,q_\oo)\in T^*M\times T^*M\ |\ H(p^\oo,q_\oo)=0\\
\text{ and } \exists T\in\RR_{\not=0}:(p^\oo,q_\oo)=\Phi^H_T(p^\ii,q_\ii)\},
\end{multline*}
where $\Phi^H_\tau$ denotes the hamiltonian flow of $H$ at time $\tau$. Note that in this case $L$ is not the graph
of a map $T^*M\to T^*M$, but just a subset of $T^*M\times T^*M$.
We will call it the evolution relation of the system. (See Remark~\ref{r:evorel} for a more general perspective.)

For a fixed $e$, let $(p^{q_\ii,q_\oo},q_{q_\ii,q_\oo})$ denote the (assumedly unique) solution to the first two sets of EL equations (i.e., we do \textbf{not} impose the constraint $H=0$) satisfying
$q_{q_\ii,q_\oo}(t_\ii)=q_\ii$ and $q_{q_\ii,q_\oo}(t_\oo)=q_\oo$.
Since we may rewrite the action as $\int_0^{T} (p_i\ddd q^i - H(p,q)\ddd\tau)$, we get that the HJ action is
\[
S_\text{HJ}(q_\ii,q_\oo)[e]:=S[p^{q_\ii,q_\oo},q_{q_\ii,q_\oo},e]=
S^H_\text{HJ}(q_\ii,q_\oo;0,T),
\]
where the last term denotes the HJ action for the system with hamiltonian $H$ over the time interval $[0,T]$. 

The first remark is that the HJ action depends on $e$ only via its integral $T$; we will then simply write
\begin{equation}\label{e:HatST}
\Hat S_\text{HJ}(q_\ii,q_\oo;T):=S_\text{HJ}(q_\ii,q_\oo)[e],\quad \text{with }T=\int_{t_\ii}^{t_\oo} e,
\end{equation}
so
\[
\Hat S_\text{HJ}(q_\ii,q_\oo;T)=S^H_\text{HJ}(q_\ii,q_\oo;0,T).
\]

The second remark, which follows immediately from \eqref{e:HJone} and \eqref{e:HJtime}, is that the evolution relation
 $L$
is determined by the equations
 \[
 p^\oo_i=\frac{\dd \Hat S_\text{HJ}}{\dd q^i_\oo},\quad p^\ii_i=-\frac{\dd \Hat S_\text{HJ}}{\dd q^i_\ii},\quad \frac{\dd \Hat S_\text{HJ}}{\dd T}=0.
 \]
In other words, $\Hat S_\text{HJ}$ is a generalized generating function\footnote{In the terminology of \cite[Section 4.3]{BW97}, the triple $(M\times \RR_{\not=0},\pi_1,\Hat S_\text{HJ})$, where $\pi_1$ is the projection to the first factor, is a Morse family generating $L$.}
 for the lagrangian submanifold $L$, see Appendix~\ref{a:generatingfunctions} (in particular, Remark~\ref{r:gflinear} for the linear case and Remark~\ref{r:gfnonlinear} for the general case).

\subsection{The general case} 
The assumption $e\not=0$ everywhere is only needed to 
ensure 
that the map $t\mapsto\tau$ is a diffeomorphism and hence to relate $\Hat S_\text{HJ}$ to the HJ function for the system with hamiltonian $H$.  
The existence of a HJ action, however, holds also in the general case where we make no assumptions on $e$, provided we choose the endpoint conditions more carefully.

First observe that the first two sets of EL equations, which we will call the evolution equations,
\[
\ddd q^i = e\frac{\dd H}{\dd p_i},\quad \ddd p_i = -e\frac{\dd H}{\dd q^i},
\]
can be uniquely solved, for fixed $e$, in terms of initial conditions $p^\ii,q_\ii$. In fact, in the regions (closed intervals) where $e=0$, the solution is constant and on the other regions we can make a change of variables $t\mapsto\tau$ as above. Moreover, the function $H$ is constant when evaluated on a solution to the evolution equations,
which means that it is enough to impose this constraint at the initial (or final) endpoint to have it satisfied at every time. 

Let $L$ still denote the evolution relation of the system, i.e., the
 set of endpoint values for solutions to the EL equations. 
 Note that, for a given $e=\underline e\,\ddd t$, we have to compute the ``flow'' $\Phi^{\Tilde H}_T$ of 
the time-dependent hamiltonian $\Tilde H=\underline e H$ up to time $T=\int_{t_\ii}^{t_\oo}e$. 
 We then have
\begin{multline*}
L=\{(p^\ii,q_\ii,p^\oo,q_\oo)\in T^*M\times T^*M\ |\ H(p^\oo,q_\oo)=0\\
\text{ and } \exists e:(p^\oo,q_\oo)=\Phi^{\Tilde H}_T(p^\ii,q_\ii)\}.
\end{multline*}

\begin{remark}\label{e:coisoone}
We can have a better geometric description if we assume that the differential of $H$ is different from zero at every point of the zero locus $C$ of $H$. This ensures that $C$ is a submanifold of $T^*M$ and that the hamiltonian vector field $X^H$ of $H$ restricted to $C$ has no zeros, so it defines a distribution of lines. Note that $C$ is an example of what is called a coisotropic submanifold and $\Span X^H|_C$ is called its characteristic distribution (see Definition~\ref{d:specialsub} and Remark~\ref{r:sympred} for more on this).
In this geometric framework, the evolution relation $L\subset T^*M\times T^*M$ consists of pairs of points on $C$ that can be connected by a path along the characteristic distribution. (It is therefore lagrangian by Lemma~\ref{ex:LinCCman}.)
\end{remark}

Now observe that fixing $q_\ii$ and $q_\oo$ is problematic, as for $e=0$ we have no solution unless $q_\ii=q_\oo$. For this reason we have to choose different endpoint conditions, as in Remark~\ref{r:changing}; for instance, we can
fix $p$ at one endpoint as in Example~\ref{exa:pfinal}. 

We denote by $(p^{q_\ii,Q_\oo},q_{q_\ii,Q_\oo},e)$ 
a solution $(p^{q_\ii,Q_\oo},q_{q_\ii,Q_\oo})$  to the evolution equations for the given $e$ (but we do \textbf{not} impose the constraint $H=0$) satisfying
$q_{q_\ii,Q_\oo}(t_\ii)=q_\ii$ and $Q(p^{q_\ii,Q_\oo}(t_\oo),q_{q_\ii,Q_\oo}(t_\oo))=Q_\oo$. 
We then set
\[
S^f_\text{HJ}(q_\ii,Q_\oo)[e]:=S^f[p^{q_\ii,Q_\oo},q_{q_\ii,Q_\oo},e]
\]
with
\[
S^f[p,q,e]=- f(q(t_\oo),Q(p(t_\oo),q(t_\oo))) + \int_{t_\ii}^{t_\oo} (p_i\ddd q^i - e H(p,q)).
\]
We then have
\begin{equation}\label{e:deltaSfeone}
\begin{split}
\delta S^f_\text{HJ} &= P_i(t_\oo)\delta Q^i_\oo
-p_i(t_\ii)\delta q^i_\ii - \int_{t_\ii}^{t_\oo} \delta e\, H(p^{q_\ii,Q_\oo},q_{q_\ii,Q_\oo}) \\
&= P_i(t_\oo)\delta Q^i_\oo
-p_i(t_\ii)\delta q^i_\ii - \left(\int_{t_\ii}^{t_\oo} \delta e\right) H(p^{q_\ii,Q_\oo}(t_\oo),q_{q_\ii,Q_\oo}(t_\oo)) \\
&= P_i(t_\oo)\delta Q^i_\oo
-p_i(t_\ii)\delta q^i_\ii -\delta T \, H(p^{q_\ii,Q_\oo}(t_\oo),q_{q_\ii,Q_\oo}(t_\oo)),
\end{split}
\end{equation}
where, as above, we set $T=\int_{t_\ii}^{t_\oo} e$. Note that $S^f_\text{HJ}$ is invariant under variations of $e$ with vanishing integral, so we may define $\Hat S^f_\text{HJ}$ like in \eqref{e:HatST}:
\[
\Hat S^f_\text{HJ}(q_\ii,Q_\oo;T):=S^f_\text{HJ}(q_\ii,Q_\oo)[e],\quad \text{with }T=\int_{t_\ii}^{t_\oo} e.
\]
 Then the above equation reads
\[
\delta \Hat S^f_\text{HJ} = P_i(t_\oo)\delta Q^i_\oo
-p_i(t_\ii)\delta q^i_\ii  -\delta T \, H(p^{q_\ii,Q_\oo}(t_\oo),q_{q_\ii,Q_\oo}(t_\oo)),
\]
and we see that the equations
\[
P_i(p^{q_\ii,Q_\oo},q_{q_\ii,Q_\oo})(t_\oo)=\frac{\dd \Hat S^f_\text{HJ}}{\dd Q^i_\oo},\quad p^\ii_i=-\frac{\dd \Hat S^f_\text{HJ}}{\dd q^i_\ii},\quad \frac{\dd \Hat S^f_\text{HJ}}{\dd T}=0
 \]
define $L$.

\begin{quote}
\emph{The HJ action, obtained inserting a solution of the evolution equations but ignoring the constraint, is a generalized generating function for the evolution relation determined by all EL equations.}
\end{quote}


\begin{example}[Linear case]\label{exa:linearone}
Consider $M=\RR^n$ and suppose $H$ is linear on $T^*\RR^n$:
\[
H(p,q)=p_i v^i +w_i q^i,
\]
where the $(v^i,w_i)$s are given constants. In this case the evolution equations simply read
\[
\ddd q^i = e v^i,\quad \ddd p_i = - e w_i.
\]
They can be easily solved in terms of the initial conditions, so we get the evolution relation
\begin{align*}
L=& \{(p^\ii,q_\ii,p^\oo,q_\oo)\in T^*\RR^n\times T^*\RR^n\ | \\
& | \ \exists T\in\RR: p^\oo_iv^i+w_i(q^i_\ii+Tv^i)=0,
\ p^\ii=p^\oo+Tw,\ q_\oo=q_\ii+Tv
\},
\end{align*}
where $T$ corresponds to $\int_{t_\ii}^{t_\oo}e$, and we have evaluated the constraint $H=0$ at the final endpoint.
As in Example~\ref{exa:pfinal} we now fix $q_\ii$ and $p^\oo$. The evolution equations can then be easily integrated for the given endpoint conditions as
\[
\begin{split}
q_{q_\ii,p^\oo}^i(t)&=q_\ii^i + \int_{t_\ii}^t e v^i=q_\ii^i+\tau(t) v^i,\\
p^{q_\ii,p^\oo}_i(t)&=p^\oo_i + \int_t^{t_\oo}e w_i=p^\oo_i +(T-\tau(t)) w_i,
\end{split}
\]
with $\tau(t)=\int_{t_\ii}^te$. For this choice of endpoint conditions we consider
\[
S^f[p,q,e] :=- p_i(t_\oo)q^i(t_\oo) + \int_{t_\ii}^{t_\oo} (p_i\ddd q^i - e H(p,q)).
\]
Now observe that
\begin{align*}
p^{q_\ii,p^\oo}_i\ddd q_{q_\ii,p^\oo}^i(t)-eH(p^{q_\ii,p^\oo},q_{q_\ii,p^\oo})&=
p^{q_\ii,p^\oo}_ie v^i-ep^{q_\ii,p^\oo}_i v^i-ew_iq_{q_\ii,p^\oo}^i \\
&=
-\ddd\tau w_i(q_\ii^i+\tau v^i),
\end{align*}
which implies
\[
\int_{t_\ii}^{t_\oo}(p^{q_\ii,p^\oo}_i\ddd q_{q_\ii,p^\oo}^i(t)-eH(p^{q_\ii,p^\oo},q_{q_\ii,p^\oo}))=
-Tw_iq_\ii^i-\frac{T^2}2w_iv^i.
\]
On the other hand,
\[
p^{q_\ii,p^\oo}_i(t_\oo)q_{q_\ii,p^\oo}^i(t_\oo)=p^\oo_i(q_\ii^i+T v^i).
\]
In conclusion, we get
\begin{equation}\label{e:Sfablinone}
\Hat S^f_\text{HJ}(q_\ii,p^\oo;T)=-p^\oo_iq_\ii^i
-T(p^\oo_iv^i + w_iq_\ii^i)-
\frac{T^2}2w_iv^i,
\end{equation}
which is readily seen to be a generalized generating function for $L$.
\end{example}

\begin{remark}\label{r:solveT}
Sometimes it is possible to solve for $T$ the constraints and then obtain a standard (i.e., without parameters) generating function for the evolution relation $L$. For instance, in the linear case of Example~\ref{exa:linearone},
the constraint is
\[
p^\oo_i v^i + w_iq_\ii^i+
Tw_iv^i=0,
\]
which can be solved if $w_i v^i\not=0$. In this case we get
\[
T=-\frac{p^\oo_i v^i + w_i q_\ii^i}{w_iv^i},
\]
which can be inserted back into $\Hat S^f$ to get the standard generating function
\[
\Tilde S^f(q_\ii,p^\oo)=-p^\oo_iq_\ii^i+\frac{(p^\oo_i v^i + w_i q_\ii^i)^2}{2w_i v^i}.
\]
In the one-dimensional case, $n=1$, we get
\[
\Tilde S^f(q_\ii,p^\oo)=-p^\oo q_\ii+\frac{(p^\oo v + w q_\ii)^2}{2wv}=
\frac12\frac vw (p^\oo)^2 + \frac12\frac wv (q_\ii)^2,
\]
which correctly generates the evolution relation
\[
L = \{(p^\ii,q_\ii,p^\oo,q_\oo)\in T^*\RR\times T^*\RR\ |\ p^\ii v+w q_\ii=0\text{ and } p^\oo v+w q_\oo=0\}.
\]
Note that in this very particular example the evolution plays no role and  $L$ has the product form
$L=L'\times L'$, with $L'=\{(p,q)\in T^*\RR  \ |\ p v+w q=0\}$.
\end{remark}

{
\begin{remark}[Integrable systems I]\label{rem:intsysI}
The case $n=1$
can also be regarded as a one-dimensional integrable system with vanishing hamiltonian. In this case, another expression for the HJ action can be derived as follows. Around a point $(p_0,q_0)$ where $\ddd H \neq 0$, by the Carath\'eodory--Lie--Jacobi theorem\footnote{\label{f:CLJ}The Carath\'eodory--Lie--Jacobi theorem says that if $H_1,\ldots,H_k$ are functions on a symplectic manifold $(M,\omega)$ such that $\{H_i,H_j\}=0$ and $\ddd H_1, \ldots, \ddd H_k$ are linearly independent at some $p\in M$, then there exists a Darboux chart 
$(q_i,p^i)$ containing $p$ with $q_1 = H_1,\ldots,q_k=H_k$. See, for instance, \cite[Theorem 13.4.1]{LM}.} we may extend $H$ to a Darboux chart $(P,Q)$ with $P = H$ ($P$ is what is called the action variable and $Q$ the corresponding angle variable).  Let $f(q,Q)$ be the generating function satisfying $p\,\ddd q = P\,\ddd Q + \ddd f$. We may then write 
\[
S^f[p,q,e] := -f(q(t_\oo),Q(t_\oo)) + \int_{t_\ii}^{t_\oo} (p\,\ddd q-eH)=
- f(q(t_\ii),Q(t_\ii)) + \int_{t_\ii}^{t_\oo}(P\,\ddd Q - e P), 
\]
where $P = P(p,q), Q=Q(p,q)$. The evolution equations now simply read $\ddd P=0, \ddd Q=e$. Evaluating on a solution of the evolution equations, the action vanishes and we get   
\begin{equation}
S^f_{\text{HJ}}(q_\ii,Q_\oo;T) = - f(q_\ii,Q_\oo - T) \label{eq:HJintsysI_1}
\end{equation}
for $(q_\ii,Q_\oo)$ in a neighborhood of $(q_0,Q(p_0,q_0))$. Alternatively, if $g(q,P)$ is the generating function satisfying $p\,\ddd q = {-}Q\,\ddd P + \ddd g$, we can express the HJ action with final endpoint condition on $P$ by  
\begin{equation}
S^g_{\text{HJ}}(q_\ii,P^\oo;T) = -T P^{{\oo}} - g(q_\ii,P^{{\oo}}). \label{eq:HJintsysI_2}
\end{equation}
Notice also that \eqref{eq:HJintsysI_2} is the Legendre transform of \eqref{eq:HJintsysI_1}. 
 We will revisit the case of integrable systems in Remarks \ref{rem:intsysII} and \ref{rem:intsysIII}.
\end{remark}
}

\section{Systems with several constraints}\label{s:syssevcon}
{}From now on we consider an action of the form\footnote{Such an action functional is the classical part of an AKSZ theory \cite{AKSZ} on the interval $[t_\ii,t_\oo]$; see Section~\ref{s:BFVAKSZBBV} for a review. See also \cite{BDMGV} for the study of such a theory in the Dirac formalism.}

\begin{equation}\label{e:Ssevconstr}
S[p,q,e] =\int_{t_\ii}^{t_\oo} (p_i\ddd q^i - e^\alpha H_\alpha(p,q)),
\end{equation}
where the $e^\alpha$s (the Lagrange multipliers) are one-forms on the interval $[t_\ii,t_\oo]$ and the $H_\alpha$s are given functions (the constraints). (The sum over $\alpha$, like that over $i$, is now understood.)
The EL equations are
\[
\ddd q^i = e^\alpha\frac{\dd H_\alpha}{\dd p_i},\quad \ddd p_i = -e^\alpha\frac{\dd H_\alpha}{\dd q^i},\quad H_\alpha(p,q)=0.
\]
Again we will call the first two sets of equations the evolution equations and the last set the constraints.
Evaluating the constraints on a solution of the evolution equations yields
\[
\ddd H_\alpha = \frac{\dd H_\alpha}{\dd p_i}\ddd p_i + \frac{\dd H_\alpha}{\dd q^i}\ddd q^i=
-\{H_\alpha,H_\beta\}e^\beta,
\]
where $\{\ ,\ \}$ denotes the Poisson bracket.\footnote{We define the Poisson bracket as
$\{f,g\}=\frac{\dd f}{\dd p_i}\frac{\dd g}{\dd q^i}-\frac{\dd f}{\dd q^i}\frac{\dd g}{\dd p_i}$.}
If we assume, as we will do from now on, that the constraints are in involution,\footnote{Using Dirac's terminology, one also says that they are first-class constraints.}  i.e., that there are structure functions
$f_{\alpha\beta}^\gamma$ such that
\begin{equation}\label{e:involutivity}
\{H_\alpha,H_\beta\}=f_{\alpha\beta}^\gamma H_\gamma,
\end{equation}
then the previous equations become
\begin{equation}\label{e:dddHalpha}
\ddd H_\alpha =- f_{\alpha\beta}^\gamma e^\beta H_\gamma.
\end{equation}
This implies that, if the constraints are imposed at the initial (or final) endpoint, they will be satisfied at every time.
\begin{remark}\label{r:coisomany}
As in Remark~\ref{e:coisoone}, it may be convenient to assume that the differentials of the $H_\alpha$s
are linearly independent at every point of the common zero locus $C$ of the $H_\alpha$s. This ensures that $C$ is a submanifold of $T^*M$ and that the hamiltonian vector fields $X^{H_\alpha}$ restricted to $C$ are linearly independent, so they define a regular distribution. It follows that $C$ is also an example of what is called a coisotropic submanifold and $\Span\{X^H_\alpha|_C\}$ is called its characteristic distribution (see Definition~\ref{d:specialsub} and Remark~\ref{r:sympred} for more on this).\footnote{The characteristic distribution is also the same as the kernel of the restriction of the symplectic form to $C$.} 
As a consequence of \eqref{e:involutivity}, the characteristic distribution is involutive (and hence, by Frobenius' theorem, integrable).
In this geometric framework, the evolution relation $L\subset T^*M\times T^*M$ consists of pairs of points on $C$ that can be connected by a path along the characteristic distribution and is therefore lagrangian in $\overline{T^*M}\times T^*M$, where the bar denotes that one takes the opposite symplectic form.
\end{remark}

Like in the case with a single constraint, fixing $q_\ii$ and $q_\oo$ is problematic, as for $e^\alpha=0$ for all $\alpha$ we have no solution unless $q_\ii=q_\oo$. Therefore, we choose different endpoint conditions, as in Remark~\ref{r:changing}; for instance, we can
fix $p$ at one endpoint as in Example~\ref{exa:pfinal}. 

We will denote by $(p^{q_\ii,Q_\oo},q_{q_\ii,Q_\oo},e)$ 
a solution $(p^{q_\ii,Q_\oo},q_{q_\ii,Q_\oo})$  to the evolution equations with the given $e$ (but we do \textbf{not} impose the constraints $H_\alpha=0$) satisfying
$q_{q_\ii,Q_\oo}(t_\ii)=q_\ii$ and $Q(p^{q_\ii,Q_\oo}(t_\oo),q_{q_\ii,Q_\oo}(t_\oo))=Q_\oo$. 
We then set
\[
S^f_\text{HJ}(q_\ii,Q_\oo)[e]:=S^f[p^{q_\ii,Q_\oo},q_{q_\ii,Q_\oo},e]
\]
with
\[
S^f[p,q,e]=- f(q(t_\oo),Q(p(t_\oo),q(t_\oo))) + \int_{t_\ii}^{t_\oo} (p_i\ddd q^i - e^\alpha H_\alpha(p,q)).
\]
By taking a variation, we get
\[
\delta S^f_\text{HJ} = P_i(t_\oo)\delta Q^i_\oo
-p_i(t_\ii)\delta q^i_\ii - \int_{t_\ii}^{t_\oo} \delta e^\alpha\, H_\alpha(p^{q_\ii,Q_\oo},q_{q_\ii,Q_\oo}).
\]
This shows that $S^f_\text{HJ} $ is indeed a generalized generating function for the evolution relation $L$. However, there is a lot of redundancy because the variations with respect to the $e^\alpha(t)$s yield the constraints for every $t\in[t_\ii,t_\oo]$, whereas we have observed that it is enough to impose them at the initial (or final) endpoint. This is related to the fact that, thank to \eqref{e:dddHalpha} and upon integration by parts, $S^f_\text{HJ}$ is invariant under a variation of $e$ of the form
\begin{equation}\label{e:gaugetransf}
\delta e^\alpha = \ddd \gamma^\alpha + f_{\alpha\beta}^\gamma e^\alpha\gamma^\beta,
\end{equation}
with the $\gamma$s arbitrary functions on $[t_\ii,t_\oo]$ that vanish at both endpoints. 
We call a transformation as in \eqref{e:gaugetransf} a gauge transformation.
We therefore see that $S^f_\text{HJ}$ only depends on the $e^\alpha(t)$s via their gauge classes. Analogously to the case of equation \eqref{e:HatST}, we define
\begin{equation}\label{e:HatSTa}
\Hat S^f_\text{HJ}(q_\ii,Q_\oo;[e]):=S^f_\text{HJ}(q_\ii,Q_\oo,e),\quad \text{with }e\in[e].
\end{equation}
In summary, we have the following
\begin{theorem}\label{t-thm1}
The generalized HJ action $\Hat S^f_\text{HJ}$, obtained by inserting into the classical action $S^f$ the solution of the evolution equations, is a generalized generating functions for the evolution relation, which is defined in terms of all the EL equations (evolution and constrains).
\end{theorem}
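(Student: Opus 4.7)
The plan is to unravel the definition and use the variational formula already computed just before the theorem statement. Since $S^f_\text{HJ}(q_\ii,Q_\oo)[e]$ is defined by evaluating $S^f$ on a solution of the evolution equations with prescribed boundary values $q_\ii$ and $Q_\oo$, the bulk terms in its variation that are proportional to the evolution equations automatically vanish. The only remaining contributions are the boundary terms from integration by parts and the bulk term obtained by varying the Lagrange multipliers $e^\alpha$, giving
\begin{equation*}
\delta S^f_\text{HJ} = P_i(t_\oo)\delta Q^i_\oo - p_i(t_\ii)\delta q^i_\ii - \int_{t_\ii}^{t_\oo} \delta e^\alpha \, H_\alpha(p^{q_\ii,Q_\oo},q_{q_\ii,Q_\oo}),
\end{equation*}
exactly as in the paragraph preceding the theorem. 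The next step is to justify the descent to gauge classes. Using \eqref{e:dddHalpha}, which expresses $\ddd H_\alpha$ on a solution of the evolution equations as a linear combination of the $H_\beta$s, together with integration by parts and the vanishing of the gauge parameter $\gamma^\alpha$ at both endpoints, one checks that the right-hand side vanishes for $\delta e^\alpha$ of the form \eqref{e:gaugetransf}. Hence the definition \eqref{e:HatSTa} of $\Hat S^f_\text{HJ}(q_\ii,Q_\oo;[e])$ is well posed and its differential in the parameter direction descends to the quotient.

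The core of the proof is then to match the critical locus of $\Hat S^f_\text{HJ}$ with the defining equations of the evolution relation $L$. Variations $\delta q_\ii$ and $\delta Q_\oo$ give
\begin{equation*}
p_i(t_\ii) = -\frac{\dd \Hat S^f_\text{HJ}}{\dd q_\ii^i}, \qquad P_i(t_\oo) = \frac{\dd \Hat S^f_\text{HJ}}{\dd Q_\oo^i},
\end{equation*}
which, exactly as in Remark~\ref{r:changing}, encode that the endpoint data $(p^\ii,q_\ii,P^\oo,Q_\oo)$ arises as the boundary value of a solution $(p^{q_\ii,Q_\oo},q_{q_\ii,Q_\oo})$ of the evolution equations expressed in the new Darboux chart. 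Variations with respect to $[e]$, transverse to the gauge orbits, enforce $H_\alpha(p^{q_\ii,Q_\oo}(t),q_{q_\ii,Q_\oo}(t))=0$; thanks to \eqref{e:dddHalpha} this pointwise condition is equivalent to imposing $H_\alpha=0$ at a single endpoint, so together with the first set of equations it recovers precisely the defining conditions of $L$ recalled after \eqref{e:Ssevconstr} and interpreted geometrically in Remark~\ref{r:coisomany}.

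The main subtlety, modest but worth flagging, is that the parameter space here is the infinite-dimensional quotient of one-form fields $e^\alpha$ by the (Lie algebra or Lie algebroid) gauge action \eqref{e:gaugetransf}. One must therefore verify that the notion of generalized generating function reviewed in Appendix~\ref{a:generatingfunctions} extends to this setting, i.e.\ that the pairing between $\delta [e]$ and the constraint function makes the zero locus of the parameter differential equal to the constraint surface, taking into account the propagation identity \eqref{e:dddHalpha}. Once this foundational point is established, the algebraic content of the theorem follows immediately from the variation formula above, and the same argument applies verbatim to the alternative endpoint conditions discussed in Remark~\ref{r:changing}.
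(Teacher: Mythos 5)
Your proposal is correct and follows essentially the same route as the paper: the variational identity $\delta S^f_\text{HJ} = P_i(t_\oo)\delta Q^i_\oo - p_i(t_\ii)\delta q^i_\ii - \int \delta e^\alpha H_\alpha$, gauge invariance via \eqref{e:dddHalpha} and integration by parts, and the passage to gauge classes \eqref{e:HatSTa} are exactly the paper's argument in the paragraphs preceding the theorem. The foundational point you flag about the infinite-dimensional parameter space of gauge classes is a fair observation; the paper handles it not in the proof itself but by making the parametrization explicit in the subsequent subsections (by $T$, $g$, or a groupoid element) and by the general discussion of generalized generating functions in Appendix~\ref{a:generatingfunctions}.
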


In the next subsections we will discuss more explicitly the gauge classes and how the HJ action depends on them.
We will proceed in increasing order of difficulty. In Section~\ref{s:fstrictly}, we will consider the strictly involutive case where all the structure functions $f_{\alpha\beta}^\gamma$ vanish. This case, which is a straightforward generalization of the single-constraint case, 
comprises the case where the constraints $H_\alpha$ are linear, see Example~\ref{e:flinear}, and will be important for abelian Chern--Simons theory.
Next, in Section~\ref{s:fLiealgebra}, we will consider the case when the structure functions are actually constant, so that there is an underlying Lie algebra structure (the $H_\alpha$s are in this case the components of an equivariant momentum map). This case will be important for nonabelian Chern--Simons theory. Finally, for completeness, we will treat the general case in Section~\ref{s:fgeneralcase}: this case requires knowledge of Lie algebroids and Lie groupoids, see \cite{Mac} for a general introduction, but will not be used in the rest of the paper, so it may be safely skipped.

\subsection{The strictly involutive case}\label{s:fstrictly}
Suppose that all the structure functions $f_{\alpha\beta}^\gamma$ vanish; i.e., $\{H_\alpha,H_\beta\}=0$ for all $\alpha$ and $\beta$. This implies that, evaluating on a solution to the evolution equations, we have
$\ddd H_\alpha = 0$ for all $\alpha$.
If we define
\[
\tau^\alpha(t):=\int_{t_\ii}^t e^\alpha,
\]
we then have $e^\alpha=\ddd\tau^\alpha$ and $\tau^\alpha(t_\ii)=0$. Observe that the computation
in \eqref{e:deltaSfeone} can be repeated verbatim yielding
\[
\delta S^f_\text{HJ} = P_i(t_\oo)\delta Q^i_\oo
-p_i(t_\ii)\delta q^i_\ii -\delta T^\alpha \, H_\alpha(p^{q_\ii,Q_\oo}(t_\oo),q_{q_\ii,Q_\oo}(t_\oo)),
\]
with $T^\alpha=\tau^\alpha(t_\oo)=\int_{t_\ii}^{t_\oo} e^\alpha$. Again, this first implies that $S^f_\text{HJ}$ is invariant under variations of the $e^\alpha$s with vanishing integral, so we may define $\Hat S^f_\text{HJ}$ as in \eqref{e:HatSTa} as
\[
\Hat S^f_\text{HJ}(q_\ii,Q_\oo;T):=S^f_\text{HJ}(q_\ii,Q_\oo)[e],\quad \text{with }T^\alpha=\int_{t_\ii}^{t_\oo} e^\alpha.
\]
Second, the above computation shows that $\Hat S^f_\text{HJ}$ is a generalized generating function for $L$.
\begin{example}[Linear case]\label{e:flinear}
Assume
\[
H_\alpha=p_i v^i_\alpha +w_{i\alpha} q^i=pv_\alpha+w_\alpha q,
\]
where the $(v^i_\alpha,w_{i\alpha})$s are given constants and we use the product of row and column vectors to make the notation a bit lighter. 
The Poisson bracket of any two constraints is a constant function. The involutivity conditions \eqref{e:involutivity} then imply that all these constant functions vanish, so we actually have
\[
\{H_\alpha,H_\beta\}=0.
\]
This is equivalent to the conditions 
$w_\alpha v_\beta=w_\beta v_\alpha$
for all $\alpha$ and $\beta$.
For future convenience we define the (symmetric!) matrix
\begin{equation}\label{e:Aalphabeta}
A_{\alpha\beta}:=w_\alpha v_\beta=w_\beta v_\alpha.
\end{equation}
We now want to compute $\Hat S^f_\text{HJ}$  explicitly as a function of $q_\ii$ and $p^\oo$. First observe that the evolution 
 equations read
\[
\ddd q = e^\alpha v_\alpha,\quad \ddd p = - e^\alpha w_{\alpha},
\]
and can be easily integrated for the given endpoint conditions:
\[
\begin{split}
q_{q_\ii,p^\oo}(t)&=q_\ii +\tau^\alpha(t) v_\alpha,\\
p^{q_\ii,p^\oo}(t)&=p^\oo +(T^\alpha-\tau^\alpha(t)) w_{\alpha}.
\end{split}
\]
Moreover, we have
\[
\begin{split}
p^{q_\ii,p^\oo}\ddd q_{q_\ii,p^\oo}(t)-e^\alpha H_\alpha(p^{q_\ii,p^\oo},q_{q_\ii,p^\oo})&=
e^\alpha p^{q_\ii,p^\oo} v_\alpha-e^\alpha p^{q_\ii,p^\oo} v_\alpha-e^\alpha w_{\alpha}q_{q_\ii,p^\oo}\\
&=
-\ddd\tau^\alpha w_{\alpha}(q_\ii+\tau^\beta v_\beta)
=-\ddd\tau^\alpha w_{\alpha}q_\ii-\ddd\tau^\alpha\tau^\beta A_{\alpha\beta}\\
&=-\ddd\tau^\alpha w_{\alpha}q_\ii-\frac12 \ddd(\tau^\alpha\tau^\beta)A_{\alpha\beta}
\end{split}
\]
with $A_{\alpha\beta}$ the symmetric matrix defined in \eqref{e:Aalphabeta}. This implies
\[
\int_{t_\ii}^{t_\oo}(p^{q_\ii,p^\oo}\ddd q_{q_\ii,p^\oo}(t)-e^\alpha H_\alpha(p^{q_\ii,p^\oo},q_{q_\ii,p^\oo}))=
-T^\alpha w_{\alpha}q_\ii-\frac12 T^\alpha T^\beta A_{\alpha\beta}.
\]
On the other hand,
\[
p^{q_\ii,p^\oo}(t_\oo)q_{q_\ii,p^\oo}(t_\oo)=p^\oo(q_\ii+T^\alpha v_\alpha).
\]
In conclusion, we get
\begin{equation}\label{e:Sfablin}
\Hat S^f_\text{HJ}(q_\ii,p^\oo,T)=-p^\oo q_\ii
-T^\alpha(p^\oo v_\alpha + w_{\alpha}q_\ii)-
\frac12 T^\alpha T^\beta A_{\alpha\beta},
\end{equation}
which generalizes \eqref{e:Sfablinone} to the the case of several linear constraints.
\end{example}

\begin{remark}[General change of endpoint conditions]\label{r:HJgenchange}
The above example is simple enough for the answer to be computable for a general change of endpoint conditions as in Remark~\ref{r:changing}. Let $f(q,Q)$ be a generating function for the change and fix $q_\ii$ and $Q_\oo$. By inspection in the previous computation, we see that to compute $\Hat S^f_\text{HJ}$ we actually only have to solve $q(t)$ for the given initial condition $q_\ii$ (this is independent of $Q_\oo$), which has already been done above. We then get in general
\begin{equation}\label{e:Sfablingen}
\Hat S^f_\text{HJ}(q_\ii,Q_\oo;T)= - f(q_\ii+T^\alpha v_\alpha,Q_\oo)
-T^\alpha w_{\alpha}q_\ii-
\frac12 T^\alpha T^\beta A_{\alpha\beta}.
\end{equation}
\end{remark}
Note that this may also be obtained as the composition of the generating functions \eqref{e:Sfablin}
and \eqref{e:HJonlyf}, following Remark~\ref{r:compgenfun}. Namely, let us write
\[
S_1(q_\ii,p^\oo;T)=-p^\oo q_\ii
-T^\alpha(p^\oo v_\alpha + w_{\alpha}q_\ii)-
\frac12 T^\alpha T^\beta A_{\alpha\beta},
\]
and
\[
S_2(q_\ii,Q_\oo) = -f(q_\ii,Q_\oo).
\]
Let
\[
S_3(q_\ii,p^\oo,\bar q_\ii,Q_\oo;T):=S_1(q_\ii,p^\oo,T)+S_2(\bar q_\ii,Q_\oo)+p^\oo\bar q_\ii.
\]
The composition of $S_1$ and $S_2$ is the evaluation of $S_3$ at its critical point in $(p^\oo,\bar q_\ii)$:
\[
\bar q_\ii +\frac{\dd S_1}{\dd p^\oo}=0,\quad p^\oo +\frac{\dd S_2}{\dd\bar q_\ii}=0.
\]
One easily sees that the result is \eqref{e:Sfablingen}.

{
\begin{remark}[Integrable systems II]\label{rem:intsysII}
As a generalization of Remark \ref{rem:intsysI}, we may consider the case where we have exactly $n$ constraints $H_1,\ldots,H_n$ in strict involution - an $n$-dimensional integrable system with vanishing hamiltonians. Around a point $(p_0,q_0)$ where $\ddd H_1, \ldots, \ddd H_n$ are linearly independent, again by the Carath\'eodory--Jacobi--Lie theorem, see footnote~\ref{f:CLJ},
we may find a Darboux chart $(P,Q)$ with $P_i = H_i$ (the action-angle coordinates). By the same computation as in Remark \ref{rem:intsysI}, in this chart the HJ action is expressed as 
\begin{equation}
S^f_\text{HJ}(q_\ii,Q_\oo;T) = - f(q_\ii,Q_\oo - T)
\end{equation}
and 
\begin{equation}
S^g_\text{HJ}(q_\ii,P^\oo;T) = -T^iP_i^{\oo} - {g}(q_\ii,P^\oo).
\end{equation}
\end{remark}
}

\subsection{The Lie algebra case}\label{s:fLiealgebra}
We now suppose that all the $f_{\alpha\beta}^\gamma$s are constant (but not necessarily zero). They can be viewed as structure constants of some Lie algebra $\frg$. In turn the $e^\alpha$s may be viewed as the components of a $\frg$ valued one-form $e$ on $[t_\ii,t_\oo]$ or, equivalently, as a connection one-form. We may also view the $H_\alpha$s as the components of a map $H\colon T^*M\to\frg^*$, where $T^*M$ is the target symplectic manifold.
The involutivity conditions \eqref{e:involutivity} say that $H$ is an equivariant momentum map, whereas
 \eqref{e:dddHalpha} may be read as the condition $\ddd_e H=0$ that
the composition of $H$ with a solution to the evolution equations is covariantly constant. We rewrite the action as
\[
S[p,q,e] =\int_{t_\ii}^{t_\oo} (p_i\ddd q^i - \langle H(p,q),e\rangle),
\]
where $\langle\ ,\ \rangle$ denotes the pairing of $\frg^*$ with $\frg$.
Next we define
\begin{equation}\label{e:defhofe}
h(t)=P\EE^{\int_{t_\ii}^t e} \in G,
\end{equation}
where $P$ denotes the path ordered product and $G$ is the simply connected Lie group with Lie algebra $\frg$.
Note that $h(t_\ii)=1$ and $e=h^{-1}\ddd h$.\footnote{\label{f:lefttrans}We use for simplicity the matrix notation. If $G$ is not a matrix Lie group, the correct formula is 
\[
e(t)=(\ddd_1l_{h})^{-1}\ddd_th\colon T_tI\to\frg,
\]
where $1$ denotes the unit of $G$ and $l_h\colon G\to G$ denotes the left translation $g\mapsto hg$.}
We write
\[
g:=h(t_\oo)=P\EE^{\int_{t_\ii}^{t_\oo} e}.
\]
Finally, we define $\Tilde H:=\Ad_h^*H$. Observe that, when evaluated on a solution to the evolution equations, $\Tilde H$ is constant. We now repeat the computation as in \eqref{e:deltaSfeone}:
\[
\begin{split}
\delta S^f_\text{HJ} &= P_i(t_\oo)\delta Q^i_\oo
-p_i(t_\ii)\delta q^i_\ii- \int_{t_\ii}^{t_\oo} \langle H(p^{q_\ii,Q_\oo},q_{q_\ii,Q_\oo}),\delta e\rangle
\\
&= P_i(t_\oo)\delta Q^i_\oo
-p_i(t_\ii)\delta q^i_\ii- \int_{t_\ii}^{t_\oo} \langle \Tilde H(p^{q_\ii,Q_\oo},q_{q_\ii,Q_\oo}),\Ad_{h}\delta e\rangle
\\
&= P_i(t_\oo)\delta Q^i_\oo
-p_i(t_\ii)\delta q^i_\ii-  \left\langle \Tilde H(p^{q_\ii,Q_\oo}(t_\oo),q_{q_\ii,Q_\oo}(t_\oo)),\int_{t_\ii}^{t_\oo}\ddd(\delta h h^{-1})\right\rangle
\\
&= P_i(t_\oo)\delta Q^i_\oo
-p_i(t_\ii)\delta q^i_\ii-  \left\langle \Tilde H(p^{q_\ii,Q_\oo}(t_\oo),q_{q_\ii,Q_\oo}(t_\oo)),\delta g g^{-1}\right\rangle
\\
&= P_i(t_\oo)\delta Q^i_\oo
-p_i(t_\ii)\delta q^i_\ii-  \left\langle H(p^{q_\ii,Q_\oo}(t_\oo),q_{q_\ii,Q_\oo}(t_\oo)),g^{-1}\delta g\right\rangle.
\end{split}
\]
This explicitly shows that $S^f_\text{HJ}$ is invariant under variations of $e$ that do not change $g$. As in \eqref{e:HatSTa}, we define
\begin{equation}\label{e:SHJLie}
\Hat S^f_\text{HJ}(q_\ii,Q_\oo;g):=S^f_\text{HJ}(q_\ii,Q_\oo)[e],\quad \text{with }g=P\EE^{\int_{t_\ii}^{t_\oo} e},
\end{equation}
and it immediately follows that $\Hat S^f_\text{HJ}$ is a generalized generating function for the evolution relation $L$.

\begin{example}[Biaffine case]\label{exa:biaffine}
Consider a target $T^*V=V^*\oplus V\ni(p,q)$ and constraints that are affine both on $V$ and on $V^*$; namely,
\[
H_\alpha(p,q)=-p_i(\rho_\alpha)^i_j q^j+
p_i v^i_\alpha +w_{i\alpha} q^i=
- p\rho_\alpha q+
pv_\alpha+w_\alpha q,
\]
where, for each $\alpha$, $\rho_\alpha$ is a given endomorphism of $V$, $v_\alpha$ is a given vector in $V$ and $w_\alpha$ is a given vector in $V^*$ (we denote the pairing between $V^*$ and $V$ simply by juxtaposition).
The involutivity conditions \eqref{e:involutivity} are equivalent to the relations
\begin{equation}\label{e:involrhovw}
\begin{split}
[\rho_\alpha,\rho_\beta] &= f_{\alpha\beta}^\gamma\rho_\gamma,\\
\rho_\alpha v_\beta - \rho_\beta v_\alpha &= f_{\alpha\beta}^\gamma v_\gamma,\\
w_\alpha\rho_\beta - w_\beta\rho_\alpha &= f_{\alpha\beta}^\gamma w_\gamma,\\
w_\beta v_\alpha - w_\alpha v_\beta &= 0.
\end{split}
\end{equation}
The first relation says that $\rho$ is a representation of the Lie algebra $\frg$ on $V$, the second and the third that $v\colon \frg\to V$ and $w\colon \frg\to V^*$ are Lie algebra $1$-cocycles,   
 and the last is what we already saw in Example~\ref{e:flinear}. The 
evolution 
 equations read
\[
\ddd_e q = e^\alpha v_\alpha=v(e),\quad \ddd_e p = - e^\alpha w_{\alpha}=-w(e),
\]
where we introduced the covariant derivatives
\[
\ddd_e q := \ddd q + e^\alpha\rho_\alpha q,\quad \ddd_e p := \ddd p - e^\alpha p\rho_\alpha. 
\]
Let $R\colon G\to\Aut(V)$ denote the representation that integrates $\rho$ to the simply connected Lie group $G$.
If we introduce $\Tilde q(t):=R_{h(t)}q(t)$ with $h$ defined in \eqref{e:defhofe}, then the first evolution relation becomes
\[
\ddd\Tilde q = R_hv(e),
\]
which has solution, for the the inital condition $\Tilde q(t_\ii)=q(t_\ii)=q_\ii$, given by $\Tilde q(t)=q_\ii+\int_{t_\ii}^t R_hv(e)$. Therefore,
\[
q_{q_\ii,p^\oo}(t)=R_{h(t)}^{-1}\left(q_\ii+\int_{t_\ii}^t R_hv(e)\right).
\]
Similarly, we get
\[
p^{q_\ii,p^\oo}(t)=\bar R_{h(t)}^{-1}\left(p^\oo+\int_{t}^{t_\oo} \bar R_hw(e)\right),
\]
where $\bar R_h:=(R_h^{-1})^*$ is the dual representation on $V^*$. For later convenience we introduce the $1$-forms on $G$ (valued in $V$ and $V^*$, respectively)
\[
\alpha(h):=R_hv(e),\quad \beta(h):=\bar R_hw(e).
\]
As a consequence of the second and third relations of \eqref{e:involrhovw} (and of the fact that the Maurer--Cartan $1$-form $e$ is flat), we get that $\alpha$ and $\beta$ are closed. This implies that $\int_\gamma\alpha$ and $\int_\gamma\beta$ are invariant under homotopies of the path $\gamma\colon[t_\ii,t_\oo]\to G$ with fixed endpoints.
In particular, if $\gamma$ is a path joining the identity in $G$ to some element $g$, then we define
\[
\Phi(g):=\int_\gamma\alpha,\quad \Psi(g):=\int_\gamma\beta.
\]
Note that, since $G$ is simply connected, we get well-defined maps $\Phi\colon G\to V$ and $\Psi\colon G\to V^*$.\footnote{One can easily show that $\Phi(g_1g_2)=\Phi(g_1)+R_{g_1}\Phi(g_2)$ and $\Psi(g_1g_2)=\Psi(g_1)+\bar R_{g_1}\Psi(g_2)$; i.e., $\Phi$ and $\Psi$ are $1$-cocycles on $G$ with values in $V$ and $V^*$, respectively.
Our formulae give an explicit integration of the Lie algebra $1$-cocycles to the corresponding Lie group $1$-cocycles, which is guaranteed by general arguments, see \cite[Lemma 2.13]{Lu}.}
We now want to compute the HJ action. First observe that
\[
\begin{split}
p^{q_\ii,p^\oo}\ddd q_{q_\ii,p^\oo}-e^\alpha H_\alpha(p^{q_\ii,p^\oo},q_{q_\ii,p^\oo})&= 
p^{q_\ii,p^\oo}\ddd_e q_{q_\ii,p^\oo} - p^{q_\ii,p^\oo}v(e)-w(e)q_{q_\ii,p^\oo}\\
&= -w(e)q_{q_\ii,p^\oo}=-w(e) R_{h}^{-1}\left(q_\ii+\int_{t_\ii}^\bullet R_hv(e)\right),
\end{split}
\]
so
\[
\int_{t_\ii}^{t_\oo}(p^{q_\ii,p^\oo}\ddd q_{q_\ii,p^\oo}-e^\alpha H_\alpha(p^{q_\ii,p^\oo},q_{q_\ii,p^\oo}))=
-\Psi(g)q_\ii-\text{WZW}(g)
\]
with $g=h(t_\oo)$. We introduced the following notation (which will become the WZW term in the case of Chern--Simons theory):
\begin{equation}\label{e:WZWalphabeta}
\text{WZW}(g):=\int_{t_1<t_2} \beta_2\,\alpha_1
\end{equation}
with $\alpha_i:=\pi_i^*\alpha$, $\beta_i:=\pi_i^*\beta$ and
$\pi_i\colon [t_\ii,t_\oo]^2\to G$,
$\pi_i(t_1,t_2)=\gamma(t_i)$, where $\gamma$ is a(ny) path joining the identity to $g$ (e.g., the path $h(t)$). The fact that 
{$\text{WZW}(g)$}
is invariant under homotopies of $\gamma$ with fixed endpoints follows from the fact that $\alpha$ and $\beta$ are closed and from the fourth relation in \eqref{e:involrhovw}. Finally, we have
\[
p^{q_\ii,p^\oo}(t_\oo)q_{q_\ii,p^\oo}(t_\oo)=p^\oo R_{g}^{-1}\left(q_\ii+\Phi(g)\right),
\]
so
\begin{equation}\label{e:Sbiaffine}
\Hat S^f_\text{HJ}(q_\ii,p^\oo;g)=-p^\oo R_{g}^{-1}q_\ii
-p^\oo R_{g}^{-1}\Phi(g)
-\Psi(g)q_\ii-\text{WZW}(g),
\end{equation}
which generalizes \eqref{e:Sfablin} to the case of biaffine constraints.
\end{example}

\begin{remark}[Exponential map]\label{r:exp}
We now want to rewrite the WZW term \eqref{e:WZWalphabeta} of the last example to facilitate the comparison with the case of Chern--Simons theory under the assumption that the exponential map $\frg\to G$ is surjective. To simplify the notation, we also assume $t_\ii=0$ and $t_\oo=1$. In this case, given $g=\EE^\xi$, we may choose
$h(t)=\EE^{t\xi}$. It follows that $e=\xi\ddd t$. 
We introduce
\begin{equation}\label{e:W2W3}
\begin{split}
W_2(g)&:=\Psi(g)\Phi(g)=\int_{t_1<t_2} \beta_2\,\alpha_1+\int_{t_2<t_1} \beta_2\,\alpha_1,\\
W_3(g)&:=\int_0^1 \Psi(h)\rho_e\Phi(h)=
\int_{t_1<t_2,t_3<t_2}\beta_3\rho_{e_2}\alpha_1.
\end{split}
\end{equation}
We have
\[
\begin{split}
W_3(g)&=\int_{t_1<t_2,t_3<t_2}w(e_3)R_{h_3}^{-1}\rho_{e_2}\alpha_1=
\int_{t_1<t_2,t_3<t_2} w(\xi) R_{\EE^{t_3\xi}}^{-1}\rho(\xi)\ddd t_3\ddd t_2\alpha_1\\&=
\int_{t_1<t_2,t_3<t_2} w(\xi) \rho(\xi)R_{\EE^{t_3\xi}}^{-1}\ddd t_3\ddd t_2\alpha_1=
-\int_{t_1<t_2,t_3<t_2} w(\xi) \frac\ddd{\ddd t_3}R_{\EE^{t_3\xi}}^{-1}\ddd t_3\ddd t_2\alpha_1\\&=
-\int_{t_1<t_2} w(\xi)R_{\EE^{t_2\xi}}^{-1}\ddd t_2\alpha_1+\int_{t_1<t_2} w(\xi)\ddd t_2\alpha_1\\&=
-\int_{t_1<t_2} \beta_2\,\alpha_1+\int_{t_1<t_2} w(\xi)\ddd t_2\alpha_1.
\end{split}
\]
Therefore,
\[
W_2(g)-W_3(g)=2\text{WZW}(g)+\left(\int_{t_2<t_1} \beta_2\,\alpha_1-\int_{t_1<t_2} w(\xi)\ddd t_2\alpha_1\right).
\]
We now want to show that the term in brackets vanishes. In fact,
\[
\int_{t_2<t_1} \beta_2\,\alpha_1=\int_{t_2<t_1}w(\xi)R_{\EE^{(t_1-t_2)\xi}}v(\xi)\ddd t_2\ddd t_1
\underset{r=t_1}{\overset{s=t_1-t_2}{=}} \int_{s<r}w(\xi)R_{\EE^{s\xi}}v(\xi)\ddd r\ddd s.
\]
On the other hand,
\[
\int_{t_1<t_2} w(\xi)\ddd t_2\alpha_1=\int_{t_1<t_2} w(\xi)R_{\EE^{t_1\xi}}v(\xi)\ddd t_2\ddd t_1
\underset{r=t_2}{\overset{s=t_1}=}\int_{s<r} w(\xi)R_{\EE^{s\xi}}v(\xi) \ddd r\ddd s.
\]
In conclusion,
\begin{equation}\label{e:WZWW2W3}
\text{WZW}(g)=\frac12W_2(g)-\frac12W_3(g).
\end{equation}
\end{remark}

\begin{example}[Adjoint representation]\label{exa:adjoint}
We now specialize the above example and remark to the case when $(V,\rho)=(\frg,\ad)$ is the adjoint representation.
Note that the second relation of \eqref{e:involrhovw} now says that $v$ is a derivation of $\frg$. As such it may be extended to the universal enveloping algebra as a derivation\footnote{This is well-defined because, if $v$ is a Lie algebra derivation and is extended to the tensor algebra as a derivation, then it preserves the ideal generated by elements of the form ${\xi\otimes\eta-\eta\otimes\xi-[\xi,\eta]}$.} and then to (the image of the exponential map in) the group. We have
\[
v(\EE^\xi)=\int_0^1 \EE^{\xi t}v(\xi)\EE^{\xi(1- t)}\ddd t.
\]
In this case, for $g=\EE^\xi$, we get
\[
\Phi(g) = v(g)g^{-1},
\]
where again for notational simplicity we have assumed to have a matrix Lie group. We further assume that $\frg$ is a quadratic Lie algebra, i.e., it is equipped with an invariant nondegenerate pairing $\langle\ ,\ \rangle$, which we can use to identify $\frg^*$ with $\frg$. We will then denote by $\bar q$ the image of $p$ under this isomorphism (and we will then take the target to be $\frg\oplus\frg$ with symplectic structure induced from the pairing) and by $\bar v$ the composition of $w$ with the isomorphism. We can then rewrite \eqref{e:Sbiaffine} as
\begin{equation}\label{e:Sbiaffinead}
\Hat S^f_\text{HJ}(q_\ii,p^\oo;g)=-\langle \bar q^\oo,{g}^{-1}q_\ii g\rangle
-\langle \bar q^\oo,{g}^{-1}v(g)\rangle
-\langle \bar v(g)g^{-1},q_\ii\rangle
-\text{WZW}(g)
\end{equation}
and, using \eqref{e:W2W3} and \eqref{e:WZWW2W3}, the WZW term as
\begin{equation}\label{e:WZWad}
\text{WZW}(g)=\frac12\langle\bar v(g)g^{-1},v(g)g^{-1}\rangle+
\frac12\int_0^1 \langle\bar v(h)h^{-1},[v(h)h^{-1},\ddd h h^{-1}]\rangle.
\end{equation}
\end{example}
\begin{remark}[A trivial extension]\label{r:tensorW}
The last example may be generalized to the case when $V=\frg\otimes Z$, with $Z$ a trivial representation of $\frg$. We keep denoting the representation by $\ad$: $\ad_x(y\otimes z)=[x,y]\otimes z$.
Now 
$v\colon\frg\to\frg\otimes Z$ is a derivation in the sense that $v([x,y])=\ad_x v(y)-\ad_yv(x)$ for all $x,y\in\frg$.
We identify $V^*$ with $\frg\otimes Z^*$ using the nondegenerate pairing on $\frg$ and denote by $\bar q$ and $\bar v$ the composition of $p$ and $w$ with this isomorphism. Note that $\bar v$ is also a derivation in the above sense.
The HJ action is again given by \eqref{e:Sbiaffinead} and \eqref{e:WZWad} with the obvious understanding of notations.
\end{remark}


\begin{remark}[General solution via the exponential map]\label{r:genexp}
In the general case, but under the assumption that the exponential map $\frg\to G$ is surjective,
the HJ action can also be characterized as follows. As in \eqref{e:SHJLie}, we set $g=P\EE^{\int_{t_\ii}^{t_\oo} e}$.
We then take $T\in\frg$ with $\EE^T=g$ and define
\[
h(t) = \EE^{-\frac{t-t_\ii}{t_\oo-t_\ii}T}P\EE^{\int_{t_\ii}^{t} e}.
\]
We have $e =h^{-1}\ddd h + h^{-1}\frac{T\,\ddd t}{t_\oo-t_\ii}h$, which shows that we can reduce $e$ to $e_0:=\frac{T\,\ddd t}{t_\oo-t_\ii}$ by a gauge transformation. Since \eqref{e:SHJLie} does not depend on which representative of $[e]$ we take, we may pick $e_0$. With the change of variable $s=\frac{t}{t_\oo-t_\ii}$,
the action then becomes
\[
S[p,q,e_0] =\int_{0}^{1} (p_i\ddd q^i - \Tilde H(p,q)\ddd s),\qquad
\Tilde H:={\langle H,T\rangle},
\]
As a result, 
\begin{equation}\label{e:HJgeneralLie}
\Hat S^f_\text{HJ}(q_\ii,Q_\oo;g) = S_\text{HJ}^{f,\Tilde H}(q_\ii,Q_\oo;0,1),
\end{equation}
where $S_\text{HJ}^{f,\Tilde H}$ denotes the HJ action for a system with hamiltonian $\Tilde H$. This formula is very general, but, unlike the previous ones for the biaffine case, not explicit.
\end{remark}

\subsection{The general case}\label{s:fgeneralcase}
We briefly discuss the general case, where we make no simplifying assumptions on the structure functions $f_{\alpha\beta}^\gamma$, only for completeness, as this will not be needed for the rest of the paper. As a consequence, this section, which requires knowledge of Lie algebroids and Lie groupoids, may be safely skipped by the not interested reader.

We will assume that the differentials of the $H_\alpha$s on 
 their common zero locus $C$ 
are linearly independent. This in particular implies that $C$ is a submanifold of $N:=T^*M$ and, as observed in Remark~\ref{r:coisomany}, that the hamiltonian vector fields $X_\alpha:=X^{H_\alpha}$ restricted to $C$ are linearly independent, so they define a regular distribution: the characteristic distribution of the coisotropic submanifold $C$. (See Definition~\ref{d:specialsub} and Remark~\ref{r:sympred}.)

The first remark is that the structure functions $f_{\alpha\beta}^\gamma$ and the hamiltonian vector fields $X_\alpha$ of the constraints $H_\alpha$ define a Lie algebroid $A$ over $C$. As a vector bundle, $A$ is the trivial product $C\times \RR^k$, where $k$ is the number of constraints. The anchor map $\rho\colon A\to TC$ is given by the vector fields $X_\alpha$:
\[
\rho(x,u_\alpha)=X_\alpha(x),
\]
where $(u_1,\dots,u_k)$ is the standard basis of $\RR^k$. We can expand a section $\sigma\in\Gamma(A)$ as
$\sigma^\alpha u_\alpha$, where now the $u_\alpha$s are regarded as constant sections of $A$. We then
have $\rho(\sigma)= \sigma^\alpha X_\alpha$. The Lie bracket is defined as
\[
[\sigma,\tau]^\gamma = f_{\alpha\beta}^\gamma \sigma^\alpha\tau^\beta + \sigma^\alpha X_\alpha(\tau^\gamma)-
\tau^\alpha X_\alpha(\sigma^\gamma).
\]


\begin{remark}\label{r:LielagebroidC}
The Lie algebroid $A$ is isomorphic to the characteristic distribution of $C$ viewed as a Lie subalgebroid of $TC$. Note that in general the conormal bundle $N^*C$ of a coisotropic submanifold of a symplectic manifold $N$ has a canonical Lie algebroid structure, isomorphic to the characteristic distribution, which makes it into a Lie subalgebroid of $T^*N$. In our particular case, since $C$ is defined by constraints, its normal bundle, and therefore its conormal bundle, can be trivialized, and this is what have done above.
\end{remark}

The Lie algebroid $A\to C$ can be extended to the vector bundle $\Hat A\to N=T^*M$ with $\Hat A=N\times \RR^k$. We can then view the $H_\alpha$s as a section $H$ of $\Hat A^*$. It is also possible to extend the anchor map and the bracket as defined above, but in general $\Hat A$ will not be a Lie algebroid 
(it is, though, when the structure functions are constant).

Next we denote the map $(p,q)\colon I\to T^*M$, with $I$ the interval $[t_\ii,t_\oo]$, as $x\colon I\to N$. The field $e$ may be regarded as a section of $T^*I\otimes x^*\Hat A$.
The constraints $H_\alpha$s composed with $x$ define a section of $x^*\Hat A^*$, which we keep denoting as $H$. As a consequence, 
we may rewrite the action as
\[
S[p,q,e] =\int_{t_\ii}^{t_\oo} (p_i\ddd q^i - \langle H(p,q),e\rangle),
\]
where $\langle\ ,\ \rangle$ denotes the pairing of $\Hat A^*$ with $\Hat A$.
A solution $(x,e)$ of the evolution equations is the same as an anchor-preserving morphism $TI\to \Hat A$. 

\newcommand{\calG}{\mathcal{G}}

The generalized HJ action $\Hat S^f_\text{HJ}$ is defined as in \eqref{e:HatST} and, by Theorem~\ref{t-thm1}, is a generalized generating function for the evolution relation. 

It is possible to give a more transparent expression for $\Hat S^f_\text{HJ}(q_\ii,Q_\oo;[e])$, analogous to the one we have in the Lie algebra case, when 
the $x$ component of
the assumedly
unique solution of the evolution equations specified by $q_\ii$, $Q_\oo$, and $e$ has image in $C$ (note that, if $x(t)$ is in $C$ for some $t$, then $x(t)$ is in $C$ for every $t$).
%
%
%
In fact, in this case we may view $(x,e)$ as
an anchor-preserving morphism $x\colon TI\to A$, which is the same as a Lie algebroid morphism $TI\to A$.
Denote by $\calG$ the source simply connected Lie groupoid of $A$ (i.e., up to isomorphism, the monodromy groupoid of the characteristic distribution).
The Lie algebroid morphism $(x,e)$, with initial condition $x(t_\ii)=x_a\in C$, may be then uniquely integrated to a Lie groupoid morphism
$E\colon I\times I\to\calG$. If we denote by $\alpha$ and $\beta$ the source and target maps of $\calG$, then we have, in particular, 
\[
x(s)=\alpha(E(s,t))\ \forall t\in I,\qquad x(t)=\beta(E(s,t))\ \forall s\in I.
\]
Define $h(t):=E(t_\ii,t)$. We then have
\[
h(t_\ii)=1_{x_a},\quad \alpha(h(t))=x_a\ \forall t\in I,
\]
where $1_x$ denotes the unit of $\calG$ at $x\in N$.
In particular, $h$ is a map $I\to\alpha^{-1}(x_a)$. 

We may now recover $e$ from $h$, generalizing footnote~\ref{f:lefttrans},
 as follows. First consider the linear map 
$\ddd_th\colon T_tI\to T_{h(t)}\alpha^{-1}(x_a)$. Next consider the left translation 
$l_h\colon\alpha^{-1}(\beta(h))\to \alpha^{-1}({\alpha(h) })
$, $l_hg=hg$. Since $h1_{\beta(h)}=h$, we get the linear map
\[
\ddd_{1_{\beta(h)}}l_h\colon A_{\beta(h)}\to T_h \alpha^{-1}(\alpha(h)).
\]
Using $\beta(h(t))=x(t)$ and $\alpha(h(t))=x_a$, we get the linear map
\[
e(t)=(\ddd_{1_{x(t)}}l_h)^{-1}\ddd_th\colon T_tI\to A_{x(t)}, 
\]
which may be shown to be the value at $t$ of our section $e$ of $T^*I\otimes x^*\Hat A$.

The rest of the computation is now like in the Lie algebra case, and we have
\[
\Hat S^f_\text{HJ}(q_\ii,Q_\oo;g)=S^f_\text{HJ}(q_\ii,Q_\oo)[e]\quad \text{with }g=h(t_\oo)\in\alpha^{-1}(x(t_\ii))\cap\beta^{-1}(x(t_\oo)).
\]
Finally, we can take variations with respect to $q_\ii$, $Q_\oo$, and $g$, getting
\[
\delta \Hat S^f_\text{HJ}(q_\ii,Q_\oo;g)= 
P_i(t_\oo)\delta Q^i_\oo
-p_i(t_\ii)\delta q^i_\ii -  \left\langle H(x(t_\oo)),(\ddd_{{1_{x(t_\oo)} }
}l_g)^{-1}\delta g\right\rangle,
\]
with $x(t_\oo)=(p^{q_\ii,Q_\oo}(t_\oo),q_{q_\ii,Q_\oo}(t_\oo))$. This is consistent with $\Hat S^f_\text{HJ}$ being a generalized generating function for the evolution relation $L$ (it is in general a weaker statement, as we are now only allowed to take variations in the class of variables for which the $x$ component of a solution lies in $C$).

\begin{remark}
The case of constant structure functions, hence corresponding to a Lie algebra $\frg$, fits into the general case as the action Lie algebroid $T^*M\times\frg$ 
with the action Lie groupoid $T^*M\times G$ as its integration. In this case, the Lie algebroid and the Lie groupoid are actually defined over the whole of $T^*M$ and not only over $C$, so the formula for $\Hat S^f_\text{HJ}(q_\ii,Q_\oo;g)$ has no restrictions.
\end{remark}


\section{Systems with nontrivial evolution and constraints}\label{r:nontrivev}
For completeness, 
we briefly discuss here the case of systems which, in addition to constraints $H_\alpha$, $\alpha=1\dots,k$, also have a nontrivial evolution with hamiltonian $H$. Namely, we consider an action of the form
\[
S[p,q,e] =\int_{t_\ii}^{t_\oo} (p_i\ddd q^i - \ddd t\, H- e^\alpha H_\alpha(p,q)).
\]
In addition to assuming involutivity of the constraints, as in \eqref{e:involutivity}, we also assume them to be constants of motion for $H$: namely, $\{H,H_\alpha\}=0$ for all $\alpha$. 

This system can actually be treated as above simply adding $H$ to the set of the, now $k+1$, constraints, say, as $H_0$. The only difference is that the one-form $e^0$ instead of being free will be set to be $\ddd t$. 
This is a possible choice, as the Lie group $\Tilde G$ (in the case of structure constants $f_{\alpha\beta}^\gamma$; otherwise, more generally, 
the Lie groupoid $\Tilde \calG$) 
corresponding to the rank-$(k+1)$ Lie algebra (Lie algebroid) factorizes as $\RR\times G$ ($\RR\times\calG$), so we can fix $t_\oo-t_\ii$ in the first factor. 

The HJ action, which serves as a generalized generating function, is then, in the Lie algebra case,
\[
\Hat S^f_\text{HJ}(q_\ii,Q_\oo;g):=S^f_\text{HJ}(q_\ii,Q_\oo)[\ddd t,e],\quad \text{with }g=P\EE^{\int_{t_\ii}^{t_\oo} e}.
\]
In the Lie algebroid case, we get, more generally,
\[
\Hat S^f_\text{HJ}(q_\ii,Q_\oo;g):=S^f_\text{HJ}(q_\ii,Q_\oo)[\ddd t,e], 
\]
with $g:=\pi_{\calG}(\Tilde h(t_\oo))\in\alpha^{-1}(x(t_\ii))\cap\beta^{-1}(x(t_\oo))$ and  $\pi_\calG$ the projection $\Tilde \calG\to\calG$.

\subsection{Classical mechanics}\label{r:SwithH}
A more conceptual way is to add two new canonically conjugated variables $(E,\ttt)$ and consider instead the action
\[
S[p,q,E,\ttt,e^0] =\int_{t_\ii}^{t_\oo} (p_i\ddd q^i -E\ddd\ttt-e^0(H-E)- e^\alpha H_\alpha(p,q)).
\]
Note that the constraints $H_\alpha$, $\alpha=0\dots,k$, with $H_0=H-E$, are still first class. This is then a system as the ones we have studied in the previous sections. One of the evolution equations is now $\ddd E=0$, which says that eventually $E$ will be a constant (the energy). Another important evolution equation is $\ddd\ttt=e^0$, which can be used to substitute $\ddd\ttt$ to $e^0$, so that we may regard $\ttt$ as time. Note that, if we set endpoint conditions $\ttt(t_\ii)=t_\ii$ and $\ttt(t_\oo)=t_\oo$, we actually get {the time evolution} from $t_\ii$ to $t_\oo$. 

In the following, we focus on the case $H_\alpha=0$ for all $\alpha>0$, i.e.,
\[
S[p,q,E,\ttt,e^0] =\int_{t_\ii}^{t_\oo} (p_i\ddd q^i  \textcolor{blue}{-} E\ddd\ttt-e^0(H-E)).
\]
The evolution equations are
\[
\ddd q^i=e^0\frac{\dd H}{\dd p_i},\quad \ddd p_i=-e^0\frac{\dd H}{\dd q^i},\quad \ddd E=0,\quad \ddd\ttt=e^0,
\]
and the constraint is $H(p,q)=E$. The evolution relation is then
\begin{equation}\label{e:LCM}
L=\{(p^\ii,E^\ii,q_\ii,\ttt_\ii,p^\oo,E^\oo,q_\oo,\ttt_\oo)\ |\ (p,q)_\oo = \phi^H_{\ttt_\oo-\ttt_\ii}((p,q)_\ii),\ 
E^\oo=E^\ii=H(p^\ii,q_\ii)
\}
\end{equation}
where $\phi^H_T$ denotes the hamiltonian flow of $H$ for time $T$. 

We now consider endpoint conditions $(q_\ii,\ttt_\ii,q_\oo,E^\oo)$. We thus evaluate the modified action
$S^f[p,q,E,\ttt,e^0] := E(t_\oo)\ttt(t_\oo) + S[p,q,E,\ttt,e^0]$
on the solution, with components $(\Tilde p,\Tilde q,\Tilde \ttt)$,
of the evolution equations for fixed $(q_\ii,\ttt_\ii,q_\oo,E^\oo)$ and fixed $e^0$. The first remark is that $-E\ddd\ttt+e^0E$ vanishes on a solution. Therefore,
\[
S^f_\text{HJ}(q_\ii,\ttt_\ii,q_\oo,E^\oo,e^0)= E^\oo\,\Tilde\ttt(t_\oo) +
\int_{t_\ii}^{t_\oo}(\Tilde p_i\ddd \Tilde q^i - \ddd \Tilde\ttt\, H(\Tilde p, \Tilde q)).
\]
We can actually pick $e^0$ constant or at least of constant sign (by a gauge transformation).
With this choice  we are sure that the map $t\mapsto\Tilde\ttt(t)=\ttt_\ii+\int_{t_\ii}^te^0$ is a diffeomorphism. We can then make this change of variable in the integral, getting
\[
\Hat S^f_\text{HJ}(q_\ii,\ttt_\ii,q_\oo,\ttt_\oo;T)=E^\oo\,(\ttt_\ii+T) +\int_{\ttt_\ii}^{\ttt_\ii+T}(\Check p_i\ddd \Check q^i - \ddd \Tilde\ttt\, H(\Check p, \Check q)),
\]
with $\Check p(\Tilde\ttt):=\Tilde p(t(\ttt))$, $\Check q(\Tilde\ttt):=\Tilde q(t(\ttt))$ and $T:=\int_{t_\ii}^{t_\oo}e^0$.
But now we recognize in the second term on the right hand side the time-dependent HJ action \eqref{e:HJtimedep} for the hamiltonian $H$ from time $\ttt_\ii$ to time $\ttt_\ii+T$. Therefore, we have
\begin{equation}\label{e:HJCMtE}
\begin{split}
\Hat S^f_\text{HJ}(q_\ii,\ttt_\ii,q_\oo,E^\oo;T)&=S_\text{HJ}^H(q_\ii,q_\oo;\ttt_\ii,\ttt_\ii+T) + E^\oo(\ttt_\ii+T)\\
&=S_\text{HJ}^H(q_\ii,q_\oo;0,T) + E^\oo(\ttt_\ii+T).
\end{split}
\end{equation}
One can easily verify that $\Hat S^f_\text{HJ}$ is a generating function for the evolution relation \eqref{e:LCM}.

The HJ action $\Hat S_\text{HJ}$ 
for endpoint conditions $(q_\ii,\ttt_\ii,q_\oo,\ttt_\oo)$
can be obtained from \eqref{e:HJCMtE} by composing, following Remark~\ref{r:compgenfun}, $\Hat S^f_\text{HJ}$ with the generating function $-E^\oo\ttt_\oo$ of the identity map, and reducing with respect to the intermediate variable $E^\oo$; namely, $\Hat S_\text{HJ}$ is the evaluation of
$\Hat S^f_\text{HJ}-E^\oo\ttt_\oo$ at its critical point in $E^\oo$. We get
\begin{equation}\label{e:HJCMtt}
\Hat S_\text{HJ}(q_\ii,\ttt_\ii,q_\oo,\ttt_\oo)=S_\text{HJ}^H(q_\ii,q_\oo;\ttt_\ii,\ttt_\oo).
\end{equation}
One can easily verify that $\Hat S_\text{HJ}$ is also a generating function for the evolution relation \eqref{e:LCM}.
Observe that the dependency on $T$ has disappeared.

Note that we could not have computed $\Hat S_\text{HJ}$ directly because fixing the endpoint conditions $(q_\ii,\ttt_\ii,q_\oo,\ttt_\oo)$ yields no solution in general.
Namely, choosing $e^0$ arbitrarily would lead to no solutions if $\ttt_\oo-\ttt_\ii\not=T:=\int_{t_\ii}^{t_\oo} e^0$. A way to deal with such a situation will be discussed in Section~\ref{r:other}, see Remark~\ref{r:CM}. As a result, $T$ is fixed by the endpoint conditions and therefore is no longer an independent variable for $\Hat S_\text{HJ}$.

The HJ action for the time-independent, constrained system (a parametrization invariant theory) we considered in this section turns out to be the same as the usual HJ action for a time dependent system describing a hamiltonian evolution. In this case, time is restored via the endpoint conditions. 

For simplicity we have obtained this result choosing $e^0$ constant, so that we could make the change of variable globally.
We might have also worked with a generic $e^0$. In this case we would have made changes of variables in all regions where it is different from zero and ignored the other regions. The ``time'' variable $\Tilde\ttt$ would have suffered slowing down, speeding up, freezing and even rewinding, but with no change in the final result described
in \eqref{e:HJCMtt}
 by $\Hat S_\text{HJ}$.\footnote{As the prince observed, ``Most people think time is like a river that flows swift and sure in one direction, but I have seen the face of time and I can tell you they are wrong. Time is an ocean in a storm,'' yet ``what is written in the timeline cannot be changed.''
}

{
\begin{remark}[Integrable systems III]\label{rem:intsysIII}
As a generalization of Remarks \ref{rem:intsysI} and \ref{rem:intsysII}, let us consider the case of an integrable system where we have exactly $n$ strictly involutive constraints $H_1,\ldots,H_n$ that are integrals of motion for a hamiltonian $H$, i.e., we have $\{H,H_i\} = \{H_i,H_j\} = 0$ for all $i,j$. Around a point $(p_0,q_0)$ where the differentials $\ddd H_1,\ldots,\ddd H_n$ are linearly independent, again by the Carath\'eodory--Jacobi--Lie theorem,
see fotonote~\ref{f:CLJ},
we may find a Darboux chart $(P_i,Q^i)$ with $P_i = H_i$. Changing coordinates from $(p_i,q^i)$ to $(P_i,Q^i)$ in the integral we get 
\[
S^g[p,q,E,\ttt,e^0,e] = - g(q(t_\ii),P(t_\ii)) + \int_{t_\ii}^{t_\oo} (Q^i\ddd P_i - E \ddd\ttt -e^0(H-E) - e^i {P_i}).
\]
Notice that, since the constraints are integrals of motion, in $(P_i,Q^i)$ coordinates the hamiltonian $H$ depends only on $P_i$: $0 = \{H,H_i\} = \{H,P_i\} = \partial H /\partial Q_i$.
On solutions, we have $\ddd P_i =0$, $\ddd Q^i = e^i + e^0 \frac{\dd H}{\dd P_i}$ and $\ddd H = 0$. The HJ action then reads 
\begin{equation}
\Hat{S}^g_\text{HJ}(q_\ii,P^\oo,\ttt_\ii,\ttt_\oo;T) = -g(q_\ii,P^\oo) - (\ttt_\oo - \ttt_\ii)H(P^\oo) - T^iP_i
\end{equation}
in a neighborhood of $(p_0,q_0)$.\footnote{\label{f:noT}Here $T$ stands for the collection $T^i:=\int_{t_\ii}^{t_\oo} e^i$, $i>0$.
The parameter $T^0:=\int_{t_\ii}^{t_\oo} e^0$ does not appear, as in \eqref{e:HJCMtt}, because we fix $\ttt_\ii$ and $\ttt_\oo$.}
We may also compute the HJ action with $Q$ endpoint conditions. 
We have 
\[
S^f[p,q,E,\ttt,e^0,e] = - f(q(t_\ii),Q(t_\ii)) + \int_{t_\ii}^{t_\oo} (P_i\ddd Q^i - E \ddd\ttt -e^0(H-E) - e^iP_i).
\]
Solving the evolution equation for $Q^i$ we obtain the HJ action\footnote{As in footnote~\ref{f:noT}, 
$T$ stands for the collection $T^i$, $i>0$.} 
\begin{equation}
\Hat{S}^f_\text{HJ}(q_\ii,Q_\oo,\ttt_\ii,\ttt_\oo;T) = - f\left(q_\ii, Q_\oo - T - (\ttt_\oo - \ttt_\ii)\frac{\dd H}{\dd P}\right) + (\ttt_\oo - \ttt_\ii)H(P(q_\ii,Q_\oo)).
\end{equation}
\end{remark}
}
\subsection{The free relativistic particle}\label{s:RP}
The above discussion generalizes easily to the case of a free, relativistic particle. In this case, there is no dynamics but only the constraint $E^2=m^2+p^2$. (For simplicity, we omit indices. If we are not in one dimension, $p^2$ means
$\sum_i (p_i)^2$. Similarly, in the action we will simply write $p\ddd q$ instead of $p_i\ddd q^i$). Therefore, we consider the action
\[
S[p,q,E,\ttt,e^0] =\int_{t_\ii}^{t_\oo} \left(p\ddd q-E\ddd\ttt-\frac{e^0}2(p^2+m^2-E^2)\right),
\]
where we have introduced the factor $\frac12$ just for convenience. The evolution equations are simply
\[
\ddd q=e^0p,\quad\ddd p = 0,\quad\ddd E = 0,\quad\ddd\ttt=e^0E,
\]
whereas the constraint is
\[
p^2+m^2-E^2=0.
\]
We first compute the HJ action, without fixing $e^0$, for endpoint conditions given by $(q_\ii,\ttt_\ii,p^\oo,E^\oo)$. This means the we have to insert a solution of the evolution equations into 
\[
S^f[p,q,E,\ttt,e^0]=E(t_\oo)\ttt(t_\oo)-p(t_\oo)q(t_\oo) + S[p,q,E,\ttt,e^0].
\]

The first evolution equation implies $p\ddd q=e^0p^2$. The second implies that $p(t)=p^\oo$ for all $t$, so we actually 
have $p\ddd q=e^0(p^\oo)^2$. The third equation implies $E(t)=E^\oo$ for all $t$, which together with the last equation implies 
$E\ddd\ttt=e^0(E^\oo)^2$.
Therefore,
\[
S|_\text{solution}=\int_{t_\ii}^{t_\oo} \frac{e^0}{\textcolor{blue}{2}}\,((p^\oo)^2-m^2-(E^\oo)^2)= \frac{T}{\textcolor{blue}{2}}\,((p^\oo)^2-m^2-(E^\oo)^2),
\]
where we have set $T:=\int_{t_\ii}^{t_\oo} e^0$. Solving the evolution equations explicitly yields
\[
q(t_\oo) = q_\ii + p^\oo T\quad\text{and}\quad
\ttt(t_\oo) = \ttt_\ii + E^\oo T,
\]
which implies
\[
p(t_\oo)q(t_\oo) = p^\oo q_\ii + (p^\oo)^2T\quad\text{and}\quad
E(t_\oo)\ttt(t_\oo) = E^\oo\ttt_\ii + (E^\oo)^2 T
\]
on solutions. We then have
\begin{equation}\label{e:RPpE}
\Hat S^f_\text{HJ}(q_\ii,\ttt_\ii,p^\oo,E^\oo;T)=E^\oo\ttt_\ii -p^\oo q_\ii + \frac{(E^\oo)^2-(p^\oo)^2-m^2}2 T,
\end{equation}
and one can easily verify that this is a generating funciton for the evolution relation.

We may now easily pass to the HJ action for endpoint conditions $(q_\ii,\ttt_\ii,q_\oo,\ttt_\oo)$ simply composing the generating function $\Hat S^f_\text{HJ}$ with the generating function $p^\oo q_\oo-E^\oo\ttt_\oo$ of the identity map, 
and reducing with respect to the intermediate variables $(p^\oo,E^\oo)$; namely, we evaluate 
$\Hat S^f_\text{HJ}+p^\oo q_\oo-E^\oo\ttt_\oo$ at its critical point in $(p^\oo,E^\oo)$. Setting the derivatives with respect to $p^\oo$ and $E^\oo$ to zero yields
\[
p^\oo =\frac{\Delta q}T\quad\text{and}\quad
E^\oo =\frac{\Delta\ttt}T,
\]
where we have set
\[
\Delta q:=q_\oo-q_\ii\quad\text{and}\quad\Delta\ttt:=t_\oo-t_\ii.
\]
Inserting, we get
\begin{equation}\label{e:RPqtT}
\Hat S_\text{HJ}(q_\ii,\ttt_\ii,q_\oo,\ttt_\oo;T)=\frac{(\Delta q)^2-(\Delta\ttt)^2}{2T}-\frac12 m^2T.
\end{equation}

We can finally try to reduce with respect to the variable $T$, i.e., to evaluate $\Hat S_\text{HJ}$ at its critical point in $T$.  Setting the derivative with respect to $T$ to zero yields
\[
T^2=\frac{(\Delta\ttt)^2-(\Delta q)^2}{m^2}.
\]
This shows that we can perform this reduction if and only if the endpoint conditions select a timelike trajectory: 
$(\Delta\ttt)^2>(\Delta q)^2$.\footnote{This is actually a necessary condition for a solution to exist. In fact, solving the evolution equations for $\ttt$ and $q$ yields $\Delta\ttt=ET$ and $\Delta q=pT$, so 
$(\Delta\ttt)^2-(\Delta q)^2=(E^2-p^2)T^2=m^2T^2\ge0$. The case $(\Delta\ttt)^2=(\Delta q)^2$ would however imply $T=0$, so $\Delta\ttt=0$ and $\Delta q=0$, which is excluded because the initial and final conditions must be distinct in the HJ setting.} There are two roots in $T$ which yield, not unexpectedly, 
\begin{equation}\label{e:RPqt}
S_\text{HJ}(q_\ii,\ttt_\ii,q_\oo,\ttt_\oo)=\mp m\sqrt{(\Delta\ttt)^2-(\Delta q)^2},
\end{equation}
i.e., the Minkowskian length up to the factor $\mp m$. Both choices of sign give a generating function for the evolution relation.\footnote{Note that taking derivatives with respect to $\ttt_\ii$ and $\ttt_\oo$ yields
$E^\ii=E^\oo=\pm m\frac{\Delta\ttt}{\sqrt{(\Delta\ttt)^2-(\Delta q)^2}}$. In particular, for $\Delta q=0$, we get
$E^\ii=E^\oo=\pm m \operatorname{sgn}{\Delta\ttt}$. For $\Delta\ttt>0$, it is the plus sign that yields the relation $E=m$ for the relativistic particle, the minus sign corresponding to its antiparticle.}


\begin{remark} In the massless case $m=0$, equation (\ref{e:RPqtT}) becomes 
\[
\Hat S_\text{HJ}(q_\ii,\ttt_\ii,q_\oo,\ttt_\oo,T)=\frac{(\Delta q)^2-(\Delta\ttt)^2}{2T},
\]
so the last step---reduction in the variable $T$---is not possible (the critical point in $T$ does not exist for general values of $q_\ii,\ttt_\ii,q_\oo,\ttt_\oo$). Still, the vanishing of the derivative in $T$ correctly yields the ``lightlike displacement'' constraint: $(\Delta q)^2 - (\Delta \ttt)^2=0$.
\end{remark}

\section{Generalized generating functions for ``bad'' endpoint conditions}\label{r:other}
The presence of constraints is not the only source for the appearance of extra parameters in the generating function for the evolution relation. Another instance occurs when we choose ``bad'' endpoint conditions that do not ensure existence of a solution. 


\subsection{No evolution and no constraints}
We discuss a simple example here (more examples, also coupled to the presence of constraints, will appear in the rest of the paper). Consider the action
\[
S[p,q]=\int_{t_\ii}^{t_\oo}p\dot q\,\ddd t.
\]
The EL equations are simply $\dot p=\dot q =0$, so the evolution relation is just the diagonal in $T^*\RR\times T^*\RR$:
\[
L = \{(p,q,p,q)\in T^*\RR\times T^*\RR, (p,q)\in T^*\RR\}.
\]
If we now choose endpoint conditions $q_\ii$ and $q_\oo$, we will get no solutions unless $q_\ii=q_\oo$, in which case we get a whole family of solutions, parametrized by $p$ solving $\dot p =0$. 
Nevertheless, there is a generalized generating function for $L$:\footnote{This generalized generating function can also be obtained as the composition, in the sense of Remark~\ref{r:compgenfunII}, of the generating functions
$\psi_1(q_\ii,\lambda)=-\lambda q_\ii$ and $\psi_2(\lambda,q_\oo)=\lambda q_\oo$ that do and undo a $-\pi$-rotation in phase space.}
\[
S_\text{gen}(q_\ii,q_\oo;\lambda)=\lambda\,(q_\oo-q_\ii).
\]
In fact, $\frac{\dd S_\text{gen}}{\dd\lambda}=0$ yields the condition $q_\ii=q_\oo$, whereas
$p(t_\ii)=-\frac{\dd S_\text{gen}}{\dd q_\ii}=\lambda$ and $p(t_\oo)=\frac{\dd S_\text{gen}}{\dd q_\oo}=\lambda$
imply $p(t_\ii)=p(t_\oo)$.

This generating function may also be regarded as a sort of Hamilton--Jacobi action, where we impose only one EL equation: namely, $\dot p=0$. (The other equation, $\dot q=0$, cannot be imposed anyway because generically it has no solutions.) The parameter $\lambda$ arises here as the constant value of $p$, so it parametrizes the family of solutions.

We may also obtain the generating function $S_\text{gen}(q_\ii,q_\oo,\lambda)$ by a partial Legendre transform, i.e., as
\[
S_\text{gen}(q_\ii,q_\oo;\lambda)= \lambda q_\oo+S^f_\text{HJ}(q_\ii,\lambda),
\]
where
\[
S^f_\text{HJ}(q_\ii,p^\oo)=-p^\oo q_\ii
\]
is the HJ action for the good choice $(q_\ii,p^\oo)$ of endpoint conditions. Unlike the true Legendre transform  (see Remark~\ref{r:Legendretransf}), we do not evaluate at $\lambda^\text{crit}$, which in this case does not exist. We will put this observation more in context in Section~\ref{s:partLegtransf}.

\newcommand{\frc}{\mathfrak{c}}
\newcommand{\frd}{\mathfrak{d}}
The generating function $S_\text{gen}$ may appear even more naturally from the viewpoint of path integral quantization. In fact, since there is no evolution at all, the evolution operator is the identity operator, so, for the chosen polarizations, its integral kernel is a delta function:
\[
K(q_\ii,q_\oo)=\delta(q_\ii-q_\oo).
\]
By Fourier transform, we we can also write
\[
K(q_\ii,q_\oo)=\int \frac{\ddd\lambda}{2\pi\hbar}\;\EE^{\frac\II\hbar S_\text{gen}(q_\ii,q_\oo,\lambda)}.
\]

One formal way to get this result directly from the path integral is as follows. We start with \eqref{e:Kpathintegral} and write $p=\lambda+\Hat p$, where $\lambda$ is a constant (a solution to $\dot p=0$). Note that $\lambda$ here is still a variable to be integrated out, so we have to make sure that $\Hat p\colon [t_\ii,t_\oo]\to \RR$ is in a complement to the space of  constant maps. For example, we may impose $\int_{t_\ii}^{t_\oo}\Hat p\,\ddd t = 0$.\footnote{Let $\frc$ be the space of constant maps and $\frd$ the space of maps $\Hat p\colon[t_\ii,t_\oo]\to\RR$ with vanishing integral. We have $C^\infty([t_\ii,t_\oo])=\frc\oplus \frd$.
In fact, if $\lambda$ is constant and $(t_\ii-t_\oo)\lambda=\int_{t_\ii}^{t_\oo}\lambda\;\ddd t=0$, then $\lambda$ vanishes, and we have shown that $\frc\cap \frd=0$.
On the other hand, every map $p$ can be written as $\lambda+\Hat p$ with 
$\lambda=\frac1{t_\oo-t_\ii}\int_{t_\ii}^{t_\oo} p(t)\ddd t$. It follows immediately that $\int_{t_\ii}^{t_\oo}\Hat p\,\ddd t = 0$.}
Then we have
\[
S[\lambda+\Hat p,q]=\lambda\,(q_\oo-q_\ii)+\int_{t_\ii}^{t_\oo}\Hat p\dot q\, \ddd t.
\]
Inserting into \eqref{e:Kpathintegral}, we get
\[
K(q_\ii,q_\oo) = \int\ddd\lambda\; \EE^{\frac\II\hbar\, \lambda\,(q_\oo-q_\ii)}
\int_{\substack{q(t_\ii)=q_\ii\\q(t_\oo)=q_\oo\\ \int_{t_\ii}^{t_\oo}\Hat p\,\ddd t = 0}}D\Hat p\,Dq\;\EE^{\frac\II\hbar \int_{t_\ii}^{t_\oo}\Hat p\dot q\, \ddd t}.
\]
The second integral formally does not depend on $q_\ii$ and $q_\oo$: in fact, we can make the affine change of variables
$q\mapsto \Hat q$ with 
\[
\Hat q(t)=q(t)-\frac{t-t_\oo}{t_\ii-t_\oo}q_\ii-\frac{t-t_\ii}{t_\oo-t_\ii}q_\oo.
\]
Observe that $\int_{t_\ii}^{t_\oo}\Hat p\dot q\, \ddd t=\int_{t_\ii}^{t_\oo}\Hat p\dot{\Hat q}\, \ddd t$, since $\dot q-\dot{\Hat q}$ is constant,
and that $\Hat q(t_\ii)=\Hat q(t_\oo)=0$. Therefore,
\[
\begin{split}
K(q_\ii,q_\oo) &= \int\ddd\lambda\; \EE^{\frac\II\hbar\, \lambda\,(q_\oo-q_\ii)}
\int_{\substack{\Hat q(t_\ii)=\Hat q(t_\oo)=0\\  \int_{t_\ii}^{t_\oo}\Hat p\,\ddd t = 0}}D\Hat p\,D\Hat q\;\EE^{\frac\II\hbar \int_{t_\ii}^{t_\oo}\Hat p\dot{\Hat q}\, \ddd t}\\ 
&\propto \int \frac{\ddd\lambda}{2\pi\hbar}\;\EE^{\frac\II\hbar S_\text{gen}(q_\ii,q_\oo,\lambda)}.
\end{split}
\]


In this example, one might also think of $\lambda$ as parametrizing the vacua of the theory. Instead of integrating over vacua, one may decide to select just one, labeled by $\lambda$, and in this case interpret $S_\text{gen}(q_\ii,q_\oo,\lambda)$
as the corresponding semiclassical contribution, more in line with Hamilton--Jacobi.

\begin{remark}
For an analysis of this example in the BV-BFV setting, see \cite[Sect.\ 4.4]{CMR15}
\end{remark}

\begin{remark}[Classical mechanics]\label{r:CM}
We may apply the above considerations to the system with one constraint described in Section~\ref{r:SwithH}.
We fix $(q_\ii,\ttt_\ii,q_\oo,\ttt_\oo)$ and let $T=\int_{t_\ii}^{t_\oo} e^0$. {}From the evolution equation $\ddd\ttt=e^0$, we get that there is no solution unless $T=\ttt_\oo-\ttt_\ii$. Therefore, we have to introduce a new parameter $\lambda$ to fix this condition, and we get
\[
\Check S_\text{HJ}(q_\ii,\ttt_\ii,q_\oo,\ttt_\oo;T,\lambda)=S_\text{HJ}^H(q_\ii,q_\oo;\ttt_\ii,\ttt_\ii+T)+
\lambda\,(T-\ttt_\oo+\ttt_\ii).
\]
Of course this generating function may be reduced. Setting the derivative with respect to $\lambda$ to zero, we get back $T=\ttt_\oo-\ttt_\ii$, which may be inserted into $\Check S_\text{HJ}$ yielding $\Hat S_\text{HJ}$ as in \eqref{e:HJCMtt}.
\end{remark}

\subsection{The partial Legendre transform}\label{s:partLegtransf}
Suppose our system has a unique solution for endpoint conditions $(q_\ii,p^\oo)$, so that we have the HJ action
$S^f_\text{HJ}(q_\ii,p^\oo)$ as in Example~\ref{exa:pfinal}. We define the partial Legendre transform of $S^f_\text{HJ}$ as
\[
S_\text{gen}(q_\ii,q_\oo,\lambda):= \lambda_i q_\oo^i+S^f_\text{HJ}(q_\ii,\lambda).
\]
It follows from \eqref{e:qpinSf} that $S_\text{gen}$ is a generalized generating function for the same evolution relation as $S^f_\text{HJ}$ is. In fact, $L_{S_\text{gen}}$ is determined by $S_\text{gen}$ via the equations
\[
p^\oo_i=\frac{\dd S_\text{gen}}{\dd q^i_\oo},\quad
p^\ii_i=-\frac{\dd S_\text{gen}}{\dd q^i_\ii}, \quad
\frac{\dd S_\text{gen}}{\dd \lambda}=0.
\]
Since $\frac{\dd S_\text{gen}}{\dd q^i_\oo}=\lambda_i$, we get $p^\oo=\lambda$. Moreover, by using  \eqref{e:qpinSf}, the last equation reads $q_\oo-q_{q_\ii,\lambda}(t_\oo)=0$ and the middle equation yields
$p^\ii=p^{q_\ii,\lambda}(t_\ii)$. Therefore, $L_{S_\text{gen}}$ consists of endpoint values $(p^\ii,q_\ii,p^\oo,q_\oo)$
of a solution and is thus the evolution relation.\footnote{It may of course happen that the last equation, $q_\oo-q_{q_\ii,\lambda}(t_\oo)=0$, can be solved for a unique $\lambda$, which we then denote as $\lambda^\text{crit}$. We may then insert this value into $ S_\text{gen}(q_\ii,q_\oo,\lambda)$ and get a generating function without extra parameters, the true Legendre transform $S_\text{HJ}(q_\ii,q_\oo)$ as in Remark~\ref{r:Legendretransf}.}

The above argument extends immediately to the case with constraints. Namely, suppose we have the HJ action
$S^f_\text{HJ}(q_\ii,p^\oo)[e]$. 
Then we define its partial Legendre transform as
\[
S_\text{gen}(q_\ii,q_\oo,\lambda)[e]:= \lambda_i q_\oo^i+S^f_\text{HJ}(q_\ii,\lambda)[e],
\]
and one immediately verifies, as above, that this is also a generalized generating function for the evolution relation.

\begin{example}[Partial Legendre transform for linear constraints]\label{exa:pLegtrlin}
Consider the case of a system with linear constraints in strict involution as described in Example~\ref{e:flinear}. We will suppress the target-space index $i$. 
If we apply the partial Legendre transform to \eqref{e:Sfablin}, we get
\[
\Hat S_\text{gen}(q_\ii,q_\oo;\lambda,T):=\lambda q_\oo+\Hat S^f_\text{HJ}(q_\ii,\lambda;T)=
\lambda (q_\oo-q_\ii)
-T^\alpha(\lambda v_\alpha + w_{\alpha}q_\ii)-
\frac12 T^\alpha T^\beta A_{\alpha\beta},
\]
which is also a generating function for the evolution relation. We may change this expression a bit by observing that the derivative with respect to $\lambda$ (i.e., $q_\oo-q_\ii-T^\alpha v_\alpha$) will have to be set to zero to define the evolution relation. We can insert this relation into $\Hat S_\text{gen}$ (without having to solve with respect to $\lambda$). Actually, multiplying the relation with $T^\beta w_{\beta}$ and using \eqref{e:Aalphabeta}, we get
\[
T^\alpha T^\beta A_{\alpha\beta} = T^\beta w_{\beta}q_\oo-T^\beta w_{\beta}q_\ii.
\]
We can use this to get rid of the quadratic term in the $T$s getting the generalized generating function
\begin{equation}\label{e:newgengenfun}
\begin{split}
\Tilde S_\text{gen}(q_\ii,q_\oo,\lambda,T)&=\lambda(q_\oo-q_\ii)-T^\alpha\lambda v_\alpha
-\frac12T^\alpha w_{\alpha}q_\ii-\frac12T^\alpha w_{\alpha}q_\oo\\
&=
\left(\lambda-\frac12T^\alpha w_{\alpha}\right)q_\oo
-\left(\lambda+\frac12T^\alpha w_{\alpha}\right)q_\ii-T^\alpha\lambda v_\alpha.
\end{split}
\end{equation}

\end{example}


\section{Infinite-dimensional targets}\label{s:inftarg}
For simplicity, {up to now} 
we have only discussed examples where the target symplectic manifold is finite-dimensional. The whole discussion can be generalized, almost verbatim, to the case when the target is infinite-dimensional and there are possibly infinitely many constraints. We will touch upon more analytical issues in 
Appendix~\ref{a:generatingfunctions}, and in particular in~\ref{s:HJinf}. Note that the main problem is that, in principle, the evolution relation might fail to be lagrangian.\footnote{This is due to intrinsically analytical problems. See \cite[Sect.\ 5.9]{CMwave} for an example.}
In {in this section} we will only
be considering infinite-dimensional examples where this problem does not actually arise (mainly thanks to the Hodge decomposition of differential forms).

\subsection{Three-dimensional abelian Chern--Simons theory}\label{s:3dabCS}
As a warm up, we discuss here the case of abelian Chern--Simons theory in three dimensions. The action for a cylinder $I\times\Sigma$, where $I=[t_\ii,t_\oo]$ is an interval and $\Sigma$ is a closed, oriented surface,
 is
\[
S[
 A]=\frac12\int_{I\times\Sigma} 
 {A}\ddd
 {A}
\]
with $
A\in\Omega^1(I\times\Sigma)$. A variation of the action yields
\[
\delta S = \int_{I\times\Sigma} \delta
{A}\ddd
{A}
- \frac12\int_{\{t_\oo\}\times\Sigma}
{A}\delta
{A}
+\frac12\int_{\{t_\ii\}\times\Sigma} 
{A}
\delta
{A},
\]
from which we read off the EL equations $\ddd{ A}=0$ and the boundary symplectic structure:
\[
\mathcal F^\dd=\Omega^1(\Sigma),\quad \omega=- \frac12\int_{\Sigma} \delta A \delta A.
\]
We may split $
A$ into its ``horizontal'' and ``vertical'' part
\[
A = A_\Sigma + A_I
\]
and regard $A_\Sigma$ as a map $I\to \Omega^1(\Sigma)=\mathcal F^\dd$, with target symplectic form 
$-\frac12\int_\Sigma \delta A_\Sigma\delta A_\Sigma$, 
and $A_I$ as a one-form on $I$ with values in
$\Omega^0(\Sigma)$. Splitting also $\ddd$ as $\ddd_\Sigma+\ddd_I$, we then have
\[
S'[A_\Sigma, A_I]:=S[A_\Sigma + A_I]=\int_{I\times\Sigma}\left(\frac12 A_\Sigma\ddd_IA_\Sigma +A_I\ddd_\Sigma A_\Sigma
\right),
\]
from which we see that $A_\Sigma$ represents the dynamical variables, whereas $A_I$ represents the Lagrange multipliers. Moreover, we now split the EL equations into the evolution equations
\[
\ddd_IA_\Sigma=-\ddd_\Sigma A_I
\]
and the constraints
\[
\ddd_\Sigma A_\Sigma=0.
\]

This is a system with linear constraints (in strict involution, since $\ddd^2=0$) that, almost, fits into Example~\ref{e:flinear}. What is missing is a splitting of the symplectic space of fields into $p$ and $q$ variables (or, more precisely, 
{a splitting in the sense of}
Remark~\ref{r:splitss}). We do it in the complexified version 
$\mathcal F^\dd_\CC=\Omega^1(\Sigma)\otimes\CC$, which splits as $\Omega^{1,0}(\Sigma)\oplus\Omega^{0,1}(\Sigma)$ in terms of a complex structure on $\Sigma$.\footnote{
{This actually corresponds to choosing a complex polarization for the real Chern--Simons theory.}
}
 We write accordingly $A_\Sigma=A^{1,0}+A^{0,1}$ and $\ddd_\Sigma=\dd+\bar\dd$, so
\[
S'[A_\Sigma, A_I]=\int_{I\times\Sigma}\left(
\frac12 A^{1,0}\ddd_IA^{0,1}+ 
\frac12 A^{0,1}\ddd_IA^{1,0}+ 
A_I(\bar\dd A^{1,0}+\dd A^{0,1})
\right).
\]
We can integrate by parts the second term, so that the action reduces, up to a boundary term, to
\[
S''
(A^{1,0},A^{0,1},A_I)=\int_{I\times\Sigma}\left(
A^{1,0}\ddd_IA^{0,1}+ 
A_I(\bar\dd A^{1,0}+\dd A^{0,1})
\right),
\]
where we recognize $A^{1,0}$ as the $p$ variables and $A^{0,1}$ as the $q$ variables. The variation now, correctly, produces the boundary one-form $\int_\Sigma A^{1,0}\delta A^{0,1}$.

We can finally apply the results of Example~\ref{e:flinear} to this case where, up to signs, the matrix $v^i_\alpha$ is now replaced by the operator $\bar\dd$ and the matrix $w_{i\beta}$ is replaced by the operator $\dd$. We write $\sigma:=\int_I A_I\in\Omega^0(\Sigma)$ instead of $T$, $\As^{0,1}_\ii$ instead of $q_\ii$ and $\As^{1,0}_\oo$ instead of $p^\oo$, so equation \eqref{e:Sfablin} becomes
\begin{equation}\label{S_HJ ahol-hol}
\Hat S^f_\text{HJ}=\int_\Sigma \left(\As^{1,0}_\oo \As^{0,1}_\ii + \sigma (\bar\dd  \As^{1,0}_\oo+\dd \As^{0,1}_\ii)+\frac12 \sigma \dd \bar\dd \sigma\right).
\end{equation}

Following Example~\ref{exa:pLegtrlin} we may also get the generating function for the same choice of both endpoint conditions. Namely, extending \eqref{e:newgengenfun} to this case, we get 
\begin{equation}\label{S_HJ hol-hol}
\Tilde S_\text{gen}=\int_\Sigma \left(\As^{1,0}_\oo \left(\lambda+\frac12 \bar\dd \sigma\right) - \As^{1,0}_\ii \left(\lambda-\frac12 \bar\dd \sigma\right) +\lambda \dd \sigma\right)
\end{equation}
{with $\lambda \in \Omega^{0,1}(\Sigma)$.} 

\subsection{Nonabelian Chern--Simons theory}\label{s:naCS}
In this case the action is
\[
S[
A]=\int_{I\times\Sigma}\left( \frac12 \langle 
{A},\ddd
{A}\rangle + 
\frac16  \langle 
{A},[
{A},
{A}]\rangle
\right),
\]
where now $
A$ takes values in a quadratic Lie algebra $(\frg,\langle\ ,\ \rangle)$.

Proceeding as in the abelian case, we get 
\[
S''
(A^{1,0},A^{0,1},A_I)=\int_{I\times\Sigma}\left(
\langle A^{1,0},\ddd_IA^{0,1}\rangle+ 
\langle A_I,\bar\dd A^{1,0}+\dd A^{0,1}+[A^{0,1},A^{1,0}]\rangle
\right),
\]
where we recognize the data of Example~\ref{exa:adjoint} and Remark~\ref{r:tensorW}
with 
$-A^{1,0}$ as the $\bar q$, $A^{0,1}$ as the $q$, $A_I$ as the $e$ variables,
$\dd$ as the derivation $\bar v$, and $\bar\dd$ as the derivation $v$.\footnote{
To be more precise, the Lie algebra of 
Remark~\ref{r:tensorW} is now $\Omega^0(\Sigma,\frg)$ with pointwise Lie bracket induced from that of $\frg$, the trivial representation space $Z$ is $\Omega^{0,1}(\Sigma)$, and their tensor product, over $\Omega^0(\Sigma)$, is
$V=\Omega^{0,1}(\Sigma,\frg)$.
The  dual space $V^*$ is now replaced by $\Omega^{1,0}(\Sigma,\frg)$ with pairing to $V$ induced by the paring $\langle\ ,\ \rangle$ on $\frg$ and integration on $\Sigma$.}
We write $\As^{0,1}_\ii$ instead of $q_\ii$ and $-\As^{1,0}_\oo$ instead of $\bar q^\oo$,
so \eqref{e:Sbiaffinead} becomes 
\[
\Hat S^f_\text{HJ} = \int_\Sigma \Big( \langle \As^{1,0}_\oo, g^{-1}\, \As^{0,1}_\ii g \rangle +
\langle \As^{1,0}_\oo,g^{-1}\bar\dd g \rangle 
+ 
\langle  \As^{0,1}_\ii, \dd g\cdot g^{-1}  \rangle \Big)
+\mr{WZW}(g)
\]
with $g=P\EE^{\int_I A_I}\in\Omega^0(\Sigma,G)$. For $g=\EE^\xi$, $\xi\in \Omega^0(\Sigma,\frg)$,
from \eqref{e:WZWad} we get 
\[
\mr{WZW}(g)=- \frac12 \int_\Sigma  \langle \dd g\cdot g^{-1}, \bar\dd g\cdot  g^{-1} \rangle -\frac{1}{12}  \int_{\Sigma \times I}\langle \ddd h\cdot h^{-1},[\ddd h\cdot h^{-1},\ddd h\cdot h^{-1}] \rangle,
\]
with $h=\EE^{t\xi}$. 
{Thus, the HJ action of Chern--Simons theory can be identified with a ``gauged WZW action'' (see for instance \cite{Gawedzki CFT 2}). This points at a deep relationship between these two theories. We will revisit this correspondence in more detail in \cite{CS_cyl}.
}

\subsection{Nonabelian $BF$ theory}
Consider
 $BF$ theory, defined by the action 
\begin{equation} \label{e-non-abe-BF}
S_{BF}=\int_M \langle B,F_A \rangle.
\end{equation}
Here $M$ is an $n$-manifold and the fields are a $\g$-valued $1$-form  $A$  (viewed as a connection with curvature $2$-form $F_A=\ddd A+\frac12[A,A]$) and a $\g^*$-valued $(n-2)$-form $B$;\footnote{
More generally, $A$ is a connection in a principal $G$-bundle $\mathcal{P}$ over $M$ and $B$ is an $(n-2)$-form valued in the coadjoint bundle of $\mathcal{P}$. In the context of a cylinder $M=I\times \Sigma$, we would take $\mathcal{P}=\pi^* \mathcal{P}_\Sigma$ where $\mathcal{P}_\Sigma$ is a $G$-bundle over $\Sigma$ and $\pi\colon I\times \Sigma\ra\Sigma$ is the projection. The results of this subsection and of subsection \ref{ss: 2dYM and EM} generalize immediately to this setup.
} $\g$ is the Lie algebra of some fixed Lie group $G$. 

On an $n$-dimensional cylinder $M=I\times \Sigma$, the action becomes
\begin{equation}\label{S_BF cyl}
 S_{BF}=\int_{I\times \Sigma} \langle B_\Sigma, \ddd_I A_\Sigma \rangle + \langle A_I, 
\ddd_{A_\Sigma} B_\Sigma
\rangle + \langle B_I, 
F_{A_\Sigma}
 \rangle, 
\end{equation}
where $\ddd_{A_\Sigma} B_\Sigma = \ddd B_\Sigma + [A_\Sigma,B_\Sigma]$ is the covariant exterior derivative. Here we expanded the fields according to form degree on $I$:  $A=A_\Sigma+A_I$, $B=B_\Sigma+B_I$. 
Note that $A_I$ and $B_I$ in (\ref{S_BF cyl}) are Lagrange multipliers corresponding to the constraints $\ddd_{A_\Sigma}B_\Sigma=0$ and $F_{A_\Sigma}=0$.

Using $A_\ii,B_\oo$ boundary conditions, the generalized HJ action reads
$$ \hat{S}_\mr{HJ}(A_\ii,B_\oo;g,\beta) = 
\int_\Sigma \langle \beta, F_{A_\ii} \rangle-(-1)^n \langle B_\oo,g^{-1}A_\ii g+g^{-1}d_\Sigma g \rangle. $$
Here the auxiliary fields are $g=P\EE^{\int_{t_\ii}^{t_\oo} A_I} \in \mr{Map}(\Sigma,G)$  and $\beta = \int_{t_\ii}^{t_\oo} \mr{Ad}^*_{P\EE^{\int_{t_\ii}^t A_I}} (B_I) \in \Omega^{n-3}(\Sigma,\g^*)$. The case of $(A_\ii,A_\oo)$ boundary conditions is treated  by the partial Legendre transform: 
$$ S_\mr{gen}(A_\ii,A_\oo;g,\beta,\lambda)=\int_\Sigma \langle \beta, F_{A_\ii} \rangle-(-1)^n \langle \lambda,g^{-1}A_\ii g+g^{-1}d_\Sigma g \rangle + (-1)^n\langle \lambda,A_\oo \rangle. $$

A similar expression appeared in \cite{Hol} as a holographic dual of $BF$ theory (in the three-dimensional case and with a different---holomorphic---polarization on the boundary).

\begin{remark}[Relation to Group Quantum Mechanics]
Specializing to $n = 2$, and assuming $\Sigma$ to be connected, we have $\Sigma = S^1$. On $\Omega^\bullet(S^1,\g)$ we can consider a set of canonical coordinates given by $Q = B$, $P = A - *_{S^1}B$, where $*_{S^1}$ denotes the Hogde star for some choice of Riemannian metric on $S^1$. We choose the coordinates $q = A, p=B$ at $t_\ii$ and $Q,P$ at $t_\oo$. The corresponding generating function is $f(A,B) = \int_{S^1} \langle B,A\rangle  - \frac12 \langle B , *_{S^1}B \rangle$ and the HJ action is given by 
\begin{equation}
\Hat{S}^f_{\mr{HJ}}(A_a,B_b,g) = - \int_{S^1} \langle B_b, g^{-1}A_ag + g^{-1}d_{S^1}g \rangle - \frac12 \langle B_b, *_{S^1} B_b\rangle. 
\end{equation}
One may further specialize by choosing the boundary condition $A_a = 0$ and imposing the equation of motion $B_b = -*_{S^1}g^{-1}dg$ for $B_b$. The action then reduces to 
\begin{equation} 
S_{GQM}(g) = \frac12 \int_{ S^1} \langle g^{-1}dg, *_{S^1}g^{-1}dg\rangle, 
\end{equation}
the action for a free particle moving in the group $G$, studied for instance in \cite{Gawedzki CFT 1}.
We thus recover a relation between a 2D theory and group quantum mechanics
which could be seen as an instance of holographic correspondence. In the case $G=PSL(2,\RR)$ (corresponding to a model of 2D gravity whose classical solutions are constant negative curvature geometries) it was discussed
in the literature on $\mr{AdS}_2\mr{/CFT}_1$ correspondence. In \cite{VY} a very interesting subsequent reduction from $S_{GQM}$ was considered, in the case $G=PSL(2,\RR)$, leading to the Schwarzian 1D theory.
\end{remark}

\subsection{More examples: 2D Yang--Mills theory 
and electrodynamics in general dimension}
\label{ss: 2dYM and EM}
In this section we discuss two examples of systems with nontrivial evolution and constraints: two special cases of Yang--Mills theory on a cylinder. They provide (generalized) examples of the setup of Section~\ref{r:nontrivev}.

Consider the first-order formulation of Yang--Mills theory,
\begin{equation}\label{S_YM} 
S_\mr{YM}[A,B] = \int_M \langle B,  F_A \rangle - \frac12 \langle B, *B \rangle,  
\end{equation}
where $M$ is a pseudo-Riemannian $n$-manifold which we will take to be a cylinder $M=I\times \Sigma$ with product metric $g=-(\ddd t)^2+g_\Sigma$ and $g_\Sigma$ a Riemannian metric on $\Sigma$; here $*$ is the associated Hodge star operator. 
Note that the action (\ref{S_YM}) is an extension of the $BF$ action (\ref{e-non-abe-BF}) by a metric-dependent term.

Expanding the fields on the cylinder according to form degree along $I$ as $A=A_\Sigma+A_I$, $B=B_\Sigma+B_I$, we can write the action (\ref{S_YM}) as
\begin{equation} \label{S_YM cylinder}
S_\mr{YM}=\int_{I\times \Sigma} \langle  B_\Sigma,  \ddd_I A_\Sigma-\frac12 *B_\Sigma \rangle 
+ \langle B_I,F_{A_\Sigma}-\frac12 *B_I \rangle + \langle A_I,\ddd_{A_\Sigma} B_\Sigma \rangle.
\end{equation}
The EL equations read
\begin{subequations}
\begin{eqnarray}
\ddd_I A_\Sigma+\ddd_{A_\Sigma}  A_I-*B_\Sigma &=& 0 \label{YM eq1}, \\ 
\ddd_I B_\Sigma + [A_I,B_\Sigma]+\ddd_{A_\Sigma}B_I &=& 0 \label{YM eq2},\\
F_{A_\Sigma}-*B_I&=&0 \label{YM eq3} ,\\
\ddd_{A_\Sigma} B_\Sigma &=& 0 \label{YM eq4}.
\end{eqnarray}
\end{subequations}
Here $A_I$ is a Lagrange multiplier and (\ref{YM eq4}) is the corresponding constraint; $B_I$ is not a Lagrange multiplier (enters the action quadratically) and although the corresponding equation (\ref{YM eq3}) does not contain time derivatives, it does not impose constraints on the fields in the phase space (which $B_I$ is not a part of). Therefore, (\ref{YM eq3}) is treated as part of the evolution relations, alongside (\ref{YM eq1}) and (\ref{YM eq2}), for the purpose of computing the HJ action.

Introducing a group-valued variable $h(t)=P\EE^{\int_{t_\ii}^t A_I} \in \mr{Map}(I\times\Sigma,G)$ and performing a change of variables
$$ A_\Sigma=h^{-1} \til A_\Sigma h+h^{-1}\ddd_\Sigma h,\quad  B_\Sigma=h^{-1}\til B_\Sigma h ,\quad B_I=h^{-1} \til B_I h$$
we can rewrite the evolution equations as
\begin{gather}
\dd_t^2 \til A_\Sigma+(-1)^n *_\Sigma \ddd_{\til A_\Sigma}*_\Sigma F_{\til A_\Sigma} = 0, \label{YM eq (3')} \\
\til B_I =* F_{\til A_\Sigma}  ,\quad
\til B_\Sigma =*_\Sigma \dd_t \til A_\Sigma.   \nonumber
\end{gather} 
Thus, essentially, one has to solve the  (generally very complicated) nonlinear equation (\ref{YM eq (3')}) and recover the remaining fields using the other two equations. Equation (\ref{YM eq (3')})  becomes linear in two cases: the case $n=2$ (as the second term in (\ref{YM eq (3')}) vanishes by a degree reason) and the abelian case.

\subsubsection{2D Yang--Mills theory.} \label{sss: 2dYM} In the two-dimensional case  (with $\Sigma=S^1$), 
we have $B_I=0$ for a degree reason (and thus $B=B_\Sigma$), so we are exactly in the setting of Section~\ref{r:nontrivev}. The action (\ref{S_YM cylinder}) on $I\times S^1$ becomes
$$S_\mr{YM}=\int_{I\times S^1} \langle B,\ddd_IA_\Sigma-\frac12 *B \rangle + \langle A_I,
\ddd_{A_\Sigma} B
 \rangle. $$
In terms of the boundary conditions $(A_\ii, B_\oo)$, we obtain the following generalized HJ action:
$$ \hat{S}^f_\mr{HJ}(A_\ii,B_\oo;g)= \int_\Sigma -\langle B_\oo, g^{-1} A_\ii g+ g^{-1} d_\Sigma g \rangle - \frac{\tau}{2}\langle B_\oo,*_\Sigma B_\oo \rangle, $$
where $\tau=t_\oo-t_\ii$. 
Here $g=h(t_\oo)=P\EE^{\int_{t_\ii}^{t_\oo} A_I} \in \mr{Map}(\Sigma,G)$, the group-valued auxiliary variable for the generalized HJ action.  We can also consider $(A_\ii,A_\oo)$ boundary conditions, and then the answer is
\begin{equation*} \label{2dYM HJ A-A}
S_\mr{gen}(A_\ii,A_\oo;g,\lambda)= \int_\Sigma -\langle \lambda, g^{-1} A_\ii g+ g^{-1} d_\Sigma g \rangle - \frac{\tau}{2}\langle \lambda,*_\Sigma \lambda \rangle+\langle \lambda,A_\oo \rangle.
\end{equation*}
Here $\lambda\in \Omega^1(\Sigma,\g^*)$ is the new auxiliary field.

\subsubsection{Electrodynamics (general dimension).} Next consider Yang--Mills theory (\ref{S_YM})  on a cylinder of general dimension in the abelian case, $G=\RR$. 
 Solving the evolution equations and plugging the result into the action, we find, for boundary conditions $(A_\ii$, $B_\oo)$, the following generalized HJ action:\footnote{
In particular, equation (\ref{YM eq (3')}) reads $\dd_t^2 \til A_\Sigma+(-1)^n *_\Sigma \ddd_\Sigma *_\Sigma \ddd_\Sigma \til A_\Sigma=0$ (where the change of variables is $A_\Sigma= \til A_\Sigma+\ddd_\Sigma \int_{t_\ii}^t A_I$). It breaks into two equations for the closed and coexact parts of $\til A_\Sigma$: $\dd_t^2 \til A_\Sigma^\mr{cl}=0$ and $\dd_t^2 \til A_\Sigma^\mr{coex}=-\Delta_\Sigma A_\Sigma^\mr{coex}$. The solution of the boundary problem is then $\til A_\Sigma^\mr{cl}=A_\ii^\mr{cl}+(-1)^n  *_\Sigma B_\oo^\mr{cocl}\, t   
$,\;
 $ \til A_\Sigma^\mr{coex}=
\sum_\sigma \chi_\sigma \Big(\frac{\sin \omega_\sigma t}{\omega_\sigma \cos \omega_\sigma (t_\oo-t_\ii)}B_\oo^\sigma +\frac{\cos \omega_\sigma (t_\oo-t)}{\cos \omega_\sigma (t_\oo-t_\ii)} A_\ii^\sigma\Big)$. 
 }
\begin{multline}\label{EM S_HJ}
\hat{S}_\mr{HJ}(A_\ii,B_\oo;\gamma)=\int_\Sigma \left(-\frac{\tau}{2} B_\oo^\mr{cocl}\wedge *_\Sigma B_\oo^\mr{cocl}-A^\mr{cl}_\ii\wedge B^\mr{cocl}_\oo-\ddd_\Sigma \gamma\wedge B_\oo\right)+\\
+
\sum_\sigma \left(-\frac12 \frac{\tan \omega_\sigma \tau}{\omega_\sigma} (B_\oo^\sigma)^2-\frac12 \omega_\sigma \tan \omega_\sigma\tau\, (A_\ii^\sigma)^2-\frac{1}{\cos\omega_\sigma\tau} A_\ii^\sigma B_\oo^\sigma \right).
\end{multline}
Here:
\begin{itemize}
\item $\tau=t_\oo-t_\ii$. 
\item $\gamma = \int_{t_\ii}^{t_\oo} A_I\in C^\infty(\Sigma)$ is an auxiliary field.
\item We introduce $\{\chi_\sigma\}$, an orthonormal basis of coexact $1$-forms on $\Sigma$ which diagonalize the Laplacian: $\Delta_\Sigma\chi_\sigma=\omega_\sigma^2 \chi_\sigma$. We split the field $A_\ii$, according to the Hodge decomposition, into its closed and coexact parts:
$A_\ii=A_\ii^\mr{cl}+\sum_\sigma \chi_\sigma\, A_\ii^\sigma $.
 \item Similarly, $\{*_\Sigma\chi_\sigma\}$ is an orthonormal basis of exact $(n-2)$-forms on $\Sigma$ and we have a splitting $B_\oo=B_\oo^\mr{cocl}+\sum_\Sigma *_\Sigma \chi_\sigma\, B_\oo^\sigma$ into coclosed and exact parts.
\end{itemize}
Note that the term $\int_\Sigma d_\Sigma\gamma \wedge B_\oo$ in $\hat{S}_\mr{HJ}$ depends in fact only on the coexact part of $B_\oo$.

{
We also remark that the summand in the sum over $\sigma$ in (\ref{EM S_HJ}) is the HJ action of the harmonic oscillator with frequency $\omega_\sigma$. Here the oscillator corresponds to the field $A^\sigma(t)$ (the coefficient of the expansion of $A_\Sigma^\mr{coex}$ in the basis  $\{\chi_\sigma\}$) with corresponding momentum $B^\sigma(t)$ (the coefficient of the expansion of $B_\Sigma^\mr{ex}$ in the basis  $\{*_\Sigma\chi_\sigma\}$).
}

\begin{remark}[$p$-form electrodynamics]
The  case of abelian Yang--Mills theory generalizes straightforwardly to ``$p$-form electrodynamics''---the theory defined by the action 
$${S_{p\mr{-form}}=\int_M B\wedge \ddd A - \frac12 B\wedge *B},$$ where the fields are $A\in \Omega^p(M)$, ${B\in \Omega^{n-p-1}(M) }$.
The generalized HJ action on a cylinder is 
given by an expression identical to (\ref{EM S_HJ}) up to signs:
\begin{multline*}
\hat{S}_\mr{HJ}(A_\ii,B_\oo;\gamma)=\int_\Sigma \left(-\frac{\tau}{2} B_\oo^\mr{cocl}\wedge *_\Sigma B_\oo^\mr{cocl}+\epsilon A^\mr{cl}_\ii\wedge B^\mr{cocl}_\oo+\epsilon\, \ddd_\Sigma \gamma\wedge B_\oo\right)+\\
+
\sum_\sigma \left(-\frac12 \frac{\tan \omega_\sigma \tau}{\omega_\sigma} (B_\oo^\sigma)^2-\frac12 \omega_\sigma \tan \omega_\sigma\tau\, (A_\ii^\sigma)^2+\frac{\epsilon}{\cos\omega_\sigma\tau} A_\ii^\sigma B_\oo^\sigma \right).
\end{multline*}
Here $\epsilon=(-1)^{np+n+p}$; $A_\ii^\sigma$ are the coefficients of the expansion on $A_\ii^\mr{coex}$ in the orthonormal basis $\chi_\sigma$ of coexact $p$-forms on $\Sigma$; $B_\oo^\sigma$ are the coefficients of the expansion of $B_\oo^\mr{ex}$ in the orthonormal basis $*_\Sigma \chi_\sigma$ of exact $(n-p-1)$-forms on $\Sigma$; the auxiliary field $\gamma$ is a $(p-1)$-form on $\Sigma$.  Note that the case $p=0$ corresponds to the massless scalar theory which has no constraints (and the field $\gamma$ disappears for a degree reason).
\end{remark}

\subsection{Higher-dimensional Chern-Simons theory}
The results of Section \ref{s:3dabCS} are easily generalized to higher-dimensional abelian Chern--Simons theories. 
Consider again a cylinder $I\times M$, where $M$ is now a $4k+2$-dimensional manifold ($k=0,1,2,\ldots)$ with a complex structure and
\begin{equation}
S[A] =\frac12 \int_{I\times M}  A\ddd A,
\end{equation}
for $A\in \Omega^{2k+1}(M)$. Proceeding as above, we obtain $A = A_M + A_I$ with $A_M$ a map $I \to \Omega^{2k+1}(M)$ and $A_I$ a one-form on $I$ with values in $\Omega^{2k}(M)$, and we have
\begin{equation}
S'[A_M,A_I] = \int_{I \times M}\frac12 A_M \ddd_IA_M + A_I\ddd_M A_M.\label{eq:S'hdCS}
\end{equation}
To obtain a splitting of the symplectic space of fields $\Omega^{2k+1}(M)$, we pass to the complexification and employ a decomposition by complex bidegree, namely 
\begin{equation}
\mathcal F^\dd_\CC = \Omega^{2k+1}(M,\CC) = \underbrace{\Omega^{2k+1,0}(M) \oplus \cdots \oplus \Omega^{k+1,k}(M)}_{\Omega^{2k+1}_+(M)} \oplus \underbrace{\Omega^{k,k+1}(M)\oplus \cdots \oplus \Omega^{0,2k+1}(M)}_{\Omega^{2k+1}_-(M)}
\end{equation}
and accordingly we write $A_M = A_M^+ + A_M^-$. Plugging this decomposition into \eqref{eq:S'hdCS},  we obtain, up to a boundary term,
\begin{equation}S''(A_M^+,A_M^-,A_I) = \int_{I \times M} A_M^-d_IA_M^+ + A_I(d_MA_M^+ + d_MA_M^-)
\end{equation}
{with corresponding boundary 1-form $A_M^-\delta A_M^+$. We will choose the endpoint conditions $A_M^+$ on 
$M \times \{0\}$ and $A_M^-$ on $M \times \{1\}$.}\footnote{The attentive reader will have noticed that for the 3D case $k=0$ this gives the opposite boundary conditions than in Section~\ref{s:3dabCS}. This flips the sign of the last term in  \eqref{eq:S_HJ ahol-hol HD} below.}
Notice that for the component of $A_I$ of Hodge type $(k,k)$, the last term reads $A_I^{k,k}(\bar\dd A_M^{k+1,k} + \dd A_M^{k,k+1})$. This gives rise to the only nonvanishing ``component'' of the matrix $A_{\alpha\beta}$ of Example \ref{e:flinear}, {the other components vanishing because of $\ddd_M^2 = 0$}. Relabeling $\int A_I =:\sigma \in \Omega^{2k}(M)$ we obtain the HJ action 
\begin{equation}\label{eq:S_HJ ahol-hol HD}
\Hat S^f_\text{HJ}=\int_M \As^-_\oo \As^+_\ii + \sigma (d_M \As^+_\ii+ d_M\As^-_\oo) - \frac12 \sigma^{k,k} \dd \bar\dd \sigma^{k,k}.\end{equation}
In applications it is interesting to consider more general coordinates on the space $\Omega^{2k+1}(M)$. Assume that $(A^Q,A^P)$ is such a set of coordinates and that $f(A^+_M,A^Q)$ is the corresponding generating function. Then, according to Remark \ref{r:HJgenchange}, the HJ action is
\begin{equation}
\Hat S^f_\text{HJ} = f(\As^+_\ii + d^+_M\sigma,\As^Q_\oo) + \int_M \sigma d_M\As^+_\ii - \int_M\frac12 \sigma^{k,k} \dd \bar\dd \sigma^{k,k},
\end{equation}
where $\ddd_M^+$ is the composition of $\ddd_M$ with the projection to $\Omega^{2k+1}_+(M)$, and the sign in front of the last term is due to using the $+$-representation on the $\ii$-boundary.

\subsection{A nonlinear polarization in 7D Chern--Simons theory and the Kodaira--Spencer action}\label{s:7d}
Let us now focus on the 7D case ($k=1$ in the section above): i.e., $M$ is now a 6-dimensional manifold with a complex structure. We will now actually assume that it is K\"ahler, 
{and fix a reference nonvanishing holomorphic 3-form $\omega_0$.} Let us fix some notation. We denote by 
 $\Omega^{-p,q}(M)$ sections of the bundle $\bigwedge^p (T_\CC M)^{1,0} \wedge \bigwedge^q(T^*_\CC M)^{0,1}$, i.e., $(0,q)$-forms with values in $(p,0)$-vector fields. Contraction with the reference holomorphic 3-form provides a map 
\begin{align*}
\Omega^{-p,q}(M) &\to \Omega^{3-p,q}(M) \\ 
A &\mapsto A^\vee =: A\omega_0
\end{align*}
(we omit symbols for wedge products and contractions). For $A,B,C \in \Omega^{-1,1}(M)$, we further define the operations
\begin{subequations}
\begin{align}
A^\vee \circ B^\vee &= (AB)\omega_0 \in \Omega^{1,2}(M), \\
\langle A^\vee,B^\vee,C^\vee\rangle &= A^\vee(B^\vee\circ C^\vee)\in \Omega^{3,3}(M), \\
\langle A^3\rangle &= {\frac16} \frac{\langle A^\vee,A^\vee,A^\vee\rangle}{\omega_0\overline{\omega}_0}\in\Omega^0(M),\label{eq:mucube}
\end{align} 
\end{subequations}
and similarly for $A,B,C \in \Omega^{1,-1}(M)$ by replacing $\omega_0$ with its conjugate $\bar\omega_0$.  

The symplectic space of boundary fields is $\mathcal F^\dd = \Omega^3(M)_\CC$. The geometry of this space was studied by Hitchin in \cite{Hitchin}. The main point of interest for us is that any complex 3-form admits a splitting\footnote{{The superscript $\nl$  stands for ``nonlinear'' and indicates that the map $A_M \mapsto (A_M^{+,\nl}, A_M^{-,\nl}$) is nonlinear, as opposed to $A_M \mapsto (A_M^+,A_M^-)$}, which is linear.} $A_M = A_M^{+,\nl} + A_M^{-,\nl}$ where $A_M^{\pm,\nl}$ are decomposable three-forms.  Here, a 3-form $A$ is called decomposable if around every point there exists a local  coframe $\theta_i$ in which $A = \theta_1 \wedge \theta_2 \wedge \theta_3$. We will call $A$ nondegenerate if $A_{M}^{+,\nl} \wedge A_M^{-,\nl}$ is everywhere nonvanishing. All three-forms which are not themselves decomposable are nondegenerate and for nondegenerate 3-forms $A^{\pm,\nl}$ are uniquely determined by $A$ \cite{Hitchin}. {This splitting defines a polarization $\mathcal{P}^{\nl,-}$ on\footnote{Strictly speaking, the polarization is defined only on the open subset of nondegenerate forms.} $\Omega^3(M)_\CC$ by letting $\PP_{A_M} \subset T_{A_M}\Omega^3(M)_\CC \cong \Omega^3(M)_\CC$ be the subspace  spanned by the $A_M^{+,\nl}$, whose leaf space is parametrized by $A^{-,\nl}_M$.}  To write down a generating function $f(A_M^+,A_M^{-,\nl})$, 
we parametrize $A_M^{-,\nl}$ by 
\begin{equation}
A^{-,\nl} = \overline{\rho} e^{\overline{\mu}}\overline{\omega}_0 = \overline{\rho}\left(\overline{\omega_0} + \overline{\mu}\, \overline{\omega}_0 + \frac{\overline\mu^2}{2}\overline{\omega}_0 + \frac{\overline\mu^3}{6}\overline\omega_0 \right),
\end{equation}
where $\ol{\rho} \in \Omega^0_\CC(M)$ and $\ol{\mu}\in \Omega^{1,-1}(M)$, and $\mu\omega_0$ should be interpreted as extension of contraction to forms with values in vector fields. An expression for the generating function is then \cite{GS} 
\begin{multline}
f(A^{3,0},A^{2,1},\overline{\rho},\overline{\mu}) =\\
=\int_M \overline{\rho}( A^{3,0}\overline{\omega}_0 + A^{2,1}\overline{\mu}\,\overline{\omega}_0) + \overline{\rho}^2\langle\overline{\mu}^3\rangle\omega_0\overline{\omega}_0 -
\frac{\left\langle\left( (A^{2,1} - \frac12\overline{\rho}\,\overline{\mu}^2\overline{\omega}_0)^\vee\right)^3\right\rangle}{(A^{3,0})^\vee - \overline{\rho}\langle\overline{\mu}^3\rangle}\omega_0\overline{\omega}_0.
\end{multline}
The HJ action is then 
\begin{multline}
\Hat S^f_\text{HJ} =-\frac12\int_M\sigma^{1,1}\dd\bar\dd\sigma^{1,1} +\int_M \sigma^{1,1}\bar\dd \As_\ii^{2,1} + \sigma^{0,2}(\dd\As_\ii^{2,1} + \bar\dd\As^{3,0}_\ii) \\
+ f\left(\As^{3,0}_\ii + \dd \sigma^{2,0}, \As^{2,1}_\ii + \dd\sigma^{1,1} + \bar\dd \sigma^{2,0}, \ol{\rho},\ol{\mu}\right).
\end{multline}
To see the connection to the Kodaira--Spencer action functional introduced in \cite{BCOV}, take the following quantum state considered in \cite{GS}: 
\begin{equation}\label{e:KS}
\psi_\text{GS}(\ol\rho,\ol\mu) = \delta(\ol\mu)\exp\frac{i}{\hbar}\int_M\ol\rho\omega_0\ol\omega_0.
\end{equation}
$\psi_\text{GS}$ is the quantization of the lagrangian $L_\text{GS} \subset \Omega_\CC^3(M)$ given by $\ol\mu = 0, p_{\ol\rho} = \omega_0\ol\omega_0$ (here $p_{\ol\rho}$ is the canonical momentum coordinate associated to $\ol\rho$).   $L_\text{GS}$ has the following generalized generating function, with $\lambda \in \Omega^{-1,1}(M)$ an auxiliary parameter:
\begin{equation}
S_\text{GS}(\ol\mu,\ol\rho;\lambda) = \int_M (\lambda \ol \mu + \ol\rho) \omega_0\ol\omega_0 \label{eq:SGS}.
\end{equation}
Let $\mr{ev} \colon \Omega_\CC^3(M) \nrightarrow \Omega_\CC^3(M)$ denote\footnote{We use the notations of equation \eqref{e:nrightarrow} of Appendix~\ref{a:canrel} for canonical relations.}  the Chern--Simons evolution relation and interpret $L_\text{GS}$ as a canonical relation $\{pt\} \nrightarrow \Omega^3_\CC(M)$. We want to compute the composition\footnote{The relation $\mr{ev}$ goes from $A^{+,\mr l}$ to $A^{-,\nl}$, in the composition we require the transpose relation $\mr{ev}^T$. This changes the sign of the generating function: the generalized generating function of $\mr{ev}^T$ is $-\Hat S^f_\text{HJ}$.} $L := \mr{ev}^T \circ L_\text{GS} \colon \{pt\} \nrightarrow \Omega^3_\CC(M)$. This, following Remark~\ref{r:compgenfunII}, has the generalized generating function $
S = - \Hat S^f_\text{HJ} + S_\text{GS} $, where $\ol\mu,\ol\rho$ now become auxiliary fields. We can immediately solve the constraint $\ol\mu = 0$ to obtain a new generalized generating function for $L$,:
\begin{multline}
S' = \left[- \Hat S^f_\text{HJ} + S_\text{GS}\right]_{\ol\mu =0} = 
\frac12\int_M\sigma^{1,1}\dd\bar\dd\sigma^{1,1} - \int_M \sigma^{1,1}\bar\dd \As_\ii^{2,1} - \int_M \sigma^{0,2}(\dd\As_\ii^{2,1} + \bar\dd\As^{3,0}_\ii)   \\ - \int_M \ol\rho_\oo (\As^{3,0}_\ii + \dd \sigma^{2,0}-\omega_0) \ol\omega_0 + \int_M \frac{\langle ((\As^{2,1}_\ii + \dd \sigma^{1,1} + \bar\dd \sigma^{2,0})^\vee)^3\rangle}{(\As^{3,0}_\ii + \dd\sigma^{2,0})^\vee}\omega_0\ol\omega_0. 
\end{multline}

We now proceed with solving the constraint equations by setting derivatives with respect to auxiliary fields to zero. The equation $\delta S'/\delta \ol\rho = 0$ is
\begin{equation}
-\frac{\delta S' }{\delta\ol\rho} = \As^{3,0}_\ii + \dd\sigma^{2,0} -\omega_0 = 0\label{eq:A30eq}
\end{equation} 
which implies $\bar\dd\As^{3,0}_\ii = -\bar\dd\dd\sigma^{2,0}.$
The equation $\delta S'/\delta \sigma^{0,2} = 0$ then gives
\begin{equation}
-\frac{\delta S'}{\delta\sigma^{0,2}} = \dd\As^{2,1}_\ii +\bar\dd\As^{3,0}_\ii= \dd(\As^{2,1}_\ii + \bar\dd \sigma^{2,0}) = 0. \label{eq:ddA21equalzero}
\end{equation}   
Evaluating $S'$ on the set given by those constraints we obtain 
$$S'' = \frac12\int_M\sigma^{1,1}\dd\bar\dd\sigma^{1,1} - \int_M \sigma^{1,1}\bar\dd \As_\ii^{2,1} + \int_M \langle ((\As^{2,1}_\ii + \dd \sigma^{1,1} + \bar\dd \sigma^{2,0})^\vee)^3\rangle\omega_0\ol\omega_0.$$ 
By \eqref{eq:ddA21equalzero} and Hodge decomposition for $\dd$, there are a unique harmonic $(2,1)$-form $x$ and a $(1,1)$-form $\beta$ such that $x + \dd\beta = \As_\ii^{2,1} + \bar\dd \sigma^{2,0}$. Letting $b = \sigma^{1,1} + \beta$, we have $\bar\dd \As^{2,1}_\ii =  \bar\dd\dd\beta = \dd\bar\dd (\sigma^{1,1} -b)$. Plugging this into $S''$, we obtain
\begin{equation}
S''' = \frac12 \int_M \sigma^{1,1}\dd\bar\dd \sigma^{1,1} - \int_M \sigma^{1,1} \dd\bar\dd \sigma^{1,1} + \int_M\sigma^{1,1}\dd\bar\dd b + \frac16 \langle x + \dd b, x + \dd b, x + \dd b\rangle.
\end{equation}
After imposing the constraint $$\frac{\delta S''' }{\delta \sigma^{1,1}} = 0 = -\dd\bar\dd \sigma^{1,1} + \dd\bar\dd b,$$ the action becomes 
\begin{equation}
S_\text{KS}(x,b) = \frac12\int_M b \dd\bar\dd b + \frac16 \langle x + \dd b, x + \dd b, x + \dd b\rangle,
\end{equation}
the action functional for Kodaira--Spencer gravity 
defined in \cite{BCOV}. 
For any $A \in\Omega^{2,1}(M)$ we have 
$$\frac{\delta}{\delta A} \langle (A^\vee)^3\rangle\omega_0\ol\omega_0 = \frac12 A \circ A \in \Omega^{1,2}.$$
Thus, the equation for $b$ finally becomes  
\begin{equation}
\dd\bar\dd b + \frac12 \dd (x+\dd b)\circ(x+ \dd b) = 0,\label{eq:KS2}
\end{equation}
the Kodaira--Spencer equation for $A = x + \dd b$.

%

Finally, we can characterize the lagrangian $L$ as follows: 
\begin{proposition}\label{prop:lagrangian7dCS}
Let $\omega_0$ be a reference holomorphic 3-form on a compact Calabi--Yau manifold $M$. Then the composition $ L = \mr{ev}^T \circ L_\text{GS}\subset \Omega^3_\CC(M)$ of the Chern--Simons evolution relation  $\mr{ev}$ on $M \times I$ and the lagrangian $L_\text{GS}$ defined by  \eqref{eq:SGS} is given by complex 3-forms with decomposition $ A = A^{3,0}  + A^{2,1} + A^{1,2} + A^{0,3}$ satisfying 
\begin{subequations}
\begin{align}
A^{3,0} &= \omega_0 - \dd \sigma^{2,0} \label{eq:prop1}, \\
A^{2,1} &= x + \dd b - \bar\dd\sigma^{2,0} - \dd\sigma^{1,1} \label{eq:prop2}, \\
A^{1,2} &=  \frac12 (x + \dd b)\circ (x + \dd b) -\bar\dd\sigma^{1,1} - \dd\sigma^{0,2} \label{eq:prop3},\\
A^{0,3} &= c \ol\omega_0 -\bar\dd\sigma^{0,2} \label{eq:prop4},
\end{align}
\end{subequations}
where $x$ is a harmonic $(2,1)$ form and $b\in \Omega^{1,1}$ such that $x + \dd b $ satisfies the Kodaira--Spencer equation in the form \eqref{eq:KS2}, $c \in \CC$ is a constant and $\sigma = \sigma^{2,0} +  \sigma^{1,1} + \sigma^{0,2} \in \Omega^2_\CC(M)$ is any 2-form.
\end{proposition}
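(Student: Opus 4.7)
The plan is to read the lagrangian off the generalized generating function $S'$ already assembled in the text just above the statement, and then to match its critical-point conditions, together with the momenta computed from base-variable derivatives, to the four equations \eqref{eq:prop1}--\eqref{eq:prop4}. By Remark~\ref{r:gfnonlinear} (applied to the infinite-dimensional setting as in Appendix~\ref{a:generatingfunctions}), the lagrangian $L$ is described as the set of $A\in\Omega^3_\CC(M)$ for which there exist auxiliary fields $\sigma^{2,0},\sigma^{1,1},\sigma^{0,2},\bar\rho_\oo$ such that the derivatives of $S'$ with respect to all auxiliary fields vanish, while the components $A^{1,2}$ and $A^{0,3}$ of $A$ conjugate to the base variables $A^{2,1}_\ii$ and $A^{3,0}_\ii$ are obtained as (minus) the corresponding functional derivatives of $S'$.

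First I would deal with the auxiliary constraints. The derivation $\delta S'/\delta\bar\rho_\oo=0$ has already been written as \eqref{eq:A30eq} and gives \eqref{eq:prop1}. The derivation $\delta S'/\delta\sigma^{0,2}=0$ gives \eqref{eq:ddA21equalzero}, which via the Hodge decomposition for $\dd$ on the compact K\"ahler manifold $M$ produces a unique harmonic $(2,1)$-form $x$ and a $(1,1)$-form $\beta$ with $x+\dd\beta=\As^{2,1}_\ii+\bar\dd\sigma^{2,0}$; setting $b:=\sigma^{1,1}+\beta$ rewrites this exactly as \eqref{eq:prop2}. The remaining constraint $\delta S'/\delta\sigma^{1,1}=0$, under the substitution just made, reduces to the Euler--Lagrange equation $\delta S_\text{KS}/\delta b=0$ already obtained in the text; one checks that this is equivalent to \eqref{eq:KS2} using the identity $\delta\langle(A^\vee)^3\rangle\omega_0\bar\omega_0/\delta A=\tfrac12 A\circ A$ recorded above. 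The fourth auxiliary equation $\delta S'/\delta\sigma^{2,0}=0$ is then automatically satisfied modulo the previous ones, as one verifies by a direct variation.

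Next I would read off the momenta. For \eqref{eq:prop3}, compute $A^{1,2}=-\delta S'/\delta \As^{2,1}_\ii$. The linear part of $S'$ in $\As^{2,1}_\ii$ contributes $-\bar\dd\sigma^{1,1}-\dd\sigma^{0,2}$; the nonlinear cubic term, when the numerator is evaluated on shell using \eqref{eq:A30eq} (so the denominator is a constant multiple of $\omega_0\bar\omega_0$ and the nonlinear correction collapses), contributes $\tfrac12(x+\dd b)\circ(x+\dd b)$ once $x+\dd b$ is substituted for $\As^{2,1}_\ii+\dd\sigma^{1,1}+\bar\dd\sigma^{2,0}$; this yields \eqref{eq:prop3}. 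For \eqref{eq:prop4}, compute $A^{0,3}=-\delta S'/\delta \As^{3,0}_\ii$: only the $\bar\rho_\oo$-term and the cubic term contribute; the first gives $\bar\rho_\oo\bar\omega_0$ and the second, again evaluated on \eqref{eq:A30eq} and \eqref{eq:prop2}, contributes $-\bar\dd\sigma^{0,2}$, giving \eqref{eq:prop4} with $c=-\bar\rho_\oo$.

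The main obstacle I anticipate is the careful functional differentiation of the nonlinear cubic term $\langle((\As^{2,1}_\ii+\dd\sigma^{1,1}+\bar\dd\sigma^{2,0})^\vee)^3\rangle/((\As^{3,0}_\ii+\dd\sigma^{2,0})^\vee)$: one has to show that, after using the on-shell identity \eqref{eq:A30eq} (which makes the denominator constant in $\omega_0\bar\omega_0$), the variations in $\sigma^{1,1}$, $\As^{2,1}_\ii$, and $\As^{3,0}_\ii$ all combine in the way claimed, producing the $(x+\dd b)\circ(x+\dd b)$ structure of \eqref{eq:prop3} and leaving no residual dependence on $x$ in the equation for $A^{0,3}$. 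The Kodaira--Spencer equation \eqref{eq:KS2} is precisely the closure condition for this nonlinear $(1,2)$-component under $\dd$, so consistency of \eqref{eq:prop3} with the $\dd$-closure of $A\in\Omega^3_\CC(M)$ (trivially satisfied as $A$ is not required to be closed) and with the constraints obtained serves as a sanity check of the overall computation.
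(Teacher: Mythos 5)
Your overall strategy is the same as the paper's: read $L$ off the generalized generating function $S'$, using the auxiliary-field critical equations for the constraints and the base-variable derivatives for the momenta. Your treatment of \eqref{eq:prop1}, \eqref{eq:prop2} and \eqref{eq:prop3} matches the paper's proof.

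There is, however, a genuine gap in your handling of \eqref{eq:prop4}, concentrated in the sentence claiming that $\delta S'/\delta\sigma^{2,0}=0$ is ``automatically satisfied modulo the previous ones.'' It is not, and it is precisely the equation that makes the statement true. Computing $\delta S'/\delta \As^{3,0}_\ii$ one finds \emph{three} contributions: $-\bar\dd\sigma^{0,2}$ (coming from the coupling term $-\int_M\sigma^{0,2}(\dd\As^{2,1}_\ii+\bar\dd\As^{3,0}_\ii)$, not from the cubic term as you assert), the term $-\ol\rho_\oo\,\ol\omega_0$, and the variation of the cubic term through its denominator, which on shell (using \eqref{eq:A30eq}) equals $-\langle((\As^{2,1}_\ii+\dd\sigma^{1,1}+\bar\dd\sigma^{2,0})^\vee)^3\rangle\,\ol\omega_0$. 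So the coefficient of $\ol\omega_0$ in $A^{0,3}$ is $-\bigl(\ol\rho_\oo+\langle((x+\dd b)^\vee)^3\rangle\bigr)$, which is a priori a \emph{function} on $M$, not a constant, and neither summand is constant separately. The equation $\delta S'/\delta\sigma^{2,0}=0$ is exactly what rescues this: applying $\bar\dd$ to it and using \eqref{eq:A30eq} yields $\bar\dd\dd\bigl(\ol\rho_\oo+\langle(\cdots)^3\rangle\bigr)\ol\omega_0=0$, hence (on the compact $M$) constancy of that combination, which is the constant $c$ of \eqref{eq:prop4}. Your identification $c=-\ol\rho_\oo$ omits the cubic contribution and, more importantly, gives no reason why $c$ should be constant at all. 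To close the gap you must carry out the $\sigma^{2,0}$ variation explicitly and extract the constancy statement from it, rather than discard it.
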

\begin{proof}
Equation \eqref{eq:prop1} is \eqref{eq:A30eq}. Equation \eqref{eq:prop2} follows from \eqref{eq:ddA21equalzero} and the definition of $b$. Equation \eqref{eq:prop3} follows from $A^{1,2} = \delta S' / \delta A^{2,1}$ together with the constraint equation \eqref{eq:A30eq}. Finally, we have 
$$A^{0,3} = \frac{\delta S'}{\delta A^{3,0}} = - \bar\dd \sigma^{0,2} -\ol\rho\, \ol\omega_0 - \frac{\langle ((\As^{2,1}_\ii + \dd \sigma^{1,1} - \bar\dd \sigma^{2,0})^\vee)^3\rangle}{((\As^{3,0}_\ii + \dd\sigma^{2,0})^\vee)^2}\ol\omega_0, $$
and again the denominator is 1 by \eqref{eq:A30eq}. Now, the equation 
\begin{multline*}0 = \frac{\delta S'}{\delta \sigma^{2,0}} = \dd\ol\rho\,\ol\omega_0 + \dd\frac{\langle ((\As^{2,1}_\ii + \dd \sigma^{1,1} - \bar\dd \sigma^{2,0})^\vee)^3\rangle}{((\As^{3,0}_\ii + \dd\sigma^{2,0})^\vee)^2}\ol\omega_0\\ +  \bar\dd\frac{ (\As^{2,1}_\ii + \dd \sigma^{1,1} - \bar\dd \sigma^{2,0})\circ (\As^{2,1}_\ii + \dd \sigma^{1,1} - \bar\dd \sigma^{2,0})}{((\As^{3,0}_\ii + \dd\sigma^{2,0}))^2}\ol\omega_0\end{multline*}
implies, again using \eqref{eq:A30eq},
$$\bar\dd\dd (\ol\rho + \langle ((\As^{2,1}_\ii + \dd \sigma^{1,1} - \bar\dd \sigma^{2,0})^\vee)^3\rangle) \ol\omega_0 = 0,$$
which means that the expression in brackets is constant. 
\end{proof}
\begin{remark}
It follows from the general discussion of Appendix \ref{a:generatingfunctions} that $L$ is isotropic, but one can also check it directly. To this end, compute the restriction of the symplectic form $\int_M \delta A^{3,0} \wedge \delta A^{0,3} + \delta A^{2,1} \wedge \delta A^{1,2}$ to $L$. Denoting $B = x + \dd b$, we get 
\begin{align*}
\delta A^{3,0} \wedge \delta A^{0,3} &= \int_M\dd\bar\dd\delta\sigma^{2,0}\delta\sigma^{0,2} \\
\delta A^{2,1} \wedge \delta A^{1,2} &= \int_M \delta\sigma^{1,1} \wedge \delta (\bar\dd B + \frac12 B \circ B ) + \int_M \bar\dd\dd\delta \sigma^{2,0}\delta\sigma^{0,2}
\end{align*}
(other terms cancel due to $\bar\dd^2 = \dd^2 =0$ or integration by parts). 
The terms containing $\sigma^{0,2}\sigma^{2,0}$ cancel, while the remaining term is zero by \eqref{eq:KS2}.
\end{remark}
\begin{remark}
Given a harmonic $(2,1)$-form $x$, solutions $b \in \Omega^{1,1}(M)$ of the Kodaira--Spencer equation are determined only up to a $\dd$-exact term. However, for fixed $x$ one can find a unique solution $b_x$ by demanding that $\bar\dd^* b_x = 0$, see \cite{Tian} or \cite{BCOV}, where $\dd^*$ denotes the formal adjoint of $\dd$ with respect to the K\"ahler metric.  Let $\underline{L} = \{(\omega_0, x + \dd b_x, \frac12 (x + \dd b_x) \circ (x + \dd b_x), c\ol\omega_0)\}$ with notation as in Propostion \ref{prop:lagrangian7dCS}.
We can then rewrite Proposition \ref{prop:lagrangian7dCS} as saying that 
\begin{equation} L = \underline{L} + d \Omega^2_\CC(M) \end{equation}
($\underline{L}$ is a submanifold of $\Omega^3_\CC(M)$ but not a subspace). 
At any point $(y, d\sigma) \in L$, the tangent space is $T_y\underline{L} \oplus d\Omega_\CC^2(M)$. For $y = (\omega_0,x + \dd b_x + \frac12(x + \dd b_x) \circ (x + \dd b_x), c\ol\omega_0)$, the tangent space is 
$$T_y\underline{L} =\{ (0, \dot{x}, x \circ \dot{x}, \dot{c}), \dot{x}\in H^{2,1}(M), \dot{c} \in \CC\}.$$ Thus, $L$ is actually  split lagrangian, with isotropic complement given by $L' = H^{3,0}(M) \oplus H^{1,2}(M) \oplus d^*\Omega^4_\CC(M).$ \end{remark}

\section{BFV, AKSZ and BV}\label{s:BFVAKSZBBV}
The degenerate action {(\ref{e:Ssevconstr}) we have been focusing on so far} 
can be put into BV form, which is suitable for the perturbative quantization (which we will consider in the next sections).

The first step consists in resolving the constraint
manifold $C=\{(p,q)\ |\ H_\alpha(p,q)=0\ \forall\alpha\}$
in terms of the BFV formalism \cite{BF,FV}. For this we introduce new odd variables $c^\alpha$, one for each constraint,
of ghost number $1$ and their momenta $b_\alpha$, odd and of ghost number $-1$;\footnote{Here odd means a Grassmann variable, whereas even refers to an ordinary variable; ghost number is an additional degree which is helpful for bookkeeping. The original variables $p$ and $q$ are even and are assigned ghost number zero.} i.e., we extend the symplectic structure to the BFV symplectic structure
\[
\omega_\text{BFV}=\ddd p_i\ddd q^i + \ddd b_\alpha\ddd c^\alpha.
\]
The BFV action is then defined as the odd function of ghost number $1$
\[
S_\text{BFV}(p,q,b,c) = c^\alpha H_\alpha(p,q) + \cdots,
\]
where the dots denote terms in the ghost momenta $b_\alpha$ such that the Poisson bracket of $S_\text{BFV}$ with itself vanishes (this is called the classical master equation). The BFV action exists and is uniquely defined up to symplectomorphisms \cite{BF,FV,Sta}. Note that the linear term in the $b$s depends on the structure functions,
\[
S_\text{BFV}(p,q,b,c) = c^\alpha H_\alpha(p,q) - \frac12 f_{\alpha\beta}^\gamma(p,q)\, b_\gamma c^\alpha c^\beta
+\cdots,
\]
and that  the higher-order terms are not needed in the Lie algebra case (in which the $f_{\alpha\beta}^\gamma$s are constant).

The next step is the AKSZ construction \cite{AKSZ}, which in this special case takes the following very simple form. We consider the replacements\footnote{The ``weird'' choice of signs is purely conventional. It is done in such a way that 
$i)$ the classical part of the BV action $S_\text{BV}$ introduced below agrees with \eqref{e:Ssevconstr}, and 
$ii)$ the BV form $\omega_\text{BV}$ introduced a bit later has the usual form.}
\[
p_i\to p_i+q^+_i,\quad q^i\to q^i - p_+^i,\quad c^\alpha \to c^\alpha - e^\alpha,\quad b_\alpha\to -e^+_\alpha+c^+_\alpha,
\]
where, in each assignment, the first term on the right hand side is a function on $[t_\ii,t_\oo]$ with the same Grassmann parity and ghost number as the variable on the left hand side, whereas the second term is a $1$-form on $[t_\ii,t_\oo]$ with opposite Grassmann parity and with ghost number reduced by $1$. Note, in particular, that we recover the even $1$-forms $e^\alpha$ of ghost number zero. With this notation we define the BV action as
\begin{multline*}
S_\text{BV}[p,q,e,c,p_+,q^+,e^+,c^+]:=\\ \int_{t_\ii}^{t_\oo} (
(p_i+q^+_i)\ddd(q^i-p_+^i) + (-e^+_\alpha+c^+_\alpha)\ddd (c^\alpha - e^\alpha)+
S_\text{BFV}(p+q^+,q-p^+,-e^++c^+,c-e))\\
= \int_{t_\ii}^{t_\oo} (p_i\ddd q^i - e^+_\alpha\ddd c^\alpha + S_\text{BFV}(p+q^+,q-p^+,-e^++c^+,c-e))
, 
\end{multline*}
where only the one-form components get integrated. 

\begin{example}\label{e:BVBFVLie}
In the main cases of interest for this paper, the structure functions $f_{\alpha\beta}^\gamma$ are constant. We then get
\begin{align*}
S_\text{BFV}(p,q,b,c) &= c^\alpha H_\alpha(p,q) - \frac12 f_{\alpha\beta}^\gamma\, b_\gamma c^\alpha c^\beta,\\
\begin{split}
S_\text{BV}[p,q,e,c,p_+,q^+,e^+,c^+]&= \int_{t_\ii}^{t_\oo} \bigg(
p_i\ddd q^i - e^+_\alpha\ddd c^\alpha -
e^\alpha H_\alpha(p,q) \\ 
&\phantom{=}+ c^\alpha\left(q^+_i\frac{\dd H_\alpha}{\dd p_i}(p,q)
-
p_+^i\frac{\dd H_\alpha}{\dd q^i}(p,q)
\right)\\
&\phantom{=}-
 \frac12 f_{\alpha\beta}^\gamma\, c^+_\gamma c^\alpha c^\beta +
 f_{\alpha\beta}^\gamma\, e^+_\gamma c^\alpha e^\beta
\bigg).
\end{split}
\end{align*}
These are the BFV and BV actions that we are going to consider in the next sections.
\end{example}

Let us return to the general case for the remaining part of this section.
The BV action satisfies the following properties:
\begin{enumerate}
\item It is even and of ghost number zero.
\item The terms independent of the antifields (i.e., the fields with a $+$ label) are the action \eqref{e:Ssevconstr}.
\item If we ignore boundary terms (or impose periodic boundary conditions for all fields), its Poisson bracket with itself, with respect to the odd symplectic\footnote{\label{f:omegaBVnd}The nondegeneracy of $\omega_\text{BV}$ is actually not needed for classical considerations. At the quantum level only a partial nondegeneracy is actually needed, namely the one along the directions for which one considers integration; see more on this in Section~\ref{sec:goodsplit}.} form of ghost number $-1$
\[
\omega_\text{BV} := \int_{t_\ii}^{t_\oo} (\delta q^+_i\delta q^i + \delta p_+^i\delta p_i + \delta e^+_\alpha\delta e^\alpha + \delta c^+_\alpha\delta c^\alpha
),
\]
vanishes.
\end{enumerate}
This is the starting point for the BV quantization.


\begin{remark}[Chern--Simons]
If we apply this construction to the Chern--Simons case decribed in the previous section, we get the BV formalism for Chern--Simons theory in $[t_\ii,t_\oo]\times\Sigma$. We will return to this in 
the 
{forthcoming paper \cite{CS_cyl}}.
\end{remark}

\begin{remark}\label{r:SwithHAKSZ}
The case where in addition to constraints there is also a nontrivial hamiltonian $H$ can be treated as at the beginning of 
{Section~\ref{r:nontrivev}},
as originally 
proposed in \cite{DG}, or, more conceptually, as in Section~\ref{r:SwithH}. 
\end{remark}


\section{An outline of elements of BV-BFV quantization}
The BV and BFV structures discussed in Section~\ref{s:BFVAKSZBBV} can be put together, and this may be used towards quantization \cite{CMR14,CMR15}. We briefly review this story in the case at hand assuming a finite-dimensional target. The case of infinite-dimensional targets discussed in Section~\ref{s:inftarg} will be addressed in the forthcoming paper \cite{CS_cyl}.

\subsection{The classical BV-BFV setting}
As described in Section~\ref{s:BFVAKSZBBV} we have a space $\FF$ of BV fields $(p,q,e,c,p_+,q^+,e^+,c^+)$ associated to the bulk $[t_\ii,t_\oo]$ and a space of BFV variables $(p,q,b,c)$ associated to each boundary point. As the boundary of $[t_\ii,t_\oo]$ actually consists of two endpoints (with opposite orientation), we define the space $\FF^\partial$ of BFV variables by doubling the $(p,q,b,c)$s: 
\[
\FF^\partial\ni(p^\ii,q_\ii,b^\ii,c_\ii,p^\oo,q_\oo,b^\oo,c_\oo).
\]
The BFV symplectic form and the BFV action are now
\[
\omega_\text{BFV}=\ddd p_i^\ii\ddd q^i_\ii + \ddd b_\alpha^\ii\ddd c^\alpha_\ii
-\ddd p_i^\oo\ddd q^i_\oo - \ddd b_\alpha^\oo\ddd c^\alpha_\oo
\]
and
\[
\begin{split}
S_\text{BFV} &= c^\alpha_\ii H_\alpha(p^\ii,q_\ii) - \frac12 f_{\alpha\beta}^\gamma(p^\ii,q_\ii)\, b_\gamma^\ii c^\alpha_\ii c^\beta_\ii
+\cdots\\
&\phantom{\ }
-c^\alpha_\oo H_\alpha(p^\oo,q_\oo) + \frac12 f_{\alpha\beta}^\gamma(p^\oo,q_\oo)\, b_\gamma^\oo c^\alpha_\oo c^\beta_\oo-\cdots.
\end{split}
\]

The bulk setting $(\FF,S_\text{BV},\omega_\text{BV})$ and the boundary BFV theory $(\FF^\partial,S_\text{BFV},\omega_\text{BFV})$ are related as follows:
\begin{enumerate}
\item We have a surjective submersion $\pi\colon \FF\to\FF^\partial$ given by
\begin{align*}
p^\oo &= p(t_\oo) &
q_\oo &= q(t_\oo) &
b^\oo &= -e^+(t_\oo) &
c_\oo &= c(t_\oo)\\
p^\ii &= p(t_\ii) &
q_\ii &= q(t_\ii) &
b^\ii &= -e^+(t_\ii) &
c_\ii &= c(t_\ii)
\end{align*}
\item If we denote by $Q_\text{BV}$ the hamiltonian vector field of $S_\text{BV}$ with respect to $\omega_\text{BV}$, obtained by ignoring boundary terms (or by imposing periodic boundary conditions for all fields),  and by $Q_\text{BFV}$ the hamiltonian vector field of $S_\text{BFV}$ with respect to $\omega_\text{BFV}$ we get
\begin{equation}\label{e:QQdpi}
[Q_\text{BV},Q_\text{BV}]=0,\quad [Q_\text{BFV},Q_\text{BFV}]=0, \quad \ddd\pi (Q_\text{BV}) = Q_\text{BFV},
\end{equation}
where $\ddd\pi$ denotes the differential of the map $\pi$ (the last equation may be equivalently formulated, more algebraically, as the property that $\pi^*(Q_\text{BFV}h)=Q_\text{BV}\pi^*h$ for every function $h$ on $\FF^\partial$).
\item If we introduce the potential
\[
\alpha_\text{BFV} = 
 p_i^\ii\ddd q^i_\ii +  b_\alpha^\ii\ddd c^\alpha_\ii
-p_i^\oo\ddd q^i_\oo - b_\alpha^\oo\ddd c^\alpha_\oo
\]
for the BFV form $\omega_\text{BFV}$, then 
the modified classical master equation (mCME)
\begin{equation}\label{e:mCMEwof}
\iota_{Q_\text{BV}}\omega_\text{BV} = \delta S_\text{BV} + \pi^*\alpha_\text{BFV}
\end{equation}
is satisfied. Here the variation $\delta$ of the BV action is computed by taking the boundary terms into account (they are precisely compensated by $ \pi^*\alpha_\text{BFV}$) and is regarded as producing a $1$-form on $\FF$.
\end{enumerate}
A simple consequence of the above, see \cite{CMR15}, is that we also have the equation
\begin{equation}\label{e:inducedCME}
\frac12\iota_{Q_\text{BV}}\iota_{Q_\text{BV}}\omega_\text{BV}=\pi^*S_\text{BFV},
\end{equation}
where the left hand side may be read as the correct replacement for the otherwise ill-defined Poisson bracket, induced by $\omega_\text{BV}$, of $S_\text{BV}$ with itself, whereas the right hand side yields the error boundary terms. This equation will play an important role in quantization.

\begin{remark}[Changing the endpoint conditions]\label{r:BVBFVchanging}
As in the purely classical setting, see Remark~\ref{r:changing}, we might want to use other variables, say $P$ and $Q$, at the final endpoint via a generating function $f(q,Q)$. We can arrange for this by subtracting 
$f$
 from the BV action, 
\[
S_\text{BV}^f[p,q,e,c,p_+,q^+,e^+,c^+] :=S_\text{BV}[p,q,e,c,p_+,q^+,e^+,c^+]-f(q(t_\oo),Q(p(t_\oo),q(t_\oo))),
\]
and at the same time adding $\ddd f$ to the BFV potential,
\[
\alpha_\text{BFV}^f := \alpha_\text{BFV} + \ddd f(q_\oo,Q(p^\oo,q_\oo))=
 p_i^\ii\ddd q^i_\ii +  b_\alpha^\ii\ddd c^\alpha_\ii
-P_i^\oo\ddd Q^i_\oo - b_\alpha^\oo\ddd c^\alpha_\oo
.
\]
This way we get the wished-for classical part of the symplectic form at the final endpoint, yet preserving the mCME \eqref{e:mCMEwof}, now in the form
\[
\iota_{Q_\text{BV}}\omega_\text{BV} = \delta S_\text{BV}^f + \pi^*\alpha_\text{BFV}^f.
\]
\end{remark}

\begin{remark}[Changing the endpoint field and ghost conditions]\label{r:BVBFVghostchanging}
We might also want to exchange the role of $b$ and $c$ at the final endpoint. To do so, we consider the extended generating function 
\[
\Tilde f(q,Q,c,C)=f(q,Q)+cC.
\]
We then get
\[
S_\text{BV}^{\Tilde f}[p,q,e,c,p_+,q^+,e^+,c^+] =S_\text{BV}^f[p,q,e,c,p_+,q^+,e^+,c^+]+c_\alpha(t_\oo)e^+(t_\oo)
\]
and
\[
\alpha_\text{BFV}^{\Tilde f} =  p_i^\ii\ddd q^i_\ii +  b_\alpha^\ii\ddd c^\alpha_\ii
-P_i^\oo\ddd Q^i_\oo - c^\alpha_\oo \ddd b_\alpha^\oo
.
\]
The mCME \eqref{e:mCMEwof} now reads
\begin{equation}\label{e:mCME}
\iota_{Q_\text{BV}}\omega_\text{BV} = \delta S_\text{BV}^{\Tilde f} + \pi^*\alpha_\text{BFV}^{\Tilde f}.
\end{equation}
\end{remark}

\begin{remark}\label{r:inducedCME}
The induced equation \eqref{e:inducedCME} holds as it is, since it is independent of the choice of the $1$-form.
\end{remark}


\subsection{The quantum BV-BFV setting}
In this section, we give a very short overview of the BV-BFV quantization, tailored for the examples of this paper.
For a more general exposition, see \cite{CMR15}.

\subsubsection{Boundary polarization}
For the quantum setting, we first have to choose a polarization in $\FF^\partial$. Concretely, in the case at hand, we 
write $\FF^\partial$ as $T^*\BB$, for a suitable lagrangian submanifold $\BB$, with canonical symplectic form. 
We require that the BFV $1$-form of \eqref{e:mCMEwof}  should be the canonical $1$-form of the cotangent bundle. This means that the choice of $\BB$ is related to the choice of boundary polarization. (Recall that,
on the initial endpoint, we choose $q_\ii$ and $c_\ii$ as coordinates on $\BB$; on the final endpoint, we choose $Q_\oo$ and $c_\oo$ or
$Q_\oo$ and $b^\oo$---i.e., we work in the setting of Remark~\ref{r:BVBFVchanging} or \ref{r:BVBFVghostchanging}, and we consider the corresponding $1$-form $\alpha^f_\text{BFV}$ or $\alpha^{\Tilde f}_\text{BFV}$.)

\subsubsection{Splitting}\label{s:splitting}
For simplicity, we assume the target (for the $(p,q)$ fields) to be $T^*\RR^n$. In this case $\BB$ is a (graded) vector space, and $\FF^\partial = \BB^*\oplus\BB$. The crucial point is to write the space of bulk fields as
\begin{equation}\label{e:splitting}
\Tilde \FF = \YY\oplus\BB, 
\end{equation}
where $\Tilde\FF$ is a suitable replacement (regularization) of the space of bulk fields, compatible with the boundary polarization by this splitting. Note that $\BB^*$ is now a subspace of $\YY$. We assume that equations
\eqref{e:QQdpi} and \eqref{e:mCME}, and therefore also \eqref{e:inducedCME}, are satisfied (now working on $\Tilde\FF$ instead of $\FF$ and with a suitable replacement $\Tilde S_\text{BV}^{\Tilde f} $ of $S_\text{BV}^{\Tilde f} $).

\begin{example}\label{e:2plittings}
The $q$-field component of $\BB$ is $q_\ii$. As $q$-component of $\YY$ we can then take the space 
$C^\infty_0([t_\ii,t_\oo],\RR^n)\ni \Hat q$ of maps that vanish at $t_\ii$. If we define
\[
q(t) = q_\ii + \Hat q(t),
\]
as we did in Remark~\ref{r:semiclassicalapproximation}, then we see that the $q$-component of $\Tilde\FF$ is the space of smooth maps $[t_\ii,t_\oo]\to\RR^n$, as in $\FF$. If on the other hand we set
\[
q(t)=\begin{cases}
q_\ii & t=t_\ii\\
\Hat q(t) & t> t_\ii
\end{cases},
\]
then we get, in $\Tilde\FF$, maps with a possible discontinuity at $t_\ii$, unlike in $\FF$. In this case, the quantum space of bulk fields $\Tilde\FF$ is slightly different from its classical counterpart $\FF$. It turns out that, in general, these discontinuous extensions are more suitable for the BV-BFV formalism, as we explain below (see Section~\ref{s:BVBFVqlin} and~\ref{s:nonlinpols} for examples and also for the correct definition of $\Tilde S_\text{BV}$ in these cases).
\end{example}

\subsubsection{Good splittings}\label{sec:goodsplit}
Given a splitting \eqref{e:splitting}, we decompose
\[
Q_\text{BV} = Q_\YY + Q_\BB
\]
and
\[
\omega_\text{BV} = \omega_{\YY\YY} + \omega_{\YY\BB} + \omega_{\BB\BB}
\]
according to which (co)tangent space(s) the vector field/$2$-form belongs. We assume $\omega_{\YY\YY}$ to be closed and (weakly) nondegenerate as a $2$-form on $\YY$. The mCME \eqref{e:mCME} now reads, in $\YY$- and $\BB$-components,\footnote{We write the formulae with generating function $\Tilde f$. Of course, they hold also with $f$ instead.}
\begin{align*}
\iota_{Q_\YY}\omega_{\YY\YY} + \iota_{Q_\BB} \omega_{\YY\BB} &= \delta_\YY \Tilde S_\text{BV}^{\Tilde f}\\
\iota_{Q_\YY}\omega_{\YY\BB} + \iota_{Q_\BB} \omega_{\BB\BB} &= \delta_\BB \Tilde S_\text{BV}^{\Tilde f} + \alpha_\text{BFV}^{\Tilde f},
\end{align*}
where we have omitted writing $\pi^*$. Note that, by assumption, $\alpha_\text{BFV}^{\Tilde f}$ only has $\BB$-components.

We say that we have a good splitting if $\iota_{Q_\BB} \omega_{\YY\BB}=0$, for in this case $Q_\YY$ is the hamiltonian vector field, with respect to $\omega_{\YY\YY}$,  of $\Tilde S_\text{BV}^{\Tilde f}$ on $\YY$ (with everything parametrized by $\BB$).  We assume from now on that we have a good splitting.
We denote by $(\ ,\ )_\YY$ the Poisson bracket induced by $\omega_{\YY\YY}$, so we have
\[
(\Tilde S_\text{BV}^{\Tilde f},\Tilde S_\text{BV}^{\Tilde f})_\YY = \iota_{Q_\YY}\iota_{Q_\YY}\omega_{\YY\YY}=
\iota_{Q_\text{BV}}\iota_{Q_\text{BV}}\omega_\text{BV}-\iota_{Q_\BB}\iota_{Q_\BB}\omega_{\BB\BB}.
\]
By the induced equation \eqref{e:inducedCME}, we then get
\[
(\Tilde S_\text{BV}^{\Tilde f},\Tilde S_\text{BV}^{\Tilde f})_\YY=2 S_{BFV}-\iota_{Q_\BB}\iota_{Q_\BB}\omega_{\BB\BB}.
\]
If the latter term depends only on the $T^*\BB$ variables, which we assume from now on, we may define a new function
\[
S^\partial := S_{BFV}-\frac12 \iota_{Q_\BB}\iota_{Q_\BB}\omega_{\BB\BB} 
\]on $T^*\BB$, and we have
\begin{equation}
(\Tilde S_\text{BV}^{\Tilde f},\Tilde S_\text{BV}^{\Tilde f})_\YY=2S^\partial. \label{eq:dCME}
\end{equation}
{
\begin{remark}\label{rem:Discontinuous splitti}
Usually we do not have a good splitting. A way out, as also suggested in Remark~2.33 of \cite{CMR15}, is to consider a family of splittings such that there is a limit in which the theory exists and corresponds to a good splitting (but of a different space of fields). For instance, one can define a section $s_\varepsilon$ of the short exact sequence $0 \to \YY\stackrel{\iota}{ \to} \FF \to \BB \to 0$ by multiplying fields in $\BB$ by a smooth function equal to 1 close to the boundary and with support contained in a collar of length $\varepsilon$. We thus obtain a family of BV-BFV theories $$\FF_\varepsilon = \left(\YY \oplus \BB, (\iota \oplus s_\varepsilon)^*\omega_\text{BV}, (\iota \oplus s_\varepsilon)^*S^{\Tilde{f}}_\text{BV}, (\iota \oplus s_\varepsilon)^*Q_\text{BV}, \pi \circ (\iota \oplus s_\varepsilon)\right)$$ 
over $(\FF^\dd,S_\text{BFV},\omega_{BFV})$
isomorphic to $(\FF, \omega_{\text{BV}}, S^{\Tilde{f}}_\text{BV}, Q_\text{BV},\pi)$
for $\varepsilon > 0$. The limit as $\varepsilon \to 0$ gives a well-defined theory $\FF_0$ on $\YY \oplus \BB$ (one just integrates by parts the terms containing $d s_\varepsilon$), but it does not arise from a well-defined section $s$ - rather, it corresponds to the discontinuous extension in Example \ref{e:2plittings}.\footnote{In this theory the BV form is nondegenerate along $\YY$, but degenerate otherwise. This is all what is needed to define BV quantization. 
}
This theory on $\YY \oplus \BB$ is then the replacement as in \eqref{e:splitting} which is suitable for quantization. The space $\Tilde{F}$ can be constructed by embedding $\FF$ in a larger space $\Tilde{\Tilde{F}}$ where the limit $s_0 \colon \BB\to \Tilde{\Tilde{F}}$ exists, and then we let $\Tilde{F} = (\iota \oplus s_0)(\YY \oplus \BB)$. For our purposes one can typically let $\Tilde{\Tilde{F}}$ be smooth functions on the interval with a discontinuity at the endpoints. 

\end{remark}
}
\subsubsection{The modified quantum master equation}
In the BV setting, one has a second-order differential operator $\Delta_\YY$, the BV Laplacian,  acting on functions on $\YY$
with the properties
\[
\Delta_\YY^2=0,\quad \Delta_\YY(fg)=\Delta_\YY(f)g \pm f\Delta_\YY g \pm (f,g)_\YY.
\]
In the finite-dimensional case, this operator is actually canonically defined on half-densisties \cite{Khud,Sev}. It is then transferred to functions upon the choice of a nondegenerate, $\Delta_\YY$-closed, reference half-density. In the infinite-dimensional setting, the definition
of $\Delta_\YY$ requires a regularization, which we presently assume to have. Moreover, for simplicity, we assume
$\Delta_\YY \Tilde S_\text{BV}^{\Tilde f}=0$. Therefore, we have
\[
\Delta_\YY\, \EE^{\frac\II\hbar \Tilde S_\text{BV}^{\Tilde f}}= -\frac1{2\hbar^2}(\Tilde S_\text{BV}^{\Tilde f},\Tilde S_\text{BV}^{\Tilde f})_\YY\,\EE^{\frac\II\hbar \Tilde S_\text{BV}^{\Tilde f}}=-\frac1{\hbar^2}S^\partial \, \EE^{\frac\II\hbar \Tilde S_\text{BV}^{\Tilde f}}.
\]
The final step consists in finding some operator $\Omega$ on the space of functions on $\BB$ satisfying
\begin{equation}\label{e:OmegaSde}
\Omega \, \EE^{\frac\II\hbar \Tilde S_\text{BV}^{\Tilde f}} = S^\partial \, \EE^{\frac\II\hbar \Tilde S_\text{BV}^{\Tilde f}}.
\end{equation}
If we find such an operator, then we get the modified Quantum Master Equation (mQME)
\begin{equation}\label{e:mQME}
(\Omega + \hbar^2\Delta_\YY)\, \EE^{\frac\II\hbar \Tilde S_\text{BV}^{\Tilde f}} = 0,
\end{equation}
which is a quantization of 
\eqref{e:mCME}.

Note that $\Omega$ commutes with $\Delta_\YY$ (the former acts on functions on $\BB$, the latter on functions on $\YY$), so the operator
\begin{equation}\label{e:DeltaOmega}
\Delta^\Omega_\YY := \Delta_\YY + \frac1{\hbar^2}\Omega
\end{equation}
squares to zero if and only if $\Omega$ does so, which we are going to assume. 
The reason for this normalization, in which we prefer to regard $\Delta^\Omega_\YY$ as a modification of $\Delta_\YY$ by a contribution due to the boundary,\footnote{One might also as well argue that the coboundary operator should be defined as $\Omega_\YY := \Omega + \hbar^2\Delta_\YY$---i.e., as a modification of $\Omega$ by a contribution due to the bulk---stressing that what matters more are the boundary fields, as the bulk fields will eventually be integrated out in the functional integration. One might also argue that, since, without boundary, one has 
$-\II\hbar\Delta\left(\EE^{\frac\II\hbar \Tilde S_\text{BV}}\mathcal{O}\right)=\EE^{\frac\II\hbar \Tilde S_\text{BV}}
(Q_\text{BV}\mathcal{O}-\II\hbar\Delta)$, where on the right hand side we see the quantum modification of the BV operator $Q_\text{BV}$, then $-\II\hbar\Delta$ is the correct coboundary operator in the bulk, so the correct coboundary operator in the presence of a boundary should be 
$-\II\hbar\Delta^\Omega_\YY=-\II\hbar\Delta_\YY-\frac\II\hbar\Omega$.

} will become clear in Section~\ref{s:BVpf}, where we will address gauge fixing and gauge-fixing independence. The mQME can now be rewritten as
\[
\Delta^\Omega_\YY\, \EE^{\frac\II\hbar \Tilde S_\text{BV}^{\Tilde f}} = 0.
\]

There are two main settings for the construction of $\Omega$, which we describe in the next two sections. (In principle, there may be more choices available, but we are not aware of them.)

\subsubsection{Case~I: No boundary symmetries}
Suppose $Q^\partial=0$. This implies $Q_\BB=0$ and $S^\partial=S_\text{BFV}=0$. 
In this case, every splitting is good and, moreover, we have 
$( \Tilde S_\text{BV}^{\Tilde f}, \Tilde S_\text{BV}^{\Tilde f})_\YY=0$.
Therefore, we obtain the mQME by simply setting $\Omega=0$. 

Note, in particular, that we might choose a splitting as in the first part of Example~\ref{e:2plittings} for all fields, so that $\Tilde\FF=\FF$ and $\Tilde S_\text{BV}^{\Tilde f}=S_\text{BV}^{\Tilde f}$. We may also choose, say for the $q$ part, to split $q(t)=q_0(t)+\Hat q(t)$, where $q_0$ is the solution to the EL equations with the given boundary conditions. We then recover the usual procedure for dealing with a path integral without gauge symmetries, as in equation \eqref{e:affpq} of Remark~\ref{r:semiclassicalapproximation}.

\subsubsection{Case~II: Discontinuous splitting}\label{s:IIdiscosplit}
A rather general procedure for dealing with the BV-BFV quantization, proposed in \cite{CMR15}, is based on a discontinuous splitting of the fields, as in the second part of Example~\ref{e:2plittings} and in Remark \ref{rem:Discontinuous splitti}. In this case, $\Tilde\FF\not=\FF$ and $\Tilde S_\text{BV}^{\Tilde f}\not=S_\text{BV}^{\Tilde f}$. Moreover, the regularization of $\Delta_\YY$ requires some care, so we will be rather sketchy here. The main advantage of such a choice is that $\omega_{\YY\BB} = \omega_{\BB\BB} = 0$. This is then a good splitting in which $S^\partial=S_\text{BFV}$ and
\[
\delta_\BB \Tilde S_\text{BV}^{\Tilde f} = - \alpha_\text{BFV}^{\Tilde f}.
\]
Since $\alpha_\text{BFV}^{\Tilde f}$ is by assumption the canonical $1$-form of the cotangent bundle 
$T^*\BB=\BB^*\oplus\BB\ni (z^*,z)$, we write $\alpha_\text{BFV}^{\Tilde f}=z^*_\mu\ddd z^\mu$. Therefore, the last equation becomes
\[
\frac{\dd \Tilde S_\text{BV}^{\Tilde f}}{\dd z^\mu} = - z^*_\mu.
\]
It then follows that, if we define $\Omega$ to be the Schr\"odinger quantization of $S_\text{BFV}$,
\begin{equation}\label{e:Omega}
\Omega := S_\text{BFV}\left(\II\hbar\frac\dd{\dd z},z\right)
\end{equation}
with all the derivatives to the right (standard ordering), then we get
\[
\Omega \EE^{\frac\II\hbar \Tilde S_\text{BV}^{\Tilde f}}=
S_\text{BFV}(z^*,z) \EE^{\frac\II\hbar \Tilde S_\text{BV}^{\Tilde f}},
\]
which is \eqref{e:OmegaSde} (recall that we now have $S^\partial=S_\text{BFV}$). 

\begin{remark}
A general observation at this point is that $\Omega$ is a quantization of $S_\text{BFV}$ and that the classical master equation 
$\{S_\text{BFV},S_\text{BFV}\}=0$ is the lowest-order condition for $\Omega^2=0$. Note that nothing guarantees that this condition is satisfied at all orders. This actually becomes a consistency condition for the theory. (See more, e.g., in Section~\ref{s:linpols}).
\end{remark}

In this presentation, we treated $\Delta_\YY$ rather formally. The correct argument requires a regularization and a careful analysis of the Feynman diagrams, which we present in Appendix~\ref{s:mQME}, and relies on certain assumptions.

The simplest case when
the above argument turns out to be correct is when the initial and final polarizations are related by a linear transformation (i.e., $f(q,Q)$ is linear in the $q$s and in the $Q$s). This is the setting of \cite{CMR15}, which we will use in Section~\ref{s:linpols} (in the particular case $f(q,Q)=Q_iq^i$ as, e.g., in Example~\ref{exa:pfinal}).

For a more general $f$, 
a compatibility conditions between the constraints and $f$ is required, see Assumption~\ref{a:} or~\ref{a:assu}.
In particular, 
 the above procedure actually works for a system as in Section~\ref{s:syssevcon} with the conditions that 
$i)$ all the hamiltonians $H_\alpha$ are linear in the $p$ variables and 
$ii)$ all the hamiltonians $\Tilde H_\alpha$ are linear in the $P$ variables, where
$\Tilde H_\alpha(P,Q):=H_\alpha(p(P,Q),q(P,Q))$, and we use the symplectomorphism induced by $f$ (no matter how nonlinear it is; what saves the game here is the linearity of the hamiltonians in the momenta). 
Note that this is not an artificial, quirky situation, but it actually occurs in important examples, the main one (although with infinite-dimensional target) being described in Section~\ref{s:7d}.


\subsubsection{The BV pushforward}\label{s:BVpf}
One of the properties of the BV formalism is the following. Assume we can write $\YY=\YY_1\oplus\YY_2$ as a sum of symplectic subspaces. If $\LL$ is a lagrangian submanifold of $\YY_2$ and $\psi$ is in the kernel of $\Delta^\Omega_\YY$, then the integral $\psi_1$ of $\psi$ over $\LL$ is in the kernel of $\Delta^\Omega_{\YY_1}:=\Delta_{\YY_1}+\frac1{\hbar^2}\Omega$; moreover, if we deform $\LL$, the result $\psi_1$ will change by a $\Delta^\Omega_{\YY_1}$-exact term.  

By the mQME \eqref{e:mQME}, $\EE^{\frac\II\hbar \Tilde S_\text{BV}^{\Tilde f}}$ is $\Delta^\Omega_\YY$-closed, so we may apply this procedure, 
which is known as the BV pushforward. The result of the pushforward can again be recast in the form  $\EE^{\frac\II\hbar S_1}$, where $S_1$ is a formal power series in $\hbar$.
The choice of $\LL$ is called a gauge fixing, and the last property in the previous paragraph is referred to as gauge-fixing independence. With the normalization in \eqref{e:DeltaOmega}, the argument of 
$\Delta^\Omega_{\YY_1}$ under a change of gauge fixing is a formal power series in $\hbar$ times $\EE^{\frac\II\hbar S_1}$, which is appropriate in perturbation theory.

We may, in particular, choose $\YY_2=\YY$ (assuming we can find a lagrangian $\LL$ on which the  integral of $\psi$ converges). In this case, $\psi_1$ will be a function on $\BB$ satisfying $\Omega\psi_1=0$. Note that in this
case $\Delta^\Omega$ is just $\Omega/\hbar^2$.


Two remarks are here in order. The first is that this procedure is more invariantly defined in terms of half-densitites instead of functions. The second is that there might be lagrangian submanifolds $\LL$ and $\LL'$ that are not deformations of one another (at least through a path of lagrangian submanifolds on which the  integral of $\psi$ converges); in this case, the two results of integration might not be $\Delta^\Omega_{\YY_1}$-cohomologous; see, e.g., Example~\ref{e:qBVBFVtoy} below.

\begin{remark}
The BV pushforward may of course be iterated. The setting is then the following. We have a fixed space $\BB$ with a fixed coboundary operator $\Omega$ acting on its functions. We then have, for some ordered indices $k$, an odd symplectic space $\YY_k$ and a $\Delta^\Omega_{\YY_k}$-closed function $\psi_k$ on $\YY_k\times\BB$. The $\Delta^\Omega_{\YY_{k+1}}$-closed function $\psi_{k+1}$ on $\YY_{k+1}\times\BB$ is obtained by BV-pushforward. 
On the new spaces $\YY_{k}\times\BB$ one can recover some derived structures reminiscent of the original one on $\YY\times\BB$; see \cite[Remark 2.17]{CMR15} and \cite[Proposition 9.1]{CMRcell}.
\end{remark}

\begin{example}\label{e:qBVBFVtoy}
As a toy example (actually a BV-BFV quantization of Example~\ref{exa:linearone} with $n=1$, $v=0$ and $w=1$ after a first BV pushforward),
we consider $\BB=\RR\oplus\RR[1]\oplus\RR\oplus\RR[1]\ni(q_\ii, c_\ii, p^\oo, c_\oo)$,
$\YY_1=\RR\oplus\RR[-1]\ni(T,T^+)$, 
$\Delta_{\YY_1}=\frac{\dd^2}{\dd T^+\dd T}$, and
\[
\Omega = c_\ii q_\ii - \II\hbar c_\oo\frac\dd{\dd p^\oo}.
\]
The state
\[
\psi_1(q_\ii, c_\ii, p^\oo, c_\oo,T,T^+)=
\EE^{-\frac \II\hbar(p^\oo q_\ii+Tq_\ii+T^+(c_\ii-c_\oo))}
\]
is $\Delta^\Omega_{\YY_1}$-closed. If we consider the lagrangian submanifold $\LL=\{T^+=0\}$ of $\YY_1$, we get\footnote{We conventionally normalize the measures $\ddd T$ and $\ddd T^+$ in order to get nicer looking formulae.}
\[
\psi_2^\LL(q_\ii, c_\ii, p^\oo, c_\oo)=\int_\LL \psi_1(q_\ii, c_\ii, p^\oo, c_\oo,T,T^+) =
\int  \frac{\ddd T}{2\pi\hbar} \psi_1(q_\ii, c_\ii, p^\oo, c_\oo,T,0) = \delta(q_\ii),
\]
which is clearly $\Omega$-closed. If, on the other hand, we consider the lagrangian submanifold $\Tilde\LL=\{T=0\}$,
which cannot be deformed to $\LL$, we get
\[
\begin{split}
\psi_2^{\Tilde\LL}(q_\ii, c_\ii, p^\oo, c_\oo)&=\int_{\Tilde\LL} \psi_1(q_\ii, c_\ii, p^\oo, c_\oo,T,T^+) =
\int  \II\hbar\, {\ddd T^+} \psi_1(q_\ii, c_\ii, p^\oo, c_\oo,0,T^+)\\ &= (c_\ii-c_\oo)\,
\EE^{-\frac \II\hbar p^\oo q_\ii}= \delta(c_\ii-c_\oo)\,
\EE^{-\frac \II\hbar p^\oo q_\ii}
,
\end{split}
\]
which is also $\Omega$-closed but certainly not $\Omega$-cohomologous to $\psi_2^\LL$ (the two have different ghost numbers).
\end{example}

\subsubsection{Composition}\label{s:compo}
The space $\BB$ we consider is actually a product $\BB_\ii\times\BB_\oo$, with $\BB_\ii$ having coordinates $q_\ii$ and $c_\ii$ and
with $\BB_\oo$ having coordinates $Q_\oo$ and $c_\oo$ or
$Q_\oo$ and $b^\oo$. We assume that $\Omega$ is local, i.e., $\Omega=\Omega_\ii-\Omega_\oo$, where
each summand is an operator on the space of functions of the corresponding space. In addition, we assume $\Omega_\ii$ and $\Omega_\oo$ to be self-adjoint operators on functions on $\BB_\ii$ and $\BB_\oo$, respectively.\footnote{\label{f:above}This is defined in terms of some measure on $\BB_\ii$. A more invariant formulation regards the states as half-densities, so this choice of measure is not necessary.}

For instance, see Example~\ref{e:qBVBFVtoy} where we have $\Omega_\ii=c_\ii q_\ii$ and $\Omega_\oo=\II\hbar c_\oo\frac\dd{\dd p^\oo}$. We will see more examples in
Sections~\ref{s:linpols} and~\ref{s:bnlin}.

Suppose we have a $\Delta^{\Omega_\ii-\Omega_{\oo}}_\YY$-closed state $\psi$ on $\BB_\ii\times\BB_\oo\times\YY$ and a
$\Delta^{\Omega_\oo}_{\Tilde\YY}$-closed state $\psi_\oo$ on $\BB_\oo\times\Tilde \YY$. We can then pair $\psi$ with $\psi_\oo$ integrating over $\BB_\oo$ (we assume this integration to converge). Then\footnote{In the composition integral we assume that there is a measure on $\BB_\oo$ or that the states are half-densities, see footnote~\ref{f:above}.}
\[
\psi_\ii := \int_{\BB_\oo} \psi\psi_\oo
\]
is a $\Delta^{\Omega_\ii}_{\YY\times\Tilde\YY}$-closed state on $\BB_\ii\times\YY\times\Tilde\YY$. A further reduction by BV pushforward, as in Section~\ref{s:BVpf}, is of course possible if we have $\YY\times\Tilde\YY=\YY_1\times\YY_2$.

\begin{example}
Take $\psi=\psi_1$ as in Example~\ref{e:qBVBFVtoy} and $\Tilde\YY$ to be a point. The ghost-number-one state
\[
\psi_\oo(p^\oo,c_\oo)=\phi(p^\oo)\delta(c_\oo)=\phi(p^\oo)c_\oo,
\]
where $\phi$ is an arbitary function, is obviously $\Omega_\oo$-closed. After pairing, it yields, up to a multiplicative   constant that depends on the choice of measure,
\[
\psi_\ii(q_\ii,c_\ii,T,T^+)=\EE^{-\frac\II\hbar(Tq_\ii+T^+c_\ii)}\Hat \phi(q_\ii),
\]
where $\Hat \phi$ is the Fourier transform of $\phi$. Note that
\[
-\hbar^2\Delta_{\YY_1}\psi_\ii=c_\ii q_\ii \psi_\ii = \Omega_\ii\psi_\ii,
\]
so $\psi_\ii$ is $\Delta^{\Omega_\ii}_{\YY_1}$-closed. We may now integrate over the lagrangian submanifolds $\LL=\{T^+=0\}$ or $\Tilde\LL=\{T=0\}$ getting
\[
\psi_\ii^\LL(q_\ii,c_\ii)=\delta(q_\ii)\Hat\phi(0)\quad\text{ and }\quad
\psi_\ii^{\Tilde \LL}(q_\ii,c_\ii)=\delta(c_\ii)\Hat \phi(q_\ii),
\]
which are clearly $\Omega_\ii$-closed. If, on the other hand, we consider the ghost-number-zero state
\[
\Tilde\psi_\oo(p^\oo,c_\oo)=1,
\]
which is obviously $\Omega_\oo$-closed, then we get, again up to a multiplicative constant,
\[
\Tilde\psi_\ii(q_\ii,c_\ii,T,T^+)=\delta(q_\ii)\delta(T^+)=\delta(q_\ii) T^+,
\]
which is obviously $\Delta_{\YY_1}$- and $\Omega_\ii$-closed, and therefore $\Delta^{\Omega_\ii}_{\YY_1}$-closed.
By a further BV pushforward as above, we get
\[
\Tilde\psi_\ii^\LL(q_\ii,c_\ii)=0\quad\text{ and }\quad
\Tilde\psi_\ii^{\Tilde \LL}(q_\ii,c_\ii)=\delta(q_\ii),
\]
which are clearly $\Omega_\ii$-closed.
\end{example}

\begin{remark}
A generalization of the above is when we have a $\Delta^{\Omega_\ii-\Omega_{\oo}}_\YY$-closed state $\psi$ on $\BB_\ii\times\BB_\oo\times\YY$ and a
$\Delta^{\Omega_\oo-\Omega_{\oo'}}_{\Tilde\YY}$-closed state $\Tilde\psi$ on $\BB_\oo\times\BB_{\oo'}\times\Tilde \YY$. We can then pair them on $\BB_\oo$, getting a $\Delta^{\Omega_\ii-\Omega_{\oo'}}_{\YY\times\Tilde\YY}$-closed state
\[
\Hat\psi =  \int_{\BB_\oo} \psi\Tilde\psi
\]
on $\BB_\ii\times\BB_{\oo'}\times\YY\times\Tilde\YY$ (which could possibly be further reduced by BV pushforwards).
If we interpret the above states $\psi$ and $\Tilde\psi$ as evolution operators corresponding to the intervals
$[t_\ii,t_\oo]$ and $[t_\oo,t_{\oo'}]$, respectively, then $\Hat\psi$ can be interpreted as the corresponding evolution operator on $[t_\ii,t_{\oo'}]$.
\end{remark}


\section{BV-BFV boundary structures for linear polarizations}\label{s:linpols}
In this section we prepare for the application of the quantum BV-BFV formalism to the case of one-dimensional systems with constraints, whose classical treatment was the focus of Sections~\ref{s:HJnondeg} to 
\ref{r:other}, with linear polarizations. We discuss here only the quantization of the boundary structure, whereas the quantization of the bulk will be discussed in Section~\ref{s:BVBFVqlin}.
We defer the case of nonlinear polarizations to Section~\ref{s:nonlinpols} and the case of infinite-dimensional targets, discussed at the classical level in Section~\ref{s:inftarg}, to \cite{CS_cyl}. In many places, we also make use of deformation quantization, for which we refer to the valuable review \cite{DS} and references therein.

\subsection{Three cases}\label{s:three}
We will work out some representative cases of linear polarizations. Namely,
 we assume that the coordinates in $\BB_\oo$ are either 
\begin{description}
\item[Case~I] $(p^\oo,c_\oo)$, or
\item[Case~II] $(p^\oo,b^\oo)$,
\end{description}
with $(q_\ii,c_\ii)$ as the coordinates in $\BB_\ii$ (see Remark~\ref{r:BVBFVghostchanging}).
For completeness, and for later applications, we will also consider: 
\begin{description}
\item[Case~III] $(q_\ii,b^\ii)$ as the coordinates in $\BB_\ii$ and $(p^\oo,c_\oo)$ as the coordinates in $\BB_\oo$.
\end{description}

For the quantization we will consider a discontinuous splitting of the bulk fields (see Example~\ref{e:2plittings}).

Finally, we confine ourselves to the case when the structure functions are constant (see Example~\ref{e:BVBFVLie}); we then have
\begin{align*}
S_\text{BFV} &= c^\alpha_\ii H_\alpha(p^\ii,q_\ii) - \frac12 f_{\alpha\beta}^\gamma\, b_\gamma^\ii c^\alpha_\ii c^\beta_\ii
-c^\alpha_\oo H_\alpha(p^\oo,q_\oo) + \frac12 f_{\alpha\beta}^\gamma\, b_\gamma^\oo c^\alpha_\oo c^\beta_\oo,\\
\begin{split}
S_\text{BV}
&= \int_{t_\ii}^{t_\oo} \bigg(
p_i\ddd q^i - e^+_\alpha\ddd c^\alpha -
e^\alpha H_\alpha(p,q) 
+ c^\alpha\left(q^+_i\frac{\dd H_\alpha}{\dd p_i}(p,q){-}p_+^i\frac{\dd H_\alpha}{\dd q^i}(p,q)
\right)\\
&\phantom{=}-
 \frac12 f_{\alpha\beta}^\gamma\, c^+_\gamma c^\alpha c^\beta +
 f_{\alpha\beta}^\gamma\, e^+_\gamma c^\alpha e^\beta
\bigg).
\end{split}
\end{align*}

\begin{remark}[Unimodularity]\label{r:unimodularity}
The BV action satisfies the classical master equation. To ensure $\Delta S_\text{BV}=0$, although at a formal level, we assume the Lie algebra $\g$, of which the $f_{\alpha\beta}^\gamma$s are the structure constants, to be unimodular: i.e., $f_{\alpha\beta}^\alpha=0$.
\end{remark}

\begin{remark}[The coboundary operator]
In all the cases we consider, the operator $\Omega$ is of the form $\Omega_\ii-\Omega_\oo$ with $\Omega_\ii$ and $\Omega_\oo$ acting on different spaces, so $\Omega^2=\Omega_\ii^2+\Omega_\oo^2$ and we have $\Omega^2=0$ if and only if $\Omega_\ii^2=0$ and $\Omega_\oo^2=0$.
\end{remark}


\subsection{The boundary structure in Case~I}
We use $\Tilde f=f=p_iq^i$, so
\[
\alpha_\text{BFV}^{\Tilde f} = \alpha_\text{BFV} + \ddd f(q_\oo,p^\oo) 
=  p_i^\ii\ddd q^i_\ii +  b_\alpha^\ii\ddd c^\alpha_\ii
+q_\oo^i\ddd p^\oo_i
 - b_\alpha^\oo\ddd c^\alpha_\oo
.
\]
We now proceed as in Section~\ref{s:IIdiscosplit}.
If we write $\alpha_\text{BFV}^{\Tilde f}=z^*_\mu\ddd z^\mu$, we get
\[
p^\ii=q_\ii^*,\quad b^\ii=c_\ii^*,\quad q_\oo=(p^\oo)^*,\quad b^\oo=-c_\oo^*.
\]
The recipe \eqref{e:Omega} for constructing $\Omega$ is to replace $z^*_\mu$ with $\II\hbar\frac\dd{\dd z^\mu}$ in
$S_\text{BFV}$. Therefore, we get
\begin{align*}
\Omega_\ii &= c^\alpha_\ii H_\alpha\left(\II\hbar\frac\dd{\dd q_\ii},q_\ii\right) - \frac{\II\hbar}2 f_{\alpha\beta}^\gamma\, c^\alpha_\ii c^\beta_\ii \frac\dd{\dd c^\gamma_\ii},\\
\Omega_\oo &= c^\alpha_\oo H_\alpha\left(p^\oo,\II\hbar\frac\dd{\dd p^\oo}\right) + \frac{\II\hbar}2 f_{\alpha\beta}^\gamma\,  c^\alpha_\oo c^\beta_\oo\frac\dd{\dd c^\gamma_\oo}.
\end{align*}
For example, if the index sets are both $\{1\}$, so $f_{\alpha\beta}^\gamma=0$, and we take $H_1(p,q)=q$, we get the $\Omega$ of Example~\ref{e:qBVBFVtoy}.

We can easily see that $\Omega^2=0$ if and only if we have, at both endpoints, the following quantization of \eqref{e:involutivity}:
\begin{equation}\label{e:HHhat}
\left[\widehat{H}_\alpha,\widehat{H}_\beta\right] = \II\hbar f^\gamma_{\alpha\beta}\widehat{H}_\gamma.
\end{equation}
Here $\widehat{H}_\alpha$ denotes the corresponding quantization of $H_\alpha$ at the given endpoint. Equivalently, we may reformulate this as
\begin{equation}\label{e:HHstar}
[H_\alpha,\stackrel{\star}, H_\beta]=\II\hbar f_{\alpha\beta}^\gamma H_\gamma,
\end{equation}
where $[\phi\stackrel{\star},\psi]:=\phi\star\psi-\psi\star\phi$ denotes the commutator with respect to either star product induced from the above quantization (if the condition is satisfied for one star product, it is automatically satisfied also for the other). Namely, at the initial endpoint we have
\begin{equation}\label{e:iistar}
\phi\star\psi (p,q)= \EE^{\II\hbar \frac{\dd^2}{\dd \Tilde q^i\dd \Tilde p_i}} \phi(\Tilde p, q)\psi(p,\Tilde q)\Big|_{\Tilde p=p,\ \Tilde q=q},
\end{equation}
whereas at the final endpoint we have
\begin{equation}\label{e:oostar}
\phi\star\psi (p,q)= \EE^{\II\hbar \frac{\dd^2}{\dd \Tilde q^i\dd \Tilde p_i}} \phi(p,\Tilde q)\psi(\Tilde p, q)\Big|_{\Tilde p=p,\ \Tilde q=q}.
\end{equation}

In principle, if \eqref{e:HHstar} is not satisfied on the nose, 
one may look for $\hbar$-deformations of the $H_\alpha$s that satisfy it. However, there are in general obstructions to do it. For the purposes of the present paper the following remark is important.

\begin{remark}
In both the linear case of Example~\ref{e:flinear} and the biaffine case of Example~\ref{exa:biaffine}, 
condition \eqref{e:HHstar} is automatically satisfied, so we have $\Omega^2=0$.
\end{remark}

\begin{digression}
Lemma~4.14 of \cite{CMR15} states that $\Omega^2=0$ in case of $BF$-like theories (like the theories at hand). The regularization was assumed there to be in terms of the
Fulton--MacPherson--Axelrod--Singer
compactified configurations spaces \cite{FM,AS} (which is also the case we will consider in the next sections). What was also implicitly assumed there, although unfortunately not made clear, was that the bulk dimension should be larger than $1$.\footnote{The proof of that lemma relied in particular on the contribution of the principal faces---i.e., collapses of two points in the bulk---of the boundary of compactified configurations spaces. In $d$ dimensions, such faces have dimension $d-1$, whereas propagators in $BF$-like theories are $(d-1)$-forms. For $d>1$, there must then be exactly one propagator between the two collapsing points; the sum over all these contributions is zero as a consequence of the classical master equation. For $d=1$, however, there is no bound on the number of propagators between two collapsing points.} For this reason that lemma does not apply to the present case.
We will briefly discuss in Appendix~\ref{s:mQME} the correct version of the lemma in the case at hand.
 Note that Lemma~4.14 of \cite{CMR15} also does not apply directly to the case of infinite-dimensional targets that will be treated in \cite{CS_cyl}, since the appropriate gauge fixing  needed to reduce to the present considerations (the axial gauge fixing) is not compatible with the aforementioned  compactification of configuration spaces. Actually, the failure of \eqref{e:HHstar} has not only the consequence that the boundary operator $\Omega$ does not square to zero, but also that the classical master equation $(S_\text{BV},S_\text{BV})=0$ is not enough for  the regularized QME to hold in the bulk. 
What will save the game in \cite{CS_cyl}, where we will consider Chern--Simons theories on cylinders with the axial gauge, is that the theories considered there are examples of the linear or of the biaffine case.
\end{digression}

\subsection{The boundary structure in Case~II}
We use $\Tilde f=p_iq^i+c^\alpha b_\alpha$, so
\[
\alpha_\text{BFV}^{\Tilde f} = \alpha_\text{BFV} + \ddd \Tilde f(q_\oo,c_\oo,p^\oo,b^\oo) 
=  p_i^\ii\ddd q^i_\ii +  b_\alpha^\ii\ddd c^\alpha_\ii
+q_\oo^i\ddd p^\oo_i - c^\alpha_\oo \ddd b_\alpha^\oo
.
\]
In this case, the prescriptions of Section~\ref{s:IIdiscosplit} yield
\[
p^\ii=q_\ii^*,\quad b^\ii=c_\ii^*,\quad q_\oo=(p^\oo)^*,\quad 
c_\oo
=-(b^\oo)^*
\]
and
\begin{align*}
\Omega_\ii &=
c^\alpha_\ii H_\alpha\left(\II\hbar\frac\dd{\dd q_\ii},q_\ii\right) - \frac{\II\hbar}2 f_{\alpha\beta}^\gamma\, c^\alpha_\ii c^\beta_\ii \frac\dd{\dd c^\gamma_\ii},\\
\Omega_\oo &= -\II\hbar H_\alpha\left(p^\oo,\II\hbar\frac\dd{\dd p^\oo}\right) \frac\dd{\dd b_\alpha^\oo}
+ \frac{\hbar^2}2 f_{\alpha\beta}^\gamma\,  b_\gamma^\oo
\frac{\dd^2}{\dd b_\alpha^\oo\dd b_\beta^\oo}.
\end{align*}
Again $\Omega^2=0$ if and only if \eqref{e:HHstar} is satisfied at both endpoints.

\subsection{The boundary structure in Case~III}
In this case have to use the one-form
\[
\alpha_\text{BFV}^{\Hat f} =  p_i^\ii\ddd q^i_\ii +  c^\alpha_\ii\ddd b_\alpha^\ii 
+q_\oo^i\ddd p^\oo_i
 - b_\alpha^\oo\ddd c^\alpha_\oo
.
\]
To achieve this we add $p_iq^i$ at the final endpoint of the BV action, like in Cases I and II, but now we subtract $c^\alpha b_\alpha$ at the initial endpoint. The function $\Hat f$ is now more precisely written as $p^\oo_iq^i_\oo-c^\alpha_\ii b_\alpha^\ii$ and we have $\alpha_\text{BFV}^{\Hat f} = \alpha_\text{BFV} + \ddd{\Hat f}$.
In this case, the prescriptions of Section~\ref{s:IIdiscosplit} yield
\[
p^\ii=q_\ii^*,\quad c_\ii =(b^\ii)^*,\quad q_\oo=(p^\oo)^*,\quad b^\oo=-c_\oo^*,
\]
and
\begin{align*}
\Omega_\ii &=
\II\hbar H_\alpha\left(\II\hbar\frac\dd{\dd q_\ii},q_\ii\right) \frac\dd{\dd b_\alpha^\ii}
+ \frac{\hbar^2}2 f_{\alpha\beta}^\gamma\, b_\gamma^\ii\frac{\dd^2}{\dd b_\alpha^\ii\dd b_\beta^\ii},\\
\Omega_\oo &= c^\alpha_\oo H_\alpha\left(p^\oo,\II\hbar\frac\dd{\dd p^\oo}\right) + \frac{\II\hbar}2 f_{\alpha\beta}^\gamma\,  c^\alpha_\oo c^\beta_\oo\frac\dd{\dd c^\gamma_\oo}.
\end{align*}
Also in this case $\Omega^2=0$ if and only if \eqref{e:HHstar} is satisfied at both endpoints.

\section{BV-BFV quantization with linear polarizations}\label{s:BVBFVqlin}
In this section we quantize the constrained one-dimensional systems with linear polarizations on the endpoints.
We refer to the three cases whose boundary structure has been discussed in Section~\ref{s:linpols} (see Section~\ref{s:three} for the list of the three case).
We postpone the more complicated Case~I and start with Cases~II and~III. 

After that, we will discuss the gluing of intervals and the composition of the corresponding states. We will conclude the section with the quantization of the systems with nontrivial evolution discussed in Section~\ref{r:nontrivev} at the classical level: in particular, classical mechanics, Section~\ref{r:SwithH}, and the free relativistic particle, Section~\ref{s:RP}.

\subsection{Quantization in Case~II}\label{s:IIq}
 First we have to subtract the final-endpoint pullback of $\Tilde f=p_iq^i+c^\alpha b_\alpha$ from the BV action getting
\[
S_\text{BV}^{\Tilde f}
= S_\text{BV}
- p_i(t_\oo) q^i(t_\oo)+c^\alpha(t_\oo) e^+_\alpha(t_\oo).
\]
Next we pick the discontinuous splitting $\Tilde\FF=\YY\oplus\BB$ of Example~\ref{e:2plittings} for all bulk fields. Namely, by decorating the elements of $\YY$ with a hat, we set
\begin{align*}
q(t)&=\begin{cases}
q_\ii & t=t_\ii\\
\Hat q(t) & t> t_\ii
\end{cases},&\quad
c(t)&=\begin{cases}
c_\ii & t=t_\ii\\
\Hat c(t) & t> t_\ii
\end{cases},\\
p(t)&=\begin{cases}
\Hat p(t) & t< t_\oo\\
p^\oo & t=t_\oo
\end{cases},&\quad
e^+(t)&=\begin{cases}
\Hat e^+(t) & t< t_\oo\\
-b^\oo & t=t_\oo
\end{cases},
\end{align*}
with boundary conditions
\[
\Hat q(t_\ii)=0,\quad \Hat c(t_\ii)=0, \quad \Hat p(t_\oo) = 0, \quad \Hat e^+(t_\oo) = 0.
\]
We decorate the remaining fields, all belonging to $\YY$, also with a hat: $\Hat p^+,\Hat q_+,\Hat e,\Hat c^+$.
We now formally integrate by parts\footnote{See Remark~\ref{rem:Discontinuous splitti}.} the terms $p_i\ddd q^i$ and  $e^+_\alpha\ddd c^\alpha$ 
in $S_\text{BV}^{\Tilde f}$,
\[
\int_{t_\ii}^{t_\oo} p_i\ddd q^i \to -\Hat p_i(t_\ii)q_\ii^i + \int_{t_\ii}^{t_\oo} \Hat p_i\ddd \Hat q^i,\qquad
\int_{t_\ii}^{t_\oo} e^+_\alpha\ddd c^\alpha \to \Hat e^+_\alpha(t_\ii)c_\ii^\alpha + \int_{t_\ii}^{t_\oo} \Hat e^+_\alpha\ddd \Hat c^\alpha,
\]
 in order to get the BV action adapted to this splitting:
\[
\begin{split}
\Tilde S_\text{BV}^{\Tilde f}
:= &- p_i^\oo \Hat q^i(t_\oo)-\Hat c^\alpha(t_\oo) b_\alpha^\oo
-\Hat p_i(t_\ii)q_\ii^i  - \Hat e^+_\alpha(t_\ii)c_\ii^\alpha
\\
&+\int_{t_\ii}^{t_\oo} \bigg(
\Hat p_i\ddd \Hat q^i - \Hat e^+_\alpha\ddd \Hat c^\alpha -
\Hat e^\alpha H_\alpha(\Hat p,\Hat q) \\ 
&+ \Hat c^\alpha\left(\Hat q^+_i\frac{\dd H_\alpha}{\dd p_i}(\Hat p,\Hat q)-\Hat p_+^i\frac{\dd H_\alpha}{\dd q^i}(\Hat p,\Hat q)
\right)\\
&-
 \frac12 f_{\alpha\beta}^\gamma\, \Hat c^+_\gamma \Hat c^\alpha \Hat c^\beta +
 f_{\alpha\beta}^\gamma\, \Hat e^+_\gamma \Hat c^\alpha \Hat e^\beta
\bigg).
\end{split}
\]
We finally pick the gauge-fixing lagrangian 
\begin{equation}\label{e:simpleL}
\LL=\{\Hat p^+=\Hat q^+=\Hat e=\Hat c^+=0\},
\end{equation}
getting
\[
\begin{split}
\Tilde S_\text{BV}^{\Tilde f}\Big|_\LL
= &- p_i^\oo \Hat q^i(t_\oo)-\Hat c^\alpha(t_\oo) b_\alpha^\oo
-\Hat p_i(t_\ii)q_\ii^i  - \Hat e^+_\alpha(t_\ii)c_\ii^\alpha
+\int_{t_\ii}^{t_\oo} \left(
\Hat p_i\ddd \Hat q^i - \Hat e^+_\alpha\ddd \Hat c^\alpha \right).
\end{split}
\]
The propagators defined by the kinetic terms and compatible with the boundary conditions are
\begin{equation}\label{e:propII}
\langle \Hat q^i(s)\,\Hat p_j(t) \rangle = \II\hbar\delta^i_j\,\theta(s-t)\quad\text{ and}\quad
\langle \Hat c^\alpha(s)\,\Hat e^+_\beta(t) \rangle = \II\hbar\delta^\alpha_\beta\,\theta(s-t), 
\end{equation}
where
\[
\theta(t)=\begin{cases}
0 & \text{for }t<0\\
1 & \text{for }t>0
\end{cases}
\]
is the Heaviside step function. We can then compute
\[
Z_\text{II} := \int_\LL \EE^{\frac\II\hbar \Tilde S_\text{BV}^{\Hat f}}
\]
as
\begin{equation}\label{e:II}
Z_\text{II}(q_\ii,c_\ii,p^\oo,b^\oo)=\EE^{-\frac\II\hbar(p^\oo_iq_\ii^i-b^\oo_\alpha c_\ii^\alpha)}.
\end{equation}
One can immediately check that $\Omega Z_\text{II}=0$.

The striking result is that $Z_\text{II}$ is extremely simple and does not depend on the constraints $H_\alpha$. Still its definition requires the whole BV-BFV machinery and, in particular, condition \eqref{e:HHstar}. The reason is that we have chosen a clever gauge fixing that simplifies the computation, but we want to be sure that the result is invariant (up to $\Omega$-coboundary terms) under deformations of the gauge-fixing lagrangian (see, in case, the following Digression).

As we discuss in the following Digression, just deforming the gauge fixing, picking $e$ different from zero, does not change the result (in $\Delta^\Omega$-cohomology). Still one can get a less trivial answer for Case~I, as we will dicuss in Section~\ref{s:gluing}. The present ``trivial'' answer is anyway useful for gluing solutions on different intervals along the lines of Section~\ref{s:compo}. We will see in Section~\ref{s:gluing} how to get a nontrivial answer.

\begin{digression}[Change of gauge fixing]\label{d:cgf}
By the general BV-BFV philosophy---proved in the case at hand in
Appendix~\ref{s:mQME}---we know that deformations of $\LL$
produce $\Omega$-cohomologous results. Instead of using this general result,
as a matter of example we explicitly consider a particular case. Let
\[
\LL_u:=\left\{\Hat p^+=\Hat q^+=\Hat c^+=0,\ \Hat e= e_0 :=\frac{u\,\ddd t}{t_\oo-t_\ii}\right\},
\]
where, for simplicity, we assume $u$ to be a constant function on $[t_\ii,t_\oo]$. We now have
\[
\begin{split}
\Tilde S_\text{BV}^{\Tilde f}\Big|_{\LL_u}
= &- p_i^\oo \Hat q^i(t_\oo)-\Hat c^\alpha(t_\oo) b_\alpha^\oo
-\Hat p_i(t_\ii)q_\ii^i  - \Hat e^+_\alpha(t_\ii)c_\ii^\alpha
\\
&+\int_{t_\ii}^{t_\oo} \left(
\Hat p_i\ddd \Hat q^i - \Hat e^+_\alpha\ddd \Hat c^\alpha
- e_0^\alpha H_\alpha(\Hat p,\Hat q) 
 +
 f_{\alpha\beta}^\gamma\, \Hat e^+_\gamma \Hat c^\alpha e_0^\beta
\right).
\end{split}
\]
Let us first consider the case when the structure constants vanish. In this case, we compute, following \cite[Section~4.4]{CMR15},
\[
Z_\text{II}^u := \int_{\LL_u} \EE^{\frac\II\hbar \Tilde S_\text{BV}^{\Hat f}}=
\EE^{-\frac\II\hbar(p^\oo_iq_\ii^i-b^\oo_\alpha c_\ii^\alpha)}\;
\EE_\star^{-\frac\II\hbar u^\alpha H_\alpha}(p^\oo,q_\ii),
\]
where $\EE_\star$ is the star exponential with respect to the initial star product \eqref{e:iistar}. We have
\[
\Omega_\ii Z_\text{II}^u = c^\alpha_\ii Z_\text{II}^u\star H_\alpha,\quad
\Omega_\oo Z_\text{II}^u = c^\alpha_\ii H_\alpha\star Z_\text{II}^u,
\]
so the mQME $\Omega Z_\text{II}^u=0$ is satisfied, provided that \eqref{e:HHstar}, which now simply reads $[H_\alpha,\stackrel{\star}, H_\beta]=0$, holds. Let us now consider a family of gauge fixings as above, parametrized by a path $u(x)$, $x$ is some interval, and check gauge-fixing independence under the assumption \eqref{e:HHstar}. We have
\[
\frac{\dd Z_\text{II}^u}{\dd x}=-\frac\II\hbar \frac{\ddd u^\alpha}{\ddd x} H_\alpha\star Z_\text{II}^u;
\]
that is,
\begin{equation}\label{e:dZII}
\frac{\dd Z_\text{II}^u}{\dd x} = \Delta^\Omega \psi
\end{equation}
with $\psi=-b_\alpha^\oo\frac{\ddd u^\alpha}{\ddd x}Z_\text{II}^u$ and $\Delta^\Omega=\frac1{\hbar^2}\Omega$ in accordance with \eqref{e:DeltaOmega}. The general case with nonvanishing structure constants is a bit more complicated. We only report the final computations. As in Section~\ref{s:fLiealgebra}, we view the $f_{\alpha\beta}^\gamma$s as structure constants of a Lie algebra $\frg$ for a certain basis; we then view
objects with an upper Greek index as taking value in $\frg$ and  objects with a lower Greek index as taking values in $\frg^*$; finally, we denote by
$\langle\ ,\ \rangle$ the pairing of $\frg^*$ with $\frg$. We then get
\[
Z_\text{II}^u =
\EE^{-\frac\II\hbar\left(p^\oo_iq_\ii^i-\langle b^\oo,\, \Ad_{\EE^u} c_\ii\rangle\right)}\;
\EE_\star^{-\frac\II\hbar  \langle H,\,u\rangle}(p^\oo,q_\ii).
\]
Therefore, we have
\begin{align*}
\Omega_\ii Z_\text{II}^u &= Z_\text{II}^u\star \langle H,\, c_\ii\rangle -\frac12\langle b^\oo,\,\Ad_{\EE^u} [c_\ii,c_\ii]\rangle\, Z_\text{II}^u,\\ 
\Omega_\oo Z_\text{II}^u &= \langle H,\, \Ad_{\EE^u} c_\ii \rangle \star Z_\text{II}^u-\frac12\langle b^\oo,\,\Ad_{\EE^u} [c_\ii,c_\ii]\rangle\, Z_\text{II}^u,\\
\intertext{and hence}
\Omega Z_\text{II}^u &= Z_\text{II}^u\star\left(\langle H,\, c_\ii\rangle
-\left\langle
\EE_\star^{\frac\II\hbar  \langle H,\,u\rangle}
\star H \star
\EE_\star^{-\frac\II\hbar  \langle H,\,u\rangle}
,\, \Ad_{\EE^u} c_\ii \right\rangle
\right).
\end{align*}
Using \eqref{e:HHstar}, one can then easily prove that
\[
\chi(s):=
\left\langle
\EE_\star^{\frac\II\hbar  s\langle H,\,u\rangle}
\star H \star
\EE_\star^{-\frac\II\hbar s \langle H,\,u\rangle}
,\, \Ad_{\EE^{su}} c_\ii \right\rangle
\]
is constant in $s$, and therefore $\Omega Z_\text{II}^u=0$. If we consider  again a family of gauge fixings parametrized by a path $u(x)$ and assume \eqref{e:HHstar}, we get again \eqref{e:dZII}, now with $\psi=-\left\langle b^\oo,\, \Ad_{\EE^u} \frac{\ddd u}{\ddd x}\right\rangle Z_\text{II}^u$.
\end{digression}

\subsection{Quantization in Case~III}
Case~III is very similar to Case~II. We proceed very quickly, just outlining the differences. First, the BV action
for the given boundary polarizations is
\[
S_\text{BV}^{\Hat f} = S_\text{BV}- p_i(t_\oo) q^i(t_\oo)-c^\alpha(t_\ii) e^+_\alpha(t_\ii).
\]
In picking the discontinuous extension, $p$ and $q$ are realized as in Case~II, with the same boundary conditions, whereas
\begin{equation}\label{e:e+cIII}
e^+(t)=\begin{cases}
-b^\ii & t=t_\ii\\
\Hat e^+(t) & t> t_\ii
\end{cases},
\qquad
c(t)=\begin{cases}
\Hat c(t) & t< t_\oo\\
c_\oo & t=t_\oo
\end{cases},
\end{equation}
with boundary conditions
\begin{equation}\label{e:e+cIIIboundary}
\Hat e^+(t_\ii)=0,  \quad \Hat c(t_\oo) = 0.
\end{equation}
The BV action adapted to this splitting is then
\begin{equation}\label{e:TildeSIII}
\begin{split}
\Tilde S_\text{BV}^{\Hat f}
= &- p_i^\oo \Hat q^i(t_\oo)-\Hat e^+_\alpha(t_\oo) c^\alpha_\oo
-\Hat p_i(t_\ii)q_\ii^i  + \Hat c^\alpha(t_\ii) b^\ii_\alpha
\\
&+\int_{t_\ii}^{t_\oo} \bigg(
\Hat p_i\ddd \Hat q^i - \Hat e^+_\alpha\ddd \Hat c^\alpha -
\Hat e^\alpha H_\alpha(\Hat p,\Hat q) \\ 
&+ \Hat c^\alpha\left(\Hat q^+_i\frac{\dd H_\alpha}{\dd p_i}(\Hat p,\Hat q)-\Hat p_+^i\frac{\dd H_\alpha}{\dd q^i}(\Hat p,\Hat q)
\right)\\
&-
 \frac12 f_{\alpha\beta}^\gamma\, \Hat c^+_\gamma \Hat c^\alpha \Hat c^\beta +
 f_{\alpha\beta}^\gamma\, \Hat e^+_\gamma \Hat c^\alpha \Hat e^\beta
\bigg).
\end{split}
\end{equation}
We pick again the gauge-fixing lagrangian \eqref{e:simpleL}. This produces the same $\Hat p\Hat q$ propagator as in Case~II, see \eqref{e:propII}, but now we have
\[
\langle \Hat e^+_\beta(s)\, \Hat c^\alpha(t) \rangle = \II\hbar\delta^\alpha_\beta\,\theta(s-t).
\]
We then get
\[
Z_\text{III}(q_\ii,b^\ii,p^\oo,c_\oo)=\EE^{-\frac\II\hbar(p^\oo_iq_\ii^i- c_\oo^\alpha b^\ii_\alpha)}.
\]
One can immediately check that $\Omega Z_\text{III}=0$.

The considerations of Digression~\ref{d:cgf} of course apply also to this case.


\subsection{Quantization in Case~I}\label{s:Iq}
Finally, we consider Case~I. Now we have to subtract the final-endpoint pullback of $f=p_iq^i$ from the BV action getting
\[
S_\text{BV}^{f}
= S_\text{BV}
- p_i(t_\oo) q^i(t_\oo).
\]

Since the boundary values $c_\ii$ and $c_\oo$ are given (as coordinates of $\BB$), the fluctuations $\Hat c$ of
the ghost fields $\Hat c$, in the discontinuous splitting, will have vanishing boundary conditions at both endpoints. On the other hand, there are no endpoint conditions on the fluctuations $\Hat e^+$. Proceeding this way, we would get, as in the previous cases, a term $\int_{t_\ii}^{t_\oo}\Hat e^+_\alpha\ddd \Hat c^\alpha$ in the action. This term is not well suited for the perturbative expansion as $\ddd$ now has a kernel on the space of the $\Hat e^+$ fields (namely, the constant ones). Since we compute the functional integral perturbatively, we have to remove this constant term, call it $T^+$. More precisely, we are going to proceed as in Section~\ref{s:BVpf}; namely, in addition to a splitting $\Tilde F = \YY\oplus\BB$, we also split $\YY=\YY_1\oplus\YY_2$, where $\YY_2$ contains the fluctuations (which we will denote with a hat) over which we integrate by perturbation theory, and $\YY_1$ is a finite-dimensional, odd symplectic space  of which the $T^+$s are half of the coordinates. The other half will have to be variables $T$ which are part of the fields $e$. 
In general, we will not integrate over $\YY_1$, but in some cases this is possible (yet one has to be careful that there are in general nonequivalent choices of gauge-fixing lagrangians: see Example~\ref{e:qBVBFVtoy} and Remark~\ref{r:redI}).
In accordance with \cite{CMR15}, we call $\YY_1$ the space of residual fields.

Explicitly, the spaces $\YY_1,\YY_2,\BB$ are defined as follows. In $\YY_1$ we have coordinates $(T^\alpha,T^+_\alpha)$ with canonical odd symplectic structure $\delta T^+_\alpha\,\delta T^\alpha$. In $\YY_2$ we have the hat fields $\Hat p,\Hat q,\Hat e,\Hat c,\Hat p_+,\Hat q^+,\Hat e^+,\Hat c^+$ that must obey the following conditions
\begin{equation}\label{e:condCaseI}
\Hat q(t_\ii)=0,\quad \Hat c(t_\ii)=0, \quad \Hat p(t_\oo) = 0, \quad \Hat c(t_\oo) = 0,
\quad \int_{t_\ii}^{t_\oo}\Hat e^+\ddd t = 0,\quad \quad \int_{t_\ii}^{t_\oo}\Hat e = 0.
\end{equation}
Finally, $\BB$ has coordinates $q_\ii,c_\ii,p^\oo,c_\oo$.
We define $\Tilde F = \YY_1\oplus\YY_2\oplus\BB$ via
\[
q(t)=\begin{cases}
q_\ii & t=t_\ii\\
\Hat q(t) & t> t_\ii
\end{cases},\qquad
p(t)=\begin{cases}
\Hat p(t) & t< t_\oo\\
p^\oo & t=t_\oo
\end{cases},\qquad
c(t)=\begin{cases}
c_\ii & t=t_\ii\\
\Hat c(t) & t_\ii<t<t_\oo\\
c_\oo & t = t_\oo
\end{cases},
\]
\[
e(t)=\frac{T\,\ddd t}{t_\oo-t_\ii}+\Hat e(t),\quad
e^+(t)=T^++\Hat e^+(t),
\]
and
\[
p_+(t)=\Hat p_+(t),\quad q^+(t)=\Hat q^+(t),\quad c^+(t)=\Hat c^+(t).
\]
To get the adapted BV action, we use the replacements
\[
\begin{split}
\int_{t_\ii}^{t_\oo} p_i\ddd q^i &\to -\Hat p_i(t_\ii)q_\ii^i + \int_{t_\ii}^{t_\oo} \Hat p_i\ddd \Hat q^i,\\
\int_{t_\ii}^{t_\oo} e^+_\alpha\ddd c^\alpha &\to -T^+_\alpha\,(c_\oo^\alpha-c_\ii^\alpha)
-\Hat e^+_\alpha(t_\oo) c_\oo^\alpha + \Hat e^+_\alpha(t_\ii)c_\ii^\alpha
+ \int_{t_\ii}^{t_\oo} \Hat e^+_\alpha\ddd \Hat c^\alpha,
\end{split}
\]
which yield
\[
\begin{split}
\Tilde S_\text{BV}^{f}
&= T^+_\alpha\,(c_\oo^\alpha-c_\ii^\alpha)
- p_i^\oo \Hat q^i(t_\oo)
+\Hat e^+_\alpha(t_\oo) c_\oo^\alpha 
-\Hat p_i(t_\ii)q_\ii^i
- \Hat e^+_\alpha(t_\ii)c_\ii^\alpha
\\
&+\int_{t_\ii}^{t_\oo} \bigg(
\Hat p_i\ddd \Hat q^i - \Hat e^+_\alpha\ddd \Hat c^\alpha 
-\frac{T^\alpha\,\ddd t}{t_\oo-t_\ii}H_\alpha(\Hat p,\Hat q)
-\Hat e^\alpha H_\alpha(\Hat p,\Hat q) \\ 
&+ \Hat c^\alpha\left(\Hat q^+_i\frac{\dd H_\alpha}{\dd p_i}(\Hat p,\Hat q)-\Hat p_+^i\frac{\dd H_\alpha}{\dd q^i}(\Hat p,\Hat q)
\right) - \frac12 f_{\alpha\beta}^\gamma\, \Hat c^+_\gamma \Hat c^\alpha \Hat c^\beta \\
&- f_{\alpha\beta}^\gamma\, T^+_\gamma \Hat c^\alpha \frac{T^\beta\,\ddd t}{t_\oo-t_\ii}
 + f_{\alpha\beta}^\gamma\, \Hat e^+_\gamma \Hat c^\alpha \frac{T^\beta\,\ddd t}{t_\oo-t_\ii}
-  f_{\alpha\beta}^\gamma\, T^+_\gamma \Hat c^\alpha \Hat e^\beta 
 + f_{\alpha\beta}^\gamma\, \Hat e^+_\gamma \Hat c^\alpha \Hat e^\beta
\bigg).
\end{split}
\]
On the gauge-fixing lagrangian \eqref{e:simpleL} we have
\[
\begin{split}
\Tilde S_\text{BV}^{f}\Big|_\LL
&=T^+_\alpha\,(c_\oo^\alpha-c_\ii^\alpha)
- p_i^\oo \Hat q^i(t_\oo)
+\Hat e^+_\alpha(t_\oo) c_\oo^\alpha 
-\Hat p_i(t_\ii)q_\ii^i
- \Hat e^+_\alpha(t_\ii)c_\ii^\alpha
\\
&+\int_{t_\ii}^{t_\oo} \bigg(
\Hat p_i\ddd \Hat q^i - \Hat e^+_\alpha\ddd \Hat c^\alpha 
-\frac{T^\alpha\,\ddd t}{t_\oo-t_\ii}H_\alpha(\Hat p,\Hat q)
 \\ 
&- f_{\alpha\beta}^\gamma\, T^+_\gamma \Hat c^\alpha \frac{T^\beta\,\ddd t}{t_\oo-t_\ii}
 + f_{\alpha\beta}^\gamma\, \Hat e^+_\gamma \Hat c^\alpha \frac{T^\beta\,\ddd t}{t_\oo-t_\ii}
\bigg).
\end{split}
\]
The $\Hat p\Hat q$ propagator is the same as in Case~II, see \eqref{e:propII}, but the $\Hat c\Hat e^+$ propagator now has to take into account the conditions \eqref{e:condCaseI} on $\Hat c$ and $\Hat e^+$. We get
\begin{equation}\label{e:Hatce+prop}
\langle \Hat c^\alpha(s)\,\Hat e^+_\beta(t) \rangle = \II\hbar\delta^\alpha_\beta\,(\theta(s-t)-\phi(s))
\end{equation}
with
\[
\phi(s) = \frac{s-t_\ii}{t_\oo-t_\ii}.
\]
Note that there are no mixed terms involving the physical fields $\Hat p\Hat q$ and the ghost fields $\Hat c\Hat e^+$
in $\Tilde S_\text{BV}^{f}\Big|_\LL$, so the partition function $Z_\text{I} := \int_\LL \EE^{\frac\II\hbar \Tilde S_\text{BV}^{f}}$ is the product of a ``physical part'' $Z_\text{I}^\text{phys}$ and a ``ghost part'' $Z_\text{I}^\text{gh}$. The physical part is computed following \cite[Section~4.4]{CMR15}:
\[
Z_\text{I}^\text{phys}(q_\ii,p^\oo;T) = \EE^{-\frac\II\hbar p^\oo_iq_\ii^i}\;
\EE_\star^{-\frac\II\hbar T^\alpha H_\alpha}(p^\oo,q_\ii).
\]

\begin{remark}\label{r:semi1}
The $\star$-exponential arises from Feynman diagrams consisting of multi-edge linear graphs of arbitrary length. 
The lowest-order terms are the linear trees. Note that each vertex (either in the bulk or at one endpoint) contributes with a factor $\frac1\hbar$, whereas each edge contributes with a factor $\hbar$, so a tree of any length yields a global factor $\frac1\hbar$. Moreover, such trees simply define the (semi)classical evolution operator, i.e., the Hamilton--Jacobi action. Therefore,
\begin{equation}\label{e:qHJgeneralLie}
Z_\text{I}^\text{phys}(q_\ii,p^\oo;T) = \EE^{\frac\II\hbar (\Hat S^f_\text{HJ}(q_\ii,p^\oo,\EE^T)+O(\hbar))}
\end{equation}
with the HJ action as in \eqref{e:HJgeneralLie}.
\end{remark}
\begin{remark}[Semiclassical expansion]\label{r:semi2}
The last formula can also be obtained as follows.
As we have seen, in the Lie algebra case and in the chosen gauge fixing, the physical part completely decouples from the ghost part. Moreover, as in Remark~\ref{r:genexp},
we may view the gauge-fixed action for the physical part as the action for a mechanical system with hamiltonian $\Tilde H={\langle H,T\rangle}$ (and time variable $s=\frac{t}{t_\oo-t_\ii}$). For such a system we may go back, via a translation in the functional integral, to the usual nondiscontinuous choice of fields, $p(t)=p_{q_\ii,p^\oo}(t)+\check p(t)$ and $q(t)=q_{q_\ii,p^\oo}(t)+\check q(t)$,
and integrate over the fluctuations $\Check p$ and $\Check q$, getting the semiclassical expansion \eqref{e:qHJgeneralLie}. Note that, on the other hand, the discontinuous extension, besides being the correct one in general for BV-BFV, is also interesting because it produces the $\star$-product representation of the partition function.\footnote{Cf.\ the Clifford-exponential presentation of the nonabelian 1D Chern--Simons partition function in \cite{1dCS}.}
\end{remark}
We may summarize the content of equation \eqref{e:qHJgeneralLie} and of the two above remarks in the following
\begin{theorem}\label{t-thm2}
The physical part of
the dominant contribution of the perturbative expansion of the BV-BFV partition function, in the polarization of Case~I, is the HJ action.
\end{theorem}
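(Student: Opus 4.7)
The plan is to isolate the physical sector of $Z_\text{I}$ and compare its semiclassical asymptotics with the generalized HJ action \eqref{e:HJgeneralLie}. Since we are in the Lie algebra case with constant structure constants and the gauge-fixing lagrangian $\LL=\{\Hat p^+=\Hat q^+=\Hat e=\Hat c^+=0\}$, the gauge-fixed action $\Tilde S_\text{BV}^{f}|_\LL$ contains no terms mixing the physical fields $(\Hat p,\Hat q)$ with the ghost fields $(\Hat c,\Hat e^+)$. Consequently the functional integral factorizes as $Z_\text{I}=Z_\text{I}^\text{phys}\,Z_\text{I}^\text{gh}$, and we only need to analyze $Z_\text{I}^\text{phys}$.

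Next I would observe that the restriction of the action to $\LL$ responsible for $Z_\text{I}^\text{phys}$ coincides, after the change of variable $s=(t-t_\ii)/(t_\oo-t_\ii)$, with the usual action on an interval of unit length for a standard mechanical system with Hamiltonian $\Tilde H=\langle H,T\rangle$ and endpoint conditions $\Hat q(0)=0$, $\Hat p(1)=0$ (coming from the discontinuous splitting); the external data enter through $q_\ii$ at the initial endpoint and $p^\oo$ at the final endpoint via the boundary terms $-\Hat p(t_\ii)q_\ii^i$ and $-p^\oo_i\Hat q^i(t_\oo)$. This is exactly the situation of Example~\ref{exa:qpfinal} deformed by the Hamiltonian $\Tilde H$.

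To extract the semiclassical limit, I would follow the strategy of Remark~\ref{r:semi2}: undo the discontinuous splitting by the affine translation $\Hat p=p^{q_\ii,p^\oo}_{\Tilde H}+\Check p$, $\Hat q=q_{q_\ii,p^\oo,\Tilde H}-q_\ii+\Check q$, where $(p^{q_\ii,p^\oo}_{\Tilde H},q_{q_\ii,p^\oo,\Tilde H})$ is the unique solution of the evolution equations for the Hamiltonian $\Tilde H$ on $[0,1]$ with $q(0)=q_\ii$ and $p(1)=p^\oo$. The Jacobian of this translation is $1$, and the shifted action equals the classical evaluation on the solution plus a purely quadratic term in $(\Check p,\Check q)$ (since $S$ is linear in $p$ times $\dot q$ and its higher-order part in the fluctuations depends only on the Hessian of $\Tilde H$). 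The classical evaluation is precisely $S^{f,\Tilde H}_\text{HJ}(q_\ii,p^\oo;0,1)$, which by \eqref{e:HJgeneralLie} equals $\Hat S^f_\text{HJ}(q_\ii,p^\oo;\EE^T)$. The remaining Gaussian integral over $(\Check p,\Check q)$ with the endpoint conditions $\Check q(0)=\Check p(1)=0$ contributes the $O(\hbar)$ terms appearing in \eqref{e:qHJgeneralLie}.

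The two mildly delicate steps are: (i) verifying that the translation is compatible with the discontinuous splitting, which reduces to checking that the endpoint conditions $\Check q(0)=\Check p(1)=0$ are preserved and that no spurious boundary contribution arises from the integration by parts used to pass from $p\,\ddd q$ to $\Hat p\,\ddd\Hat q$; and (ii) the identification of the tree-level Feynman expansion described in Remark~\ref{r:semi1} with the star-exponential expression for $Z_\text{I}^\text{phys}$, which guarantees that the two presentations match and that the leading $\hbar^{-1}$ term is indeed the HJ action. The harder of these is (ii): one must check that each linear multi-edge tree carries an overall factor $\hbar^{-1}$ (one $\hbar^{-1}$ from each of the $n+1$ vertices times one $\hbar$ from each of the $n$ propagators \eqref{e:propII}) and that the sum over the lengths reproduces the Taylor expansion of $\Hat S^f_\text{HJ}$ in the parameter $T$, which can be done inductively exploiting the Dyson series for the evolution generated by $\Tilde H$ in the star product \eqref{e:iistar}.
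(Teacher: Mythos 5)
Your proof follows the paper's own argument essentially verbatim: the paper establishes Theorem~\ref{t-thm2} by combining the factorization $Z_\text{I}=Z_\text{I}^\text{phys}\,Z_\text{I}^\text{gh}$ (due to the absence of mixed physical--ghost terms in the gauge-fixed action) with exactly the two routes you describe, namely the tree-resummation and $\hbar$-counting of Remark~\ref{r:semi1} and the translation back to the continuous splitting for the mechanical system with hamiltonian $\Tilde H=\langle H,T\rangle$ of Remark~\ref{r:semi2}, identifying the classical evaluation with $\Hat S^f_\text{HJ}(q_\ii,p^\oo;\EE^T)$ via \eqref{e:HJgeneralLie}. The only minor imprecision is your claim that the shifted action is \emph{purely} quadratic in the fluctuations: for a nonquadratic $\Tilde H$ it merely starts at quadratic order (higher-order vertices remain), but this does not affect the conclusion that the fluctuation integral contributes only at $O(\hbar)$ relative to the $\hbar^{-1}$ classical term.
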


The ghost part is very easy to compute in the abelian case, $f_{\alpha\beta}^\gamma=0$, since the interaction terms vanish and the boundary source terms cannot be coupled to each other:
\begin{equation}\label{e:ZIghabe}
Z_\text{I}^\text{gh,abelian}(c_\ii,c_\oo;T^+)= \EE^{-\frac\II\hbar T^+_\alpha\,(c_\ii^\alpha-c_\oo^\alpha)}.
\end{equation}
Therefore,
\begin{equation}\label{e:ZIabe}
Z_\text{I}^\text{abelian}(q_\ii,c_\ii,p^\oo,c_\oo;T,T^+) = \EE^{-\frac\II\hbar p^\oo_iq_\ii^i}\;
\EE_\star^{-\frac\II\hbar T^\alpha H_\alpha}(p^\oo,q_\ii)\;\EE^{-\frac\II\hbar T^+_\alpha\,(c_\ii^\alpha-c_\oo^\alpha)}.
\end{equation}
It is very easy in this case to check that the mQME holds. In fact, we have
\[
\Omega_\ii Z_\text{I}^\text{abelian} = c^\alpha_\ii Z_\text{I}^\text{abelian}\star H_\alpha,\quad
\Omega_\oo Z_\text{I}^\text{abelian} = c^\alpha_\oo H_\alpha\star Z_\text{I}^\text{abelian}.
\]
Therefore, since \eqref{e:HHstar} implies in this case that the $H$s $\star$-commute,
\[
\Omega Z_\text{I}^\text{abelian} = (c^\alpha_\ii-c^\alpha_\oo) H_\alpha\star Z_\text{I}^\text{abelian}.
\]
Since the right hand side is also clearly equal to $-\hbar^2\frac{\dd^2}{\dd T^\alpha\dd T^+_\alpha}Z_\text{I}^\text{abelian}$, the mQME is satisfied.

In the nonabelian case, the ghost part $Z_\text{I}^\text{gh}$ can be computed in terms of Feynman diagrams. In Figure~\ref{fig:gprop} we introduce a notation for the ghost propagator.
\begin{figure}[h!]
\begin{tikzpicture}
\node[label=below:{$\Hat e^+$}] (e) at (-1,0) {};
\node[label=below:{$\Hat c$}] (c) at (1,0) {}
edge[draw,->,dashed] (e);
\end{tikzpicture}
\caption{Ghost propagator}\label{fig:gprop}
\end{figure}
In $\Tilde S_\text{BV}^{f}\Big|_\LL$,
we have three univalent vertices (one at each endpoint, each containing $\Hat e^+$, and one in the bulk, containing $\Hat c$) and one bulk $\Hat e^+\Hat c$-bivalent vertex;
see Figure~\ref{fig:gvert}. 
\begin{figure}[h!]
\begin{tikzpicture}

\draw[thick] (-5,-.5) -- (-5,.5);
\draw[thick] (-4,-.5) -- (-4,.5);
\node  at (-3,0) {,};

\draw[thick] (-2,-.5) -- (-2,.5);
\draw[thick] (-1,-.5) -- (-1,.5);
\node  at (-0,0) {,};

\draw[thick] (.5,-.5) -- (.5,.5);
\draw[thick] (2.5,-.5) -- (2.5,.5);
\node  at (3,0) {,};

\draw[thick] (3.5,-.5) -- (3.5,.5);
\draw[thick] (5.5,-.5) -- (5.5,.5);

\node[bv,label=left:{$c_\ii$}] (cin) at (-5,0) {};
\node (cini) at (-4.3,0) {};
\draw[dashed,->]  (cini) -- (cin);

\node[bv,label=right:{$c_\oo$}] (cout) at (-1,0) {};
\node (couti) at (-1.7,0) {};
\draw[dashed,->]  (couti) -- (cout);

\node[iv,label=below:{$T^+T$}] (TT) at (2,0) {};
\node (TTT) at (1,0) {};
\draw[dashed,->]  (TT) -- (TTT);

\node (a) at (5.2,0) {};
\node[iv,label=below:{$T$}] (b) at (4.5,0) {};
\node (c) at (3.8,0) {};
\draw[dashed,->]  (a) -- (b);
\draw[dashed,->]  (b) -- (c);

\end{tikzpicture}
\caption{Ghost vertices}\label{fig:gvert}
\end{figure}
If a univalent $\Hat c$-vertex is present, we can only build a linear tree, of any length, connecting it to one of the endpoint $\Hat e^+$-vertices via the bivalent vertices: these trees will give rise to the $F$-terms below (in which we include also the ``abelian'' term $T^+_\alpha\,(c_\ii^\alpha-c_\oo^\alpha)$); see Figure~\ref{fig:Fdiag} (where the grey vertices are integrated along the interval; the dots denote other possible vertices).
\begin{figure}[h!]
\begin{tikzpicture}

\draw[thick] (-6,-.5) -- (-6,.5);
\draw[thick] (-1,-.5) -- (-1,.5);
\node  at (0,0) {,};

\draw[thick] (1,-.5) -- (1,.5);
\draw[thick] (6,-.5) -- (6,.5);

\node[bv,label=left:{$c_\ii$}] (cin) at (-6,0) {};
\node[iv,label=below:{$T$}] (e) at (-5.4,0) {};
\node[iv,label=below:{$T$}] (d) at (-4.8,0) {};
\node[iv,label=below:{$T$}] (c) at (-4,0) {};
\node[iv,label=below:{$T$}] (b) at (-3.4,0) {};
\node[iv,label=below:{$T^+T$}] (a) at (-2.5,0) {};

\draw[dashed,->]  (a) -- (b);
\draw[dashed,->]  (b) -- (c);
\draw[thick,dotted]  (c) -- (d);
\draw[dashed,->]  (d) -- (e);
\draw[dashed,->]  (e) -- (cin);

\node[bv,label=right:{$c_\oo$}] (cout) at (6,0) {};
\node[iv,label=below:{$T$}] (e') at (5.4,0) {};
\node[iv,label=below:{$T$}] (d') at (4.8,0) {};
\node[iv,label=below:{$T$}] (c') at (4,0) {};
\node[iv,label=below:{$T$}] (b') at (3.4,0) {};
\node[iv,label=below:{$T^+T$}] (a') at (2.5,0) {};

\draw[dashed,->]  (a') -- (b');
\draw[dashed,->]  (b') -- (c');
\draw[thick,dotted]  (c') -- (d');
\draw[dashed,->]  (d') -- (e');
\draw[dashed,->]  (e') -- (cout);

\end{tikzpicture}
\caption{$F$-diagrams}\label{fig:Fdiag}
\end{figure}
In addition, we have wheels of any length using only bivalent vertices which produce the $\mathbb{W}$-terms below 
(note that, thanks to the unimodularity assumption of Remark~\ref{r:unimodularity}, there are no tadpoles, which would otherwise have to be regularized); see Figure~\ref{fig:Wdiag} (where we use a two-dimensional representation just for pictorial reasons: the grey vertices are still integrated along the interval).
\begin{figure}[h!]
\begin{tikzpicture}

\draw[thick] (-5,-1.5) -- (-5,1.5);
\draw[thick] (5,-1.5) -- (5,1.5);

\node[iv,label=right:{$T$}] (A) at (2,0) {};
\node[iv,label=above:{$T$}] (B) at (1,1) {};
\node[iv,label=above:{$T$}] (C) at (-1,1) {};
\node[iv,label=left:{$T$}] (D) at (-2,0) {};
\node[iv,label=below:{$T$}] (E) at (-1,-1) {};
\node[iv,label=below:{$T$}] (F) at (1,-1) {};

\draw[dashed,->]  (A) -- (B);
\draw[dashed,->]  (B) -- (C);
\draw[thick,dotted]  (C) -- (D);
\draw[dashed,->]  (D) -- (E);
\draw[dashed,->]  (E) -- (F);
\draw[dashed,->]  (F) -- (A);

\end{tikzpicture}
\caption{$\mathbb{W}$-diagrams}\label{fig:Wdiag}
\end{figure}
To get a more readable expression, we use again the notations of Section~\ref{s:fLiealgebra}, and of the second part of Digression~\ref{d:cgf}, in terms of the Lie algebra $\frg$, its dual $\frg^*$ and their pairing $\langle\ ,\ \rangle$. We get
\begin{equation}
Z_\text{I}(q_\ii,c_\ii,p^\oo,c_\oo;T,T^+) = \EE^{-\frac\II\hbar p^\oo_iq_\ii^i}\;
\EE_\star^{-\frac\II\hbar \langle H, T\rangle}\;\EE^{-\frac\II\hbar \langle T^+,F_-(\ad_T)c_\oo+F_+(\ad_T)c_\ii\rangle}\;\EE^{ \mathbb{W}(T)}.
\end{equation}
The series $F_+,F_-,\mathbb{W}$ can be explicitly computed, although we do not present the details here:
\begin{align*}
F_+(x)&=\frac{x}{1-e^{-x}}=
\sum_{n\geq 0}(-1)^n\frac{B_{n}}{n!}x^n,\\
F_-(x)&=-\frac{x}{e^x-1}=
-\sum_{n\geq 0} \frac{B_n}{n!} x^n,\\
 \mathbb{W}(T)&= \sum_{n\geq 1} \frac{B_n}{n\cdot n!}\, \tr(\ad_T)^n  =
 \tr \log \frac{\sinh \frac{\ad_T}{2}}{\frac{\ad_T}{2}},
\end{align*}
where the $B_n$s are the Bernoulli numbers: $B_0=1$, $B_1=-\frac12$, $B_2=\frac16$, $B_3=0$, $B_4=-\frac{1}{30}$,\dots. The mQME can also be directly verified.

\begin{remark}[Reduction of the residual fields]\label{r:redI}
We may sometimes perform a further BV pushforward on $\YY_1$. One possibility, which leads to a less interesting result as in Case~II, consists in choosing the lagrangian $\LL^\text{triv}=\{T=0\}$. We get in this case
\begin{equation}\label{e:Itriv}
Z_\text{I}^{\LL^\text{triv}}(q_\ii,c_\ii,p^\oo,c_\oo)= \EE^{-\frac\II\hbar p^\oo_iq_\ii^i}\;\delta(c_\oo-c_\ii).
\end{equation}
More interesting choices consist in taking $\LL$ to be the conormal bundle, shifted by $-1$, of a submanifold of the Lie algebra $\g$. For example, we can take the whole $\frg$ as such a submanifold, and therefore $\LL=\{T^+=0\}$, which yields
\begin{equation}\label{e:ZL}
Z_\text{I}^{\LL}(q_\ii,p^\oo)=\int_\g  \EE^{-\frac\II\hbar p^\oo_iq_\ii^i}\;
\EE_\star^{-\frac\II\hbar \langle H, T\rangle}\;d^kT,
\end{equation}
where $k=\dim\g$ and $d^kT$ is the Lebesgue measure (of course under the assumption that the integral converges).
Note that $Z_\text{I}^{\LL}$ does no longer depend on the ghost variables.
\end{remark}

\begin{remark}[Passing to the group]\label{r:passtogroup}
In the nonabelian case, assuming the exponential map to be surjective, 
 it is convenient to make a change of variables, setting $g=\psi(T):=\EE^T$. A first advantage of this transformation is that the semiclassical term of the physical part of
 $Z_\text{I}$ is brought to the form of Section~\ref{s:fLiealgebra}. Since we are in the BV setting, we have to complete this transformation to a symplectomorphism. 
 We do it by taking the cotangent lift: 
 \[
 \Psi\colon \begin{array}[t]{ccc}
 T^*[-1]\g &\to &T^*[-1]G\\ 
 (T,T^+) &\mapsto &\left(\EE^T, 
 (\ddd_e l_{\EE^T})^{*,-1}\frac{\ad_T^*}{1-\EE^{-\ad_T^*}}T^+
 \right),
 \end{array}
 \]
where $e$ denotes the identity element of $G$ and
 $l_g$ the left multiplication by $g$:
$l_g(h)=gh$. The first term of the ghost part now simplifies as follows:
\[
\langle T^+,F_-(\ad_T)c_\oo+F_+(\ad_T)c_\ii\rangle  =
\langle (\ddd_e l_g)^*g^+,c_\ii-\Ad_g^{-1}c_\oo
\rangle.
\]
Finally, we have to observe that, more appropriately for the BV formalism \cite{Khud,Sev}, the partition function should be regarded as a half-density:
\[
\Hat Z_\text{I} = Z_\text{I} \sqrt{d^kTd^kT^+},
\]
with $k=\dim\g$, $d^k T$ the Lebesgue measure, and $d^kT^+$ the standard Berezinian measure.
We have, again under the assumption of unimodularity (see Remark~\ref{r:unimodularity}),\footnote{In the case of a nonunimodular Lie algebra, one may extend the $\Hat c\Hat e^+$ propagator of \eqref{e:Hatce+prop} to the diagonal as
$\langle\Hat c^\alpha(s)\,\Hat e^+_\beta(s) \rangle = -\II\hbar\delta^\alpha_\beta\,\phi(s)$. With this regularization, the tadpoles produce the additional factor $\EE^{-\frac12\tr\,\ad_T}$ in $Z_\text{I}$. On the other hand, without the assumption of unimodularity, one has
$\psi^*\mu_G = \EE^{ \mathbb{W}(T)}\EE^{-\frac12\tr\,\ad_T} d^kT$, so the results discussed in this remark in principle also hold without the assumption of unimodularity.}
that 
\[
\psi^*\mu_G = \det\frac{\sinh \frac{\ad_T}{2}}{\frac{\ad_T}{2}}\ d^kT = \EE^{ \mathbb{W}(T)} d^kT,
\]
where $\mu_G$ is the Haar measure.
Therefore,
\[
\Hat Z_\text{I}(q_\ii,c_\ii,p^\oo,c_\oo;g,g^+) =
\EE^{-\frac\II\hbar p^\oo_iq_\ii^i}\;
\EE_\star^{-\frac\II\hbar \langle H, \log g\rangle}\;\EE^{-\frac\II\hbar \langle (\ddd_e l_g)^*g^+,c_\ii-\Ad_g^{-1}c_\oo
\rangle}\; \mu_G,
\]
where we reinterpret the Haar measure $\mu_G$ on $G$ as a half-density on $T^*[-1]G$. We can now ``improve''
\eqref{e:ZL} by taking $\LL:=\{g^+=0\}=G$---the zero section of $T^*[-1]G$---getting
\begin{equation}\label{e:ZLG}
Z_\text{I}^{\LL}(q_\ii,p^\oo)=\int_G  \EE^{-\frac\II\hbar p^\oo_iq_\ii^i}\;
\EE_\star^{-\frac\II\hbar \langle H, \log g\rangle}\; d\mu_G,
\end{equation}
which converges for $G$ compact. 
The integral in \eqref{e:ZL} usually does not converge, as it corresponds to summing infinitely many copies of the integral over $G$ (this is an instance of what is known as Gribov's ambiguity in the physics literature). The ``improved'' formula \eqref{e:ZLG} corresponds to choosing one single domain in $\g$ diffeomorphic to $G$ (up to a measure zero subset). A similar procedure was developed in \cite{2dYM}.
\end{remark}

We conclude by discussing two important examples.

\begin{example}[Linear case]\label{e:flinearq}
Consider linear constraints as in Example~\ref{e:flinear}. There are then two simplifications. The first is that in the computation of $Z_I^\text{phys}$ we only have linear vertices in $\Hat q$ and $\Hat p$, so the only connected Feynman graphs have one single edge: connecting two boundary points, or one boundary and a bulk point, or two bulk points. See Figure~\ref{fig:gpprop} for our notation for the physical propagator and
Figure~\ref{fig:linear} for the connected Feynman graphs.
\begin{figure}[h!]
\begin{tikzpicture}
\node[label=below:{$\Hat q$}] (e) at (-1,0) {};
\node[label=below:{$\Hat p$}] (c) at (1,0) {}
edge[draw,->] (e);
\end{tikzpicture}
\caption{Physical propagator}\label{fig:gpprop}
\end{figure}
\begin{figure}[h!]
\begin{tikzpicture}

\draw[thick] (-5,-.5) -- (-5,.5);
\draw[thick] (-4,-.5) -- (-4,.5);
\node  at (-3,0) {,};

\draw[thick] (-2,-.5) -- (-2,.5);
\draw[thick] (-1,-.5) -- (-1,.5);
\node  at (-0,0) {,};

\draw[thick] (1,-.5) -- (1,.5);
\draw[thick] (2,-.5) -- (2,.5);
\node  at (3,0) {,};

\draw[thick] (4,-.5) -- (4,.5);
\draw[thick] (5,-.5) -- (5,.5);

\node[bv,label=right:{$p^\oo$}] (pout) at (-4,0) {};
\node[bv,label=left:{$q_\ii$}] (qin) at (-5,0) {};
\draw[->]  (pout) -- (qin);

\node[bv,label=right:{$p^\oo$}] (pout') at (-1,0) {};
\node[iv,label=below:{$T$}] (a) at (-1.7,0) {};
\draw[->]  (pout') -- (a);

\node[bv,label=left:{$q_\ii$}] (qin') at (1,0) {};
\node[iv,label=below:{$T$}] (b) at (1.7,0) {};
\draw[->]  (b) -- (qin');

\node[iv,label=below:{$T$}] (c) at (4.8,0) {};
\node[iv,label=below:{$T$}] (d) at (4.2,0) {};
\draw[->]  (c) -- (d);

\end{tikzpicture}
\caption{Linear case}\label{fig:linear}
\end{figure}
These graphs correspond exactly to the terms in \eqref{e:Sfablin} (this result is of course consistent with Remarks~\ref{r:semi1} and~\ref{r:semi2}.) The second simplification is that the structure functions vanish, so the ghost part is simply the one of equation \eqref{e:ZIghabe}. In conclusion, we get
\begin{equation}\label{e:ZIlin}
\begin{split}
Z_\text{I}(q_\ii,c_\ii,p^\oo,c_\oo;T,T^+) &= \EE^{-\frac\II\hbar\left( 
p^\oo_i q_\ii^i
+T^\alpha(p^\oo_i v_\alpha^i + w_{\alpha,i}q_\ii^i)+
\frac12 T^\alpha T^\beta A_{\alpha\beta}+
T^+_\alpha\,(c_\ii^\alpha-c_\oo^\alpha)\right)}\\
&= \EE^{\frac\II\hbar
\Hat S^f_\text{HJ}(q_\ii,p^\oo,T)}
\EE^{-\frac\II\hbar T^+_\alpha\,(c_\ii^\alpha-c_\oo^\alpha)},
\end{split}
\end{equation}
with $\Hat S^f_\text{HJ}$ as in \eqref{e:Sfablin}. This result is the prototype for abelian Chern--Simons theory, see \cite[Sections 5.3.1 and 6]{CS_cyl}. We may also integrate over the residual fields with the gauge fixing $T^+=0$ as in \eqref{e:ZL}. In general, there will be some linear terms in $T$, corresponding to the kernel of $A$, which produce delta functions. The remaining terms yield a Gaussian integration which produces quadratic terms in the $p,q$ variables. This is the quantum analogue of solving some of the constraints. For example, in the case of Remark~\ref{r:solveT}, we get
\[
Z_\text{I}^{\LL}(q_\ii,p^\oo)=\EE^{\frac\II\hbar\Tilde S^f(q_\ii,p^\oo)}
=\EE^{\frac\II\hbar\left(
\frac12\frac vw (p^\oo)^2 + \frac12\frac wv (q_\ii)^2\right)}.
\]
\end{example}

\begin{example}[Biaffine case]\label{exa:biaffineq}
In the case of constraints that are affine both in the $p$ and in the $q$ variables, see Example~\ref{exa:biaffine},
the physical part of the partition function can still be computed explicitly, as no multiple edges appear. We then end up having only linear trees in the computation of the $\star$-exponential; see Figure~\ref{fig:biaffine}. 
\begin{figure}[h!]
\begin{tikzpicture}

\draw[thick] (-5.5,-.5) -- (-5.5,.5);
\draw[thick] (-3.5,-.5) -- (-3.5,.5);
\node  at (-2.8,0) {,};

\draw[thick] (-2.5,-.5) -- (-2.5,.5);
\draw[thick] (-.5,-.5) -- (-.5,.5);
\node  at (.2,0) {,};

\draw[thick] (1,-.5) -- (1,.5);
\draw[thick] (3,-.5) -- (3,.5);
\node  at (3.25,0) {,};

\draw[thick] (3.5,-.5) -- (3.5,.5);
\draw[thick] (5.5,-.5) -- (5.5,.5);

\node[bv,label=right:{$p^\oo$}] (pout) at (-3.5,0) {};
\node[iv,label=below:{$T$}] (a1) at (-4,0) {};
\node[iv,label=below:{$T$}] (b1) at (-5,0) {};
\node[bv,label=left:{$q_\ii$}] (qin) at (-5.5,0) {};
\draw[->]  (pout) -- (a1);
\draw[thick,dotted] (a1) -- (b1);
\draw[->] (b1) -- (qin);

\node[bv,label=right:{$p^\oo$}] (pout') at (-.5,0) {};
\node[iv,label=below:{$T$}] (a2) at (-.9,0) {};
\node[iv,label=below:{$T$}] (b2) at (-1.6,0) {};
\node[iv,label=below:{$T$}] (c2) at (-2,0) {};
\draw[->]  (pout') -- (a2);
\draw[thick,dotted] (a2) -- (b2);
\draw[->] (b2) -- (c2);

\node[bv,label=left:{$q_\ii$}] (qin') at (1,0) {};
\node[iv,label=below:{$T$}] (c3) at (1.4,0) {};
\node[iv,label=below:{$T$}] (b3) at (2.1,0) {};
\node[iv,label=below:{$T$}] (a3) at (2.5,0) {};
\draw[->]  (a3) -- (b3);
\draw[thick,dotted] (b3) -- (c3);
\draw[->] (c3) -- (qin');

\node[iv,label=below:{$T$}] (d4) at (3.7,0) {};
\node[iv,label=below:{$T$}] (c4) at (4.1,0) {};
\node[iv,label=below:{$T$}] (b4) at (4.9,0) {};
\node[iv,label=below:{$T$}] (a4) at (5.3,0) {};
\draw[->]  (a4) -- (b4);
\draw[thick,dotted] (b4) -- (c4);
\draw[->] (c4) -- (d4);

\end{tikzpicture}
\caption{Biaffine case}\label{fig:biaffine}
\end{figure}
Therefore, we get \eqref{e:HJgeneralLie} 
with HJ action \eqref{e:Sbiaffine}, putting $g=\EE^T$,
and with no $O(\hbar)$ corrections. The partition half-density is then more naturally expressed in group variables as in Remark~\ref{r:passtogroup}. We get
\begin{multline}\label{e:ZIbiaff}
\Hat Z_\text{I}(q_\ii,c_\ii,p^\oo,c_\oo;g,g^+) =
\EE^{\frac\II\hbar
\Hat S^f_\text{HJ}(q_\ii,p^\oo,g)}
\;\EE^{-\frac\II\hbar \langle (\ddd_e l_g)^*g^+,c_\ii-\Ad_g^{-1}c_\oo
\rangle}\, \mu_G\\
=
\;\EE^{-\frac\II\hbar \left(
p^\oo R_{g}^{-1}q_\ii
+p^\oo R_{g}^{-1}\Phi(g)
+\Psi(g)q_\ii+\text{WZW}(g)
+\langle (\ddd_e l_g)^*g^+,c_\ii-\Ad_g^{-1}c_\oo
\rangle\right)}\, \mu_G.
\end{multline}
In the particular case of the adjoint representation, using the results of Example~\ref{exa:adjoint}, we may also write
\begin{multline*}
\Hat Z_\text{I}(q_\ii,c_\ii,\bar q^\oo,c_\oo;g,g^+) =\\=
\EE^{-\frac\II\hbar \left(
\langle \bar q^\oo,{g}^{-1}q_\ii g\rangle
+\langle \bar q^\oo,{g}^{-1}v(g)\rangle
+\langle \bar v(g)g^{-1},q_\ii\rangle
+\text{WZW}(g)
+\langle (\ddd_e l_g)^*g^+,c_\ii-\Ad_g^{-1}c_\oo
\rangle\right)}\, \mu_G
\end{multline*}
with the WZW term as in \eqref{e:WZWad}. This result is the prototype for nonabelian Chern--Simons theory, see \cite[Sections~5.3 and 5.4]{CS_cyl}. Also in this case we may integrate over the residual fields, now with the gauge fixing $g^+=0$, as in \eqref{e:ZLG}. We get
\[
Z_\text{I}^{\LL}(q_\ii,\bar q^\oo)=\int_G
\EE^{-\frac\II\hbar \left(
\langle \bar q^\oo,{g}^{-1}q_\ii g\rangle
+\langle \bar q^\oo,{g}^{-1}v(g)\rangle
+\langle \bar v(g)g^{-1},q_\ii\rangle
+\text{WZW}(g)
\rangle\right)}\, d\mu_G.
\]
\end{example}

\subsection{Gluing}\label{s:gluing}
Partition functions may be composed as described in Section~\ref{s:compo}. Here we are interested in composing
a partition function from $(q_\ii,c_\ii)$ to $(p^1,c_1)$ with 
a partition function from $(p^1,c_1)$ to $(q_2,c_2)$ with 
a partition function from $(q_2,c_2)$ to $(p^\oo,b^\oo)$, where $(p^1,c_1)$ and $(q_2,c_2)$ are some intermediate boundary fields. The first partition function belongs to Case~I and the last to Case~II. The intermediate one will be discussed below. The result will be a partition function from $(q_\ii,c_\ii)$ to $(p^\oo,b^\oo)$, so of type II. See 
Figure~\ref{fig:comp1}.

\begin{figure}[h!]
\begin{tikzpicture}
\node[bdry,label=below:{$(q_\ii,c_\ii)$}] (w1) at (-4,0) {};
\node[bdry,label=below:{$(p^\oo,b^\oo)$}] (w2) at (-2,0) {}
edge[draw] node[below] {$Z^\text{new}_\text{II}$} (w1);
\node[coordinate,label={$=$}] at (-1,-0.2) {};
\node[bdry,label=below:{$(q_\ii,c_\ii)$}] (v1) at (0,0) {};
\node[b2,label=below:{$(p^1,c_1)$}] (v2) at (2,0) {}
edge[draw] node[below] {$Z_\text{I}$} (v1);
\node[b2,label=below:{$(q_2,c_2)$}] (v3) at (4,0) {}
edge[draw,black] node[below] {$Z^{\LL^\text{triv}}_\text{I'}$} (v2);
\node[bdry,label=below:{$(p^\oo,b^\oo)$}] (v4) at (6,0) {}
edge[draw,black] node[below] {$Z_\text{II}$} (v3);
\end{tikzpicture}
\caption{Composition of the partition function $Z_\text{II}^\text{new}$}\label{fig:comp1}
\end{figure}

The intermediate partition function can be computed along the lines of Case~I, with the only difference that we now fix $p$ at the initial endpoint and $q$ at the final one (instead of the other way around). We are actually interested in the version where we integrate out the residual fields, as we did in Remark~\ref{r:redI} for Case~I. The analogue of \eqref{e:Itriv} is now
\begin{equation}\label{e:I'triv}
Z_\text{I'}^{\LL^\text{triv}}(p^\ii,c_\ii,q_\oo,c_\oo)= \EE^{\frac\II\hbar p^\ii_iq_\oo^i}\;\delta(c_\oo-c_\ii),
\end{equation}
and we refer to this as Case~I', in the trivial gauge fixing.

We now proceed to the computation of the ``new'' version of Case~II as
\begin{multline*}
Z_\text{II}^\text{new}(q_\ii,c_\ii,p^\oo,b^\oo;T,T^+):=\int
Z_\text{I}(q_\ii,c_\ii,p^1,c_1;T,T^+)\\
Z_\text{I'}^{\LL^\text{triv}}(p^1,c_1,q_2,c_2)\,
Z_\text{II}(q_2,c_2,p^\oo,b^\oo)\;
\frac{d^np^1\,d^nq_2}{(2\pi\hbar)^n}\ {d^kc_1\,d^kc_2}(-\II\hbar)^k,
\end{multline*}
where $n$ is the dimension of the target configuration space, $k$ is the dimension of the Lie algebra, and we have conveniently normalized the measure. A simple computation yields
\begin{multline*}
Z_\text{II}^\text{new}(q_\ii,c_\ii,p^\oo,b^\oo;T,T^+) =
\EE^{-\frac\II\hbar p^\oo_iq_\ii^i}\;
\EE_\star^{-\frac\II\hbar \langle H, T\rangle}\\
\EE^{\frac\II\hbar \langle T^+,F_+(\ad_T)c_\ii\rangle}\;\EE^{ \mathbb{W}(T)}\;
\delta(b^\oo+F_-(\ad_T)^*T^+).
\end{multline*}

If we now reduce it using the ``trivial'' gauge-fixing lagrangian $\{T=0\}$, we get exactly $Z_\text{II}$ as in
\eqref{e:II}. 

If we instead choose the gauge-fixing lagrangian $\{T^+=0\}$, possibly passing to group variables, typically we get a less trivial, nonequivalent result. 

This shows that the results of Section~\ref{s:IIq} actually correspond to choosing a somewhat trivial gauge fixing, but that more interesting results can be obtained also in this case. We saw in Digression~\ref{d:cgf} that just deforming the gauge fixing, picking $e$ different from zero, does not change the result (in $\Delta^\Omega$-cohomology). To get interesting results, we should actually split the space $\YY$ as $\YY_1\oplus\YY_2$, make (the analogue of) gauge fixing \eqref{e:simpleL}  and then choose a nontrivial gauge fixing on $\YY_1$. To get a glimpse of (the outcome of) this procedure, take the formula that defined $Z_\text{II}^\text{new}$ but now without integrating over $c_1$. We get
\begin{multline*}
\Tilde Z_\text{II}^\text{new}(q_\ii,c_\ii,p^\oo,b^\oo;T,T^+,c_1) =
\EE^{-\frac\II\hbar p^\oo_iq_\ii^i}\;
\EE_\star^{-\frac\II\hbar \langle H, T\rangle}\\
\EE^{\frac\II\hbar \langle T^+,F_-(\ad_T)c_1+F_+(\ad_T)c_\ii\rangle}\;\EE^{ \mathbb{W}(T)}\;
\EE^{\frac\II\hbar \langle b^\oo,c_1\rangle}.
\end{multline*}
We should view this as a BV pushforward, where we have set to zero the momenta $c_1^+$ of $c_1$.
The residual fields $T,T^+,c_1,c_1^+$ arise as part of the fields $e,e^+,c,c^+$.

\begin{remark}
The above discussion shows that leaving some residual fields is important not to end up with a trivial answer. It moreover teaches us that, in gluing several intervals together, it is enough to retain residual fields in the partition function associated to one interval, whereas on each of the others we can safely consider a fully reduced, ``trivial'' partition function. Similar considerations were presented in \cite[Remark~3.13]{2dYM}.
\end{remark}

\subsection{Quantum mechanics}\label{s:QM}
The system described in Section~\ref{r:SwithH} is a particular case of what we have considered above. 
The only, notational, difference is that we have an additional position variable $\ttt$ and its momentum $E$; moreover, they appear in the action with an extra minus sign. Since there is a single hamiltonian, $H-E$, we are in the abelian case of \eqref{e:ZIabe}.
This taken into account, we get
\begin{multline*}
Z_\text{I}^\text{QM}(q_\ii,\ttt_\ii,c_\ii,p^\oo,E^\oo,c_\oo;T,T^+) =\\= \EE^{\frac\II\hbar (E^\oo\ttt_\ii-p^\oo_iq_\ii^i)}\,
\EE_\star^{-\frac\II\hbar T (H-E)}\!(p^\oo,E^\oo,q_\ii,\ttt_\ii)\;
\EE^{\frac\II\hbar T^+\,(c_\oo-c_\ii)}.
\end{multline*}
Since $E$ commutes with the $(p,q)$ variables and $H$ does not depend on $\ttt$, we actually have
\[
\EE_\star^{-\frac\II\hbar T (H-E)}(p^\oo,E^\oo,q_\ii,\ttt_\ii)=
\EE^{\frac\II\hbar TE^\oo}\EE_\star^{-\frac\II\hbar TH}(p^\oo,q_\ii),
\]
so
\[
Z_\text{I}^\text{QM}(q_\ii,\ttt_\ii,c_\ii,p^\oo,E^\oo,c_\oo;T,T^+) = \EE^{-\frac\II\hbar p^\oo_iq_\ii^i}\,
\EE^{\frac\II\hbar E^\oo(\ttt_\ii+T)}\,
\EE_\star^{-\frac\II\hbar TH}\!(p^\oo,q_\ii)\;\EE^{\frac\II\hbar T^+\,(c_\oo-c_\ii)},
\]
which is the quantum version of \eqref{e:HJCMtE}. 

We are also interested in the partition function from $(q_\ii,\ttt_\ii,c_\ii)$ to $(q_\oo,\ttt_\oo,c_\oo)$. We can obtain it, following the strategy of Section~\ref{s:gluing}, by composing
a partition function from $(q_\ii,\ttt_\ii,c_\ii)$ to $(p^1,E^1,c_1)$  with 
a partition function from $(p^1,E^1,c_1)$ to  $(q_\oo,\ttt_\oo,c_\oo)$. The first is Case~I, $Z_\text{I}^\text{QM}$, with a relabeling of the endpoint variables. The second is of type I', which again we compute with trivial gauge fixing as in \eqref{e:I'triv}:
\begin{equation}\label{e:I'trivQM}
Z_\text{I'}^{\text{QM},\LL^\text{triv}}(p^\ii,E^\ii,c_\ii,q_\oo,\ttt_\oo,c_\oo)= \EE^{\frac\II\hbar p^\ii_iq_\oo^i}
\,\EE^{-\frac\II\hbar E^\ii\ttt_\oo}
\;\delta(c_\oo-c_\ii).
\end{equation}
We then get
\begin{multline*}
\Tilde Z^\text{QM}(q_\ii,\ttt_\ii,c_\ii,q_\oo,\ttt_\oo,c_\oo;T,T^+):=\int
Z_\text{I}^\text{QM}(q_\ii,\ttt_\ii,c_\ii,p^1,E^1,c_1;T,T^+)\\
Z_\text{I'}^{\text{QM},\LL^\text{triv}}(p^1,E^1,c_1,q_\oo,\ttt_\oo,c_\oo)\;
\frac{d^np^1}{(2\pi\hbar)^n}\,\frac{dE^1}{2\pi\hbar}\,dc_1,
\end{multline*}
that is,
\begin{multline*}
\Tilde Z^\text{QM}(q_\ii,\ttt_\ii,c_\ii,q_\oo,\ttt_\oo,c_\oo;T,T^+)=\int
\EE^{\frac\II\hbar p^1_i(q_\oo^i-q_\ii^i)}\,
\EE_\star^{-\frac\II\hbar TH}\!(p^1,q_\ii)\\
\EE^{\frac\II\hbar T^+\,(c_\oo-c_\ii)}\,
\delta(T+\ttt_\ii-\ttt_\oo)\;
\frac{d^np^1}{(2\pi\hbar)^n}.
\end{multline*}
In this case, using the gauge-fixing lagrangian $\LL=\{T^+=0\}$,
\[
Z^\text{QM}(q_\ii,\ttt_\ii,q_\oo,\ttt_\oo):= \int \Tilde Z^\text{QM}(q_\ii,\ttt_\ii,c_\ii,q_\oo,\ttt_\oo,c_\oo;T,0)\;dT,
\]
 yields the interesting result
 \[
 Z^\text{QM}(q_\ii,\ttt_\ii,q_\oo,\ttt_\oo)=\int
 \EE^{\frac\II\hbar p^1_i(q_\oo^i-q_\ii^i)}\,
\EE_\star^{-\frac\II\hbar (\ttt_\oo-\ttt_\ii)H}\!(p^1,q_\ii)\;
\frac{d^np^1}{(2\pi\hbar)^n},
 \]
which is the quantum version of \eqref{e:HJCMtt}. To see this more clearly, observe that we have the relation
\[
\langle p|\EE^{-\frac\II\hbar T\Hat H}|q\rangle = \EE^{-\frac\II\hbar p_iq^i}
\EE_\star^{-\frac\II\hbar TH}\!(p,q)
\]
between the operator formalism and the $\star$-product formalism. Therefore,
\[
 Z^\text{QM}(q_\ii,\ttt_\ii,q_\oo,\ttt_\oo)=\int
 \EE^{\frac\II\hbar p^1_iq_\oo^i}\,
 \langle p^1|\EE^{-\frac\II\hbar (\ttt_\oo-\ttt_\ii)\Hat H}|q_\ii\rangle\;
 \frac{d^np^1}{(2\pi\hbar)^n}=
 \langle q_\oo|\EE^{-\frac\II\hbar (\ttt_\oo-\ttt_\ii)\Hat H}|q_\ii\rangle.
 \]
Therefore, we eventually get the standard quantum mechanics evolution with hamiltonian $\Hat H$ and time lapse
$\ttt_\oo-\ttt_\ii$. In our picture, however, the original theory was parametrization invariant and $(\ttt_\oo,\ttt_\ii)$
are endpoint variables (and not fixed time endpoints).

\subsection{The quantum relativistic particle}\label{s:QRP}
We can adapt the discussion of Section~\ref{s:QM} to the case of the relativistic particle introduced in Section~\ref{s:RP}. Again we have an additional position variable $\ttt$ and its momentum $E$, and we
are in the abelian case, now with hamiltonian $\frac12(p^2+m^2-E^2)$. Therefore, we have, see \eqref{e:ZIabe},
\begin{multline*}
Z_\text{I}^\text{QRP}(q_\ii,\ttt_\ii,c_\ii,p^\oo,E^\oo,c_\oo;T,T^+) =\\= \EE^{\frac\II\hbar (E^\oo\ttt_\ii-p^\oo_iq_\ii^i)}\,
\EE_\star^{-\frac\II\hbar \frac T2 (p^2+m^2-E^2)}\!(p^\oo,E^\oo,q_\ii,\ttt_\ii)\;
\EE^{\frac\II\hbar T^+\,(c_\oo-c_\ii)}.
\end{multline*}
The simplification now is that the hamiltonian only depends on momentum variables, so its $\star$-exponential is just the usual exponential, and we get
\[
Z_\text{I}^\text{QRP}(q_\ii,\ttt_\ii,c_\ii,p^\oo,E^\oo,c_\oo;T,T^+) = \EE^{\frac\II\hbar (E^\oo\ttt_\ii-p^\oo_iq_\ii^i)}\,
\EE^{-\frac\II\hbar \frac T2 ((p^\oo)^2+m^2-(E^\oo)^2)}\;
\EE^{\frac\II\hbar T^+\,(c_\oo-c_\ii)}.
\]
This shows that the physical part of the partition function, in this polarization, is exactly the exponential of $\II/\hbar$ times the HJ action \eqref{e:RPpE} with no quantum corrections.

We can obtain  the partition function from $(q_\ii,\ttt_\ii,c_\ii)$ to $(q_\oo,\ttt_\oo,c_\oo)$ exactly as in Section~\ref{s:QM}:
\begin{multline*}
\Tilde Z^\text{QRP}(q_\ii,\ttt_\ii,c_\ii,q_\oo,\ttt_\oo,c_\oo;T,T^+):=\int
Z_\text{I}^\text{QRP}(q_\ii,\ttt_\ii,c_\ii,p^1,E^1,c_1;T,T^+)\\
Z_\text{I'}^{\text{QRP},\LL^\text{triv}}(p^1,E^1,c_1,q_\oo,\ttt_\oo,c_\oo)\;
\frac{d^np^1}{(2\pi\hbar)^n}\,\frac{dE^1}{2\pi\hbar}\,dc_1,
\end{multline*}
with $Z_\text{I'}^{\text{QRP},\LL^\text{triv}}=Z_\text{I'}^{\text{QM},\LL^\text{triv}}$, see \eqref{e:I'trivQM}. We get
\[
\Tilde Z^\text{QRP}(q_\ii,\ttt_\ii,c_\ii,q_\oo,\ttt_\oo,c_\oo;T,T^+)=
\frac{\EE^{\II\frac\pi4(1-n)}}{(2\pi\hbar T)^{\frac{n+1}2}}\,
\EE^{\frac\II\hbar\left(\frac{(\Delta q)^2-(\Delta\ttt)^2}{2T}-\frac12 m^2T\right)}\;
\EE^{\frac\II\hbar T^+\,(c_\oo-c_\ii)},
\]
with $\Delta q=q_\oo-q_\ii$ and $\Delta\ttt=t_\oo-t_\ii$.
Again we have that, apart from the prefactor $T^{-\frac{n+1}2}$, the physical part of the partition function, in this polarization, is the exponential of $\II/\hbar$ times the HJ action \eqref{e:RPqtT}.

Finally, we may integrate over the residual fields using the gauge-fixing lagrangian $\LL=\{T^+=0\}$. This yields
\[
Z^\text{QRP}(q_\ii,\ttt_\ii,q_\oo,\ttt_\oo)=\int_{-\infty}^{\infty}
\frac{\EE^{\II\frac\pi4(1-n)}}{(2\pi\hbar T)^{\frac{n+1}2}}\,
\EE^{\frac\II\hbar\left(\frac{(\Delta q)^2-(\Delta\ttt)^2}{2T}-\frac12 m^2T\right)}\;dT.
\]
To make the integral well-defined, we actually have to deform the integration contour, the real line, to avoid the singularity at $T=0$: namely, we replace the interval $(-\epsilon,\epsilon)$ with a half circle of radius $\epsilon$ centered at zero in the upper or in the lower half plane, and take the limit for $\epsilon\to0$ after integrating.
 We distinguish two cases:
\begin{enumerate}
\item In the timelike case $(\Delta\ttt)^2>(\Delta q)^2$, we have to pick the half circle in the upper half plane to tame the singularity of $\frac1T$ in the exponent. The semiclassical asymptotics of $Z^\text{QRP}$ can be computed by the saddle-point approximation around the two critical points of the exponent. Each expansion gives semiclassically the exponential of $\II/\hbar$ times one of the two HJ actions in  \eqref{e:RPqt}, but now there are also quantum corrections.
\item In the spacelike case $(\Delta q)^2>(\Delta\ttt)^2$, we have on the other hand to pick the half circle in the lower half plane to tame the singularity of $\frac1T$ in the exponent. In this case, we get $Z^\text{QRP}=0$, since we can close the contour with a half circle at infinity in the lower half plane, on which the integrand is holomorphic. (Note that we cannot use this argument in the timelike case, for such a contour  would bound $T=0$. On the other hand, we cannot pick a half circle at infinity in the upper half plane because of the $m^2T$ term in the exponent.)
\end{enumerate}
Also note that $Z^\text{QRP}$ satisfies the mQME, which, in this case, is equivalent to saying that it satisfies the Klein--Gordon equation with mass square $m^2$ both with respect to $(q_\ii,\ttt_\ii)$ and with respect to $(q_\oo,\ttt_\oo)$.



\section{BV-BFV quantization with nonlinear polarizations}\label{s:nonlinpols}
In this section we discuss quantization with a nonlinear change of polarization at the final endpoint, still with finite-dimensional target. As in Remark~\ref{r:changing}, we assume we have a generating function 
$f(q,Q)$ such that
\[
p_i\ddd q^i=P_i\ddd Q^i + \ddd f.
\]
We then assume that the coordinates in $\BB_\ii$ are $(q_\ii,c_\ii)$ and the coordinates in $\BB_\oo$ are $(Q^\oo,c_\oo)$. This is then a generalization of Case~I, according to the terminology at the beginning of Section~\ref{s:linpols}. We will refer to it as INL (NL for nonlinear). Our goal is to compute the corresponding partition function $Z_\text{INL}$. Even though this can be computed directly generalizing what we did in Section~\ref{s:Iq}, we will present a simpler computation that follows the ideas of Section~\ref{s:gluing}.

We start introducing some notation. If $H_\alpha(p,q)$ is one of the constraints (or, more generally, an arbitrary function of the $p,q$ variables), we write
\begin{equation}\label{e:Htilde}
\Tilde H_\alpha(P,Q):=H_\alpha(p(P,Q),q(P,Q)),
\end{equation}
where on the right hand side we used the symplectomorphism induced by the generating function $f$.
It turns out, as briefly announced in Section~\ref{s:IIdiscosplit} 
(and proved in Appendix~\ref{s:mQME})
that the construction of a boundary operator $\Omega$ so that the partition function satisfies the mQME is possible if we make the following assumption.
\begin{assumption}\label{a:}
We assume 
\begin{equation}\label{e:quantumHtilderel-}
H_\alpha\left(\II\hbar\frac\dd{\dd q},q\right)\EE^{-\frac\II\hbar f}
=
\Tilde H_\alpha\left(-\II\hbar\frac\dd{\dd Q},Q\right)\EE^{-\frac\II\hbar f},\quad\forall\alpha.
\end{equation}
\end{assumption}
\begin{remark}
The assumption is automatically satisfied if all the hamiltonians $H_\alpha$ are linear in the $p$ variables and
 all the hamiltonians $\Tilde H_\alpha$ are linear in the $P$ variables. Note that these are precisely the quantizability conditions in geometric quantization.
 This case is relevant for 7D Chern--Simons theory as in \cite{GS}.
\end{remark}
That this is a sufficient condition follows from a delicate analysis of the Feynman diagrams near the boundary, generalizing \cite[Section 4.2]{CMR15}. We will perform this analysis in Appendix~\ref{s:mQME}.  As for this section we will simply check that, under this assumption, the partition functions we construct indeed satisfy the mQME with $\Omega$ constructed as in Section~\ref{s:IIdiscosplit}. 


A minimalistic reading of this section, skipping the intricacies of Appendix~\ref{s:mQME}, 
can of course just be that we produce a solution of the mQME, which is enough for the applications. We stress here that Assumption~\ref{a:} is satisfied by the important example of Section~\ref{s:7d}. Therefore, what we dicuss here is a toy model for the quantization of \cite{GS}, to which we will return in \cite{CS_cyl}.

\subsection{Boundary structure}\label{s:bnlin}
In the present Case~INL, we use $f$ to change the potential $\alpha_\text{BFV}$ to
\[
\alpha_\text{BFV}^{f} = \alpha_\text{BFV} + \ddd f(q_\oo,Q_\oo) 
=  p_i^\ii\ddd q^i_\ii +  b_\alpha^\ii\ddd c^\alpha_\ii
-P^\oo_i\ddd Q_\oo^i
 - b_\alpha^\oo\ddd c^\alpha_\oo
.
\]
Following the recipe \eqref{e:Omega}, we get
\begin{align*}
\Omega_\ii &= c^\alpha_\ii H_\alpha\left(\II\hbar\frac\dd{\dd q_\ii},q_\ii\right) - \frac{\II\hbar}2 f_{\alpha\beta}^\gamma\, c^\alpha_\ii c^\beta_\ii \frac\dd{\dd c^\gamma_\ii},\\
\Omega_\oo &= c^\alpha_\oo \Tilde H_\alpha\left(-\II\hbar\frac\dd{\dd Q_\oo},Q_\oo\right) + \frac{\II\hbar}2 f_{\alpha\beta}^\gamma\,  c^\alpha_\oo c^\beta_\oo\frac\dd{\dd c^\gamma_\oo}.
\end{align*}
The first consequence of Assumption~\ref{a:} is that $\Omega_\ii$ and $\Omega_\oo$ square to zero, so
$\Omega^2=0$. In fact, since the hamiltonians are linear in the momentum variables, their $\star$-commutators, in the induced $\star$-products, are the same as $\II\hbar$ times their Poisson brackets, so \eqref{e:HHstar} is automatically satisfied (for the $H$s and for the $\Tilde H$s).

\subsection{Gluing}\label{s:gluingINL}
As announced, we will compute $Z_\text{INL}$ via a gluing procedure, similarly to what we did in Section~\ref{s:gluing}. Namely, we will compose a partition function from $(q_\ii,c_\ii)$ to $(p^1,c_1)$ with 
a partition function from $(p^1,c_1)$ to $(q_2,b^2)$ with 
a partition function from $(q_2,b^2)$ to $(Q_\oo,c_\oo)$, where $(p^1,c_1)$ and $(q_2,b^2)$ are some intermediate boundary fields. See Figure~\ref{fig:comp2}.
 The first partition function belongs to Case~I. The second, which we will denote as Case~II', can be computed along the lines of Case~II with the only difference that we now fix $p$ at the initial endpoint and $q$ at the final one (instead of the other way around). The analogue of \eqref{e:II} is
\[
Z_\text{II'}(p^\ii,c_\ii,q_\oo,b^\oo)=\EE^{\frac\II\hbar(p^\ii_iq_\oo^i+b^\oo_\alpha c_\ii^\alpha)}.
\]
The third partition function belongs to the analogue of Case~III with $Q_\oo$ instead of $p^\oo$. We will 
call it Case~IIINL and compute it in Section~\ref{s:IIINL}. Equipped with all this, we have
\begin{multline}\label{e:INL}
Z_\text{INL}(q_\ii,c_\ii,Q_\oo,c_\oo;T,T^+)=\int
Z_\text{I}(q_\ii,c_\ii,p^1,c_1;T,T^+)\\
Z_\text{II'}(p^1,c_1,q_2,b^2)\,
Z_\text{IIINL}(q_2,b^2,Q_\oo,c_\oo)\; \frac{d^np^1\,d^nq_2}{(2\pi\hbar)^n}\ {d^kc_1\,d^kb^2}(-\II\hbar)^k,
\end{multline}
where again $n$ is the dimension of the target configuration space, $k$ is the dimension of the Lie algebra, and we have conveniently normalized the measure. 

\begin{figure}[h!]
\begin{tikzpicture}
\node[bdry,label=below:{$(q_\ii,c_\ii)$}] (w1) at (-4.5,0) {};
\node[bdry,label=below:{$(Q^\oo,c^\oo)$}] (w2) at (-2,0) {}
edge[draw] node[below] {$Z_\text{INL}$} (w1);
\node[coordinate,label={$=$}] at (-1,-0.2) {};
\node[bdry,label=below:{$(q_\ii,c_\ii)$}] (v1) at (0,0) {};
\node[b2,label=below:{$(p^1,c_1)$}] (v2) at (2,0) {}
edge[draw] node[below] {$Z_\text{I}$} (v1);
\node[b2,label=below:{$(q_2,b^2)$}] (v3) at (4,0) {}
edge[draw,black] node[below] {$Z_\text{II'}$} (v2);
\node[bdry,label=below:{$(Q_\oo,c_\oo)$}] (v4) at (7,0) {}
edge[draw,black] node[below] {$Z_\text{IIINL}$} (v3);
\end{tikzpicture}
\caption{Composition of the partition function $Z_\text{INL}$}\label{fig:comp2}
\end{figure}
As a composition of partition functions that satisfy the mQME, $Z_\text{INL}$ satisfies it as well.

\subsection{The Case~IIINL}\label{s:IIINL}
In this case, we have to use the function $\Check f(b^\ii,c_\ii,p^\oo,q_\oo)=f(q_\oo,Q(p^\oo,q_\oo))-c_\ii^\alpha b^\ii_\alpha$ to modify the potential
\[
\alpha_\text{BFV}^{\Check f} = \alpha_\text{BFV} + \ddd \Check f(q_\oo,Q_\oo) 
=  p_i^\ii\ddd q^i_\ii +  c^\alpha_\ii\ddd b_\alpha^\ii
-P^\oo_i\ddd Q_\oo^i
 - b_\alpha^\oo\ddd c^\alpha_\oo
.
\]
Following again the recipe \eqref{e:Omega}, we get
\begin{subequations}
\begin{align}
\Omega_\ii &=
\II\hbar H_\alpha\left(\II\hbar\frac\dd{\dd q_\ii},q_\ii\right) \frac\dd{\dd b_\alpha^\ii}
+ \frac{\hbar^2}2 f_{\alpha\beta}^\gamma\, b_\gamma^\ii\frac{\dd^2}{\dd b_\alpha^\ii\dd b_\beta^\ii},\\
\Omega_\oo &= c^\alpha_\oo \Tilde H_\alpha\left(-\II\hbar\frac\dd{\dd Q_\oo},Q_\oo\right) + \frac{\II\hbar}2 f_{\alpha\beta}^\gamma\,  c^\alpha_\oo c^\beta_\oo\frac\dd{\dd c^\gamma_\oo},\label{sube:Omegaoutnlin}
\end{align}
\end{subequations}
and we clearly have $\Omega^2=0$ (again by Assumption~\ref{a:}).
The BV action for the given boundary polarizations is then 
\[
S_\text{BV}^{\Check f} = S_\text{BV}- f(q(t_\oo),Q(p(t_\oo),q(t_\oo))) 
-c^\alpha(t_\ii) e^+_\alpha(t_\ii).
\]

Next we choose the discontinuous splitting of the fields $e^+$, $c$ as in \eqref{e:e+cIII} with boundary conditions \eqref{e:e+cIIIboundary}. For the $p$, $q$ fields we choose instead
\[
q(t)=\begin{cases}
q_\ii & t=t_\ii\\
\Hat q(t) & t>t_\ii
\end{cases},\qquad
p(t)=\begin{cases}
\Hat p(t) & t<t_\oo\\
p^0 & t=t_\oo
\end{cases},
\]
with
\begin{equation}\label{e:pzero}
p^0 := \frac{\dd f}{\dd q}(\Hat q(t_\oo),Q_\oo)
\end{equation}
and boundary conditions
\[
\Hat q(t_\ii)=0, \quad \Hat p(t_\oo) = 0.
\]
 The BV action adapted to this splitting is then the same as \eqref{e:TildeSIII} with only the first term changed:
\[
\begin{split}
\Tilde S_\text{BV}^{\Check f}
= &- f(\Hat q(t_\oo),Q_\oo)
-\Hat e^+_\alpha(t_\oo) c^\alpha_\oo
-\Hat p_i(t_\ii)q_\ii^i  + \Hat c^\alpha(t_\ii) b^\ii_\alpha
\\
&+\int_{t_\ii}^{t_\oo} \bigg(
\Hat p_i\ddd \Hat q^i - \Hat e^+_\alpha\ddd \Hat c^\alpha -
\Hat e^\alpha H_\alpha(\Hat p,\Hat q) \\ 
&+ \Hat c^\alpha\left(\Hat q^+_i\frac{\dd H_\alpha}{\dd p_i}(\Hat p,\Hat q)-\Hat p_+^i\frac{\dd H_\alpha}{\dd q^i}(\Hat p,\Hat q)
\right)\\
&-
 \frac12 f_{\alpha\beta}^\gamma\, \Hat c^+_\gamma \Hat c^\alpha \Hat c^\beta +
 f_{\alpha\beta}^\gamma\, \Hat e^+_\gamma \Hat c^\alpha \Hat e^\beta
\bigg).
\end{split}
\]
We pick again the gauge-fixing lagrangian \eqref{e:simpleL}. This produces the same $\Hat p\Hat q$ 
and $\Hat e^+\Hat c$ propagators as in Case~III (since the hat fields  here and there have exactly the same boundary conditions), and we can then easily compute
\[
Z_\text{IIINL}(q_\ii,b^\ii,Q_\oo,c_\oo)=\EE^{-\frac\II\hbar(f(q_\ii,Q_\oo)- c_\oo^\alpha b^\ii_\alpha)}.
\]
One can immediately check that $\Omega Z_\text{IIINL}=0$. Assumption~\ref{a:} plays here a fundamental role.
%

\begin{remark}[mQME vs unitarity]\label{r:NLnonuni}
The physical part of the partition function, $\EE^{-\frac\II\hbar f(q_\ii,Q_\oo)}$, is exactly the exponential of the HJ action  \eqref{e:HJonlyf} as in \eqref{e:HatKonlyf}. This is not unexpected because, since the physical part decouples from the ghost part, in the physical part we can go back to the continuous splitting, as in Example~\ref{exa:qQfinal}. Note that the partition function we get is not unitary, as it misses the corrections discussed in Digression~\ref{d:uni}. 
In principle, we might add such corrections---e.g., as in \eqref{e:KKfhalf}---if possible, but this would spoil the mQME
unless we were able to find corrections to the $H_\alpha$s and $\Tilde H_\alpha$s that satisfy \eqref{e:quantumHtilderel-} with the corrected $f$. There is no guarantee that such corrections exist.
This is the price we have to pay in the quantum BV-BFV formalism in order to get a solution to the mQME. The mQME is 
fundamental to ensure that the result of a composition, as in the procedure of Section~\ref{s:gluingINL}, also solves the mQME. As a consequence, if there is a conflict, we prefer to have $Z_\text{INL}$ solve the mQME at the expense of not being unitary. We wil return to this in Remark~\ref{r:NLnonuniEND}.
\end{remark}

\subsection{The computation of $Z_\text{INL}$}
We can now use \eqref{e:INL} to compute
\[
\begin{split}
Z_\text{INL}(q_\ii,c_\ii,Q_\oo,c_\oo;T,T^+) &=\int
\EE^{\frac\II\hbar(p^1_i(q_2^i-q_\ii^i)-f(q_2,Q_\oo))}
\EE_\star^{-\frac\II\hbar \langle H, T\rangle}(p^1,q_\ii)
\; \frac{d^np^1\,d^nq_2}{(2\pi\hbar)^n}\\
&\phantom{=\int}
\EE^{\frac\II\hbar \langle T^+,F_-(\ad_T)c_\oo+F_+(\ad_T)c_\ii\rangle}\;\EE^{ \mathbb{W}(T)}.
\end{split}
\]

\begin{remark}
If $f(q,Q)=Q_iq^i$, then one easily sees that $Z_\text{INL}=Z_\text{I}$ with $p^\oo=Q_\oo$.
\end{remark}

\begin{remark}
At the leading order in $\hbar$, the physical part of $Z_\text{I}$ is the exponential of $\II/\hbar$ times the HJ action, see \eqref{e:qHJgeneralLie}. The integral in the $p^1,q_2$ variables then produces, also at the leading order, the composition of the generating functions $f$ and $\Hat S_\text{HJ}$. Therefore, we have
\begin{equation}\label{e:ZINL}
Z_\text{INL}^\text{phys}(q_\ii,Q_\oo;T) = \EE^{\frac\II\hbar (\Hat S^f_\text{HJ}(q_\ii,Q_\oo,\EE^T)+O(\hbar))}.
\end{equation}
\end{remark}

\begin{example}[Linear case]\label{e:flinearqNL}
In the case of linear constraints of Example~\ref{e:flinear}, there are no corrections in the physical part to the HJ action, see Example~\ref{e:flinearq}. Moreover, since the HJ action \eqref{e:Sfablin} is at most linear in the momenta, the integral in the $p^1,q_2$ variables produces exactly the composition of the generating functions $f$ and $\Hat S_\text{HJ}$. Finally, the theory is abelian in this case. Therefore, we get
\begin{equation}\label{e:ZINLlin}
\begin{split}
Z_\text{INL}(q_\ii,c_\ii,Q_\oo,c_\oo;T,T^+) &= \EE^{-\frac\II\hbar\left( 
f(q_\ii+T^\alpha v_\alpha,Q_\oo)
+T^\alpha w_{\alpha,i}q_\ii^i+
\frac12 T^\alpha T^\beta A_{\alpha\beta}+
T^+_\alpha\,(c_\ii^\alpha-c_\oo^\alpha)\right)}\\
&= \EE^{\frac\II\hbar
\Hat S^f_\text{HJ}(q_\ii,Q_\oo;T)}
\EE^{-\frac\II\hbar T^+_\alpha\,(c_\ii^\alpha-c_\oo^\alpha)},
\end{split}
\end{equation}
with $\Hat S^f_\text{HJ}$ as in \eqref{e:Sfablingen}. This result is the prototype for abelian 7D Chern--Simons theory with nonlinear polarization, see \cite[Sections 6]{CS_cyl}, i.e., for the quantization of  \cite{GS}.

\begin{remark}\label{r:NLnonuniEND}
As a consequence of Remark~\ref{r:NLnonuni}, the partition function $Z_\text{INL}$ is in general nonunitary if we consider a nonlinear change of polarization. On the other hand, it satisfies the mQME. An application of this, following Section~\ref{s:compo}, consists in composing $Z_\text{INL}$ with an $\Omega$-closed state $\psi$ in the final variables $(Q_\oo,c_\oo)$. The result is a state $\psi'$ in the initial variables $(q_\ii,c_\ii)$ and in the residual fields $(T,T^+)$ which satisfies the mQME. The state $\psi'$ will not be unitarily equivalent to the state $\psi$ in general, but the idea is that we introduce the state $\psi$ just to produce the solution $\psi'$ of the mQME. In the abelian case, for $\psi$ we may take an arbitrary function of $Q_\oo$ times a delta function of all the ghost variables. This is automatically $\Omega$ closed. As an application, in \cite{CS_cyl} we will use the state \eqref{e:KS} (times the delta function of the ghosts) as the final state $\psi$. This will show that the result of \cite{GS}, which connects the Kodaira--Spencer \cite{KS} action \cite{BCOV} to 7D Chern--Simons theory (with the nonlinear Hitchin polarization \cite{Hitchin} at the final endmanifold), survives at the quantum level with no corrections.
\end{remark}

\end{example}


\appendix

\section{Symplectic geometry and generating functions}\label{a:generatingfunctions}
In this appendix we recall some basic facts on symplectic geometry (also in the infinite-dimensional case) and on (generalized) generating functions for lagrangian submanifolds (and canonical relations). Our main reference will be \cite{BW97}. For the infinite-dimensional case, we will also follow \cite{CC}. 

We will discuss first the linear case, where actually most of the subtleties are already present, and then briefly review the extension to manifolds.

\subsection{Symplectic spaces}
A symplectic form on a vector space $V$ is a skew-symmetric bilinear form $\omega$ with the property of being nondegenerate:
\begin{equation}\label{e:weaklynondeg}
\omega(v,w)=0\ \forall w\in V \Rightarrow v=0.
\end{equation}
The pair $(V,\omega)$ is called a symplectic space.

A symplectomorphism between symplectic spaces $(V,\omega_V)$ and $(W,\omega_W)$ is an isomorphism
$\phi\colon V\to W$ such that
\begin{equation}\label{e:symplectomorphism}
\omega_W(\phi(v_1),\phi(v_2))=\omega_V(v_1,v_2),\quad \forall v_1,v_2\in V.
\end{equation}

\begin{remark}
Note that to a bilinear form $\omega$ on $V$ one can associate the linear map $\omega^\sharp\colon V\to V^*$,
\[
\omega^\sharp(v)(w):=\omega(v,w), \qquad v,w\in V.
\]
The nondegeneracy of $\omega$ is equivalent to the injectivity of $\omega^\sharp$. If $V$ is finite-dimensional, then this implies that $\omega^\sharp$ is actually a bijection, but this is not guaranteed in the infinite-dimensional case.
\end{remark}

\begin{remark}
In the infinite-dimensional case, the above condition is often referred to as weak nondegeneracy and $\omega$ is called a weak symplectic form, reserving the term symplectic form to the case when $\omega^\sharp$ is a bijection. In this paper we will only consider weak symplectic forms, and we will just call them symplectic forms.
\end{remark}

\begin{example}\label{exa:weaksymplCS}
One of the infinite-dimensional examples for this paper is the symplectic space associated to abelian Chern--Simons theory. Let $\Sigma$ be a closed, oriented $2(2k+1)$-manifold. Let $V=\Omega^{2k+1}(\Sigma)$. Then the bilinear form
\[
\omega(\alpha,\beta)=\int_\Sigma\alpha\wedge\beta,\qquad \alpha,\beta\in V,
\]
is skew-symmetric and nondegenerate, hence symplectic.
\end{example}

\subsubsection{Special subspaces}
Given a subspace $W$ of $V$, one defines its orthogonal space as
\[
W^\perp:=\{v\in V \colon \omega(v,w) = 0,\ \forall w\in W\}.
\]
Note that $V^\perp=\{0\}$.

\newcommand{\pperp}{{\perp\perp}}
\begin{lemma}\label{l:perpsubsp}
Let $W$ and $Z$ be two subspaces. Then
\begin{enumerate}
\item $W\subseteq Z \Rightarrow Z^\perp\subseteq W^\perp$;
\item $(W+Z)^\perp = W^\perp \cap Z^\perp$.
\item $W\subseteq W^\pperp$ and $W^\perp = W^{\perp\perp\perp}$.
\item If $V$ is finite-dimensional, then $\dim V = \dim W + \dim W^\perp$ and
 $W=W^\pperp$.
\end{enumerate}
\end{lemma}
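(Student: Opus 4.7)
The plan is to derive everything from the defining property of $\perp$ together with the skew-symmetry and (weak) nondegeneracy of $\omega$, proceeding in the stated order so that each part feeds into the next.

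First I would prove (1) by chasing definitions: if $v\in Z^\perp$ then $\omega(v,z)=0$ for every $z\in Z$, and a fortiori for every $w\in W\subseteq Z$, so $v\in W^\perp$. Part (2) then follows formally: the inclusion $\subseteq$ is an immediate consequence of (1) applied to $W\subseteq W+Z$ and $Z\subseteq W+Z$, while the reverse inclusion is just bilinearity, since $\omega(v,w+z)=\omega(v,w)+\omega(v,z)$ vanishes whenever $v\in W^\perp\cap Z^\perp$. No use of nondegeneracy is needed yet.

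For (3), the key observation is that skew-symmetry makes the annihilation relation symmetric in its two slots. Explicitly, if $w\in W$ then for every $u\in W^\perp$ one has $\omega(u,w)=0$, hence $\omega(w,u)=-\omega(u,w)=0$, which says exactly $w\in (W^\perp)^\perp=W^{\perp\perp}$. To obtain $W^\perp=W^{\perp\perp\perp}$ I would then argue by two inclusions: applying the inclusion just proved to the subspace $W^\perp$ gives $W^\perp\subseteq W^{\perp\perp\perp}$, while applying (1) to $W\subseteq W^{\perp\perp}$ gives $W^{\perp\perp\perp}\subseteq W^\perp$. Again, no finite-dimensionality is used, and this is exactly why (3) survives in the weakly symplectic infinite-dimensional setting.

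Finally, (4) is where nondegeneracy combined with finite-dimensionality really bites. The natural approach is to use the map $\omega^\sharp\colon V\to V^*$, which in finite dimensions is a \emph{bijection} (injectivity is the nondegeneracy condition \eqref{e:weaklynondeg}, and in finite equal dimension this upgrades to a bijection). Under $\omega^\sharp$, the space $W^\perp$ is sent precisely to the annihilator $W^\circ\subset V^*$ of $W$, so the standard identity $\dim W+\dim W^\circ=\dim V$ yields $\dim W+\dim W^\perp=\dim V$. Applying this once more gives $\dim W^{\perp\perp}=\dim W$, and combined with the inclusion $W\subseteq W^{\perp\perp}$ from (3) this forces equality. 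The main (in fact only) conceptual obstacle is precisely that this last step genuinely requires $\omega^\sharp$ to be surjective, which is why $W=W^{\perp\perp}$ and the dimension formula can fail in the infinite-dimensional case even under the weak nondegeneracy assumption adopted in the paper.
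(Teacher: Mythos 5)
Your proof is correct and follows essentially the same route as the paper: (1) and (2) by definition-chasing and bilinearity, (3) via skew-symmetry plus the formal consequences of (1), and (4) via finite-dimensional duality (the paper phrases it as the isomorphism $V/W^\perp\to W^*$ induced by $\omega$, you as the bijectivity of $\omega^\sharp$ identifying $W^\perp$ with the annihilator $W^\circ$ — the same argument in different packaging). Your closing remark about where weak nondegeneracy fails to suffice in infinite dimensions matches the paper's intent as well.
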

\begin{proof}
For $(1)$, let $v\in Z^\perp$. By definition $\omega(v,z)=0\ \forall z\in Z$. A fortiori $\omega(v,w)=0\ \forall w\in W$. So $v\in W^\perp$.

As for $(2)$, observe that $v\in (W+Z)^\perp$ if and only if $\omega(v,w+z)=0\ \forall w\in W,\ \forall z\in Z$. In particular, taking either $w$ or $z$ equal
to zero, we see that $v$ belongs to both $W^\perp$ and $Z^\perp$.
On the other hand, if $v\in W^\perp \cap Z^\perp$, by linearity we get $\omega(v,w+z)=\omega(v,w)+\omega(v,z)=0\ \forall w\in W,\ \forall z\in Z$;
so $v\in (W+Z)^\perp$.

As for $(3)$, let $w\in W$. Then $\omega(w,v)=-\omega(v,w)=0\ \forall v\in W^\perp$. Hence, $w\in W^\pperp$.
For the second equation, taking orthogonals in the first yields $W^{\perp\perp\perp}\subseteq W^\perp$.
On the other hand, applying the first equation to the subspace $W^\perp$ yields $W^\perp\subseteq W^{\perp\perp\perp}$.

In finite dimensions, $\omega$ yields an isomorphism $V/W^\perp\to W^*$, so $\dim V-\dim W^\perp=\dim W^*=\dim W$. Moreover,
\[
\dim W^\pperp = \dim V - \dim W^\perp = \dim V - (\dim V-\dim W) = \dim W.
\]
\end{proof}

A subspace $W$ is called isotropic if $W\subseteq W^\perp$, coisotropic if $W^\perp\subseteq W$ and lagrangian
if $W=W^\perp$.  

Note that a subspace $W$ is isotropic if and only if $\omega(w_1,w_2)=0$ for all $w_1,w_2\in W$ and that a subspace is lagrangian if and only if it is at the same time isotropic and coisotropic.

In the finite-dimensional case, $L$ is lagrangian if and only if $L$ is isotropic and $2\dim L=\dim V$. Finally, an isotropic subspace $L$ is called split lagrangian if it admits an isotropic complement. (For more on lagrangian and split lagrangian subspaces, we refer to \cite{CC}).


\begin{lemma}\label{l:Lag-maxiso}
\[
\text{split lagrangian}\Longrightarrow\text{lagrangian}
\]
In the finite-dimensional case, the two concepts are equivalent.
\end{lemma}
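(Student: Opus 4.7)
The plan is to handle the two implications separately. For the forward implication, I would start with an isotropic subspace $L$ together with an isotropic complement $L'$, so that $V=L\oplus L'$, and show that every $v\in L^\perp$ must lie in $L$. Writing $v=\ell+\ell'$ with $\ell\in L$, $\ell'\in L'$, the condition $v\in L^\perp$ gives $\omega(v,\ell_0)=0$ for every $\ell_0\in L$, and since $L$ is isotropic this reduces to $\omega(\ell',\ell_0)=0$ for every $\ell_0\in L$. Combining with the isotropy of $L'$, which yields $\omega(\ell',m')=0$ for every $m'\in L'$, and using the direct sum decomposition $V=L\oplus L'$, I conclude $\omega(\ell',w)=0$ for all $w\in V$. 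The (weak) nondegeneracy assumption \eqref{e:weaklynondeg} then forces $\ell'=0$, so $v=\ell\in L$, which together with the inclusion $L\subseteq L^\perp$ (from $L$ isotropic) gives $L=L^\perp$.

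For the converse in finite dimensions, the task is to produce an isotropic (hence, by the first half and a dimension count, lagrangian) complement to a given lagrangian $L$. First I would pick an arbitrary vector space complement $W$ with $V=L\oplus W$. The key observation is that the pairing $\phi\colon W\to L^*$, defined by $\phi(w)(\ell):=\omega(w,\ell)$, is injective: if $\phi(w)=0$ then $w\in L^\perp=L$, so $w\in L\cap W=0$. By Lemma~\ref{l:perpsubsp}(4), $\dim W=\dim V-\dim L=\dim L=\dim L^*$, so $\phi$ is a linear isomorphism---and it is exactly here that finite-dimensionality enters in an essential way.

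Next I would correct $W$ to make it isotropic. Look for a complement of the form $L':=\{w+Sw : w\in W\}$ for some linear map $S\colon W\to L$. The isotropy condition $\omega(w_1+Sw_1,w_2+Sw_2)=0$ expands, using $L$ isotropic, to
\[
\omega(w_1,w_2)+\omega(w_1,Sw_2)-\omega(w_2,Sw_1)=0.
\]
Using $\phi$ to identify $L\cong W^*$ via $S\mapsto \phi\circ S$, this is a linear equation for a bilinear form on $W$ whose symmetric part is uniquely prescribed by $\omega|_{W\times W}$; one can then simply take $S$ so that $\phi(Sw_2)(\ell)$ matches $\tfrac12\omega(w_2,\cdot)|_L$-type data, or equivalently define $S$ by $\phi(Sw)(\ell)=\tfrac12\omega(w,\ell')$ after any fixed splitting---the main point being that, because $\omega|_{W\times W}$ is skew-symmetric, the obstruction to solvability lies in a symmetric tensor and vanishes.

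The main obstacle in the argument is really just bookkeeping in this last step: ensuring that the correction $S$ exists amounts to solving a linear equation whose consistency is guaranteed by the antisymmetry of $\omega$ and by $\phi$ being an isomorphism, both of which have already been established. Once $L'$ is isotropic, it is a complement of $L$ by construction, so $L$ is split lagrangian. This finishes the equivalence in finite dimensions.
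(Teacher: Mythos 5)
Your forward implication is correct and is essentially the paper's argument (decompose $v\in L^\perp$ along $V=L\oplus L'$, use both isotropies, and invoke weak nondegeneracy), so nothing to add there. Your converse, however, takes a genuinely different route. The paper runs a greedy induction on the poset $\mathcal I$ of isotropic subspaces meeting $L$ trivially: if $W\in\mathcal I$ is not yet lagrangian, a dimension count with Lemma~\ref{l:perpsubsp} produces a vector $v\in W^\perp\setminus(W+L)$, so $W\oplus\Span v\in\mathcal I$, and after finitely many steps one lands on a lagrangian, hence a complement of $L$. You instead start from an \emph{arbitrary} complement $W$ and repair it to the graph $L'=\{w+Sw\}$ of a linear map $S\colon W\to L$; the isotropy condition becomes $\omega(w_1,w_2)+B(w_1,w_2)-B(w_2,w_1)=0$ for $B(w_1,w_2):=\omega(w_1,Sw_2)$, which is solvable because the equation only constrains the antisymmetric part of $B$ and $-\tfrac12\,\omega|_{W\times W}$ does the job. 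This is sound, and arguably buys you something the paper's proof does not: an explicit one-step formula for the isotropic complement rather than a basis-by-basis extension. Two small points to tidy up. First, your map $\phi$ is $W\to L^*$, so writing $\phi(Sw_2)$ is a typographical slip: the map you actually need to invert is $\phi'\colon L\to W^*$, $\phi'(\ell)=\omega(\cdot,\ell)$, whose injectivity follows from $L^\perp=L$ and $L\cap W=0$ exactly as for $\phi$, and whose surjectivity again follows from the dimension count $\dim L=\dim W$; then $Sw:=(\phi')^{-1}\bigl(-\tfrac12\,\omega(\cdot,w)\bigr)$ is the clean definition. Second, the phrase about the obstruction ``lying in a symmetric tensor'' deserves one explicit line: the assignment $B\mapsto B-B^{T}$ surjects onto antisymmetric forms, and $-\omega|_{W\times W}$ is antisymmetric, so a solution exists. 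With those clarifications your argument is complete and correct.
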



\begin{proof}
Suppose that $L$ is split lagrangian. Then we can find an isotropic subspace $W$ such that $V=L\oplus W$. In particular, $V=L+W$. By $(2)$ in Lemma~\ref{l:perpsubsp}, we then have $L^\perp\cap W^\perp=\{0\}$. Since
$W\subseteq W^\perp$, we also have $L^\perp\cap W=\{0\}$, which implies $L^\perp\subseteq L$. Therefore, $L$ is lagrangian.

\newcommand{\calI}{\mathcal{I}}
Suppose on the other hand that $L$ is lagrangian and $V$ is finite-dimensional. Suppose $W$ is isotropic but not lagrangian and $L\cap W=\{0\}$.  Observe that by dimensional reasons, $L+W$ cannot be the whole $V$. Therefore $(L+W)^\perp$ cannot be $\{0\}$. Using $(2)$ in Lemma~\ref{l:perpsubsp}, we then have $W^\perp\cap L\not=\{0\}$.

We now want to show that there is a nonzero vector $v\in W^\perp$ that does not belong to $L$ nor to $W$.
In fact, if this were not the case, we would have $W^\perp\subseteq W+L$. By $(1)$ and $(2)$ in Lemma~\ref{l:perpsubsp}, we would then have $W\supseteq W^\perp\cap L^\perp= W^\perp\cap L$. Intersecting with $L$, we would conclude $\{0\}=W^\perp\cap L$, which is a contradiction.

Let finally $\calI$ be the set of isotropic subspaces of $V$ whose intersection with $L$ is $\{0\}$. Note that 
$\{0\}\in\calI$. If $W\in\calI$ is not lagrangian, by the above argument we can find a nonzero vector $v\in W^\perp$ that does not belong to $L$ nor to $W$. But then $W\oplus\Span{v}$ is also in $\calI$ and of dimension strictly higher than that of $W$. After finitely many steps, we end up with a $W\in\calI$ which is lagrangian. But since $\dim L=\dim W=\dim V/2$, we see that $W$ is a complement to $L$.
\end{proof}

%
%
%

\begin{remark}[Split symplectic spaces]\label{r:splitss}
For applications in field theory, we will only consider symplectic spaces that admit split lagrangian subspaces. In particular, we fix a lagrangian subspace $W$ and a lagrangian complement $W'$ and write 
\begin{equation}\label{e:Vsplit}
V=W\oplus W'.
\end{equation}
We will call this a split symplectic space.

Note that in the finite-dimensional case, $\omega^\sharp$ establishes an isomorphism between $W'$ and $W^*$, which in turns establishes a symplectomorphism between $V$ and $W\oplus W^*$ with the canonical symplectic form
\begin{equation}\label{e:omegacan}
\omega(v\oplus a,w\oplus b)=b(v)-a(w),\qquad v,w\in W,\ a,b\in W^*.
\end{equation}
\end{remark}

In the infinite-dimensional case, $\omega^\sharp|_{W'}\colon W'\to W^*$ is only injective. Still, we may regard $W\oplus W'$ as a symplectic subspace of $W\oplus W^*$, which is what we will do in the following.

\begin{example}
Consider Example~\ref{exa:weaksymplCS} with $k=0$. If we work with complex valued forms, then we have the splitting of $V=\Omega^1(\Sigma)\otimes\CC$ into $W=\Omega^{1,0}(\Sigma)$ and $W'=\Omega^{0,1}(\Sigma)$, with respect to some complex structure on $\Sigma$.
\end{example}

\subsubsection{Generating functions}
Assume we have a split symplectic space as in \eqref{e:Vsplit}. 
Consider the graph of a linear map $\phi\colon W\to W'$:
\[
L:=\{w\oplus \phi(w),\ w\in W\}\subset V.
\]
Since
\[
\omega(w_1\oplus \phi(w_1),w_2\oplus \phi(w_2))=
\omega(w_1,\phi(w_2))+\omega(\phi(w_1),w_2),
\]
we see that $L$ is isotropic, hence split lagrangian because $W'$ is an isotropic complement to it,
 if and only if 
\[
\omega(w_1,\phi(w_2))=-\omega(\phi(w_1),w_2),\quad\forall w_1,w_2\in W.
\]
In this case,  we can use $\phi$ and $\omega$ to define a quadratic form on $W$:
\[
\psi(w):=\frac12 \omega(w,\phi(w)).
\]
Note then that
\[
\delta\psi(w)=\omega(\phi(w),\delta w)=\omega^\sharp(\phi(w))(\delta w),
\]
so
\[
\delta\psi=\omega^\sharp\circ\phi\colon W\to W^*
\]
as a one-form on $W$. Since $\omega^\sharp$ is injective, it is possible to recover $\phi$ out of $\delta\psi$.

Vice versa, suppose we have a splitting $V=W\oplus W'$, already realized as a subspace of $W\oplus W^*$ with canonical symplectic structure, and a quadratic form $\psi$ on $W$. If $\delta\psi(W)\subseteq W'$, then
the graph $L_\psi$ of $\delta\psi$ is a split lagrangian subspace of $V$ and $\psi$ is called a generating function for it.

\begin{example}\label{exa:genfunctCzero}
Let $I$ be the closed interval $[-1/2,1/2]$ and $V:=C^0(I)\oplus C^0(I)$ as a real vector space.
Define
\[
\omega(f\oplus g, h\oplus k)=\int_I (fk-gh)\, \ddd x.
\]
Then $(V,\omega)$ is split symplectic with $W=W'= C^0(I)$ (more properly $W=
C^0(I)\oplus \{0\}$ and $W'=\{0\}\oplus C^0(I)$). The quadratic form 
$\psi(f)=\frac12\int_I f^2\,\ddd x$ has differential $\delta_f\psi = f$, so it is a generating function for the diagonal
$L_\psi=\{f\oplus f,\ f\in C^0(I)\}$. On the other hand, the quadratic form 
$\Tilde\psi(f)=\frac12 f(0)^2$, $f\in W$, has differential $\delta_f\Tilde\psi = f(0)\delta$, with $\delta$ the delta function at $0$, which is an element of $W^*$ but not of $W'$, so it is not an allowed generating function.
\end{example}

\newcommand{\uC}{\underline{C}}
\newcommand{\uomega}{\underline{\omega}}
\newcommand{\uL}{\underline{L}}
\subsubsection{Symplectic reduction}\label{s:symplecticreduction}
We will only consider the particular case of coisotropic reduction, which is of interest for this paper.

Let $C$ be a coisotropic subspace of $(V,\omega)$. Then the kernel of the restriction of $\omega$ to $C$ is precisely the orthogonal subspace $C^\perp$. 
If follows that
$\uC:=C/C^\perp$, the symplectic reduction of $C$, is endowed with the symplectic form
\begin{equation}\label{e:uomega}
\uomega([v],[w])=\omega(v,w),\quad v\in[v],\ w\in[w].
\end{equation}

Denote by $\pi$ the canonical projection $C\to\uC$. If $L$ is a subspace of $V$, we set $\uL:=\pi(L\cap C)\subseteq\uC$.
\begin{lemma}[Relative linear reduction]\label{l:reductionL}
If $L$ is isotropic in $V$, then $\uL$ is isotropic in $\uC$. In the finite-dimensional case, if $L$ is lagrangian in $V$, then $\uL$ is lagrangian in $\uC$.
\end{lemma}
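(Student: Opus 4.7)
First I would dispatch the isotropy claim by a one-line calculation: given $[v_1],[v_2]\in\underline L$, pick representatives $v_1,v_2\in L\cap C$; then by the very definition \eqref{e:uomega} of the reduced form,
\[
\underline\omega([v_1],[v_2])=\omega(v_1,v_2)=0
\]
because $v_1,v_2\in L$ and $L$ is isotropic. Note that this step uses neither finite-dimensionality nor the stronger hypothesis that $L$ is lagrangian, so $\underline L$ is isotropic in all cases.

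For the lagrangian claim in the finite-dimensional setting, the plan is to establish the reverse inclusion $\underline L^\perp\subseteq\underline L$ by a direct pointwise argument. Take $[v]\in\underline L^\perp$ with a representative $v\in C$; unwinding the definitions, $\omega(v,w)=0$ for every $w\in L\cap C$, so $v\in(L\cap C)^\perp$. In finite dimensions, dualizing part (2) of Lemma~\ref{l:perpsubsp} (equivalently, applying part (4) to $L^\perp+C^\perp$) yields the identity
\[
(L\cap C)^\perp=L^\perp+C^\perp,
\]
which combined with $L^\perp=L$ gives $(L\cap C)^\perp=L+C^\perp$. Hence we may decompose $v=\ell+c^\perp$ with $\ell\in L$ and $c^\perp\in C^\perp$. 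Because $C^\perp\subseteq C$ (coisotropy) and $v\in C$, it follows that $\ell=v-c^\perp\in C$, so in fact $\ell\in L\cap C$. Finally $\pi(v)=\pi(\ell)+\pi(c^\perp)=\pi(\ell)\in\pi(L\cap C)=\underline L$, and we are done.

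As a sanity check (and as a backup route if one prefers to avoid the identity $(L\cap C)^\perp=L^\perp+C^\perp$), I would verify the lagrangian conclusion by a dimension count: using $\dim C+\dim C^\perp=\dim V$, $\dim L=\tfrac12\dim V$, and the formula $\dim\underline L=\dim(L\cap C)-\dim(L\cap C^\perp)$ together with $\dim(L\cap C^\perp)=\dim V-\dim(L+C^\perp)$ (which follows from $(L+C^\perp)^\perp=L\cap C$), one obtains $\dim\underline L=\dim C-\tfrac12\dim V=\tfrac12\dim\underline C$, which, combined with isotropy, forces $\underline L$ to be lagrangian in $\underline C$. The main (and only real) obstacle is the step that requires finite-dimensionality: passing from $L^\perp=L$ and $C^{\perp\perp}=C$ to the identity $(L\cap C)^\perp=L+C^\perp$. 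In infinite dimensions these perp-double-perp statements may fail and one must pass to the split lagrangian notion and be more careful, which is presumably why the lemma is stated only in the finite-dimensional case.
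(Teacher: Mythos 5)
Your argument is correct and is essentially the paper's own proof: both the isotropy computation with representatives and the lagrangian step rest on the finite-dimensional identity $(L\cap C)^\perp=L^\perp+C^\perp$ combined with $L^\perp=L$ and $C^\perp\subseteq C$, and the paper merely packages your pointwise decomposition $v=\ell+c^\perp$ as a single chain of subspace (in)equalities computing $\pi^{-1}(\underline{L}^\perp)$. One small slip in your optional dimension-count check: $(L+C^\perp)^\perp=L\cap C$ gives $\dim(L\cap C)=\dim V-\dim(L+C^\perp)$, not $\dim(L\cap C^\perp)=\dim V-\dim(L+C^\perp)$ (the latter should read $\dim(L\cap C^\perp)=\dim V-\dim(L+C)$), though the final count $\dim\underline{L}=\dim C-\tfrac12\dim V$ still comes out right.
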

\begin{proof}
The first statement may be easily proved observing that, if $[v]$ and $[w]$ are in $\uL$, then every $v\in[v]$ and every $w\in[w]$ lie in $L\cap C$ and hence in $L$. Therefore, $\omega(v,w)=0$ and, by \eqref{e:uomega}, we get
$\uomega([v],[w])=0$, so $\uL$ is isotropic. 

A different proof, which also leads to the second statement, is based on the following:
\begin{multline*}
\pi^{-1}(\uL^\perp)=(\pi^{-1}(\uL))^\perp\cap C = (L\cap C + C^\perp)^\perp\cap C =\\
=(L\cap C)^\perp\cap C
\supset (L^\perp + C^\perp)\cap C = L^\perp \cap C + C^\perp.
\end{multline*}
In the last equality we have used the obvious fact that $(A+B)\cap C = A\cap C + B$ for any three subspaces $A$, $B$, and $C$ with $B\subset C$.
Thus, $\uL^\perp\supset \pi(L^\perp\cap C)\supset \pi(L\cap C) = \uL$. All inclusions are equalities under the assumptions that $V$ is finite-dimensional and $L$ is lagrangian.
\end{proof}

In infinite dimensions, it may happen that $L$ is lagrangian but $\uL$ is not: 
\begin{example}\label{exa:C0}
Consider the symplectic space of Example~\ref{exa:genfunctCzero} and
\[
C:=\{f\oplus g\in V : g(0)=0\}.
\]
One has that $C^\perp=\{0\}$, so $C$ is coisotropic and $\uC=C$. Let
\[
L:=\left\{f\oplus g\in V : \int_I f\;\ddd x=0,\ g \text{\ constant}\right\}.
\]
One can easily show that $L$ is lagrangian, actually split lagrangian, in $V$, but $\uL$ is not lagrangian in $\uC$.
\end{example}

\begin{remark}[Generalized generating functions in the linear case]\label{r:gflinear}
Suppose we have a sum $V\oplus Z$ of symplectic spaces with $Z=B\oplus B'$ split.
First observe that $C:=V\oplus B\oplus\{0\}$ is coisotropic in $V\oplus Z$ with $C^\perp=\{0\}\oplus B\oplus\{0\}$, so
$\uC=V$. 
Now assume that also $V=W\oplus W'$ is split and rearrange $V\oplus Z$ as the split symplectic space
$(W\oplus B)\oplus(W'\oplus B')$.
Let $L_\psi$ be the split lagrangian subspace of $V\oplus Z$ 
with generating function a quadratic function $\psi$ on $W\oplus B$. It then follows that $\uL_\psi$ is isotropic in $V$.
If it is split lagrangian (as it automatically happens in the finite-dimensional case), then one says that $\psi$ is a generalized generating function for $\uL_\psi$, depending on extra parameters in $B$.
\end{remark}

\begin{example}
Consider $V=W\oplus W^*$ and $Z=B\oplus B^*$ with $W$ and $B$ finite-dimensional. Let $F\colon W\to B^*$ be a linear map and let $F^*\colon B\to W^*$ denote its dual. Then the quadratic form $\psi(w\oplus b)=(Fw,b)$, with
$(\ ,\ )$ denoting the pairing of a space with its dual, is a generalized generating function for
$\uL_\psi=\ker F\oplus\text{im}\, F^*=\ker F\oplus(\ker F)^0$, with $ ^0$ denoting the annihilator space.
Unless $F$ is the zero map, $\uL_\psi$ cannot be generated by a function on $W$ only, so extra parameters are indeed necessary.
\end{example}

\begin{example}
The above example may be extended to the infinite-dimensional case as follows.
Let $F\colon W\to B'\subseteq B^*$ and $F'\colon B\to  W'\subseteq W^*$
be linear maps such that $\iota_{W'}F'=(\iota_{B'}F)^*\iota$, where $\iota_{B'}\colon B'\to B^*$ and $\iota_{W'}\colon W'\to W^*$ are the given inclusions and $\iota\colon B\to B^{**}$ is the canonical inclusion. Then the quadratic form $\psi(w\oplus b)=(\iota_{B'}Fw,b)=(\iota_{W'}F'b,w)$ is a generating function for a lagrangian subspace $L_\psi$ of $V\oplus Z$.
The corresponding isotropic subspace $\uL_\psi$ of $V$ is $\ker F\oplus\text{im}\, F'$.
\end{example}

\begin{example}
A simple illustration of the last example is the following. Let $\Sigma$ be a closed, oriented two-manifold and let 
$W=W'=\Omega^1(\Sigma)$, $B=\Omega^0(\Sigma)$, $B'=\Omega^2(\Sigma)$, with symplectic forms induced by integration over $\Sigma$:
\[
\omega(\alpha\oplus\alpha',\beta\oplus\beta')=\int_\Sigma (\alpha\wedge\beta'-\beta\wedge\alpha').
\] 
The linear maps $F=\ddd\colon \Omega^1(\Sigma)\to \Omega^2(\Sigma)$ and
$F'=\ddd\colon\Omega^0(\Sigma)\to\Omega^1(\Sigma)$ satisfy the assumptions, so we have the generalized generating function $\psi(w\oplus b)=\int_\Sigma \ddd w\wedge b$, with $w\in\Omega^1(\Sigma)$ and $b\in\Omega^0(\Sigma)$. We get $\uL_\psi=\Omega^1_\text{cl}(\Sigma)\oplus\Omega^1_\text{ex}(\Sigma)$, where the subscripts denote closed and exact forms. In this example $\uL_\psi$ is actually a split lagrangian. This follows from the Hodge decomposition theorem. 
 In fact, putting a metric on $\Sigma$ one can define the Hodge $*$ operator and the codifferential $\ddd^*$. As an isotropic complement of $\uL_\psi$ one can then take
$\Omega^1_\text{coex}(\Sigma)\oplus \Omega^1_\text{cocl}(\Sigma)$.

\end{example}

\subsubsection{Canonical relations}\label{a:canrel}
Canonical relations are a generalization of symplectomorphisms. 

Recall that a linear map $\phi$ between $(V,\omega_V)$ and $(W,\omega_W)$ is called a symplectomorphism if it is an isomorphism and respects the symplectic forms as in \eqref{e:symplectomorphism}. It follows that the graph of $\phi$,
\[
L_\phi=\{v\oplus \phi(v),\ v\in V\}, 
\]
is a split lagrangian subspace of $\bar V\oplus W$, where $\bar V$ denotes $V$ with symplectic form $-\omega_V$.
In fact, we have
\[
\omega_{\bar V\oplus W}(v_1\oplus \phi(v_1),v_2\oplus \phi(v_2))=-\omega_V(v_1,v_2)+\omega_W(\phi(v_1),\phi(v_2))=0,
\]
so $L_\phi$ is isotropic. Note that $L_{-\phi}$ is also isotropic and, since $\phi$ is an isomorphism, it is a complement to $L_\phi$.

More generally, one calls any (split) lagrangian subspace  $L$ of $\bar V\oplus W$ a (split) canonical relation from $V$ to $W$. One also writes
\begin{equation}\label{e:nrightarrow}
L\colon V\nrightarrow W.
\end{equation}
If $V=V_1\oplus V_2$ and $W=W_1\oplus W_2$ are split symplectic spaces, then a lagrangian subspace $L$ of
$\bar V\oplus W$ might have a (generalized) generating function $\psi$, a quadratic form on $V_1\oplus W_1(\oplus B)$, which is then called a (generalized) generating function for the canonical relation $L$.

Here is an example of a canonical relation that is important for our applications:
\begin{lemma}\label{ex:LinCC}
Let $C$ be a coisotropic subspace of $V$. Then the subspace $L_C:=\{(v,v')\in C\oplus C\ |\ v'-v\in C^\perp\}$
is isotropic in $\bar V\oplus V$. If $V$ is finite-dimensional, then $L$ is lagrangian, hence a canonical relation from $V$ to itself.
\end{lemma}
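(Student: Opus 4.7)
The plan is to verify isotropicity by a direct computation, then count dimensions for the finite-dimensional lagrangian claim.

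First I would unpack the symplectic form on $\bar V \oplus V$, which is $\omega_{\bar V\oplus V}((v_1,v_1'),(v_2,v_2')) = -\omega(v_1,v_2)+\omega(v_1',v_2')$. Given two elements $(v_1,v_1'),(v_2,v_2')\in L_C$, write $v_i' = v_i + c_i$ with $c_i \in C^\perp$. Expanding,
\[
\omega(v_1',v_2') = \omega(v_1,v_2)+\omega(v_1,c_2)+\omega(c_1,v_2)+\omega(c_1,c_2).
\]
The cross terms $\omega(v_1,c_2)$ and $\omega(c_1,v_2)$ vanish because $v_1,v_2\in C$ and $c_1,c_2\in C^\perp$. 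The key use of coisotropicity ($C^\perp\subseteq C$) is for the last term: it implies $c_1\in C$, so $\omega(c_1,c_2)=0$ since $c_2\in C^\perp$. Hence $\omega(v_1',v_2')=\omega(v_1,v_2)$, and the pairing in $\bar V\oplus V$ vanishes, establishing that $L_C$ is isotropic.

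For the lagrangian claim in finite dimensions, I would compute $\dim L_C$ via the bijection $L_C \to C \oplus C^\perp$, $(v,v')\mapsto (v,v'-v)$, which gives $\dim L_C = \dim C + \dim C^\perp = \dim V$ by item~(4) of Lemma~\ref{l:perpsubsp}. Since $\dim(\bar V\oplus V)=2\dim V$, the isotropic subspace $L_C$ has exactly half the dimension of the ambient symplectic space and is therefore lagrangian.

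The computation is essentially self-contained; the only ``obstacle'' (and it is more of a conceptual point than a technical one) is recognizing that coisotropicity is used precisely once, to ensure $C^\perp\subseteq C$ so that the term $\omega(c_1,c_2)$ drops out. Without this inclusion one would only get $L_C$ isotropic in the weaker sense, modulo that last term. No infinite-dimensional subtleties arise for isotropicity; the lagrangian statement, of course, relies on the dimension-counting identity of Lemma~\ref{l:perpsubsp}(4), which is why the second assertion is confined to the finite-dimensional setting.
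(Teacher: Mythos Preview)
Your proof is correct and follows essentially the same approach as the paper's: a direct expansion of $\omega(v_1+c_1,v_2+c_2)-\omega(v_1,v_2)$ for isotropicity, followed by the dimension count $\dim L_C=\dim C+\dim C^\perp=\dim V$ in the finite-dimensional case. You are simply more explicit than the paper in isolating where coisotropicity enters (namely, to kill the $\omega(c_1,c_2)$ term).
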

\begin{proof}
For $v_1,v_2\in C$ and $w_1,w_2\in C^\perp$ we have $\omega(v_1+w_1,v_2+w_2)-\omega(v_1,v_2)=0$, so $L_C$ is isotropic. If $\dim V=2n$ and $\dim C=k$, then $\dim L_C=\dim C+\dim C^\perp=k+(2n-k)=2n$, so $2\dim L_C=\dim(\bar V\oplus V)$.
\end{proof}
For an example where $L_C$ is not lagrangian, consider $C$ as in Example~\ref{exa:C0}. Since $C^\perp=\{0\}$, we have that $L_C=\Delta_C$, the diagonal in $C\oplus C$. On the other hand, $L_C^\perp=\Delta_V$, the diagonal in $V\oplus V$, which properly contains $\Delta_C$.

Another relation between canonical relations and coisotropic submanifolds is the following result (observed in \cite{CMRcq} in the context of field theory).

\begin{lemma}\label{l:LtoC}
Let $L$ be a lagrangian subspace of $\bar V\oplus V$. Define $C:=\pi_1(L)\subseteq V$, with $\pi_1$ the projection to the first summand. Then $C$ is coisotropic.
\end{lemma}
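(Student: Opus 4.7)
The plan is to prove the inclusion $C^\perp \subseteq C$ directly by exhibiting, for each $u \in C^\perp$, an explicit element of $L$ whose first component is $u$. The natural candidate to test is the simplest possible element, namely $(u, 0) \in \bar V \oplus V$.

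First I would compute the symplectic pairing of $(u,0)$ with an arbitrary element $(c,d) \in L$. Using the symplectic form $\omega_{\bar V\oplus V}((v_1,w_1),(v_2,w_2)) = -\omega(v_1,v_2) + \omega(w_1,w_2)$, this pairing equals $-\omega(u,c)$. Since $c = \pi_1(c,d) \in C$ and $u \in C^\perp$ by assumption, this vanishes. Hence $(u,0)$ is symplectically orthogonal to every element of $L$, i.e., $(u,0) \in L^\perp$.

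The second step is to invoke the lagrangian hypothesis $L = L^\perp$ to conclude $(u,0) \in L$, and then project: $u = \pi_1(u,0) \in C$. This is the whole argument; no dimension count is required, so it works in both the finite- and infinite-dimensional settings (recall that in this paper ``lagrangian'' is taken in the sense $L = L^\perp$, cf.\ Lemma~\ref{l:perpsubsp} and the conventions around weak nondegeneracy).

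I do not foresee any real obstacle: the statement is a direct dualization of the straightforward observation that $\pi_1(L)^\perp \subseteq \pi_2(L^{\perp_{\bar V \oplus V}}\cap(V\oplus 0))$, packaged so that the lagrangian equation $L = L^\perp$ closes the loop. The only mild point to double-check is that in the infinite-dimensional case we do not need any splitness or closedness hypothesis on $L$, since we use only the inclusion $L^\perp \subseteq L$ (which is the ``coisotropic half'' of being lagrangian), and this is exactly what delivers $u \in C$ from $(u,0) \in L^\perp$.
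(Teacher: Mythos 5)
Your proof is correct and is essentially identical to the paper's: both show that for $\Tilde v\in C^\perp$ the vector $(\Tilde v,0)$ pairs to zero with every $(v,w)\in L$, hence lies in $L^\perp=L$, and then project to get $\Tilde v\in C$. Your side remark that only the inclusion $L^\perp\subseteq L$ is actually used is accurate but does not change the argument.
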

Note that this result holds also in the infinite-dimensional case.
\begin{proof}
By definition we have that $v$ belongs to $C$ if{f} there is a $w\in V$ such that $(v,w)$ belongs to $L$. Now let $\Tilde v$ be in $C^\perp$, i.e., $\omega(\Tilde v,v)=0$ $\forall v\in C$. 
This implies
\[
\omega(\Tilde v,v)-\omega(0,w)=0\ \forall (v,w)\in L.
\]
Therefore, $(\Tilde v,0)\in L^\perp=L$, which implies $\Tilde v\in C$.
\end{proof}

\subsection{Symplectic manifolds}
We now extend the above results to manifolds.

\begin{definition}[Symplectic manifolds]
A symplectic form $\omega$ on a smooth manifold $M$ is a nondegenerate closed two-form on it. By nondegenerate we mean that for each $x\in M$ the bilinear form $\omega_x$ on $T_xM$ satisfies \eqref{e:weaklynondeg}. Note that, for each $x\in M$, the pair $(T_xM,\omega_x)$ is then a symplectic space.
The pair $(M,\omega)$ is called a symplectic manifold.
\end{definition}

We allow $M$ to be an infinite-dimensional manifold (for the examples in this paper a Fr\'echet manifold).  In almost all the examples of this paper the symplectic manifold $M$ is actually a vector space, but we  also  consider nonlinear maps and nonlinear lagrangian submanifolds.

\begin{definition}[Symplectomorphisms]
A symplectomorphism between symplectic manifolds $(M,\omega_M)$ and $(N,\omega_N)$ is a diffeomorphism 
$\phi\colon M\to N$ such that $\phi^*\omega_N=\omega_M$.
\end{definition}

\begin{definition}[Special subspaces]\label{d:specialsub}
A submanifold $S$ of a symplectic manifold $M$ is called isotropic / coisotropic / lagrangian / split lagrangian
if $T_xS$ is 
a subspace of this class 
in $T_xM$ for every $x\in S$. Note that, as these concepts are defined in terms of tangent spaces, they make sense both for embedded and for immersed submanifolds. (Unless we explicitly say otherwise, by submanifold we then mean an immersed submanifold.)
\end{definition}

\begin{definition}[The characteristic distribution]\label{d:chardistr}
The characteristic distribution of a coisotropic submanifold $C$ of $M$ is the subbundle $C^\perp:=\cup_{x\in C}(T_xC)^\perp$ of $TC$. It is a subbundle, hence a regular distribution, because it is the kernel of $\omega_C^\sharp\colon TC\to T^*C$, where $\omega_C$ is the restriction of the symplectic form to $C$.
\end{definition}

\begin{lemma}\label{l:chardistr}
Let $C$ be a coisotropic submanifold of $M$. Then
\begin{enumerate}
\item For every $X\in\Gamma(C^\perp)$, we have $\operatorname{L}_X\omega_C=0$, where $\operatorname{L}$ denotes the Lie derivative;
\item The characteristic distribution $C^\perp$ is involutive.
\end{enumerate}
\end{lemma}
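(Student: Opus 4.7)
My plan is to prove both statements as essentially formal consequences of Cartan's magic formula together with the defining property of the characteristic distribution.

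For part (1), I will start by observing two things about $\omega_C = i^*\omega$, where $i\colon C\hookrightarrow M$ is the inclusion. First, $\omega_C$ is closed, since $d\omega_C = d\,i^*\omega = i^*d\omega = 0$. Second, if $X\in\Gamma(C^\perp)$, then at every $x\in C$ and for every $Y\in T_xC$ one has $\omega_x(X_x,Y) = 0$ by the very definition of $C^\perp$, so $\iota_X\omega_C = 0$ as a $1$-form on $C$. Cartan's magic formula then gives
\[
\mathcal{L}_X\omega_C = \iota_X d\omega_C + d(\iota_X\omega_C) = 0 + 0 = 0,
\]
which is the claim.

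For part (2), I first note that $X,Y\in\Gamma(C^\perp)\subseteq\Gamma(TC)$, so $[X,Y]$ is a well-defined vector field on $C$ (vector fields tangent to a submanifold are closed under Lie bracket). To see that $[X,Y]$ actually takes values in $C^\perp$, I will use the standard commutator identity $\iota_{[X,Y]} = \mathcal{L}_X\iota_Y - \iota_Y\mathcal{L}_X$ applied to $\omega_C$:
\[
\iota_{[X,Y]}\omega_C \;=\; \mathcal{L}_X(\iota_Y\omega_C) - \iota_Y(\mathcal{L}_X\omega_C) \;=\; 0 - 0 \;=\; 0,
\]
where the first term vanishes because $\iota_Y\omega_C=0$ (since $Y\in\Gamma(C^\perp)$) and the second by part~(1). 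Since $\omega_C^\sharp$ has kernel $C^\perp$ (this is precisely Definition~\ref{d:chardistr}), the vanishing of $\iota_{[X,Y]}\omega_C$ means $[X,Y]_x\in (T_xC)^\perp$ for every $x\in C$, i.e.\ $[X,Y]\in\Gamma(C^\perp)$.

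I do not expect any real obstacle: both parts are purely formal manipulations with Cartan calculus, and the only subtle point is to make sure that $\omega_C$ and $C^\perp$ are interpreted consistently, namely $\omega_C$ as a $2$-form on $C$ (not on $M$) and $C^\perp$ as a subbundle of $TC$ (which requires the coisotropic hypothesis to ensure $C^\perp\subseteq TC$). Once this is set up, no analytical issue arises, and the proof works identically in the finite- and infinite-dimensional settings relevant to this paper.
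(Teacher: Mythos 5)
Your proof is correct and follows essentially the same route as the paper's: Cartan's magic formula for part (1), and the commutator identity $\iota_{[X,Y]}=[\operatorname{L}_X,\iota_Y]$ applied to $\omega_C$ for part (2), concluding via the identification of $C^\perp$ with the kernel of $\omega_C^\sharp$. The only difference is that you spell out a few more details (closedness of $i^*\omega$, tangency of the bracket) that the paper leaves implicit.
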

\begin{proof}
By definition $\iota_X\omega_C=0$. Since $\ddd\omega_C=0$, we get $\operatorname{L}_X\omega_C=0$ by Cartan's magic formula.

Next suppose that $Y$ is also a section of $C^\perp$. By Cartan's calculus we have
\[
\iota_{[Y,X]}\omega_C = [\iota_Y,\operatorname{L}_X]\omega_C,
\]
where we have also used the above statement. Therefore, $[Y,X]\in\Gamma(C^\perp)$.
\end{proof}

\begin{remark}[Split symplectic manifolds]\label{r:splitsymplman}
We are now interested in the generalization of the concept of split symplectic space of Remark~\ref{r:splitss}: a split symplectic manifold. 
In the finite-dimensional case, this is the notion of a cotangent bundle $T^*M$ with its canonical symplectic form (this is the two-form which in local coordinates $q^i$ for the base and $p_i$ for the fiber reads $\ddd p_i\ddd q^i$).\footnote{More intrinsically, the canonical symplectic form 
is the differential of the canonical one-form $\theta$ with $\theta_{(p,q)}(v):=(p,\ddd_{(p,q)}\pi(v))$, with $q\in M$ and $p\in T^*_qM$, where $\pi$ is the projection $T^*M\to M$.}
 In the infinite-dimensional case, more generally, 
we consider subbundles $\Tilde T^*M$ of $T^*M$ with symplectic form given by 
the restriction of the canonical symplectic form.
\end{remark}
\begin{remark}[Polarizations]
A more flexible notion of splitting than that of Remark~\ref{r:splitsymplman}, which is used in the classical setting, is that of polarization, which is used for geometric quantization. Namely, a (real) polarization on a symplectic manifold $M$ is an integrable smooth distribution $D$ (i.e., a subbundle of $TM$) such that $D_x$ is split lagrangian in $T_xM$ for every $x\in M$. 
One also uses the notion of complex polarization, which is an integrable subbundle  $D$ of the complexification $TM\otimes\CC$ of the tangent bundle such that $D_x$ is split lagrangian in $T_xM\otimes\CC$ for every $x\in M$.
\end{remark}

\begin{remark}[Generating functions]
Let $\Tilde T^*M$ be a split symplectic manifold and $\phi$ a section (in particular $\phi$ is a one-form on $M$). Its graph $L$ is then lagrangian, actually embedded split lagrangian, if and only if $\phi$ is closed. If $\phi=\ddd\psi$, then $\psi$ is called a generating function for $L$, which we also denote as $L_\psi$. Note that a function $\psi$ on $M$ defines an embedded split lagrangian submanifold $L_\psi$ of $\Tilde T^*M$ under the condition that $\ddd\psi$ is a section of it.
\end{remark}

\begin{remark}[Symplectic reduction]\label{r:sympred}
Next we consider symplectic reduction of coisotropic submanifolds. If $C$ is coisotropic in $M$, then its
characteristic distribution
$C^\perp:=\cup_{x\in C}(T_xC)^\perp$ is a regular involutive distribution on $C$ by Lemma~\ref{l:chardistr}
 If $C$ is defined by constraints, one can verify that the characteristic distribution is spanned by the hamiltonian vector fields of the constraints (which therefore must be in involution under the Poisson bracket).\footnote{The hamiltonian vector field of a function $H$ is the unique vector field $X_H$ such that $\iota_X\omega+\ddd H=0$. (Note that in the infinite-dimensional case the existence of a hamiltonian vector field is an assumption, but its uniqueness is guaranteed.) The Poisson bracket of two functions $H$ and $G$, with hamiltonian vector fields $X_H$ and $X_G$, is the function 
$\{H,G\}:=X_G(H)=\iota_{X_H}\iota_{X_G}\omega$. Its hamiltonian vector field is $[X_H,X_G]$. 
}
If $M$ is a Banach manifold (e.g., it is finite-dimensional), then the characteristic distribution is also integrable. In the Fr\'echet case, integrability is not guaranteed. Assuming that the characteristic distribution is integrable, we denote by $\uC$ its leaf space and by $\pi\colon C\to \uC$ the canonical projection. If $\uC$ is smooth, than it has a unique symplectic structure $\uomega$ such that
$\pi^*\uomega=\iota_C^*\omega$, with $\iota_C\colon C\to M$ the inclusion map. In particular, for each $\underline x\in \uC$, the symplectic form $\uomega_{\underline x}$ is defined as in Section~\ref{s:symplecticreduction}.
\end{remark}

\begin{remark}[Relative reduction]\label{r:reductionL}
Assume $\uC$ to be smooth.
If $L$ is an isotropic submanifold of $M$ and $\uL:=\pi(L\cap C)$ is a submanifold of $\uC$, then, by Lemma~\ref{l:reductionL}, $\uL$ is an isotropic submanifold of $\uC$. If in addition $M$ is finite-dimensional and $L$ is lagrangian, then $\uL$ is also lagrangian.
\end{remark} 

\begin{remark}[Generalized generating functions]\label{r:gfnonlinear}
Our main example, which extends Remark~\ref{r:gflinear} to the nonlinear case, will be the product of two split symplectic manifolds, $N=\Tilde T^*M\times\Tilde T^*B=\Tilde T^*(M\times B)$, with
$C=\Tilde T^*M\times B$ and hence $\uC=\Tilde T^*M$. Let $\psi$ be a function on $M\times B$ such that the one-form $\ddd\psi$ is a section of $\Tilde T^*(M\times B)$ and let $L_\psi$ the split lagrangian submanifold in $N$ it generates. If $\uL_\psi$ is a split lagrangian submanifold of $\uC=\Tilde T^*M$, then we say that $\psi$ is a generalized generating function for $\uL_\psi$. In the finite-dimensional case, one only has to check that that $\uL_\psi$ is a  submanifold because it is then automatically split lagrangian. Some conditions for this to happen are presented in \cite[Section 4.3]{BW97}. There, more generally, $B$ is just assumed to be the typical fiber of a submersion $\pi\colon E\to M$ and the triple
$(E,\pi,\psi)$, such that $\uL_\psi$ is a submanifold, is called a Morse family for $\uL_\psi$.
Explicitly, if we denote by $q^i$ and $T^\alpha$ the coordinates on $M$ and $B$, respectively, and by $p_i$ and $E_\alpha$ their momenta, the lagrangian submanifold $L_\psi$ is given by the equations $p_i=\frac{\dd\psi}{\dd q^i}$ and $E_\alpha=\frac{\dd\psi}{\dd T^\alpha}$. Intersecting with $C=\Tilde T^*M\times B$ corresponds to imposing
$\frac{\dd\psi}{\dd T^\alpha}=0$. Therefore,  $\uL_\psi$ is given by the equations
\[
p_i=\frac{\dd\psi}{\dd q^i},\qquad \frac{\dd\psi}{\dd T^\alpha}=0.
\]
\end{remark}

\begin{definition}[Canonical relations]
A (split) canonical relation $L$ between two symplectic manifolds $(M_1,\omega_1)$ and $(M_2,\omega_2)$ is a (split) lagrangian submanifold of $\bar M_1\times M_2$ (i.e., $M_1\times M_2$ with symplectic form $\pi_2^*\omega_2-\pi_1^*\omega_1$ with $\pi_i\colon M_1\times M_2\to M_i$ the canonical projections). In particular, the graph of a symplectomorphism from $M_1$ to $M_2$ is an embedded split canonical relation. 
\end{definition}
Here is an example (generalizing Lemma~\ref{ex:LinCC}):
\begin{lemma}\label{ex:LinCCman}
Let $C$ be a coisotropic submanifold of $M$ and assume that its involutive characteristic distribution is actually integrable. We say that two points $(x,x')\in C\times C$ lie on the same leaf of the characteristic distribution
if there is a path in $C$ with endpoints $x$ and $x'$ and tangent vector at each point lying in the characteristic distribution.
Then, if 
\[
L_C:=\{(x,x')\in C\times C\ |\ x,x' \text{ lie on the same leaf of the characteristic distribution}\}
\]
is a submanifold, it
is isotropic in $\bar M\times M$. If $M$ is finite-dimensional, then $L$ is lagrangian, hence a canonical relation from $M$ to itself.
\end{lemma}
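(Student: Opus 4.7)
My plan is to reduce the isotropy check to the infinitesimal version, by describing the tangent space $T_{(x,x')}L_C$ explicitly using a leaf-preserving diffeomorphism $\Phi$ between neighborhoods of $x$ and $x'$ in $C$. Concretely: since $x$ and $x'$ lie on the same leaf of the integrable characteristic distribution, I can find a (locally defined) time-dependent vector field $X_t$ on $C$ taking values in $C^\perp$ whose flow $\Phi := \phi_1^X$ sends a neighborhood of $x$ in $C$ to a neighborhood of $x'$ in $C$ and satisfies $\Phi(x)=x'$. By Lemma~\ref{l:chardistr}(1) each flow $\phi_t^X$ preserves $\omega_C$, hence $\Phi^*\omega_{x'}|_{T_{x'}C} = \omega_x|_{T_xC}$.

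Given a smooth curve $s\mapsto(x(s),x'(s))$ in $L_C$ through $(x,x')$, I can (for $s$ small) connect $x(s)$ to $x'(s)$ by a characteristic path, and by smoothness of the leafwise structure write $x'(s) = \psi_s(\Phi(x(s)))$ for some smooth $s$-family of characteristic isotopies $\psi_s$ of $C$ based at the identity at $s=0$. Differentiating at $s=0$ yields the description
\[
T_{(x,x')}L_C = \bigl\{(v,v')\in T_xC\oplus T_{x'}C : v' - \ddd_x\Phi(v) \in (T_{x'}C)^\perp\bigr\}.
\]
The reverse inclusion is immediate: for any $(v,v')$ of this form one realizes a curve in $L_C$ by following a path $x(s)$ with $\dot x(0)=v$ together with a path of characteristic translations of $\Phi(x(s))$ with initial velocity $v'-\ddd_x\Phi(v)$.

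With this description, isotropy is a short computation. Take $(v_i,v_i') \in T_{(x,x')}L_C$ with $v_i' = \ddd_x\Phi(v_i)+w_i$, $w_i\in(T_{x'}C)^\perp$, for $i=1,2$. Then
\begin{align*}
\omega_{x'}(v_1',v_2') &= \omega_{x'}\bigl(\ddd_x\Phi(v_1),\ddd_x\Phi(v_2)\bigr) \\
 &\quad + \omega_{x'}\bigl(\ddd_x\Phi(v_1),w_2\bigr) + \omega_{x'}\bigl(w_1,\ddd_x\Phi(v_2)\bigr) + \omega_{x'}(w_1,w_2).
\end{align*}
The first term equals $\omega_x(v_1,v_2)$ because $\Phi$ preserves $\omega_C$, while the three remaining terms vanish: $\ddd_x\Phi(v_j)\in T_{x'}C$ (as $\Phi$ maps $C$ to $C$), $w_i\in(T_{x'}C)^\perp\subset T_{x'}C$ (as $C$ is coisotropic), and $(T_{x'}C)^\perp$ pairs trivially with $T_{x'}C$ under $\omega_{x'}$. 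Thus
\[
\omega_{\bar M\times M}\bigl((v_1,v_1'),(v_2,v_2')\bigr) = \omega_{x'}(v_1',v_2') - \omega_x(v_1,v_2) = 0,
\]
so $L_C$ is isotropic.

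In the finite-dimensional case, lagrangianity follows by dimension count. The first projection $L_C\to C$ is a submersion with fiber over $x$ equal to the leaf through $x$, of dimension $\dim(T_xC)^\perp = \dim M - \dim C$ by Lemma~\ref{l:perpsubsp}(4). Hence $\dim L_C = \dim C + (\dim M - \dim C) = \dim M = \tfrac12\dim(\bar M\times M)$, which together with isotropy gives lagrangianity. The main obstacle I anticipate is justifying the description of $T_{(x,x')}L_C$ via the flow $\Phi$; the existence of such a $\Phi$ uses integrability of the characteristic distribution plus the ability to extend a characteristic path in a leaf to a local characteristic flow on $C$, which in the Banach (in particular finite-dimensional) setting follows from Frobenius but in the Fréchet case would require the hypothesis of integrability explicitly built in (as is already assumed in the statement).
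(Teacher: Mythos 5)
Your proof is correct and follows essentially the same strategy as the paper's: build a characteristic diffeomorphism $\Phi$ between neighborhoods of $x$ and $x'$ in $C$ that preserves $\omega_C$, characterize $T_{(x,x')}L_C$ as $\{(v,v'): v'-\ddd_x\Phi(v)\in (T_{x'}C)^\perp\}$, compute the pairing, and finish the finite-dimensional case by a dimension count. The only difference is technical: the paper constructs $\Phi$ by working in a foliation chart (where the distribution is a family of linear subspaces) and composing finitely many translations along a partition of the path, which avoids having to integrate a general time-dependent characteristic vector field --- a point that matters in the Fr\'echet setting and that you correctly flag as the delicate step of your version.
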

\begin{proof}
Let $(x,x')$ be a point in $L_C$. 
By integrability, there is a neighborhood $U$ of $x$ that is mapped to a convex open subspace of a vector space in which the distribution corresponds to a family of linear subspaces. We first consider the case where $x' \in U$.  The images $y$ and $y'$ of $x$ and $x'$ can then be actually connected by a straight line. We consider the constant vector field corresponding to this direction and its flow $\Phi$ at time $1$, a translation mapping a neighborhood of $y$ to a neighborhood of $y'$ satisfying
$\Phi(y)=y'$. Since $\Phi$ is a flow along a vector field in the kernel of the image of the restriction of the symplectic form, it leaves the latter invariant by Lemma~\ref{l:chardistr}.
Let $A$ be the differential of $\Phi$ at $y$ pulled back to $x$. Therefore, $A$ is a linear isomorphism 
$T_xC\to T_{x'}C$ satisfying
\[
\omega_x(v,w)=\omega_{x'}(Av,Aw)\ \forall v,w\in T_xC.
\]
By construction $A(T_xC^\perp)=T_{x'}C^\perp$. Therefore, $A$ induces a symplectomorphism $\underline A$ between $\underline{T_xC}:=T_xC/T_xC^\perp$ and $\underline{T_{x'}C}:=T_{x'}C/T_{x'}C^\perp$. The first consequence of this is that
\[
(v,v')\in T_{(x,x')}L_C \iff v'-Av \in T_{x'}C^\perp \iff [v']=\underline A[v].
\]
The second consequence is that $T_{(x,x')}L_C$ is isotropic. In fact, for $(v,v'),(w,w')\in T_{(x,x')}L$ we get
\[
\omega_{x'}(v',w')=\omega_{x'}(Av,Aw)=\omega_x(v,w).
\]

For a generic point $(x,x')$ in $L_C$, we have by definition a path $\gamma\colon[0,1]\to C$ such that
$\gamma(0)=x$, $\gamma(1)=x'$ and $\ddd_t\gamma\in T_{\gamma(t)}C^\perp$ $\forall t\in[0,1]$. By compactness of $[0,1]$, we may find a finite partition $0=t_0<t_1<t_2<\dots<t_n=1$ such that, for each $i=0,\dots,n-1$,
the points $\gamma(t_i)$ and $\gamma(t_{i+1})$ lie in a neighborhood $U_i$ as in the previous paragraph. Set $x_i:=\gamma(t_i)$ and proceed as above.
We then have that $(v_i,v_{i+1})\in T_{(x_i,x_{i+1})}L_C$ if{f} $[v_{i+1}]=\underline A_i[v_i]$. Therefore,
\[
(v,v')\in T_{(x,x')}L_C \iff [v']= \underline A_{n-1}\dots \underline A_1\underline A_0[v]
\iff v'- A_{n-1}\dots  A_1 A_0v \in T_{x'}C^\perp.
\]
Given $v\in T_xC$, we set $[v_{i+1}]:=\underline A_{i}\dots \underline A_1\underline A_0[v]$ and choose representatives $v_{i+1}\in[v_{i+1}]$. For $(v,v'),(w,w')\in T_{(x,x')}L$, we then have
\[
\omega_{x_{i+1}}(v_{i+1},w_{i+1})-\omega_{x_i}(v_i,w_i)=0\ \forall i =0,\dots,n-1.
\]
Summing over $i$, we get
\[
\omega_{x'}(v',w')-\omega_x(v,w)=0,
\]
which shows that $T_{(x,x')}L_C$ is isotropic.

In the finite-dimensional case, we easily check, as in the proof of Lemma~\ref{ex:LinCC}, that $\dim L=\frac12\dim(\bar M\times M)$.
%
%
\end{proof}

We also have the following straightforward generalization of Lemma~\ref{l:LtoC}.
\begin{lemma}\label{l:LtoCman}
Let $L$ be a lagrangian submanifold of $\bar M\times M$. If $C:=\pi_1(L)$, where $\pi_1$ is the projection to the first factor, is a submanifold of $M$, then $C$ is coisotropic.
\end{lemma}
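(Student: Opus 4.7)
The plan is to reduce the manifold statement to its linear counterpart (Lemma~\ref{l:LtoC}) by working tangentially at each point. Since the relevant orthogonality notions are defined pointwise (Definition~\ref{d:specialsub}), coisotropy of $C$ at $x\in C$ will follow once we exhibit, for some $y\in M$ with $(x,y)\in L$, an inclusion $(T_xC)^\perp\subseteq T_xC$ in $(T_xM,\omega_x)$.

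First I would fix $x\in C$, choose any $y\in M$ such that $(x,y)\in L$ (possible by definition of $C=\pi_1(L)$), and record the trivial but essential inclusion $\ddd_{(x,y)}\pi_1\bigl(T_{(x,y)}L\bigr)\subseteq T_xC$, which is immediate from $\pi_1(L)\subseteq C$. Note that we do \emph{not} need this to be an equality: the linear lemma will still apply in the form we need. Next, given $\Tilde v\in (T_xC)^\perp$, consider the pair $(\Tilde v,0)\in T_xM\oplus T_yM$. For any $(v_1,w_1)\in T_{(x,y)}L$ we have $v_1=\ddd_{(x,y)}\pi_1(v_1,w_1)\in T_xC$ by the inclusion above, whence $\omega_x(\Tilde v,v_1)=0$ by the defining property of $\Tilde v$. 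The pairing of $(\Tilde v,0)$ with $(v_1,w_1)$ under the symplectic form $\pi_2^*\omega-\pi_1^*\omega$ on $\bar M\times M$ equals $-\omega_x(\Tilde v,v_1)+\omega_y(0,w_1)=0$, so $(\Tilde v,0)\in\bigl(T_{(x,y)}L\bigr)^\perp$.

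Since $L$ is lagrangian, $T_{(x,y)}L$ coincides with its symplectic orthogonal, so $(\Tilde v,0)\in T_{(x,y)}L$. Projecting via $\pi_1$, we obtain $\Tilde v=\ddd_{(x,y)}\pi_1(\Tilde v,0)\in T_xC$. Hence $(T_xC)^\perp\subseteq T_xC$ for every $x\in C$, which is coisotropy.

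There is essentially no obstacle here: the proof is a faithful pointwise transcription of Lemma~\ref{l:LtoC}, and the argument works verbatim in the infinite-dimensional case since the linear version does as well. The only mild subtlety worth flagging is that $\ddd_{(x,y)}\pi_1$ need not be surjective onto $T_xC$ (so $T_xC$ may be strictly larger than the image of $T_{(x,y)}L$), but this does not hinder the implication chain: one only uses one direction of the inclusion, namely $\mathrm{image}\subseteq T_xC$, which is automatic.
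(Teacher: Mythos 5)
Your proof is correct and follows essentially the same argument as the paper: pair $(\Tilde v,0)$ against $T_{(x,y)}L$, use lagrangianity to conclude $(\Tilde v,0)\in T_{(x,y)}L$, and project. Your observation that only the inclusion $\ddd_{(x,y)}\pi_1(T_{(x,y)}L)\subseteq T_xC$ is needed (not the equality the paper implicitly asserts) is a welcome refinement, but does not change the substance of the argument.
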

\begin{proof}
By definition we have that $x$ belongs to $C$ if{f} there is a $y\in M$ such that $(x,y)$ belongs to $L$.
Moreover, $v$ belongs to $T_xC$ if{f} there is a $w\in T_yM$ such that $(v,w)$ belongs to $T_{(x,y)}L$. Now let $\Tilde v$ be in $T_xC^\perp$, i.e., $\omega_x(\Tilde v,v)=0$ $\forall v\in T_xC$. This implies
\[
\omega_x(\Tilde v,v)-\omega_y(0,w)=0\ \forall (v,w)\in T_{(x,y)}L.
\]
Therefore, $(\Tilde v,0)\in T_{(x,y)}L^\perp=T_{(x,y)}L$, which implies $\Tilde v\in T_xC$.
\end{proof}

\begin{remark}[Applications in field theory]
The above two lemmata, \ref{ex:LinCCman} and~\ref{l:LtoCman}, are important for the applications in field theory.
In the finite-dimensional case, we have assumed in Section~\ref{s:syssevcon}, see Remark~\ref{r:coisomany}, that
the constraints define a coisotropic submanifold $C$. Then the evolution relation $L$ coincides with $L_C$ and is therefore lagrangian by Lemma~\ref{ex:LinCCman}. In the infinite-dimensional case,
it is instead better to start from the evolution relation $L$, which is automatically isotropic. If we can prove that it is lagrangian (e.g., by the Hodge decomposition theorem), as we can do in all the examples in Section~\ref{s:inftarg}, we then get by Lemma~\ref{l:LtoCman} that $C$ is coisotropic. Finally, we just have to check that $L=L_C$.
\end{remark}

If $M_i=\Tilde T^* N_i$, $i=1,2$, then we may identify $\bar M_1\times M_2$ with $\Tilde T^*(N_1\times N_2)$ (changing the sign of the $\Tilde T^* N_1$ components of the canonical symplectic form). A (split) lagrangian submanifold 
$L$ of $\bar M_1\times M_2$
might have a (generalized) generating function $\psi$ on $ N_1\times N_2$, which is then called a (generalized) generating function for the (split) canonical relation $L$ from $M_1$ to $M_2$.

\begin{remark}[Generating function for a canonical relation]\label{r:gfcanrel}
Explicitly, if we denote by $q^i$ and $Q^i$ the coordinates on $N_1$ and $N_2$, respectively, and by $p_i$ and $P_i$ their momenta, the canonical relation defined by a generating function $\psi$ is given by the equations
\[
P_i=\frac{\dd\psi}{\dd Q^i},\qquad
p_i=-\frac{\dd\psi}{\dd q^i}.
\]
\end{remark}

\begin{example}[Generating function for a change of polarization]\label{r:gfchangepol}
A particular case is when $M_1=M_2$ and $L$ is the graph of the identity map, which is obviously a symplectomorphism, but we choose $N_1$ and $N_2$ to be different. If there is a generating function $\psi$ for $L$, with respect to these splittings, then the coordinates on $\Tilde T^* N_1$ and $\Tilde T^* N_2$ are related by the last equation.
\end{example}

\begin{remark}[Generalized generating function for a canonical relation]\label{r:ggfcanrel}
If we have instead a generalized generating function $\psi$ on $N_1\times N_2\times B$, with coordinates $q^i,Q^i,T^\alpha$, then the corresponding canonical relation is defined by the equations
\[
P_i=\frac{\dd\psi}{\dd Q^i},\qquad
p_i=-\frac{\dd\psi}{\dd q^i},\qquad  \frac{\dd\psi}{\dd T^\alpha}=0.
\]
\end{remark}

\begin{remark}[Composition of canonical relations]\label{r:compcanrel}
If we have a relation $L_1$ from $M_1$ to $M_2$ and a relation $L_2$ from $M_2$ to $M_3$ (i.e., $L_1\subseteq M_1\times M_2$ and $L_2\subseteq M_2\times M_3$), we may set-theoretically compose them to a relation $L_2\circ L_1$ from $M_1$ to $M_3$:
\begin{equation}\label{e:L2circL1}
L_2\circ L_1 := \{(x_1,x_3)\in M_1\times M_3\ |\ \exists x_2\in M_2 : (x_1,x_2)\in L_1 
\, \mr{and} \,
(x_2,x_3)\in L_2\}.
\end{equation}
If $M_1$, $M_2$ and $M_3$ are manifolds and $L_1$ and $L_2$ are submanifolds, there is no guarantee that $L_2\circ L_1$ is also a submanifold. Let us assume this to be the case and also that $L_1$ and $L_2$ are (split) canonical relations---i.e., $M_1$, $M_2$ and $M_3$ are symplectic manifolds and $L_1$ and $L_2$ are (split) lagrangian submanifolds of $\bar M_1\times M_2$ and $\bar M_2\times M_3$, respectively. Then the composition
\eqref{e:L2circL1} may be realized as a particular case of Remark~\ref{r:reductionL}. Namely, we consider the (split) lagrangian submanifold $L:=L_1\times L_2$ and the coisotropic submanifold $C:=M_1\times\Delta_{M_2}\times M_3$ in $\bar M_1\times M_2\times\bar M_2\times M_3$, where $\Delta_{M_2}:=\{(x_2,x_2)\in M_2\times M_2\}$ denotes the diagonal submanifold. We clearly have $\underline C=\bar M_1\times M_3$ and $\underline L=L_2\circ L_1$.
Therefore, in the finite-dimensional case $L$ is lagrangian, whereas in the infinite-dimensional case, at least it is isotropic and one has to check separately if it is also (split) lagrangian.
\end{remark}

\begin{remark}[Composition of generating functions I: mixed polarizations]\label{r:compgenfun}
Let us assume that the composition of the (split) canonical relations $L_1$ and $L_2$ of Remark~\ref{r:compcanrel}
is a (split) canonical relation. Let us also assume that $L_1$ and $L_2$ have (generalized) generating functions
$\psi_1$ and $\psi_2$ in the following way. We assume $M_i=\Tilde T^* N_i$, $i=1,3$, $M_2=U\times U'$
with $U$ and $U'$ open subsets of $V$ and $V'\subseteq V^*$, respectively, and that
$\psi_1\in C^\infty(N_1\times U'\times B)$ and $\psi_2\in C^\infty(U\times N_3\times \Tilde B)$, where $B$ and $\Tilde B$ are parameter spaces.
For notational simplicity, we work as in the finite-dimensional case and avoid writing component indices. We denote
the coordinates as follows: $(p,q)$ for $\Tilde T^* N_1$, $(\Tilde p,\Tilde q)$ for $\Tilde T^* N_3$, 
$T$ for $B$, and $\tilde T$ for $\Tilde B$. For the first copy of $M_2$ we denote the coordinates in $U\times U'$ by
$(Q,P)$ and for the second copy by $(\Hat Q,\Hat P)$. Then $\psi_1$ is a function of $q,P,T$ and $L_1$ is given by the equations
\[
Q=-\frac{\dd\psi_1}{\dd P},\quad p=-\frac{\dd\psi_1}{\dd q},\quad \frac{\dd\psi_1}{\dd T}=0,
\]
whereas $\psi_2$ is a function of $\Hat Q,\Tilde q, \Tilde T$ and $L_2$ is given by the equations
\[
\Tilde p=\frac{\dd\psi_2}{\dd \Tilde q}, \quad \Hat P=-\frac{\dd\psi_2}{\dd \Hat Q},\quad \frac{\dd\psi_2}{\dd\Tilde T}=0.
\]
Since $C$ is defined by the equations $\Hat P=P$ and $\Hat Q= Q$,
we see that $L$ has generalized generating function, on $N_1\times N_3\times B\times \Tilde B\times U'\times U$,
\[
\psi_3(q,\Tilde q; T,\tilde T, P,Q) = \psi_1(q,P;T)+\psi_2(Q,\Tilde q; \Tilde T)+PQ,
\]
where the last term is defined via the pairing. Note that this is a generalized generating function with parameter space
$B\times\Tilde B\times M_2$. The conditions with respect to $M_2$, given by
\[
\frac{\dd\psi_1}{\dd P}+Q=0\quad\text{and}\quad \frac{\dd\psi_2}{\dd Q}+P=0,
\]
are precisely the first condition for $L_1$ and the second condition for $L_2$. If we can solve these equations with respect to $(P,Q)$, we also get a (generalized) generating function $\underline\psi_3\in C^\infty(N_1\times N_3\times B\times \Tilde B)$, which is the critical value of $\psi_3$ at this point. Note that if $\psi_1$ and $\psi_2$ are not generalized (i.e., $B$ and $\Tilde B$ are one-point sets), then $\underline\psi_3$ is also not generalized.
\end{remark}

\begin{remark}[Composition of generating functions II: same polarization]\label{r:compgenfunII}
In the setting of the preceding Remark~\ref{r:compgenfun}, we now assume that
$\psi_1\in C^\infty(N_1\times U\times B)$ and $\psi_2\in C^\infty(U\times N_3\times \Tilde B)$.
Now $\psi_1$ is a function of $q,Q,T$ and $L_1$ is given by the equations
\[
P=\frac{\dd\psi_1}{\dd Q},\quad p=-\frac{\dd\psi_1}{\dd q},\quad \frac{\dd\psi_1}{\dd T}=0,
\]
whereas, as before, $\psi_2$ is a function of $\Hat Q,\Tilde q, \Tilde T$ and $L_2$ is given by the equations
\[
\Tilde p=\frac{\dd\psi_2}{\dd \Tilde q}, \quad \Hat P=-\frac{\dd\psi_2}{\dd \Hat Q},\quad \frac{\dd\psi_2}{\dd\Tilde T}=0.
\]
Setting $\Hat P=P$ and $\Hat Q= Q$, yields
\[
\frac{\dd\psi_1}{\dd Q}+\frac{\dd\psi_2}{\dd Q}=0.
\]
Therefore, the composition $L$ has generalized generating function, on $N_1\times N_3\times B\times \Tilde B\times U$,
\[
\psi_3(q,\Tilde q; T,\tilde T, Q) = \psi_1(q,Q;T)+\psi_2(Q,\Tilde q; \Tilde T).
\]
Note that now the parameter space is $B\times \Tilde B\times U$. Also in this case, it might be possible to solve the critical equation in $Q$ and get a (generalized) generating function $\underline\psi_3\in C^\infty(N_1\times N_3\times B\times \Tilde B)$, which is not generalized if $\psi_1$ and $\psi_2$ are not.
\end{remark}

\begin{remark}[Evolution relations and generalized generating functions in field theory]\label{r:evorel}
Let us conclude with the application to field theory \cite{KT,CMRcq}. Under some regularity assumptions, a lagrangian field theory in $d+1$ dimensions produces a (usually infinite-dimensional) symplectic manifold $\mathcal{F}^\partial_\Sigma$
 of boundary fields on every closed $d$-manifold $\Sigma$ (roughly speaking, this is the space of fields and some of their normal jets on which the boundary one-form obtained after integration by parts from the variation of the action functional is the potential for a symplectic form). The field theory on $I\times\Sigma$, where $I$ is an interval, produces a subset $L$, which we call the evolution relation, of
 $\mathcal{F}^\partial_\Sigma\times\mathcal{F}^\partial_\Sigma$ given by the boundary data corresponding to possible solutions to the EL equations.  
If $L$ is a submanifold, it is then automatically isotropic in $\Bar{\mathcal{F}^\partial_\Sigma}\times\mathcal{F}^\partial_\Sigma$.  In a good field theory, we require $L$ to be split lagrangian, so a canonical relation from $\mathcal{F}^\partial_\Sigma$
to itself. (In the case of a field theory with nondegenerate lagrangian, $L$ is the graph of the associated hamiltonian flow, so it is an embedded split canonical relation.)
In this case, there might be a (generalized) generating function for $L$ with respect to a split symplectic structure on $\mathcal{F}^\partial_\Sigma$. Throughout this paper we have constructed (generalized) generating functions for several examples from field theory. The discussion of the first part of this paper
shows how to construct a generalized generating function for the evolution relation in terms of the HJ action. Note that in the infinite-dimensional case
one has to show separately (e.g., by Hodge decomposition theorem) that it is actually split lagrangian.
 \end{remark}
 
\begin{remark}[Gauge-fixing fermion]
Another application of the formalism of (generalized) generating functions occurs in the gauge-fixing procedure in the BRST or BV formalism, where the symplectic manifold is actually a supermanifold and the symplectic form is odd with respect to the internal grading. In this case a generating function $\psi$ for a gauge-fixing lagrangian $L_\psi$ must be odd, and for this reason it is usually called a gauge-fixing fermion. Usually, there is no generating function on the
base $\mathcal{F}$ of the BV space of fields $\Pi\Tilde T^*\mathcal{F}$, so one really has to resort to a generalized generating function on $\mathcal{F}\times B$, where $B$ is a space of parameters. For example, in abelian gauge theories (on a trivial principal bundle) on a manifold $M$, one has $\mathcal{F}=\Omega^1(M)\times\Pi\Omega^0(M)\ni (A,c)$, where $A$ is the connection and $c$ is the ghost. We denote the corresponding momenta, with opposite parity, by $A^+$ and $c^+$.
To impose the Lorenz gauge fixing $\ddd^*A=0$ (working in the complement of cohomology) actually means defining $L$ as the conormal bundle of the submanifold of $\mathcal{F}$ defined by this equation, which clearly does not have a generating function. 
One therefore introduces extra parameters $\bar c\in B=\Pi\Omega^\text{top}(M)$ and the generalized generating function (the gauge-fixing fermion) $\psi=\int_M \bar c\, \ddd^*A$. The corresponding gauge-fixing lagrangian $L=\uL_\psi$ 
is then correctly given by the equations
\[
A^+=\ddd^*\bar c,\quad c^+=0,\quad \ddd^*A=0.
\]
In field theory, one prefers to introduce the last condition (namely, $\bar c^+=\frac{\dd\psi}{\dd\bar c}=0$) in terms of a Lagrange multiplier $\lambda\in B'=\Omega^0(M)$,
\[
\delta(\bar c^+)=\int D\lambda \EE^{\frac\II\hbar\int_M\lambda \bar c^+},
\]
and then to regard $\int_M\lambda \bar c^+$ as a new term to be added to the action. The extended BV space of fields eventually is $\Pi\Tilde T^*(\mathcal{F}\times B\times B')$. (For a slightly more general presentation of this, see \cite[Sect.\ 3.2.1]{C}.)
\end{remark}

\subsection{Generalized Hamilton--Jacobi actions with infinite-dimensional targets}\label{s:HJinf}
We now give more details on the constructions of Section~
\ref{s:inftarg}, in the case of an infinite-dimensional target. This is based on results of \cite{CMR14} and \cite{Contr13}.

The first thing we have to generalize is the target $T^*\RR^n$ and the term $p_i\ddd q^i$ in the action. We focus on the case when the target is linear (as this is the case for all the examples in this paper) and we take it to be a split symplectic space $V=W\oplus W'$ as in Remark~\ref{r:splitss}. More concretely, we assume we have it realized with $W'\subseteq W^*$ and canonical symplectic form $\omega$ induced from the pairing as in \eqref{e:omegacan}. We may also write $\omega$ as the differential of the $1$-form $\alpha$ on $V$ defined by
\[
(\alpha(v\oplus a),w\oplus b):= a(w).
\]
The map $(p,q)\colon[t_\ii,t_\oo]\to T^*\RR^n$ is now generalized to a map $\phi\colon [t_\ii,t_\oo]\to V$ and
the term in the action $\int_{t_\ii}^{t_\oo}p_i\ddd q^i$ is generalized to $\int_{t_\ii}^{t_\oo}\phi^*\alpha$. 

Next we have to generalize the Lagrange multipliers $e^\alpha$, the contraints $H_\alpha$ and the second term in the action: $\int_{t_\ii}^{t_\oo} e^\alpha H_\alpha$. For this we take a (possibly infinite-dimensional) vector space $U$ and a subspace $U'$ of its dual $U^*$ such that the canonical pairing $\langle\ ,\ \rangle$ of $U'$ with $U$ is still weakly nondegenerate. We assume we are given a (possibly nonlinear) map $H\colon V\to U'$ and introduce $e$ as a $1$-form on $[t_\ii,t_\oo]$ taking values in $U$. We can then write the action as
\[
S[\phi,e]=\int_{t_\ii}^{t_\oo} (\phi^*\alpha-\langle e,H\circ\phi\rangle).
\]
We denote by $\FF$ the space of fields $(\phi,e)$ as an infinite-dimensional Fr\'echet manifold and assume $S$ to be a smooth function on $\FF$. Integrating by parts, we may express its differential as 
\begin{equation}\label{e:deltaSinf}
\delta S = \EL + \pi_\oo^*\alpha - \pi_\ii^*\alpha,
\end{equation}
where $\EL$ is again expressed as an integral over $[t_\ii,t_\oo]$ and $\pi_\oo,\pi_\ii\colon\FF\to V$ are the surjective submersions $\pi_\oo\colon (\phi,e)\mapsto\phi(t_\oo)$ and $\pi_\ii\colon (\phi,e)\mapsto\phi(t_\ii)$.

To define the EL equations we have to assume that, for every $x\in V$, $\ddd_xH\in V^*\otimes U'$ lies in the image of $\omega^\sharp\otimes\Id\colon V\otimes U'\to V^*\otimes U'$. We then have a unique $X_x\in V\otimes U'$ such that $\omega^\sharp(X_x)=\ddd_xH$. The EL equations then read
\[
\ddd\phi = \langle e,X\rangle,\quad H\circ\phi=0.
\]
Again we call the first equation the evolution equation and the second the constraint. 

\begin{remark}
In the case of field theory, the existence of the family $X$ of hamiltonian vector fields is ensured by locality. Namely, $V$ and $U$ are (modeled on) spaces of sections of fiber bundles 
over a closed manifold $\Sigma$. The fields $(\phi,e)$ are then sections of fiber bundles over $[t_\ii,t_\oo]\times\Sigma$. Finally, $H$ and $\alpha$ are assumed to be local (over $\Sigma$).
\end{remark}

We can now define the evolution relation $L$ as the set of pairs of points in $V$ that can be connected by solutions of the EL equations.  We will assume $L$ to be a (possibly immersed) lagrangian submanifold.\footnote{For the boundary value problem to be well-defined, one should actually ask for $L$ to be split lagrangian, but this is not needed in the present section.} Note that $L$ does not depend on the choice of splitting of $V$, since the EL equations are defined only in terms of $H$ and the symplectic form.
\begin{remark}
That $L$ is isotropic is guaranteed, under some regularity assumptions. Indeed, we have $L=\pi_\ii\times\pi_\oo(\mathcal{EL})$,
where $\mathcal{EL}\subset\FF$ is the set of solutions to the EL equations. If we assume that $\mathcal{EL}$ is a (possibly immersed) submanifold, 
then \eqref{e:deltaSinf} implies that $L$ is isotropic. In fact, with $\pi:=\pi_\ii\times\pi_\oo$ and $\alpha_{\Bar V\oplus V}:=(-\alpha,\alpha)$, we have
\[
\iota_\mathcal{EL}^*\delta S=\pi^*\iota_L^*(\alpha_{\Bar V\oplus V}),
\]
with $\iota$s denoting the inclusion maps.
This implies $\pi^*\iota_L^*(\delta \alpha_{\Bar V\oplus V})=0$ and, therefore, $\iota_L^*(\delta\alpha_{\Bar V\oplus V})=0$ because $\pi$ is a submersion.
\end{remark}

Let $\pi_i\colon V\oplus V\to V$, $i=1,2$, denote the two projections and set $C_i=\pi_i(L)$. It then follows that $C_1=C_2=C:=H^{-1}(0)$. In fact, since the constraints have to be satisfied everywhere, they are satisfied in particular at the endpoints, so $C_i\subseteq C$. On the other hand, for $e=0$, we have the constant solutions $(c,c)$ with $c\in C$, so $C\subseteq C_i$. Since, by definition, $L\subset C_1\times C_2$, we then have 
$L\subset C\times C$.\footnote{Note that, as a relation on $V$, $L$ is symmetric and transitive, but in general not reflexive, and that $C$ is the largest subset of $V$ on which $L$ restricts as an equivalence relation.}
Next we assume that $C$ is a (possibly immersed) submanifold of $V$. 
\begin{lemma}
$C$ is coisotropic in $V$.
\end{lemma}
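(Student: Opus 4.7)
The strategy is to derive this as an immediate corollary of Lemma~\ref{l:LtoCman}. In the paragraph preceding the statement we have just shown that $C = \pi_1(L) = \pi_2(L)$, and the standing assumptions of that paragraph are that $L$ is a (possibly immersed) lagrangian submanifold of $\bar V \oplus V$ and that $C$ is a (possibly immersed) submanifold of $V$. These are exactly the hypotheses of Lemma~\ref{l:LtoCman}, so coisotropy of $C$ should follow at once.

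What I would then verify is that the proof of Lemma~\ref{l:LtoCman}, as written in Appendix~\ref{a:generatingfunctions} with a finite-dimensional picture in mind, carries over verbatim to the weakly symplectic, possibly infinite-dimensional setting we are in here. The argument is entirely tangent-space based: given $x \in C$ and $\tilde v \in (T_x C)^\perp$, choose $y \in V$ with $(x,y) \in L$; then for each $v \in T_x C$ there is $w \in T_y V$ with $(v,w) \in T_{(x,y)} L$, and
\[
\omega_{\bar V \oplus V}\bigl((\tilde v, 0),(v,w)\bigr) \;=\; -\omega_x(\tilde v, v) \;=\; 0,
\]
so $(\tilde v, 0) \in T_{(x,y)} L^\perp = T_{(x,y)} L$; projecting to the first factor gives $\tilde v \in T_x C$. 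The only potentially dimension-dependent step is the identification $L^\perp = L$, but this is by definition what it means for $L$ to be lagrangian in the weakly symplectic sense used throughout this appendix.

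The genuine subtlety — and really the only step that needs checking in the Fréchet setting — is the assertion that every $v \in T_x C$ lifts to a tangent vector $(v,w) \in T_{(x,y)} L$, i.e., that $\ddd\pi_1$ restricted to $T_{(x,y)} L$ surjects onto $T_x C$. The inclusion $\ddd\pi_1(T_{(x,y)} L) \subseteq T_x C$ is automatic from $\pi_1(L) = C$, but the reverse inclusion is a regularity condition on the projection. In the examples of Section~\ref{s:inftarg} this surjectivity comes from the same Hodge-decomposition arguments that make $L$ split lagrangian in the first place, so it is available whenever the setup of this subsection applies; alternatively, one can add this tangent-level surjectivity to the standing hypotheses of the lemma and observe that it is met in all cases of interest.
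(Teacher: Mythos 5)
Your proposal is correct and takes exactly the paper's route: the paper's own proof of this lemma is the argument of Lemma~\ref{l:LtoCman} rerun verbatim at a point $\Hat u\in\pi_1^{-1}(u)\subset L$, using only that $T_{\Hat u}L$ is lagrangian in the weak sense $(T_{\Hat u}L)^\perp=T_{\Hat u}L$; so invoking that lemma directly, as you do, is the same proof.

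One correction to your analysis, though: the surjectivity of $\ddd\pi_1|_{T_{(x,y)}L}\to T_xC$ that you single out as ``the genuine subtlety'' is never used. To conclude $(\Tilde v,0)\in(T_{(x,y)}L)^\perp$ you must pair $(\Tilde v,0)$ against \emph{every} element of $T_{(x,y)}L$, and what makes each such pairing vanish is that every $(v,w)\in T_{(x,y)}L$ has $v\in T_xC$ --- i.e., precisely the inclusion $\ddd\pi_1(T_{(x,y)}L)\subseteq T_xC$ that you yourself call automatic. Your phrasing ``for each $v\in T_xC$ there is $w$ with $(v,w)\in T_{(x,y)}L$'' quantifies in the wrong order and is what makes the reverse inclusion look necessary. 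The final step likewise needs only the forward inclusion: $(\Tilde v,0)\in T_{(x,y)}L$ gives $\Tilde v=\ddd\pi_1(\Tilde v,0)\in T_xC$. So no tangent-level surjectivity (and no appeal to Hodge decomposition) has to be added to the hypotheses; this is consistent with the paper stating and using Lemma~\ref{l:LtoCman} in the infinite-dimensional setting with no such condition.
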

\begin{proof}
Fix $u\in C$ and $\Hat u\in \pi_1^{-1}(u)\subset L$. We have assumed that $T_{\Hat u}L$ is lagrangian, and
we want to prove that $(T_uC)^\perp\subseteq T_uC$. We have
\[
T_uC = \{v\in V\ |\ \exists w\in V : (v,w)\in T_{\Hat u}L\}.
\]
By definition, if $\Tilde v\in V$ belongs to $(T_uC)^\perp$, then $\omega(\Tilde v,v)=0$ for every $v\in T_uC$, i.e.,
$\omega(\Tilde v,v)=0$ for every $(v,w)\in T_{\Hat u}L$.  This implies $(\Tilde v,0)\in(T_{\Hat u}L)^\perp$. Since $L$ is lagrangian, we get $(\Tilde v,0)\in  T_{\Hat u}L$, so $\Tilde v\in T_uC$.
\end{proof}
Observe that, for every $t\in [t_\ii,t_\oo]$, the right hand side of the evolution equation lies in $(T_{\phi(t)}C)^\perp$. Since $C$ is coisotropic, this is contained in $T_{\phi(t)}C$. As a consequence, a solution of the evolution equation starting at some point of $C$ will never leave $C$, i.e., if the constraints are imposed at the initial (or final) endpoint, they will be satisfied at every time. (In the finite-dimensional case, this statement followed from \eqref{e:dddHalpha}.)

\begin{remark}
The above assumptions that $V$ is split and $L$ is lagrangian are satisfied in the case of Chern--Simons theory thanks to the Hodge decomposition of differential forms (actually one even proves that $L$ is split lagrangian). One can also easily check that $L$ and $C$ are submanifolds.
\end{remark}

We now move to the HJ action. For this we might want to choose a different splitting $V=Z\oplus Z'$ at the final endpoint. The HJ action as a generalized generating function for $L$ will depend on the choices of splitting, even though $L$ does not. 
We denote by $\beta$ the $1$-form corresponding to this splitting. 
We assume that we have a generating function $f$ on $W\oplus Z$ for the change of polarization, as in Example~\ref{r:gfchangepol}. In particular, $\alpha=\beta+\delta f$. 
If we define
\[
S^f:=S-\pi_\oo^*f,
\]
we get
\[
\delta S^f = \EL + \pi_\oo^*\beta - \pi_\ii^*\alpha.
\]
Finally we define $S^f_\text{HJ}$, as a function on $W\times Z\times\Omega^1([t_\ii,t_\oo],U)$, 
as $S^f$
evaluated on the assumedly unique solution $\bar\phi$ of the evolution equation with given endpoint conditions for $\phi$ in $W$ and $Z$. We then get
\[
\delta S^f_\text{HJ} = -\int_{t_\ii}^{t_\oo} \langle \delta e,H\circ\bar\phi\rangle + \pi_\oo^*\beta - \pi_\ii^*\alpha,
\]
which shows that $S^f_\text{HJ}$ is a generalized generating function for $L$. In general, it may be difficult to characterize the variations of $e$ that leave $S^f_\text{HJ}$ invariant, since we cannot use \eqref{e:dddHalpha}.
This may be however done explicitly, e.g., when $U$ is a Lie algebra and $H\colon V\to U'$ is an equivariant momentum map.  In this case, which is enough to treat Chern--Simons theory, one easily sees that
$\delta e = \ddd\gamma+[e,\gamma]$, with $\gamma$ a map $[t_\ii,t_\oo]\to U$ that vanishes at the endpoints, leaves 
$S^f_\text{HJ}$ invariant.


\section{The modified differential quantum master equation}\label{s:mQME}
Following \cite{CMremCS} and \cite[Section 4.2]{CMR15}, we explain how the mQME and the (appropriate notion of) invariance under deformations of gauge fixing arise from the perturbative treatment of the theory. We will be rather sketchy and rely on some knowledge of such references.

\subsection{Assumptions}
We focus on the case where the constraints are in involution with structure constants; i.e., the constraints $H_\alpha$ are the components of an equivariant momentum map as in \eqref{e:involutivity}:
\begin{equation}\label{e:involutivityagain}
\{H_\alpha,H_\beta\}=f_{\alpha\beta}^\gamma H_\gamma,\quad\forall\alpha,\beta.
\end{equation}
Moreover, if we change the polarization via a generating function $f(q,Q)$, we define the transformed constraints $\Tilde H_\alpha$ as in \eqref{e:Htilde}, so we have the relations
\begin{equation}\label{e:Htilderel}
H_\alpha\left(\frac{\dd f}{\dd q},q\right)=
\Tilde H_\alpha\left(-\frac{\dd f}{\dd Q},Q\right),\quad\forall\alpha.
\end{equation}
For example, if we just interchange the $p$ and $q$ variables---$Q=p$, $P=-q$, $f=qQ$---then we have the definition
$\Tilde H_\alpha(P,Q):=H_\alpha(Q,-P)$ and the relation $H_\alpha(Q,q)=\Tilde H_\alpha(-q,Q)$.

As already mentioned several times in the paper, our result is based on the following
\begin{assumption}\label{a:assu}
We assume that
the constraints $H_\alpha$ are quantized to the components of a quantum equivariant momentum map as in \eqref{e:HHhat}, i.e,
\begin{equation}\label{e:HHhatagain}
\left[\widehat{H}_\alpha,\widehat{H}_\beta\right] = \II\hbar f^\gamma_{\alpha\beta}\widehat{H}_\gamma,
\end{equation}
with
\[
\widehat{H}_\alpha=H_\alpha\left(\II\hbar\frac\dd{\dd q},q\right),
\]
and that they are compatible, at the quantum level, with the generating function $f(q,Q)$ of the change of polarization
as in \eqref{e:quantumHtilderel-}, i.e,
\begin{equation}\label{e:quantumHtilderel}
H_\alpha\left(\II\hbar\frac\dd{\dd q},q\right)\EE^{-\frac\II\hbar f}
=
\Tilde H_\alpha\left(-\II\hbar\frac\dd{\dd Q},Q\right)\EE^{-\frac\II\hbar f},\quad\forall\alpha.
\end{equation}
\end{assumption}

\begin{remark} Observe the following:
\begin{enumerate}
\item In the above formulae, the standard ordering is assumed, i.e., all the derivatives are placed to the right.
\item The assumptions \eqref{e:HHhatagain} and \eqref{e:quantumHtilderel} can equivalently be written using star products.
\item The classical relations \eqref{e:involutivityagain} automatically imply the quantum relations \eqref{e:HHhatagain} if the constraints $H_\alpha$ are linear or biaffine.
\item The classical relations \eqref{e:Htilderel} automatically imply the quantum relations \eqref{e:quantumHtilderel} if
\begin{enumerate}
\item the endpoint polarizations are linear (in particular, $f$ is bilinear in $(q,Q)$), or
\item the constraints $H_\alpha(p,q)$ are linear in $p$ and the transformed constraints $\Tilde H_\alpha(P,Q)$ are linear in $P$.
\end{enumerate}
\item If the assumptions \eqref{e:HHhatagain} and \eqref{e:quantumHtilderel} are not satisfied on the nose, one may try to deform the $H_\alpha$s and the $\Tilde H_\alpha$s with $\hbar$-corrections in order to impose the relations, but there is no guarantee that this is possible.
\item If the change of polarization is not linear, $\EE^{-\frac\II\hbar f}$ might not be the integral kernel of a unitary operator. One might try to deform $f$, if possible, with $\hbar$-corrections in order to make the operator unitary. One should then try to find
deformations of the $H_\alpha$s and the $\Tilde H_\alpha$s so as to achieve the relations \eqref{e:HHhatagain} and \eqref{e:quantumHtilderel}. Note that the classical condition of point (4b) that the nondeformed $H_\alpha$s and  $\Tilde H_\alpha$s are linear in the respective momenta is no longer enough for the relation \eqref{e:quantumHtilderel} to be satisfied on the nose, since the $\Tilde H_\alpha$ are defined using the nondeformed $f$.
\end{enumerate}
\end{remark}

\subsection{The propagator and the mdQME}
The starting point for our discussion is the propagator $\eta$, which is an inverse of (more precisely, a parametrix for) the de~Rham differential $\ddd$, compatible with the endpoint conditions. 

For example, 
in the quantization in Case~II, the propagator is $\eta(s,t)=\theta(s-t)$---the Heaviside step function---for all fields, see
\eqref{e:propII}. In Case~I, because of the different endpoint conditions (and the choice of $e$), the propagator in the ghost sector is $\eta(s,t)=\theta(s-t)-\phi(s)$, where $\phi$ is a chosen function with the property $\phi(t_\oo)-\phi(t_\ii)=1$.

In general, the propagator $\eta$ is a smooth function on the compactified configuration space, see \cite{FM,AS}, of two points on the interval $I=[t_\ii,t_\oo]$, i.e.,
\[
C_2(I)=\{(s,t)\in I\times I\ |\ s\le t\}\sqcup\{(s,t)\in I\times I\ |\ s\ge t\},
\]
that satisfies the endpoint conditions and the limit $\eta|_{s\to t^+}-\eta|_{s\to t^-}=1$ and such that $\ddd\eta=\chi$, where $\chi$ is a smooth $1$-form on $I\times I$ containing information on the cohomology and the endpoint conditions.

The endpoint conditions are fixed throughout. However, the choices of $\chi$ and $\eta$ are not unique and correspond to different gauge fixings and choices of residual fields. We can actually change $\eta$ by adding to it a smooth function $g$ on $I\times I$ that satisfies the endpoint conditions and simultaneously change $\chi$ by adding $\ddd g$ to it. 
It is convenient to
interpolate between the two propagators
using a parameter $u\in J:=[0,1]$ defining
$\eta_u:=\eta +ug$ and $\chi_u=\chi+u\ddd g$. It is even better to define $\Check\eta\in C^\infty(C_2(I)\times J)$ as
$\Check\eta(s,t,u)=\eta_u(s,t)$ and $\Check\chi\in\Omega^1(C_2(I)\times J)$ via\footnote{In our theory we use two propagators---the propagator in the physical sector $(p,q,p^+,q^+)$ and the propagator in the ghost sector $(e^+,c,c^+,e)$---and we may want to take different choices for them. This means that we will have a pair $(\Check\eta,\Check\chi)$ for the physical sector and a possibly different one for the ghost sector.}
\begin{equation}\label{e:detacheck}
\ddd\Check\eta=\Check\chi,
\end{equation}
where now $\ddd$ denotes the total differential on $\Omega^\bullet(C_2(I)\times J)$.

As the choice of $\chi_u$ is related to the choice of the residual fields, we have to deform them accordingly. We next compute the partition ``form'' for these choices and we denote it as $\Check Z\in\Omega^\bullet(J)$. Note that the zero-form component is the usual partition function, depending on the parameter $u$. 

We claim that \eqref{e:detacheck} and Assumption~\ref{a:assu} imply that the partition form satisfies the differential mQME (mdQME)
\[
(\Omega+\II\hbar\ddd_J+\hbar^2\Delta)\Check Z=0,
\]
where $\ddd_J=\ddd u\frac\dd{\dd u}$ is the de~Rham differential on $J$. The zero-form part 
of the mdQME is the mQME for the partition function (for every value of $u$), whereas the one-form part implies that the partition function changes by an $(\Omega+\hbar^2\Delta)$-exact term under a deformation of the data. These are the two statements we wanted to prove. Therefore, we are just left with showing that the mdQME holds.

\subsection{Proof of the mdQME}
To prove this we simply apply $\II\hbar\ddd_J$ to $\Check Z$ and observe that, for every integral in the Feynman diagram expansion, we can use the generalized Stokes's formula
\[
\ddd_J\int\ = \pm \int\ddd\ \ \pm\int_\dd\ ,
\]
where the signs depend on dimensions (number of vertices) and on the choices of orientation. 

The first contribution on the right hand side replaces the propagators, one by one, by $\Hat\chi$. The sum of all these diagrams 
corresponds to the application of $\II\hbar\Delta$.

The second contribution is related to $\Omega$, as we are going to see. First observe that we are considering integrals over boundaries of compactified configuration spaces $C_n(I)$
of $n$ points on the interval $I=[t_\ii,t_\oo]$. Such boundaries are of two types.
\begin{description}
\item[Internal boundaries] In this case $k>1$ points collapse together in the interior of $I$. Such a boundary component fibers over $C_{n-k+1}(I)$ with a $(k-2)$-dimensional fiber.
\item[Endpoint boundaries] In this case $k\ge1$ points collapse together at one endpoint of $I$.
Such a boundary component fibers over $C_{n-k}(I)$ with a $(k-1)$-dimensional fiber.
\end{description}

The crucial point is that the residual fields and the forms $\Hat\chi$, which have no discontinuities, are basic in the above fibrations. Along the fibers we only have to integrate the product of the propagators $\Hat\eta$ whose both vertices are in the fiber. As the propagators are zero-forms, we may get a nonzero integral only if
\begin{enumerate}
\item we are at an internal boundary component with $k=2$, or
\item we are at an endpoint boundary component with $k=1$.
\end{enumerate}

In the first case, we have three kinds of collapsing diagrams (we call physical vertices those containing the constraint hamiltonians---and their derivatives---and ghost vertices those containing the structure constants):
\begin{enumerate}[i.]
\item Two physical vertices collapse together: these diagrams---which may have any positive number of physical propagators---produce a contribution of the form $[H_\alpha,\stackrel{\star}, H_\beta]$.
\item One physical vertex collapses with a ghost vertex: these diagrams produce a contribution of the form $f_{\alpha\beta}^\gamma H_\gamma$.
\item Two ghost vertices collapse together: these diagrams sum up to zero thanks to the Jacobi identity.
\end{enumerate}
The contributions in i.\ and ii.\ cancel each other thanks to \eqref{e:HHhatagain} in Assumption~\ref{a:assu}.

Therefore, we are only left with the case of a single bulk vertex collapsing at one endpoint. We consider the following cases:
\begin{enumerate}[(a)]
\item Consider first the case of a ghost vertex collapsing, say, at the initial endpoint. It is a very simple situatation, as the fiber contains one or two ghost propagators, connecting the vertex to initial endpoint insertions. This corresponds to the application of 
$f_{\alpha\beta}^\gamma\, c^\alpha_\ii c^\beta_\ii \frac\dd{\dd c^\gamma_\ii}$ or
$f_{\alpha\beta}^\gamma\, b_\gamma^\ii\frac{\dd^2}{\dd b_\alpha^\ii\dd b_\beta^\ii}$, depending on the choice of endpoint condition. The case of the final endpoint is similar.
\item Next consider the case of a physical vertex collapsing at the initial endpoint, where, by convention, we always choose a linear polarization. Depending on the ghost initial endpoint conditions, the ghost propagator stemming from the vertex
is either connected to an initial endpoint 
insertion---corresponding to the insertion of a $c_\ii$---or leaves the collapsing diagram---corresponding to the application of a derivative with respect to a $b^\ii$. Besides, there is an arbitrary number of physical propagators stemming from the collapsing vertex, some connecting it to an initial endpoint insertion and some leaving the collapsing diagram. 
Depending on the physical initial endpoint conditions, {summing over all such diagrams} corresponds to the application of 
$H_\alpha\left(\II\hbar\frac\dd{\dd q_\ii},q_\ii\right)$ or $H_\alpha\left(p^\ii,-\II\hbar\frac\dd{\dd p^\ii}\right)$.
This together with the result of (a) for the initial endpoint corresponds to the application of $\Omega_\ii$.
\item Finally consider the case of a physical vertex collapsing at the final endpoint. If the polarization is linear, this is very similar to case (b) and,  together with the result of (a) for the final endpoint, corresponds to the application of $\Omega_\oo$. 
In general, we have instead several propagators connecting the collapsing vertex to several $f$ insertions.
This sums up to a final endpoint insertion of $H_\alpha\left(\II\hbar\frac\dd{\dd q},q\right)\EE^{-\frac\II\hbar f}|_{q=\Hat q(t_\oo),Q=Q_\oo}$.\footnote{If the constraint hamiltonians $H_\alpha(p,q)$ are linear in $p$,
we have at most one physical propagator connecting the collapsing vertex to a final endpoint insertion of $f$. 
This then simply sums up to a final endpoint insertion of $H_\alpha(p^0,\Hat q)$, with $p^0=\frac{\dd f}{\dd q}(\Hat q(t_\oo),Q_\oo)$ as in \eqref{e:pzero}, times the insertion of $\EE^{-\frac\II\hbar f(\Hat q(t_\oo),Q_\oo)}$.}
By the assumed relation \eqref{e:quantumHtilderel}, we have\footnote{If the constraint hamiltonians are linear in the respetive momenta, this is simply
\[
\begin{split}
\Tilde H_\alpha\left(-\II\hbar\frac\dd{\dd Q_\oo},Q_\oo\right) \EE^{-\frac\II\hbar f(\Hat q(t_\oo),Q_\oo)}&=
\Tilde H_\alpha\left(P(p^0,\Hat q(t_\oo)),Q_\oo\right)  \EE^{-\frac\II\hbar f(\Hat q(t_\oo),Q_\oo)}\\
&=
H_\alpha(p^0,\Hat q(t_\oo))\,  \EE^{-\frac\II\hbar f(\Hat q(t_\oo),Q_\oo)},
\end{split}
\]
where we have used the classical relation \eqref{e:Htilderel} in the second equality.}
\[
\Tilde H_\alpha\left(-\II\hbar\frac\dd{\dd Q_\oo},Q_\oo\right) \EE^{-\frac\II\hbar f(\Hat q(t_\oo),Q_\oo)}=
H_\alpha\left(\II\hbar\frac\dd{\dd q},q\right)\EE^{-\frac\II\hbar f}|_{q=\Hat q(t_\oo),Q=Q_\oo},
\]
so this contribution, together with the result of (a) for the final endpoint, corresponds to the application of $\Omega_\oo$ as in \eqref{sube:Omegaoutnlin}.
\end{enumerate}

\subsection{Historical remarks}
The first instance of the method described in this appendix for tracking the changes under deformation of the gauge fixing goes back to \cite{AS} in the case of Chern--Simons theory without boundary, without residual fields and with gauge fixings parametrized by a Riemannian metric. The method was extended in \cite{BC} to more general propagators. In \cite{CMremCS} the case with residual fields was first considered, and in  \cite[Section 4.2]{CMR15} the general method delineated here---for general AKSZ theories, in presence of boundary and with residual fields---was introduced. By this method the boundary-related operator $\Omega$ is constructed in terms of boundary terms as explained above. The main difference with the present study is that there the case of dimension larger than one was considered, where some vanishing theorems simplify the computations. A differential, compatible with the BV structure, was explicitly introduced in \cite{BCM} to keep track of changes in the choice of background, leading to the differential QME (dQME).  The incorporation of the boundary, along the lines of \cite{CMR15}, leading to the mdQME was finally presented in \cite{CMW}.


\begin{thebibliography}{99}
\bibitem{1dCS} A. Alekseev, P. Mnev, \textit{One-dimensional Chern--Simons theory,} Commun.  Math. Phys. {\bf 307.1} (2011), 185--227.
\bibitem{AKSZ} M. Alexandrov, M. Kontsevich, A. Schwarz, and O. Zaboronsky, \textit{The geometry of the master equation and topological quantum field theory,} Int.\ J. Mod.\ Phys.\ A {\bf 12} (1997), no.\ 7, 1405--1429.
\bibitem{Arnold} V. I. Arnold, \emph{Mathematical Methods of Classical Mechanics}, Springer (1989).
\bibitem{AS} S. Axelrod and I. M. Singer, 
\textit{Chern--Simons perturbation theory,} in {Proceedings of the XXth DGM Conference}, ed.\ S. Catto and A. Rocha,  (World Scientific, Singapore, 1992) 3--45; \textit{Chern--Simons perturbation
theory. II,} J. Differential Geom.\ {\bf 39} (1994), 173--213.
\bibitem{BDMGV}
J. F. Barbero G., B. D\'\i az, J. Margalef-Bentabol, and E. J. S. Villase\~nor,
\textit{Dirac's algorithm in the presence of boundaries: A practical guide to a geometric approach,}
Classical Quantum Gravity {\bf 36} (2019), 205014.
\bibitem{BF} I. Batalin and E. Fradkin, \textit{A generalized canonical formalism and quantization of reducible gauge theories,} Phys.\ Lett.\ B {\bf 122} (1983), 157--164.
\bibitem{BV} I. Batalin and G. Vilkovisky, \textit{Relativistic S-matrix of dynamical systems with boson and fermion constraints,} Phys.\ Lett.\ B {\bf 69} (1977), 309--312.
\bibitem{BCOV} M. Bershadsky, S. Cecotti, H. Ooguri, and C. Vafa, \textit{Kodaira--Spencer theory of gravity and exact results for quantum string amplitudes,} Commun.\ Math.\ Phys. {\bf 165} (1994), 311--427.
\bibitem{BW97} S. Bates and A. Weinstein, \emph{Lectures on the Geometry of Quantization,} 
Berkeley Mathematics Lecture Notes {\bf 8}, American Mathematical Society (Providence, RI, 1997).
\bibitem{BCM} F. Bonechi, A. S. Cattaneo,  and P. Mnev,
\textit{The Poisson sigma model on closed surfaces,}
 J. High Energ. Phys. {\bf 2012}, 99 (2012), 1--27.
\bibitem{BC}  R. Bott and A.~S.~Cattaneo, \textit{Integral invariants of 3-manifolds,}
J. Differential Geom. {\bf 48} (1998), 91--133.   
\bibitem{C} A.~S.~Cattaneo, \href{http://www.icm2006.org/proceedings/Vol_III/contents/ICM_Vol_3_18.pdf}{\textit{From topological field theory to deformation 
quantization and reduction,}} in
\href{http://www.ems-ph.org/books/book.php?proj_nr=39}{{Proceedings of the International Congress of Mathematicians, Madrid, Spain, 2006}},
(ed.\ M.~Sanz-Sol\'e, J.~Soria, J.~L.~Varona, J.~Verdera),
\textbf{Vol.~III} (European Mathematical Society, 2006), 338--365. 
\bibitem{CC} A.~S.~Cattaneo and I.~Contreras, \textit{Split canonical relations,}
\href{https://arxiv.org/abs/1811.10107}{arXiv:1811.10107}; to appear in Ann.\ Henri Lebesgue.
\bibitem{CMW} A. S. Cattaneo, N. Moshayedi, and K. Wernli,
\textit{Globalization for perturbative quantization of nonlinear split AKSZ sigma models on manifolds with boundary,}
Commun. Math. Phys. {\bf 372} (2019), 213--260. 
\bibitem{CMremCS} A.~S.~Cattaneo and P.~Mnev, \textit{Remarks on Chern--Simons invariants,}
Commun.\ Math.\ Phys.\ {\bf 293} (2010), 803--836.
\bibitem{CMwave} A. S. Cattaneo and P. Mnev,
\textit{Wave relations,}
Commun.\ Math.\ Phys.\ {\bf 332} (2014), 1083--1111.
\bibitem{CMR14} A. S. Cattaneo, P. Mnev, and N. Reshetikhin,  \textit{Classical BV theories on manifolds with boundary,'}
Commun.\ Math.\ Phys.\ {\bf 332} (2014), 535--603. 
\bibitem{CMRcq} A. S. Cattaneo. P. Mnev, and N. Reshetikin, \textit{Classical and quantum Lagrangian field theories with boundary,} \href{http://pos.sissa.it/archive/conferences/155/044/CORFU2011_044.pdf}{PoS(CORFU2011)044}. 
\bibitem{CMR15} A. S. Cattaneo, P. Mnev, and N. Reshetikhin, \textit{Perturbative quantum gauge theories on manifolds with boundary,} Commun.\ Math.\ Phys.\ {\bf 357} (2018) 631--730. 
\bibitem{CMRcell} A. S. Cattaneo, P. Mnev, and N. Reshetikhin, \textit{A cellular topological field theory,} Commun.\ Math.\ Phys.\ {\bf 374} (2020) 1229--1320.
\bibitem{CS_cyl} A. S. Cattaneo, P. Mnev, and K. Wernli, \textit{Quantum Chern--Simons theories on cylinders: BV-BFV partition functions,} \href{https://arxiv.org/abs/2012.13983}{arXiv:2012.13983}
\bibitem{CS17} A.~S.~Cattaneo and M. Schiavina, \textit{On time,}
Lett.\ Math.\ Phys. {107}, (2017) 375--408.
\bibitem{Contr13}
I.~Contreras, \emph{Relational Symplectic Groupoids and Poisson Sigma Models with Boundary}, PhD Thesis, University of Zurich (2013), \href{http://user.math.uzh.ch/cattaneo/contreras.pdf}{http://user.math.uzh.ch/cattaneo/contreras.pdf}, \href{https://arxiv.org/abs/1306.3943}{arXiv:1306.3943}.
\bibitem{DG}  P. H. Damgaard and M. A. Grigoriev, \textit{Superfield BRST charge and the master action,}
Phys. Lett. B {\bf 474} (2000) 323--330.
\bibitem{Der20} J. Derezi\'nski, \emph{Introduction to Quantization} (2020) \hfill\break
\href{https://www.fuw.edu.pl/~derezins/quantize.pdf}{https://www.fuw.edu.pl/~derezins/quantize.pdf}
\bibitem{Dirac} P. A. M. Dirac, 
\textit{The Lagrangian in quantum mechanics,}
Physikalische Zeitschrift der Sowjetunion {\bf 3} (1933),
64--72.
\bibitem{DS} G. Dito and D. Sternheimer, \emph{Deformation quantization: Genesis, developments and metamorphoses}, in:  ``Deformation Quantization" (Strasbourg, 2001), 9--54, IRMA Lect. Math. Theor. Phys., 1, de Gruyter, Berlin, 2002.
\bibitem{Feynman} R. P. Feynman, \emph{The Principle of Least Action in Quantum Mechanics}, PhD thesis, Princeton, 1942.
\bibitem{FH} R. P. Feynman and A. R. Hibbs, \emph{Quantum Mechanics and Path Integrals},
McGraw-Hill, 1965.
\bibitem{FV} E. Fradkin and G. Vilkovisky, \textit{Quantization of relativistic systems with constraints,} Phys.\ Lett.\ B {\bf 55} (1975), 224--226.
\bibitem{FM} W. Fulton and R. MacPherson,
\textit{A compactification of configuration spaces,} Ann.\ of Math.\ {\bf 139} (1994), 183--225.
\bibitem{Gawedzki CFT 1} K. Gawedzki, \textit{Conformal field theory: a case study}, \href{https://arxiv.org/abs/hep-th/9904145}{arXiv:hep-th/9904145}
\bibitem{Gawedzki CFT 2} K. Gawedzki, \textit{Lectures on Conformal Field Theory},  in: ``Quantum fields and strings: A course for mathematicians,'' Vol. 2, Amer. Math. Soc., Providence, RI (1999), pp.\ 727--805. 
\bibitem{GS} A. A. Gerasimov and S. L. Shatashvili, \textit{Towards integrability of topological strings I: Three-forms on Calabi--Yau manifolds,} JHEP {\bf 2004.11} (2005), 074.
\bibitem{Hitchin} N. Hitchin, \textit{The geometry of three-forms in six dimensions,} J. Differential Geom.\ {\bf 55} (2000), 547--576.
\bibitem{2dYM} R. Iraso and P. Mnev, \textit{Two-dimensional Yang--Mills theory on surfaces with corners in 
Batalin--Vilkovisky formalism,} Commun.\  Math.\ Phys.\ {\bf 370} (2019) 637--702.
\bibitem{Khud} H. M. Khudaverdian, \textit{Semidensities on odd symplectic supermanifolds,}
Commun.\ Math.\ Phys.\ {\bf 247}, (2004) 353--390.
\bibitem{KT} J. Kijowski and W. M. Tulczyjew, \emph{A Symplectic Framework for Field Theories}, Lecture notes in Physics {\bf 107}, Springer-Verlag Berlin Heidelberg (1979).
\bibitem{KS}
K. Kodaira and D. Spencer, \textit{On deformations of complex analytic structures. I.} Ann.\ of Math.\ {\bf 67} (1958), 328--401.
\bibitem{LM} 
P. Libermann and C. Marle, \emph{Symplectic Geometry and Analytical Mechanics}, Mathematics and Its Applications {\bf 35},  D. Reidel Publishing Co., Dordrecht,
(1987)
\bibitem{Lu} J.-H. Lu, \emph{Multiplicative and Affine Poisson Structures on Lie Groups}, PhD thesis, UC Berkeley (1990), \href{https://hkumath.hku.hk/~jhlu/thesis.pdf}{https://hkumath.hku.hk/$\sim$jhlu/thesis.pdf}
\bibitem{Mac} K. Mackenzie, \emph{Lie Groupoids and Lie Algebroids in Differential Geometry},
Cambridge University Press, 1987.
\bibitem{Hol} P. Mnev, M. Schiavina, and K. Wernli, \textit{Towards holography in the BV-BFV setting}, Ann. Henri Poincar\'e          {\bf{21}} (2020), 993--1044.
\bibitem{Resh} N. Reshetikhin, \textit{Semiclassical geometry of integrable systems,}
J. Phys.\ A {\bf 51} 
(2018), 164001.
\bibitem{Sev} P. \v Severa,
\textit{On the origin of the BV operator on odd symplectic supermanifolds,}
Lett.\ Math.\ Phys.\ {\bf 78} (2006), 55--59.
\bibitem{Sta} J. Stasheff, \textit{Homological reduction of constrained Poisson algebras,} J. Differerential Geom.\ {\bf 45} (1997), 221--240.
\bibitem{Tian} G. Tian, \textit{Smoothness of the universal deformation space of compact Calabi--Yau manifolds and its Petersson--Weil metric}, in ``Mathematical aspects of string theory'' (San Diego, Calif., 1986), 629--646, Adv.\ Ser.\ Math.\ Phys.\  {\bf 1}, World Sci.\ Publishing, Singapore, 1987.
\bibitem{VY} F. Valach and D. Youmans, \textit{Schwarzian quantum mechanics as a Drinfeld--Sokolov reduction of $BF$ theory}, \href{https://arxiv.org/abs/1912.12331}{arXiv:1912.12331}
\end{thebibliography}
\end{document}